        \theoremstyle{acmdefinition}
        \newtheorem*{notation}{Notation}
  \theoremstyle{acmdefinition}
  \declaretheoremstyle[notebraces={\color{black}{\textbf{(}}}{\color{black}{\textbf{)}}}, notefont=\bfseries, headpunct=\color{black}{\textbf{.}}]{mystyle}
  \declaretheorem[style=mystyle, name=\color{olive}{\textbf{Axiom}}]{assumption}
  \crefname{assumption}{\color{olive}{\textbf{Axiom}}}{\color{olive}{\textbf{Axioms}}}
  \Crefname{assumption}{\color{olive}{\textbf{Axiom}}}{\color{olive}{\textbf{Axioms}}}
\begin{document}

\NewDocumentCommand{\CtxtPlugOptionParam}{ m t{'}t{'} e{_^} t{'}t{'} d{<}{>} }%
  {%
		\apostropheNbTrue{#2}{#3}{#6}{#7}%
		\IfValueT{#4}{_{#4}}%
		\IfValueT{#5}{^{#5}}%
		\IfValueT{#8}{#1{#8}}%
  }
\newcommand{\bangCtxtPlugOption}{\CtxtPlugOptionParam{\bangCtxtPlug}}
\newcommand{\cbnCtxtPlugOption}{\CtxtPlugOptionParam{\cbnCtxtPlug}}
\newcommand{\cbvCtxtPlugOption}{\CtxtPlugOptionParam{\cbvCtxtPlug}}
\newcommand{\wcbnCtxtPlugOption}{\CtxtPlugOptionParam{\wcbnCtxtPlug}}
\newcommand{\genCtxtPlugOption}{\CtxtPlugOptionParam{\genCtxtPlug}}
\newcommand{\CtxtPlugOption}{\CtxtPlugOptionParam{\CtxtPlug}}
\NewDocumentCommand{\cbneedCtxtPlugOption}{ t{'}t{'} e{_^} t{'}t{'} d{<}{>} d{|}{|} }%
	{%
		\apostropheNbTrue{#1}{#2}{#5}{#6}%
		\IfValueT{#3}{_{#3}}%
		\IfValueT{#4}{^{#4}}%
		\IfValueT{#7}{\cbneedCtxtPlug{#7}}%
    \IfValueT{#8}{\cbneedCtxtDblPlug{#8}}%
	}

        \title{Genericity Through  Stratification}

\author{Victor Arrial}
\orcid{0000-0002-1607-7403}
\affiliation{%
  \institution{Université Paris Cité, CNRS, IRIF}
  \city{Paris}
  \country{France}
}

\author{Giulio Guerrieri}
\orcid{0000-0002-0469-4279}
\affiliation{%
  \institution{Aix Marseille Univ, CNRS, LIS}
  \city{Marseille}
  \country{France}
}

\author{Delia Kesner}
\orcid{0000-0003-4254-3129}
\affiliation{%
  \institution{Université Paris Cité, CNRS, IRIF}
  \city{Paris}
  \country{France}
}

\renewcommand{\shortauthors}{V.Arrial, G.Guerrieri and D.Kesner}


\begin{abstract}
A fundamental issue in the $\lambda$-calculus is to find appropriate
notions for \emph{meaningfulness}. It is well-known that in the call-by-name $\lambda$-calculus (CbN) the meaningful terms can be identified 
with the \emph{solvable} ones, and that this notion is
not appropriate in the call-by-value $\lambda$-calculus
(CbV). This paper validates the challenging claim that yet another
notion, previously introduced in the literature as \emph{potential
valuability}, appropriately represents meaningfulness in CbV. Akin to
CbN, this claim is corroborated by proving two essential properties.
The first one is \emph{genericity}, stating that meaningless subterms
have no bearing on evaluating normalizing terms. 
To prove this (which was an open problem), we use
a novel approach based on \emph{stratified} reduction, indifferently
applicable to CbN and CbV, and in a \emph{quantitative} way. The
second property concerns \emph{consistency} of the smallest congruence
relation resulting from equating all meaningless terms.
While the consistency result is not new, we
provide the first direct operational proof of it. We also show that
such a congruence has a unique consistent and maximal extension, which
coincides with a well-known notion of observational equivalence. 
Our results thus supply the formal concepts and tools that validate 
the informal notion of meaningfulness underlying~CbN~and~CbV.
\end{abstract}

\begin{CCSXML}
	<ccs2012>
	<concept>
	<concept_id>10002950.10003741.10003732.10003733</concept_id>
	<concept_desc>Mathematics of computing~Lambda calculus</concept_desc>
	<concept_significance>500</concept_significance>
	</concept>
	</ccs2012>
\end{CCSXML}

\ccsdesc[500]{Mathematics of computing~Lambda calculus}

\keywords{Lambda-calculus, solvability, genericity}

\received{20 February 2007}
\received[revised]{12 March 2009}
\received[accepted]{5 June 2009}


\maketitle

\section{Introduction}
\label{sec:intro}

\paragraph{Meaningfulness and Meaninglessness}
One of the aims of programming language theory is to establish
semantics describing the meaning of computer programs (see
\cite{Hutton23} for a survey). It is then natural to search for
criteria that  distinguish, among syntactically correct (\ie
compiling) programs, the meaningful ones (those computing something)
from the  meaningless ones (those not computing anything, for instance
because they always diverge for every input).

In the context of the pure untyped $\lambda$-calculus (the
prototypical kernel common to all functional programming languages and
many proof assistants), this question has been extensively studied
since the 70s
\cite{Scott70,ScottStrachey71,Barendregt71,Wadsworth71,Scott72,Hyland75,Wadsworth76,Barendregt92a}.
Setting up a semantic for the $\lambda$-calculus amounts to fixing an
\emph{equivalence} relation $\equiv$ on $\lambda$-terms, with the
underlying intuition that two equivalent $\lambda$-terms share the
\emph{same meaning}. Two primary requirements for $\equiv$
must~be~considered:

\begin{enumerate}[leftmargin=*,labelindent=0pt]
\item \label{p:invariance} \emph{Invariance under computation:}
  $\equiv$ must include $\beta$-equivalence $=_\beta$, which is the
  reflexive, symmetric and transitive closure of $\beta$-reduction. In
  other words, if $t =_\beta u$ then $t \equiv u$. Roughly, if $t
  =_\beta u$ then $t$ and $u$ have a common $\beta$-reduct (by the
  Church-Rosser theorem), meaning that $t$ and $u$ represent different
  stages of the same computation and thus share the same meaning.

\item \label{p:compositionality} \emph{Compositionality:} $\equiv$
	must be preserved when plugging $\lambda$-terms into the same
	context, \ie if $t \equiv u$ then $\genCCtxt<t> \equiv \genCCtxt<u>$
	for every context $\genCCtxt$. Essentially this means that the
	meaning of a compound $\lambda$-term is defined solely by the
	meaning of its subterms.
\end{enumerate}

Equivalence relations fulfilling
\Cref{p:invariance,p:compositionality} are usually called
\emph{$\lambda$-theories}~\cite{BarendregtManzonetto22}. In programming language theory, there are at least three more
desiderata for them:
\begin{enumerate}[leftmargin=*,labelindent=0pt]
	\setcounter{enumi}{2}%
\item \label{p:consistency} \emph{Consistency:} Equivalence $\equiv$
  must not equate all $\lambda$-terms, \ie
  $t \not\equiv u$ for at least two $\lambda$-terms  $t$ and $u$,
  otherwise the semantics would be trivial and convey no information.
	
\item\label{p:sensibility} \emph{Sensibility:} All meaningless
  $\lambda$-terms should be equated by $\equiv$. Essentially, there
  should be no semantic difference between two $\lambda$-terms that
  have no meaning!
	
\item\label{p:genericity} \emph{Genericity:} Meaningless subterms are
  computationally irrelevant ---in the sense that they do not play any role--- in the
  evaluation of $\beta$-norma\-li\-zing $\lambda$-terms 
\end{enumerate}

Genericity says that a meaningless subterm $t$ of a $\beta$-normalizing term
$\genCCtxt<t>$ does not
impact on the computation from $\genCCtxt<t>$ to its $\beta$-normal form $u$, and
hence $t$ is \emph{generic}, that is, it can be replaced by any other
$\lambda$-term $s$ and still $\genCCtxt<s>$ $\beta$-reduces to $u$.


A $\lambda$-theory gives insights into denotational and
operational semantics of the $\lambda$-calculus, \eg it may stem from
a denotational model (by equating exactly the $\lambda$-terms with the
same interpretation), or may reflect a particular observational
equivalence (by equating exactly the observationally equivalent
$\lambda$-terms).  Two $\lambda$-terms $t$ and $u$ are
\emph{observationally equivalent} with respect to a notion of
observable if $\genCCtxt<t>$ reduces to an observable exactly when
$\genCCtxt<u>$ does.  In programming language theory, this is a way to
determine when programs can be considered equivalent as they have the
same~behavior.

\paragraph{A Naive Notion of Meaningfulness}
A naive approach to set a semantics for the pure untyped
$\lambda$-calculus is to define the meaning of a
$\beta$-normalizing $\lambda$-term as its normal form, and equating
all $\lambda$-terms that do not $\beta$-normalize. The underlying idea
is that, as $\beta$-reduction represents evaluation and a normal form
stands for its outcome, all \emph{non}-$\beta$-normalizing
$\lambda$-terms are considered as meaningless (intuitively, they
represent non-terminating~programs). 

Evidence that this naive approach to semantics is flawed is that it
equates every fixed point combinator (any $\lambda$-term $Y$ such that
$t(Yt) =_\beta t$ for every $\lambda$-term $t$) with the prototypical
diverging $\lambda$-term $\Omega = (\abs{x}xx) \abs{x}xx$. While both
$Y$  and $\Omega$ fail to $\beta$-normalize, the former is essential
for representing recursive functions in the $\lambda$-calculus,
suggesting it should carry some non-trivial semantics, while $\Omega$
is indisputably meaningless. This suggests the idea that the
$\beta$-normalizing $\lambda$-terms are not the only meaningful
$\lambda$-terms.

Indeed, as thoroughly discussed   in \cite[Chapter 2]{barendregt84nh},
the naive approach has some major drawbacks. In particular, any
$\lambda$-theory equating all non-$\beta$-normalizing $\lambda$-terms
is \emph{inconsistent}: it equates all $\lambda$-terms, not just the
meaningless ones!

\paragraph{Solvable as meaningful}
In the 70s, Wadsworth \cite{Wadsworth71,Wadsworth76} and
Barendregt
\cite{Barendregt71,Barendregt73,Barendregt75,barendregt84nh} showed
that the meaningful $\lambda$-terms can be identified with the
\emph{solvable} ones. The definition of solvability seems
technical: a $\lambda$-term $t$ is \emph{solvable} if there is a
special kind of context, called \emph{head} context $\genHCtxt$,
sending $t$ to the identity function $\Id = \lambda z.z$, meaning that
$\genHCtxt<t>$ $\beta$-reduces to $\Id$. Roughly, a solvable
$\lambda$-term $t$ may be divergent, but its diverging subterms can be
removed via interaction with a suitable head context $\genHCtxt$ which
cannot merely discard $t$. For instance, $x\Omega$ is divergent but
solvable (take the head context $\genHCtxt =
(\abs{x}{\Hole})\abs{y}\Id$). It turns out that \emph{unsolvable}
$\lambda$-terms constitutes a strict subset of the
non-$\beta$-normalizing ones (notably all fixed point combinators are
solvable). Moreover, the smallest $\lambda$-theory that equates all
unsolvable $\lambda$-terms ---let us call it $\mathcal{H}$--- is
\emph{consistent} (\Cref{p:consistency}). Following the key idea in Barendregt's book \cite{barendregt84nh}, if we identify
\emph{solvable as meaningful}, and \mbox{hence~\emph{unsolvable as
meaningless},~we~have~that}:
\begin{itemize}[leftmargin=*,labelindent=0pt]
\item The $\lambda$-theory $\mathcal{H}$ gives rise to a
  \emph{sensible} semantics (\Cref{p:sensibility}), and

\item \emph{Genericity} holds (\Cref{p:genericity}), meaning that if
  $t$ is \emph{unsolvable} and $\genCCtxt<t>$ $\beta$-reduces to
  \emph{some}  $\beta$-normal term $u$ for some context $\genCCtxt$,
  then $\genCCtxt<s>$ $\beta$-reduces to $u$ for \emph{every}
  $\lambda$-term~$s$. 
\end{itemize} 

Not only is genericity an interesting operational property \emph{per
se}, but it is also a crucial ingredient to prove consistency of
$\mathcal{H}$ in a \emph{direct way} (not relying on consistency of other $\lambda$-theories),~following~\cite{barendregt84nh}.

The fact that \Cref{p:consistency,p:sensibility,p:genericity} hold is
the main rationale for embracing solvability as a formal
representation of the informal notion of meaningfulness in the pure
untyped $\lambda$-calculus, as extensively discussed
in~\cite{barendregt84nh}. This 
semantic approach gives rise to an elegant and successful theory of
the $\lambda$-calculus, see
\cite{Wadsworth71,Barendregt71,Barendregt73,Barendregt75,Wadsworth76,CDC78,CDC80,barendregt84nh,Barendregt92a,BarendregtManzonetto22}. 
%
%

\paragraph{Call-by-Name \& Call-by-Value}

Solvability has been extensively studied in the standard pure untyped
$\lambda$-calculus, which we refer to as \emph{call-by-name} (CbN for
short).  This approach does not require the argument to be evaluated
before being passed to the calling function.  In the 70s,
Plotkin~\cite{Plotkin75} introduced the \emph{call-by-value} (CbV, for
short) variant of the pure untyped $\lambda$-calculus, which is closer
to actual implementation of functional programming languages and proof
assistants. The key distinction lies in its reduction rule $\beta_v$:
a restriction of $\beta$-reduction that allows a redex to be fired
only when the argument is a \emph{value}, that is, a variable or an
abstraction. Among $\lambda$-theories, we distinguish
\emph{$\lambda_n$-theories} (for CbN) and \emph{$\lambda_v$-theories}
(for CbV) if their invariance is under $\beta$- and
$\beta_v$-reduction, respectively.

It is tempting to adapt the notion of solvability to CbV, by simply
replacing $\beta$-reduction with $\beta_v$-reduction in its
definition.  It has been proved that CbV solvability has appealing
properties~\cite{paolini99tia,RoccaP04,AP12,CarraroGuerrieri14,GuerrieriPaoliniRonchi17,AccattoliGuerrieri22bis}.
Unfortunately, as it is well-known, 
any $\lambda_v$-theory (\Cref{p:invariance,p:compositionality}) that
equates all CbV unsolvable $\lambda$-terms (\Cref{p:sensibility})
cannot be consistent (\Cref{p:consistency}), see
\eg~\cite{AccattoliGuerrieri22bis}. Furthermore, Accattoli and
Guerrieri~\cite{AccattoliGuerrieri22bis} have showed that genericity
(\Cref{p:genericity}) fails~for CbV unsolvable $\lambda$-terms.
Therefore, meaningfulness in CbV cannot be identified with CbV
solvability, unless we are willing to relinquish to the desired
properties outlined in
\Cref{p:consistency,p:sensibility,p:genericity}. Identifying
appropriate notions to capture CbV meaningful $\lambda$-terms and
formally validating these notions is a longstanding and
challenging~open~question.

\paragraph{Our First Contribution: Scrutable as Meaningful in CbV}
Facing these negative results, our paper investigates the notion of
meaningfulness in CbV using a different approach, in the vein of what
Barendregt \cite{barendregt84nh} did for CbN. Indeed, we look for a
suitable of CbV meaningfulness such that 
the smallest $\lambda_v$-theory it \emph{generates} (according to
\Cref{p:sensibility,p:invariance,p:compositionality}) fulfills the
other \mbox{desired properties (\Cref{p:consistency,p:genericity})}. 

Paolini and Ronchi Della Rocca \cite{paolini99tia,RoccaP04} introduced
the notion of \emph{potentially valuability} for CbV, also studied in
\cite{PaoliniPimentelRonchi06,AP12,CarraroGuerrieri14,Garcia-PerezN16}
and renamed \emph{(CbV) scrutability} in
\cite{AccattoliGuerrieri22bis}: a $\lambda$-term is \emph{scrutable}
if there is a \emph{head} context $\genHCtxt$ sending $t$ to some
value, that is, such that $\genHCtxt<t>$ \emph{$\beta_v$-reduces} to
some value.
Scrutable $\lambda$-terms form a \emph{proper superset} of the CbV
solvable ones, for instance $\abs{x}\Omega$ is scrutable but not CbV
solvable. Akin to CbV solvability,  scrutability has 
appealing characterizations
\cite{paolini99tia,RoccaP04,PaoliniPimentelRonchi06,AP12,CarraroGuerrieri14,GuerrieriPaoliniRonchi17,AccattoliGuerrieri22bis}.
But there is more than that.
Indeed, one of the main contributions of our paper is to prove that we
can identify \emph{scrutable as CbV meaningful}, and hence
\emph{inscrutable as CbV meaningless}, in that we prove the following
properties:
\begin{itemize}[leftmargin=*,labelindent=0pt]
\item \emph{Genericity} (\Cref{p:genericity}): in CbV, if $t$ is \emph{CbV
  meaningless} and $\genCCtxt<t>$ reduces to a
 normal form $u$ for some context $\genCCtxt$, then
  $\genCCtxt<s>$ reduces to $u$ for
  \mbox{every~$\lambda$-term~$s$};

\item \emph{Consistency} (\Cref{p:consistency}): the smallest
	$\lambda_v$-theory (\Cref{p:invariance,p:compositionality}) that
	equates all CbV meaningless  $\lambda$-terms (\Cref{p:sensibility})
	is consistent.
\end{itemize}

The main and non-trivial challenge is to prove genericity, conjectured but not proved
in~\cite{AccattoliGuerrieri22bis}. Concerning consistency, it is
already proved in the literature ---or can be easily derived--- from the
consistency of some related
$\lambda_v$-theories~\cite{EgidiHonsellRonchi92,paolini99tia,CarraroGuerrieri14,AccattoliGuerrieri22bis,AL24}.
But we provide the first direct operational proof of consistency,
using genericity.

\paragraph{Our Second Contribution: Stratified Genericity, Modularly}
We claim that
our approach to prove genericity is \emph{finer} than the ones in the
literature, and above all \emph{robust}, in that it indifferently
applies to CbN and CbV in a modular way.

Why our approach is finer? What is called genericity in the
literature, is called here \emph{full genericity}, because it is about
the behavior of meaningless $\lambda$-terms that are plugged in
\emph{full} contexts (\ie contexts without any restriction) when they
reduce
to \emph{full} normal forms (\ie normal forms for unrestricted
reduction). We also study \emph{non-full} genericity, by considering
non-full contexts and non-full normal forms, which amounts to looking at
contexts and normal forms until a certain \emph{depth}. Thus, a
natural notion of \emph{stratified genericity} arises, depending on
how deep we consider contexts and normal forms. Our approach proves
stratified genericity for \emph{any} depth, the full case (the usual
one considered in the literature) being only a special case where
there is no limit to the depth. Another special case is \emph{surface
genericity}, that is, stratified genericity where~the~depth~is~$0$.

Moreover, our approach does not only bring forward  \emph{qualitative}
aspects of genericity, but also \emph{quantitative} ones: if $t$ is
meaningless and $\genCCtxt<t>$ normalizes to $u$ for some context
$\genCCtxt$ in $m$ steps, then $\genCCtxt<s>$ normalizes to $u$ for
\emph{every} $\lambda$-term $s$ in exactly $m$ steps. Thus, $t$ is not
only qualitative irrelevant, but also quantitatively irrelevant. 

\paragraph{Our third contribution: $\mathcal{H}$ and $\mathcal{H}^*$
in CbV}


The literature on $\lambda_n$-theories (for CbN) is
vast~\cite{barendregt84nh,Barendregt92a,Berline00,BerlineManzonettoSalibra07,BerlineManzonettoSalibra09,
Breuvart14,BreuvartManzonettoPolonskyRuoppolo16,BarendregtManzonetto22},
and the set of $\lambda_n$-theories forms a very rich
mathematical~structure.
In particular, the aforesaid $\mathcal{H}$ has a unique maximal consistent extension called $\mathcal{H}^*$~\cite{Hyland75}.

Concerning CbV, instead, the study of $\lambda_v$-theories is still in
its early
stages~\cite{EgidiHonsellRonchi92,RoccaP04,AccattoliGuerrieri22bis},
also because of the intricacies of CbV semantics. We already mentioned
that the smallest $\lambda_v$-theory equating all CbV meaningless
$\lambda$-terms is consistent. As a further contribution, thanks again
to genericity, we prove that such a $\lambda_v$-theory has a unique
maximal consistent extension, which coincides with a well known notion
of CbV observational equivalence \cite{Plotkin75}, the one where the
notion of observable is to reduce to some value. 

All these results in CbV are perfectly analogous to well known results
for CbN \cite{Hyland75,barendregt84nh,BarendregtManzonetto22},  
suggesting that there is a sort of \emph{canonicity} in the proposal
to identify CbV meaningfulness with scrutability.

\paragraph{Methodological Remark}
In this paper, we do not work in the standard CbN~\cite{Church} and
CbV~\cite{Plotkin75} $\lambda$-calculi, but in their respective
variants with \emph{explicit substitution}s and reduction \emph{at a
distance}~\cite{AK10,AP12}. Indeed, our technique for both CbN and CbV
relies on an existing \emph{operational characterization} of
meaningfulness: a $\lambda$-term $t$ is meaningful if and only if a
\emph{certain} reduction starting at $t$ terminates. Such a reduction
is given by the notion of \emph{head reduction} for CbN~\cite{Wadsworth71,Wadsworth76}, which is \emph{internal} to the
calculus (a subreduction of $\beta$). For CbV, instead, it is not
possible to have an operational characterization of meaningfulness
internally to Plotkin's original CbV $\lambda$-calculus, as already
remarked by Paolini and Ronchi Della Rocca~\cite{paolini99tia,RoccaP04}. This weakness of Plotkin's CbV is widely
accepted in the literature on CbV, which provides many alternative
proposals to extend CbV~\cite{GregoireLeroy02,AP12,CarraroGuerrieri14,AccattoliGuerrieri16,HerbelinZimmermann09}
in such a way that an internal characterization of scrutability
becomes possible. Among them we chose the formalism in~\cite{AP12},
called here \CBVSymb, as it has good rewriting properties. For
uniformity, we also took its CbN version~\cite{AK12}.

\paragraph{Outline} \Cref{sec:cbv} presents the \CBVSymb-Calculus and
a first result of (qualitative) surface genericity. A general
methodology to prove quantitative full genericity is introduced
in~\Cref{sec:methodology}, and we instantiate it to the \CBVSymb-Calculus
in~\Cref{sec:Cbv_Stratified_Genericity}. Call-by-name genericity is
developed in~\Cref{sec:cbn}, still using the methodology proposed
in~\Cref{sec:methodology}. \Cref{sec:theories} presents the results of
consistency and sensibility for the associated theories.
\Cref{sec:conclusion} discusses related works and concludes.
Omitted proofs are in the Appendix.

\section{The \CBVSymb-Calculus}
\label{sec:cbv}

This section introduces the syntax and operational semantics  of
the \CBVSymb-calculus, including the new notion of \emph{stratified
reduction} (a key ingredient of our approach). We also recall
the $\cbvTypeSysBKRV$ type system~\cite{BKRV20}, another key tool in this work.
Adequacy of the \CBVSymb-calculus versus non-adequacy of Plotkin's CbV
is  also discussed, to justify the use of \CBVSymb.
At the end of the section, a first qualitative surface genericity result
(\Cref{lem:Cbv_Qualitative_Surface_Genericity}) is obtained \mbox{thanks to the type system}.

\paragraph*{\textbf{Syntax}} We present the syntax of the \CBVSymb
Calculus, a call-by-value $\lambda$-calculus with explicit
substitutions originally introduced in~\cite{AP12},  where it is
called the \emph{Value Substitution Calculus}. Given a countably
infinite set $\mathcal{X}$ of variables $x, y, z, \dots$, the sets of
terms ($\setCbvTerms$) and values ($\setCbvValues$) are given by the
following inductive definitions:
\begin{equation*}
    \begin{array}{r rcl}
        \textbf{(Terms)} \quad
        &{\tt t}, u&\;\Coloneqq\;& v
            \vsep \app{t}{u}
            \vsep t\esub{x}{u}
    \\
        \textbf{(Values)} \quad
        &v&\;\Coloneqq\;& x \in \setCbvVariables
            \vsep \abs{x}{t}
    \end{array}
\end{equation*}
The set $\setCbvTerms$ includes \defn{variables} $x$,
\textbf{abstractions} $\abs{x}{t}$, \defn{applications} $\app{t}{u}$
and \defn{closures} $t\esub{x}{u}$, the latter representing a pending
\defn{explicit substitution (ES)} $\esub{x}{u}$ on $t$. Terms without
ES are called \defn{\pure}. Abstractions $\abs{x}{t}$ and closures
$t\esub{x}{u}$ bind the variable $x$ in $t$.

The set of \defn{free} variables $\freeVar{t}$  of a term $t$ is
defined as expected, in particular $\freeVar{\abs{x}{t}} = \freeVar{t}
\backslash \{x\}$ and $\freeVar{t\esub{x}{u}} = \freeVar{u} \cup
(\freeVar{t} \backslash \{x\})$. The usual notion of
$\alpha$-conversion \cite{barendregt84nh} is extended to the whole set
$\setCbvTerms$, and terms are identified up to $\alpha$-conversion.
For example $x\esub{x}{\abs{y}{y}} = z\esub{z}{\abs{x}{x}}$. We denote
by $t\isub{x}{u}$ the term obtained by the usual (capture-avoiding)
meta-level substitution of the term $u$ for all free occurrences of
the variable $x$ in the term $t$.

\begin{notation}
	We set  $\Id \coloneqq \abs{x}{x}$ and $\Delta \coloneqq
	\abs{x}{\app{x}{x}}$ and $\Omega \coloneqq \app{\Delta}{\Delta}$.
\end{notation}

\paragraph*{\textbf{(Stratified) Contexts and Equality}}
  The operational semantics of \CBVSymb is often specified by
means of a \emph{weak} reduction relation~\cite{AP12} that forbids
evaluation under abstractions, called here \emph{surface} reduction.
In this work however, we equip \CBVSymb with a much finer
\emph{stratified} operational semantics which generalizes surface
reduction to different \emph{levels}. For example, given a term $t =
\abs{x}{(\app{u_1}{(\abs{y}{u_2})})}$, reduction at level $1$ allows
to evaluate inside one binder $\lambda$, but not inside two or more,
so that $u_1$ can be evaluated while $u_2$ cannot. This extended
semantics sheds a light on the layering intrinsically standing out in
the study of meaningfulness. From now on, we denote by $\setOrdinals$
the set $\mathbb{N} \cup \{\omega\}$ (with $i < \omega$ for all $i \in
 \mathbb{N}$), which will be used to
index all the forthcoming notions with respect to a level.

To define such stratified operational semantics, we need a fine-grained notion of context (a terms containing exactly one hole $\Hole$).
The set of \defn{list contexts} $(\cbvLCtxt)$ and
\defn{stratified contexts} $(\cbvSCtxt_\indexK)$ for $\indexK \in
\setOrdinals$ are inductively defined as follows (where $\indexI \in
\mathbb{N}$): 
\begin{equation*}
    \begin{array}{c}
        \begin{array}{rcl}
            \cbvLCtxt &\;\Coloneqq\;& \Hole
                \vsep \cbvLCtxt\esub{x}{t}
        \\[0.2cm]
            \cbvSCtxt_0 &\;\Coloneqq\;& \Hole
                \vsep \app[\,]{\cbvSCtxt_0}{t}
                \vsep \app[\,]{t}{\cbvSCtxt_0}
                \vsep \cbvSCtxt_0\esub{x}{t}
                \vsep t\esub{x}{\cbvSCtxt_0}
        \\
            \cbvSCtxt_{\indexI+1} &\;\Coloneqq\;& \Hole
                \vsep \abs{x}{\cbvSCtxt_\indexI}
                \vsep \app[\,]{\cbvSCtxt_{\indexI+1}}{t}
                \vsep \app[\,]{t}{\cbvSCtxt_{\indexI+1}}
                \vsep \cbvSCtxt_{\indexI+1}\esub{x}{t}
                \vsep t\esub{x}{\cbvSCtxt_{\indexI+1}}
        \\
            \cbvSCtxt_\indexOmega, \cbvCCtxt &\;\Coloneqq\;& \Hole
                \vsep \abs{x}{\cbvSCtxt_\indexOmega}
                \vsep \app[\,]{\cbvSCtxt_\indexOmega}{t}
                \vsep \app[\,]{t}{\cbvSCtxt_\indexOmega}
                \vsep \cbvSCtxt_\indexOmega\esub{x}{t}
                \vsep t\esub{x}{\cbvSCtxt_\indexOmega}
        \end{array}
    \end{array}
\end{equation*}
A stratified context is \emph{\Pure}  if it is without ES.
Special cases of stratified contexts are 
\defn{surface contexts} (whose hole is not in the
scope of any $\lambda$) and \defn{full contexts} (the
restriction-free one-hole contexts, also denoted by $\cbvCCtxt$): the former (\resp the latter) are 
stratified contexts for the minimal (\resp transfinite) level, \ie
$\cbvSCtxt_0$ (\resp $\cbvSCtxt_\indexOmega$). In a stratified context
$\cbvSCtxt_\indexK$, the hole can occur everywhere as long as its
depth is at most $\indexK$, so that in $\cbvSCtxt_\indexOmega$ the
hole can occur everywhere and in $\cbvSCtxt_0$ it cannot occur under
any abstraction. Thus for example $\app[\,]{y}{(\abs{x}{\Hole})}$ is
in $\cbvSCtxt_1$, and thus in every $\cbvSCtxt_\indexK$ with $\indexK
\geq 1$, including $\cbvSCtxt_\indexOmega$, but it is not in
$\cbvSCtxt_0$. For all $\indexK \in \setOrdinals$, we write
$\cbvSCtxt_\indexK<t>$ for the term obtained by replacing the hole in
$\cbvSCtxt_\indexK$ with the term $t$, and similarly for $\cbvLCtxt$. 

We define the \defn{stratified equality} comparing terms up to a given
depth by the following inductive rules (where $\indexK \in
\setOrdinals$ and $\indexI \in \mathbb{N})$:
\begin{equation*}
    \begin{array}{c}
    \hspace{-0.2cm}
        \begin{prooftree}
            \hypo{\phantom{\cbvEq[\indexI]}}
            \inferCbvEqVar[1]{x \cbvEq[\indexK] x}
        \end{prooftree}
    \hspace{0.4cm}
        \begin{prooftree}
            \hypo{t_1 \cbvEq[\indexK] u_1}
            \hypo{t_2 \cbvEq[\indexK] u_2}
            \inferCbvEqApp{\app{t_1}{t_2} \cbvEq[\indexK] \app{u_1}{u_2}}
        \end{prooftree}
    \hspace{0.4cm}
        \begin{prooftree}
            \hypo{t_1 \cbvEq[\indexK] u_1}
            \hypo{t_2 \cbvEq[\indexK] u_2}
            \inferCbvEqEs{t_1\esub{x}{t_2} \cbvEq[\indexK] u_1\esub{x}{u_2}}
        \end{prooftree}
    \\[0.4cm]
      
        \begin{prooftree}
            \hypo{\phantom{t \cbvEq[\indexK] u}}
            \infer1[\cbnRNPTStyle{\cbnBKRVAbsRuleName}]{\abs{x}{t} \cbvEq[0] \abs{y}{u}}
        \end{prooftree}  
    \hspace{0.8cm}
    	\begin{prooftree}
    			\hypo{t \cbvEq[\indexI] u }
    			\inferCbvEqAbs{\abs{x}{t} \cbvEq[\indexI+1] \abs{x}{u}}
    	\end{prooftree}
      \hspace{0.8cm}
      \begin{prooftree}
            \hypo{t \cbvEq[\indexOmega] u }
            \inferCbvEqAbs{\abs{x}{t} \cbvEq[\indexOmega] \abs{x}{u}}
        \end{prooftree}
    \end{array}
\end{equation*}
Notice that $t \cbvEq[\indexOmega]u $ if and only if $t = u$.
\begin{example}
    Let $t_0 := \app{(\abs{x}{\app{x}{(\abs{y}{x})}})}{z}$ and $t_1 :=
    \app{(\abs{x}{\app{x}{(\abs{z}{z})}})}{z}$. We have $t_0 \cbvEq[0]
    t_1$ and $t_0 \cbvEq[1] t_1$ but not $t_0 \cbvEq[\indexK] t_1$ for
    any $\indexK \geq 2$.
\end{example}

\paragraph*{\textbf{(Stratified) Operational Semantics.}} The notions
of contexts already established, we can now equip \CBVSymb with an operational semantics. The following rewriting
rules are the base components of such operational semantics. Any term
having the shape of the left-hand side of one of these three rules is
called a \textbf{redex}.
\begin{equation*}
    \begin{array}{c c c}
        \app{\cbvLCtxt<\abs{x}{s}>}{t}
            \;\mapstoR[\cbvSymbBeta]\;
        \cbvLCtxt<s\esub{x}{t}>
    &\hspace{0.25cm}&
        t\esub{x}{\cbvLCtxt<v>}
            \;\mapstoR[\cbvSymbSubs]\;
        \cbvLCtxt<t\isub{x}{v}>
    \end{array}
\end{equation*}

Rules $\cbvSymbBeta$ and $\cbvSymbSubs$ are assumed to be
\emph{capture free}, so no free variable of $t$ is captured by the
context $\cbvLCtxt$.  The $\cbvSymbBeta$-rule fires a $\beta$-redex
(with no restriction on the argument $t$) and generates an ES.  The
$\cbvSymbSubs$-rule encapsulates the call-by-value behavior: it fires
an ES provided that its argument is a \emph{value} $v$, possibly
wrapped in a list context $\cbvLCtxt$. In both rewrite rules,
reduction acts \emph{at a distance}~\cite{AK10}: the main constructors
involved in the rule can be separated by a finite---possibly
empty---list $\cbvLCtxt$ of ES. This mechanism unblocks redexes that
otherwise would be stuck, \eg $\app{(\abs{x}{x})\esub{y}{zz}}{v}
\mapstoR[\cbvSymbBeta] x\esub{x}{v}\esub{y}{zz}$ fires a $\beta$-redex
by taking $\cbvLCtxt= \Hole\esub{y}{zz}$ as the list context in
between the function $\abs{x}{x}$ and the argument $v$. Another
example is $(xx) \esub{x}{\Id \esub{y}{zz}} \mapstoR[\cbvSymbSubs]
(\Id\Id) \esub{y}{zz}$ which substitutes the value $\Id$ for the
variable $x$ by pushing out the list context $\cbvLCtxt=
\Hole\esub{y}{zz}$.  

The \defn{$\cbvSymbSurface_\indexK$-stratified reduction} 
$\cbvArr_{S_\indexK}$ is the
$\cbvSCtxt_\indexK$-closure of the two rewriting rules
$\mapstoR[\cbvSymbBeta]$ and $\mapstoR[\cbvSymbSubs]$. In particular,
$\cbvArr_{S_0}$ (\resp $\cbvArr_{S_\indexOmega}$) is called
\defn{surface} (resp. \defn{full}) \defn{reduction}. We
write $\cbvArr*_{S_\indexK}$ for the reflexive and transitive \emph{closure}
of the reduction $\cbvArr_{S_\indexK}$, and $\cbvArr^i_{S_\indexK}$ (with $i \in \mathbb{N}$) for the
\emph{composition} $i$ times of $\cbvArr_{S_\indexK}$. As an example, consider
\begin{equation*}
   \begin{array}{l}
            \abs{x}{\big(\abs{y}{\app{x}{w}}\big)\esub{w}{\abs{z}{\Omega}}}
        \;\cbvArr_{S_2}\;
            \abs{x}{\abs{y}{\big(\app[\,]{x}{\abs{z}{\Omega}}\big)}}
    \\
        \hspace{0.5cm}
        \;\cbvArr_{S_3}\;
            \abs{x}{\abs{y}{\big(\app[\,]{x}{\abs{z}{((\app{w}{w})\esub{w}{\Delta})}}\big)}}
        \;\cbvArr_{S_3}\;
            \abs{x}{\abs{y}{\big(\app[\,]{x}{\abs{z}{\Omega}}\big)}}
   \end{array}
\end{equation*}
The first and third reduction steps are
$\cbvSymbSubs$-steps while the second is a
$\cbvSymbBeta$-step. Moreover, the first step is indeed a
$\cbvArr_{S_\indexK}$-step for any $\indexK \geq 2$. However, the
second and third steps are not $\cbvArr_{S_\indexK}$-steps for $0 \leq
\indexK \leq 2$.
\giulio{In general, $\cbvArr_{S_\indexI} \,\subsetneq\, \cbvArr_{S_{\indexI+1}} \,\subsetneq\, \cbvArr_{S_\indexOmega}$ for all $i \in \mathbb{N}$. For $\indexK \notin \{0,\indexOmega\}$, $\cbvSymbSurface_\indexK$-stratified reduction is not confluent}.

\begin{lemma}[\cite{AP12}]
    \label{lem:Cbv_Surface_Confluence}%
    The \CBVSymb \tSurface and \tFull reductions \mbox{are
    confluent}. 
\end{lemma}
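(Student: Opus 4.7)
The plan is to use the Tait–Martin-Löf parallel reduction method, which is the standard technique for establishing confluence of calculi with explicit substitutions at a distance. I would prove the two statements (surface and full) in parallel, parameterising the parallel reduction by whether or not it propagates under abstractions.

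First, I would define a parallel reduction $\rightrightarrows$ inductively, mirroring the term grammar but adding two root rules: $\cbvLCtxt\langle\abs{x}{t}\rangle u \rightrightarrows \cbvLCtxt'\langle t'\esub{x}{u'}\rangle$ whenever $t \rightrightarrows t'$, $u \rightrightarrows u'$, and $\cbvLCtxt \rightrightarrows \cbvLCtxt'$ (with $\rightrightarrows$ extended to list contexts), and symmetrically for $\cbvSymbSubs$, namely $t\esub{x}{\cbvLCtxt\langle v\rangle} \rightrightarrows \cbvLCtxt'\langle t'\isub{x}{v'}\rangle$. For full reduction, the congruence closes under all constructors, including $\abs{x}{t} \rightrightarrows \abs{x}{t'}$ when $t \rightrightarrows t'$. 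For surface reduction, the congruence rule for abstractions is restricted to $\abs{x}{t} \rightrightarrows \abs{x}{t}$, mirroring the fact that $\cbvSCtxt_0$ contexts never enter a binder.

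Second, I would establish the sandwich $\cbvArr_{S_\indexK} \,\subseteq\, \rightrightarrows \,\subseteq\, \cbvArr*_{S_\indexK}$ in both variants, so that $\cbvArr*_{S_\indexK} = \rightrightarrows^*$; the left inclusion is immediate, the right one follows by induction on the derivation of $\rightrightarrows$. The technical lynchpin is then a substitution lemma: if $t \rightrightarrows t'$ and $v \rightrightarrows v'$ with $v,v'$ values, then $t\isub{x}{v} \rightrightarrows t'\isub{x}{v'}$. From this I would derive the diamond property for $\rightrightarrows$ by induction on $t$: the interesting cases arise when both parallel reductions fire a root $\cbvSymbBeta$ or $\cbvSymbSubs$ redex, and their joins are built by combining the parallel reducts of the subterms and list contexts, then invoking the substitution lemma for $\cbvSymbSubs$. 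The diamond for $\rightrightarrows$ yields confluence of $\rightrightarrows^*$ and hence of $\cbvArr*_{S_\indexK}$.

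The main obstacle will be handling the at-a-distance mechanism cleanly: when two parallel reductions each involve a different ``decomposition'' of the same term into a list context plus an abstraction (or plus a value), their joins must respect the list-context structure. This requires a carefully formulated parallel reduction on list contexts and a correspondingly tight substitution lemma; otherwise the diamond breaks when, say, one reduction fires $t\esub{x}{\cbvLCtxt\langle v\rangle} \to \cbvLCtxt\langle t\isub{x}{v}\rangle$ at the root while the other reduces inside $\cbvLCtxt$ or inside $v$. For surface reduction, an additional subtlety is that a $\cbvSymbSubs$-step may push a value containing non-surface redexes outwards; blocking $\rightrightarrows$ under abstractions, while still allowing the value $v$ in the root $\cbvSymbSubs$ rule to be replaced by $v'$ only via surface parallel reduction, is precisely what keeps $\rightrightarrows$ and $\cbvArr_{S_0}$ cofinal and therefore what transfers the diamond back to surface confluence.
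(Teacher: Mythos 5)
The paper does not actually prove this lemma: it is imported wholesale from the cited work on the Value Substitution Calculus (\cite{AP12}), so there is no in-paper argument to measure your proposal against. On its own terms, your Tait--Martin-L\"of plan is a legitimate and essentially sound route, and the two delicate points you single out---a parallel reduction on list contexts tight enough to close the critical pairs created by action at a distance, and a substitution lemma compatible with the value restriction---are indeed where all the work lies; neither breaks, because the capture-freeness conventions on $\mapstoR[\cbvSymbBeta]$ and $\mapstoR[\cbvSymbSubs]$ make the overlapping decompositions commute in the usual way. Two observations, though. First, for the \emph{surface} case your machinery is heavier than needed: values are variables or abstractions, surface contexts never enter a binder, so a value contains no surface redex and a surface $\cbvSymbSubs$-step never duplicates a surface redex; one-step surface reduction therefore already enjoys the diamond property (a fact the paper silently relies on in its appendix, e.g.\ in the proof of \Cref{lem:Cbv_solvability_on_app_and_es}), and surface confluence follows with no parallel reduction at all. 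For the same reason, your worry that a $\cbvSymbSubs$-step ``pushes a value containing non-surface redexes outwards'' is moot: every copy of $v$ either lands under a $\lambda$ of the target or is itself an abstraction, so no hidden redex ever becomes surface. Second, and relatedly, in the surface variant of your parallel reduction the premise $v \rightrightarrows v'$ of the root $\cbvSymbSubs$-rule degenerates to $v' = v$, which collapses your substitution lemma to the trivial form $t\isub{x}{v} \rightrightarrows t'\isub{x}{v}$; only the full variant needs the general statement, and there it is the standard one. With these simplifications your argument goes through and would serve as a self-contained replacement for the citation.
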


\paragraph*{\textbf{Normal Forms.}} A term $t$ is $\cbvSymbSurface_\indexK$-\defn{normal} (or a
\defn{$\cbvSymbSurface_\indexK$-normal form}), noted
$\cbvPredSNF_\indexK{t}$, if there is no $u$ such that $t
\cbvArr_{S_\indexK} u$.  Note that $\cbvPredSNF_{\indexI+1}{t}$
implies $\cbvPredSNF_\indexI{t}$. A term $t$ is a 
\defn{surface} (resp. \defn{full}) \defn{normal form} if
it is $\cbvSymbSurface_0$-normal (resp.
$\cbvSymbSurface_\omega$-normal). A term $t$ is
\defn{$\cbvSymbSurface_\indexK$-normalizing} if $t
\cbvArr*_{S_\indexK} u$  for some $\cbvSymbSurface_\indexK$-normal
form $u$. In particular, a term $t$ is  \defn{surface} (resp. \defn{full})
\defn{normalizing} if it is
$\cbvSymbSurface_0$-normalizing (resp.
$\cbvSymbSurface_\omega$-normalizing). Since $\cbvArr_{S_\indexI}
\subsetneq \cbvArr_{S_{\indexI+1}}$, then some terms may be
$\cbvArr*_{S_\indexI}$-normalizing but not
$\cbvArr*_{S_{\indexI+1}}$-normalizing, \eg 
$\abs{x}{((\abs{z}{\app{x}{y}})\esub{y}{\abs{w}{\Omega}})}$ is
$\cbvSymbSurface_0$, $\cbvSymbSurface_1$ and
$\cbvSymbSurface_2$-normalizing but not
$\cbvSymbSurface_3$-normalizing. As a consequence, there are terms
that are surface but not full normalizing.


Normal forms can be \emph{syntactically} characterized by the
following grammar $\cbvNoS_\indexK$ ($\indexK \in \setOrdinals$),
where the subgrammar $\cbvNeS_\indexK\ (\indexK \in \setOrdinals)$
generates special normal forms called \emph{neutral}, which never
create a redex when applied to an argument. 
\begin{equation*}
        \begin{array}{rcl}
            \cbvVrS_\indexK &\coloneqq& x
                \vsep \cbvVrS_\indexK\esub{x}{\cbvNeS_\indexK}
        \\[0.2cm]
            \cbvNeS_\indexK &\coloneqq& \app[\,]{\cbvVrS_\indexK}{\cbvNoS_\indexK}
                \vsep \app[\,]{\cbvNeS_\indexK}{\cbvNoS_\indexK}
                \vsep \cbvNeS_\indexK\esub{x}{\cbvNeS_\indexK}
        \\[0.2cm]
            \cbvNoS_0 &\coloneqq& \abs{x}{t}
                \vsep \cbvVrS_0
                \vsep \cbvNeS_0
                \vsep \cbvNoS_0\esub{x}{\cbvNeS_0}
        \\
            \cbvNoS_{\indexI+1} &\coloneqq& \abs{x}{\cbvNoS_\indexI}
                \vsep \cbvVrS_{\indexI+1}
                \vsep \cbvNeS_{\indexI+1}
                \vsep \cbvNoS_{\indexI+1}\esub{x}{\cbvNeS_{\indexI+1}}\; (\indexI \in \Nat)
        \\
            \cbvNoS_\indexOmega &\coloneqq& \abs{x}{\cbvNoS_\indexOmega}
                \vsep \cbvVrS_\indexOmega
                \vsep \cbvNeS_\indexOmega
                \vsep \cbvNoS_\indexOmega\esub{x}{\cbvNeS_\indexOmega}
        \end{array}
\end{equation*}

\begin{restatable}{lemma}{CbvNoSnCharacterization}
    \LemmaToFromProof{cbv_NoSn_Characterization}%
    Let $\indexK \in \setOrdinals$. Then $t \in \setCbvTerms$ is $\cbvSymbSurface_\indexK$-normal iff $t \in
    \cbvNoS_\indexK$.
\end{restatable}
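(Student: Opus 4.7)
The plan is to prove both implications by induction. The ($\Leftarrow$) direction proceeds by simultaneous structural induction on the three mutually defined grammars $\cbvVrS_\indexK$, $\cbvNeS_\indexK$, $\cbvNoS_\indexK$, and must be strengthened with two shape invariants that forbid root redexes: terms in $\cbvVrS_\indexK$ are never of the form $\cbvLCtxt<\abs{x}{r}>$, and terms in $\cbvNeS_\indexK$ are never of the form $\cbvLCtxt<v>$ for any value $v$. Both invariants propagate through the ES constructor because an outer ES in a term $t'\esub{x}{n}$ absorbs into the list context (turning $\cbvLCtxt'<\cdot>$ into $\cbvLCtxt'\esub{x}{n}<\cdot>$), reducing the shape question to the inner subterm. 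Together they block every $\cbvSymbBeta$-redex at the head of grammar applications (whose left operand lies in $\cbvVrS_\indexK \cup \cbvNeS_\indexK$) and every $\cbvSymbSubs$-redex at the root of grammar ESes (whose content lies in $\cbvNeS_\indexK$). Internal reductions in subterms of applications and ESes stay at the same level $\indexK$ because their one-hole contexts lie in $\cbvSCtxt_0 \subseteq \cbvSCtxt_\indexK$, so the IH excludes them. Abstractions are handled by a split on $\indexK$: for $\indexK = 0$ no constraint on the body is needed (since $\cbvSCtxt_0$ never descends under $\lambda$); for $\indexK = \indexI+1$ or $\indexK = \indexOmega$, the body lies in the strictly lower-level normal-form grammar, which by IH is normal at that level.

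For the ($\Rightarrow$) direction, I induct on $t$. Variables and abstractions are easy (the body of an abstraction inherits normality at the level below when $\indexK > 0$). For applications $t = \app{s}{u}$ and closures $t = s\esub{x}{u}$, both subterms are $\cbvSymbSurface_\indexK$-normal because the enclosing one-hole contexts lie in $\cbvSCtxt_0$, so the IH yields $s, u \in \cbvNoS_\indexK$. To refine their placements into the appropriate subgrammar, I prove two classification lemmas by induction on $\cbvNoS_\indexK$: \emph{(i)} if $s \in \cbvNoS_\indexK$ is not of the form $\cbvLCtxt<\abs{x}{r}>$, then $s \in \cbvVrS_\indexK \cup \cbvNeS_\indexK$; \emph{(ii)} if $u \in \cbvNoS_\indexK$ is not of the form $\cbvLCtxt<v>$ for any value $v$, then $u \in \cbvNeS_\indexK$. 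Their exclusion hypotheses hold precisely because $t$ is normal, forbidding root $\cbvSymbBeta$- and $\cbvSymbSubs$-redexes respectively. The proofs reuse the ES-absorption argument to navigate the recursive grammar clauses.

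The main obstacle lies in classification lemma \emph{(ii)}: since a variable is itself a value, every $\cbvVrS_\indexK$ term \emph{is} of the form $\cbvLCtxt<v>$ (with $v$ the innermost variable), so the exclusion hypothesis is exactly what expels $u$ from $\cbvVrS_\indexK$. This asymmetry with \emph{(i)} captures the call-by-value essence of the calculus, where variables behave as values and can themselves fire $\cbvSymbSubs$-redexes. Once this subtlety is spotted, the remainder is bureaucratic but must be carried out uniformly in the depth parameter $\indexK \in \setOrdinals$, distinguishing the three shapes of $\cbvSCtxt_\indexK$ according to $\indexK = 0$, $\indexK = \indexI + 1$, or $\indexK = \indexOmega$.
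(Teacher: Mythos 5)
Your proof is correct and follows essentially the same route as the paper's: both directions hinge on strengthening the grammar memberships with head-shape invariants (being, or not being, of the form $\cbvLCtxt<x>$ or $\cbvLCtxt<\abs{x}{r}>$ under a list context), which is exactly what blocks the root $\cbvSymbBeta$- and $\cbvSymbSubs$-redexes. The only cosmetic difference is that the paper carries these invariants through one simultaneous induction proving a strengthened biconditional for $\cbvVrS_\indexK$, $\cbvNeS_\indexK$ and $\cbvNoS_\indexK$ at once, whereas you factor the refinement needed in the ($\Rightarrow$) direction into two standalone classification lemmas; the content is the same.
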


\paragraph*{\textbf{Quantitative Types.}} We now present the
quantitative typing system $\cbvTypeSysBKRV$, originally introduced
in~\cite{Ehrhard12} for the $\lambda$-calculus
and extended to \CBVSymb in~\cite{BKRV20}. It
contains functional and intersection types.  Intersection is
considered to be associative, commutative but not idempotent, thus an
intersection type is represented by a (possibly empty) finite multiset
$\mset{\sigma_i}_{i \in I}$. Formally, given a countably infinite set
$\setTypeVariables$ of type variables $\alpha, \beta, \gamma, \dots$,
we inductively define:
\begin{equation*}
    \begin{array}{rcl}
        \sigma, \tau, \rho &\Coloneqq& \alpha \in \setTypeVariables
            \vsep \M
            \vsep \M \typeArrow \sigma
    \\
        \M &\Coloneqq& \mset{\sigma_i}_{i \in I}\ (I \mbox{ finite})
    \end{array}
\end{equation*}

The typing rule of system $\cbvTypeSysBKRV$ are defined as
follows:
\begin{equation*}
    \begin{array}{ccc}
        \begin{prooftree}
            \inferCbvBKRVVar{x : \M \vdash x : \M}
        \end{prooftree}
    \hspace{0.5cm}
        \begin{prooftree}
            \hypo{\Pi_i \cbvTrBKRV &\Gamma_i; x : \M_i \vdash t : \sigma_i}
            \delims{\left(}{\right)_{i \in I}}
            \inferCbvBKRVAbs{\;\;+_{i \in I} \Gamma_i &\vdash \abs{x}{t} : \mset{\M_i \typeArrow \sigma_i}_{i \in  I}\;\;}
        \end{prooftree}
    \\[0.5cm]
        \begin{prooftree}
            \hypo{\Pi_1 \cbvTrBKRV \Gamma_1 \vdash t : \mset{\M \typeArrow \sigma}}
            \hypo{\Pi_2 \cbvTrBKRV \Gamma_2 \vdash u : \M}
            \inferCbvBKRVApp{\Gamma_1 + \Gamma_2 \vdash \app{t}{u} : \sigma}
        \end{prooftree}
    \\[0.5cm]
        \begin{prooftree}
            \hypo{\Pi_1 \cbvTrBKRV \Gamma_1; x : \M \vdash t : \sigma}
            \hypo{\Pi_2 \cbvTrBKRV \Gamma_2 \vdash u : \M}
            \inferCbvBKRVEs{\Gamma_1 + \Gamma_2 \vdash t\esub{x}{u} : \sigma}
        \end{prooftree}
    \end{array}
\end{equation*}
with $I$ finite in rule $(\cbvBKRVAbsRuleName)$. A \textbf{(type)
derivation} is a tree obtained by applying the (inductive) typing
rules of system $\cbvTypeSysBKRV$. The notation $\Pi \cbvTrBKRV \Gamma
\vdash t : \sigma$ means there is a derivation of the judgment $\Gamma
\vdash t : \sigma$ in system $\cbvTypeSysBKRV$.  We say that $t$ is
\textbf{$\cbvTypeSysBKRV$-typable} if  $\Pi \cbvTrBKRV \Gamma \vdash t
: \sigma$ holds for some $\Gamma, \sigma$. For example, the term
$\abs{x}{\app{y}{z}}$ is typable
with the context$\Gamma = y
: \mset{\emptymset \typeArrow \alpha}, z : \emptymset$ and type
$\alpha$, as illustrated below. 
    \begin{equation}\label{eq:derivation}
        \begin{prooftree}
            \inferCbvBKRVVar{y : \mset{\emptymset \typeArrow \alpha} \vdash y : \mset{\emptymset \typeArrow \alpha}}

            \inferCbvBKRVAbs[0]{\vdash \abs{z}\Omega : \emptymset}

            \inferCbvBKRVApp{y : \mset{\emptymset \typeArrow \alpha} \vdash \app{y}{(\abs{z}\Omega)} : \alpha}

            \inferCbvBKRVAbs{y : \mset{\emptymset \typeArrow \alpha} \vdash \abs{x}{\app{y}{(\abs{z}\Omega})} : \mset{\emptymset \typeArrow \alpha}}
        \end{prooftree}
    \end{equation}




System $\cbvTypeSysBKRV$ \emph{characterizes} surface normalization
(see \Cref{lem:cbvBKRV_characterizes_meaningfulness}).


\paragraph*{\textbf{Meaningfulness.}} Meaningfulness is about the
possibility to differentiate operationally relevant terms from those
that are irrelevant. Operational relevance can be understood as the
ability, under appropriate circumstances, to produce some element that
is \emph{observable} within the corresponding language. Scrutability
in \CBVSymb~\cite{AccattoliGuerrieri22bis} aligns precisely with this
principle. Throughout this paper, call-by-value scrutability is called
\emph{meaningfulness}. Formally,
\begin{definition}\label{def:cbv-meaningful}
    A term $t\in \setCbvTerms$ is said \CBVSymb-\defn{meaningful} if
    there exists a testing context $\cbvTCtxt$ and a value $v \in
    \setCbvValues$ such that $\cbvTCtxt<t> \cbvArr*_{S_0} v$, where
    testing contexts are defined by $\cbvTCtxt\;\Coloneqq\; \Hole
    \vsep \app[\,]{\cbvTCtxt}{u} \vsep
    \app[\,]{(\abs{x}{\cbvTCtxt})}{u}$.
\end{definition}

The intuition behind meaningfulness is that values are the observable
elements in call-by-value, while testing contexts $\cbvTCtxt$ supply
arguments to their plugged term, possibly binding them to variables.
For example $t = \app{x}{(\abs{y}{z})}$ is meaningful as
$\cbvTCtxt<t> \cbvArr*_{S_0} \abs{y}{z}$ for $\cbvTCtxt =
\app{(\abs{x}{\Hole})}{(\abs{z}{z})}$, while $\Omega$ and
$\app{x}{\Omega}$ are meaningless as for whatever testing context $\Omega$ and $x\Omega$ are plugged into, $\Omega$ will not be erased.

Meaningfulness can be characterized \emph{operationally},
through the notion of surface-normalization, and \emph{logically},
through typability in type system $\cbvTypeSysBKRV$. The logical
characterization of meaningfulness plays a central role in this work
as it yields the initial genericity result
(\Cref{lem:t_meaningless_and_Pi_|>_F<t>_==>_Pi'_|>_F<u>}) upon which
all the other results are founded.

\begin{lemma}[\CBVSymb Meaningful Characterizations~\cite{AP12,AccattoliGuerrieri22bis,BKRV20}]
    \label{lem:cbvBKRV_characterizes_meaningfulness}%
    Let $t \in \setCbvTerms$. Then the following characterizations hold:
    \begin{itemize}[leftmargin=6em]
    \item[(Operational)] \label{lem:cbvBKRV_characterizes_meaningfulness_operational}%
        $t$ is \CBVSymb-meaningful iff $t$ is \CBVSymb
        surface-normalizing. 
    \item[(Logical)] \label{lem:cbvBKRV_characterizes_meaningfulness_logical}%
        $t$ is \CBVSymb-meaningful iff $t$ is
        $\cbvTypeSysBKRV$-typable.
  \end{itemize}
\end{lemma}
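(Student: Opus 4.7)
My plan is to prove the equivalences by closing the cycle
\emph{meaningful} $\Rightarrow$ \emph{$\cbvTypeSysBKRV$-typable} $\Rightarrow$ \emph{surface-normalizing} $\Rightarrow$ \emph{meaningful}, which gives both characterizations at once.

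For \emph{$\cbvTypeSysBKRV$-typable $\Rightarrow$ surface-normalizing}, I would use the standard machinery of non-idempotent intersection types. The key technical step is a quantitative subject reduction lemma: whenever $\Pi \cbvTrBKRV \Gamma \vdash t : \sigma$ and $t \cbvArr_{S_0} t'$, there is a derivation $\Pi' \cbvTrBKRV \Gamma \vdash t' : \sigma$ whose size is strictly smaller than that of $\Pi$. Since derivation size is a natural number, surface reduction must terminate. This is the argument established for \CBVSymb in~\cite{BKRV20}; I would merely cite it.

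For \emph{surface-normalizing $\Rightarrow$ meaningful}, let $t \cbvArr*_{S_0} u$ with $u \in \cbvNoS_0$ by \Cref{lem:cbv_NoSn_Characterization}. I would proceed by induction on the grammar of $\cbvNoS_0$ to exhibit a testing context $\cbvTCtxt$ and a value $v$ with $\cbvTCtxt\langle u \rangle \cbvArr*_{S_0} v$; composing surface reductions then yields $\cbvTCtxt\langle t \rangle \cbvArr*_{S_0} v$. If $u$ is an abstraction, take $\cbvTCtxt = \Hole$. If $u$ is a variable $x$ (or a $\cbvVrS_0$), take $\cbvTCtxt = \app{(\abs{x}{\Hole})}{(\abs{z}{z})}$ so that after one $\cbvSymbBeta$-step and one $\cbvSymbSubs$-step we reach $\Id$. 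If $u$ is neutral (a series of applications headed by a variable) or has an outer ES, iterate this construction, providing enough dummy-abstraction arguments through repeated $\app{(\abs{x}{\cbvTCtxt})}{u}$ clauses to discharge all the heads and expose a value.

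For \emph{meaningful $\Rightarrow$ $\cbvTypeSysBKRV$-typable}, suppose $\cbvTCtxt\langle t \rangle \cbvArr*_{S_0} v$. Any value is trivially $\cbvTypeSysBKRV$-typable (using the $(\cbvBKRVAbsRuleName)$ rule with $I = \emptyset$ for an abstraction, the $(\cbvBKRVVarRuleName)$ rule for a variable). Then, by subject expansion along the $\cbvArr_{S_0}$-reduction (again standard for non-idempotent intersection types, proved in~\cite{BKRV20}), we obtain a derivation for $\cbvTCtxt\langle t \rangle$. Finally, I would prove by a short induction on the grammar of testing contexts that whenever $\cbvTCtxt\langle t \rangle$ is $\cbvTypeSysBKRV$-typable, so is $t$: applications and $\lambda$-abstractions of testing contexts always force a typing subderivation for the hole, because the $(\cbvBKRVAppRuleName)$ and $(\cbvBKRVEsRuleName)$ rules type both subterms, and in $\app{(\abs{x}{\cbvTCtxt})}{u}$ the function position must receive a non-empty multiset type (as the whole testing context reduces to a value via this redex), so the inner $\cbvTCtxt$ must be typed.

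The main obstacle is the last step, i.e., showing that typability pulls back through testing contexts. The delicate case is $\app{(\abs{x}{\cbvTCtxt})}{u}$, where the hole lies under a $\lambda$: one must argue that the outer application forces the multiset of types assigned to $\abs{x}{\cbvTCtxt}$ to be non-empty because the whole term ultimately surface-normalizes to a value (which has a non-empty type), so at least one copy of $\cbvTCtxt\langle t \rangle$ has to be typed inside. Once this is established, structural induction inside $\cbvTCtxt$ yields typability of $t$.
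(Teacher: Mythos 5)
The paper does not prove this lemma: it is recalled from the literature with the citations \cite{AP12,AccattoliGuerrieri22bis,BKRV20}, so there is no in-paper proof to compare against. Your cycle \emph{meaningful} $\Rightarrow$ \emph{typable} $\Rightarrow$ \emph{surface-normalizing} $\Rightarrow$ \emph{meaningful} is exactly how the result is assembled in those references, and the two typed legs (weighted subject reduction for termination, subject expansion plus typability of values for the converse) are correctly identified as imported machinery. So the overall architecture is right.

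Two details deserve attention. First, the genuinely delicate leg is \emph{surface-normalizing $\Rightarrow$ meaningful}, and that is where your sketch is vaguest. Testing contexts are not surface contexts (the clause $\app{(\abs{x}{\cbvTCtxt})}{u}$ puts the hole under a $\lambda$), so ``composing surface reductions'' only works after first firing the administrative $\cbvSymbBeta$-redexes of $\cbvTCtxt$, which leaves the hole in a genuine surface position of the shape $\Hole\esub{x_1}{u_1}\cdots w_1\cdots w_m$. More substantially, for a neutral normal form the arguments in $\cbvNoS_0$ need not be, nor reduce to, values (e.g.\ $\app{y}{z}$), so the values substituted for the head variables must erase or absorb their arguments (terms of the form $\abs{z_1}{\cdots\abs{z_n}{\Id}}$ for $n$ large enough), \emph{all} free variables must be bound rather than just the outermost head, and the clauses $\cbvNoS_0\esub{x}{\cbvNeS_0}$ must be handled; this requires a genuine mutual induction on $\cbvVrS_0$, $\cbvNeS_0$, $\cbvNoS_0$ in the style of Paolini and Ronchi Della Rocca's potential-valuability arguments, not just ``iterating'' the variable case. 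Second, the ``main obstacle'' you single out in the last leg is not actually an obstacle: rule $(\cbvBKRVAppRuleName)$ forces the function position of any typed application to carry a \emph{singleton} multiset type $\mset{\M \typeArrow \sigma}$, so in $\app{(\abs{x}{\cbvTCtxt})}{u}$ the index set of the $(\cbvBKRVAbsRuleName)$ premise is a singleton and the body $\cbvTCtxt<t>$ is necessarily typed. No appeal to the reduction behaviour of the whole term is needed there; the pull-back of typability through testing contexts is purely syntactic.
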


\paragraph*{\textbf{Adequacy of the \CBVSymb-Calculus.}}

Plotkin's seminal work~\cite{Plotkin75} established the foundations
and key concepts about the \emph{call-by-value} $\lambda$-calculus
(\emph{CbV}).  The syntax of Plotkin's CbV is based on
\pure\ terms. 
Using the notations introduced above, the reduction relations $\cbvArr_{\mathrm{pv}_0}$
and $\cbvArr_{\mathrm{pv}_\omega}$ in Plotkin's CbV are the closures
under \pure\  surface and full contexts, of the \emph{$\cbvSymbBetaPlot$-rule}~below:
\begin{equation*}
    \app[\,]{(\abs{x}{t})}{v} \;\mapstoR[\cbvSymbBetaPlot]\; t\isub{x}{v}
    \quad
    \text{where $v$ is a \pure\  value}
\end{equation*}
For example, $\app{(\abs{x}{x})}{\abs{y}{z}} \cbvArr_{\mathrm{pv}_0}
\abs{y}{z}$ and $\abs{x}{(\app{(\abs{y}{z})}{x})}
\cbvArr_{\mathrm{pv}_\omega} \abs{x}{z}$. As expected,
$\cbvArr^*_{\mathrm{pv}_\omega}$ denotes the reflexive-transitive
closure of 
$\cbvArr_{\mathrm{pv}_\omega}$. 
It is straightforward to see that reduction in Plotkin's CbV can be
easily simulated by reductions in \CBVSymb.
The converse fails (hence \CBVSymb reduces \emph{more} that Plotkin's CbV), even when considering \pure\  terms: take 
	$t \coloneqq \app{\app{(\abs{x}{\Delta})}{(\app{y}{y})}}{\Delta}$,
	where $\delta \coloneqq \abs{z}{\app{z}{z}}$.  Note that
	$\abs{x}{\Delta}$ is applied to $\app{y}{y}$ ---which is not a
	value and cannot reduce to a value--- so $t$ is normal in
	Plotkin's CbV, but it diverges in \CBVSymb:
\begin{align*}\label{eq:mismatch}
	t \cbvArr_{S_0}  \app{{\Delta}\esub{x}{\app{y}{y}}}{\Delta} \cbvArr_{S_0} (zz) \esub{z}{\Delta} \esub{x}{\app{y}{y}} \cbvArr_{S_0} \Delta\Delta \esub{x}{\app{y}{y}} \cbvArr_{S_0}  \dots
\end{align*}

Concerning  the notion of meaningfulness, it can be adapted to Plotkin's CbV: a \pure\  term  is \defn{Plotkin-meaningful} if there is a testing context $\cbvTCtxt$ and a \pure\  value  $v$ such that $\cbvTCtxt<t> \cbvArr*_{\mathrm{pv}_0} v$.

Plotkin~\cite{Plotkin75} also introduced the notion of
\emph{observational equivalence} to identify \pure\ terms having the
same behavior.  Two \pure\ terms $t, u$ are \defn{observationally
equivalent in Plotkin's CbV}, noted $t \cong^p u$, if for every \pure\
full context $\cbvCCtxt$ such that $\cbvCCtxt<t>$ and $\cbvCCtxt<u>$
are \emph{closed}, $\cbvCCtxt<t> \cbvArr*_{\mathrm{pv}_\omega} v_1$
iff $\cbvCCtxt<u> \cbvArr*_{\mathrm{pv}_\omega} v_2$ for some \pure\
values $v_1,v_2$. This notion can be naturally extended to ES: two
terms $t, u \in \setCbvTerms$ are \defn{observationally equivalent in
\CBVSymb}, noted $t \cong u$, if for every full context $\cbvCCtxt$
such that $\cbvCCtxt<t>$ and $\cbvCCtxt<u>$ are \emph{closed},
$\cbvCCtxt<t> \cbvArr*_{S_\indexOmega} v_1$ iff $\cbvCCtxt<u>
\cbvArr*_{S_\indexOmega} v_2$ for some values $v_1, v_2$. Accattoli
and Guerrieri~\cite{AccattoliGuerrieri22bis} proved that the notions
of meaningfulness and observational equivalence coincide in
Plotkin's~CbV~and~\CBVSymb.

\begin{lemma}[\cite{AccattoliGuerrieri22bis}]%
	\label{lemma:robust}
    Let $t,u$ be two \pure\ terms. 
    \begin{itemize}
	\item Observational equivalence in Plotkin and \CBVSymb coincide.

	\item $t$ is Plotkin-meaningful if and only if $t$ is
		\CBVSymb-meaningful.
\end{itemize}
\end{lemma}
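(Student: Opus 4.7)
The plan is to split the lemma into its two bullets: first prove (ii) that the two notions of meaningfulness coincide on pure terms, then deduce (i) the coincidence of observational equivalences from it. A useful preliminary remark is that testing contexts $\cbvTCtxt$ and pure full contexts $\cbvCCtxt$ contain no ES, so they are literally common to Plotkin's CbV and \CBVSymb.

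For (ii), the forward direction is routine: any Plotkin step $\app[\,]{(\abs{x}{s})}{v} \cbvArr_{\mathrm{pv}_0} s\isub{x}{v}$ is simulated by two \CBVSymb surface steps $\app[\,]{(\abs{x}{s})}{v} \cbvArr_{S_0} s\esub{x}{v} \cbvArr_{S_0} s\isub{x}{v}$ (a $\cbvSymbBeta$ followed by a $\cbvSymbSubs$, both with empty list contexts), so a witnessing Plotkin reduction $\cbvTCtxt<t> \cbvArr*_{\mathrm{pv}_0} v$ lifts directly to $\cbvTCtxt<t> \cbvArr*_{S_0} v$. For the backward direction, the cleanest route is via the logical characterization from \Cref{lem:cbvBKRV_characterizes_meaningfulness}: \CBVSymb-meaningfulness coincides with $\cbvTypeSysBKRV$-typability. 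Since an analogous intersection-type characterization of Plotkin-meaningfulness (potential valuability) on pure terms is known in the literature, and since the pure-term fragment of $\cbvTypeSysBKRV$ coincides with that Plotkin system, both notions agree on pure terms. A purely operational alternative would take a \CBVSymb surface reduction $\cbvTCtxt<t> \cbvArr*_{S_0} v$ and, by induction on its length while rearranging $\cbvSymbSubs$-steps through confluence (\Cref{lem:Cbv_Surface_Confluence}), extract a Plotkin reduction to some pure value.

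For (i), the direction $\cong \Rightarrow \cong^p$ goes as follows. Given a pure full context $\cbvCCtxt$ closing $t$ and $u$, a Plotkin reduction $\cbvCCtxt<t> \cbvArr*_{\mathrm{pv}_\omega} v_1$ to some pure value $v_1$ lifts to $\cbvCCtxt<t> \cbvArr*_{S_\omega} v_1$ in \CBVSymb, so by $t \cong u$ we get $\cbvCCtxt<u> \cbvArr*_{S_\omega} v_2$ for some \CBVSymb value $v_2$. Applying the full-level analogue of (ii), proved by the same typability or confluence argument with $\cbvArr*_{S_\omega}$ in place of $\cbvArr*_{S_0}$, one recovers a pure Plotkin value $v'_2$ with $\cbvCCtxt<u> \cbvArr*_{\mathrm{pv}_\omega} v'_2$. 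For the harder converse $\cong^p \Rightarrow \cong$, given a \CBVSymb full context $\cbvCCtxt$ closing $t$ and $u$, I would define by induction a translation $\cbvCCtxt \mapsto \cbvCCtxt^{\circ}$ that unfolds every ES $s\esub{x}{r}$ into its Plotkin image $\app[\,]{(\abs{x}{s})}{r}$, producing a pure full context with equivalent observational behavior on pure terms, and then invoke $t \cong^p u$ on $\cbvCCtxt^{\circ}$.

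The main obstacle is the backward direction of (ii), which must bridge two calculi whose reductions differ already on pure terms (as the divergence example in the excerpt shows). Appealing to the typability characterization of \Cref{lem:cbvBKRV_characterizes_meaningfulness} sidesteps the rewriting mismatch, since typability is independent of evaluation strategy; the operational alternative, while feasible, would require delicate surgery to push ES created during \CBVSymb reduction out of the witnessing reduct into the testing context.
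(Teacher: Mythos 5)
First, a point of comparison: the paper does not prove this lemma at all --- it is recalled verbatim from the cited reference \cite{AccattoliGuerrieri22bis} and used as a black box. So there is no in-paper proof to measure your attempt against; what you have written is a reconstruction of the literature argument, and structurally it is the right one. The forward simulation of a Plotkin $\beta_v$-step by a $\cbvSymbBeta$-step followed by a $\cbvSymbSubs$-step is correct, and routing the backward direction of the meaningfulness equivalence through the type system is indeed how the cited work proceeds: typability is invariant under both reductions, so it bridges the operational mismatch that your own divergence example exhibits.

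That said, as a proof your text still has three places where the real work is deferred rather than done. (1) The crux of item (ii) --- that the \pure-term fragment of $\cbvTypeSysBKRV$ characterizes Plotkin potential valuability --- is asserted as ``known in the literature'' but never argued; this is precisely the content being proved, so citing it makes the argument circular at the level of this lemma (acceptable if you are recalling a result, not if you are proving it). The operational alternative you sketch (rearranging $\cbvSymbSubs$-steps via confluence) would in fact require a factorization/postponement theorem for the VSC, not just confluence, to push the administrative ES-steps out of the way. (2) In the direction $\cong^p \Rightarrow \cong$, the claim that the unfolding $s\esub{x}{r} \mapsto \app{(\abs{x}{s})}{r}$ yields a context with ``equivalent observational behavior'' needs the invariance of ``reduces to a value'' under $\cbvArr_{S_\indexOmega}$-reduction and expansion (which follows from \Cref{lem:Cbv_Surface_Confluence} plus the observation that $\cbvCCtxt^{\circ}\langle t\rangle \cbvArr*_{S_\indexOmega} \cbvCCtxt\langle t\rangle$), but you do not state or use this. (3) Both directions of item (i) silently conflate reducing to a value under \emph{full} reduction (used in the definition of $\cong$ and $\cong^p$) with reducing to a value under \emph{surface} reduction (used in item (ii)); for closed terms these coincide, but that is again a factorization-style fact that must be invoked, not assumed. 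None of these is a wrong turn --- the skeleton is sound --- but a self-contained proof would have to fill all three.
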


Thus, the semantic notions of meaningfulness and observational
equivalence are \emph{robust} in CbV, they do not depend on the
syntactic formulation of the calculus (Plotkin's or \CBVSymb). Said
differently, \CBVSymb is \emph{conservative} with respect to Plotkin's
CbV: it does not change its semantics. However, there is a crucial
difference between Plotkin's CbV and \CBVSymb. As Paolini and Ronchi
Della Rocca~\cite{paolini99tia} first observed (see
also~\cite{RoccaP04,AP12,CarraroGuerrieri14,AccattoliGuerrieri16,AccattoliGuerrieri22bis}),
Plotkin's CbV is not \emph{adequate}:
\begin{enumerate}[leftmargin=*,labelindent=0pt]
	\item observational equivalence may equate terms having different
	operational behaviors in Plotkin's CbV, 
	normal form and a diverging term, \eg $t \coloneqq
	\app{\app{(\abs{x}{\Delta})}{(\app{y}{y})}}{\Delta}$ seen above is
	normal, but $t \cong \Delta\Delta$ and $\Delta\Delta$ is
	diverging;

	\item meaningfulness cannot be characterized operationally inside
	Plotkin's CbV, for instance again the term $t \coloneqq
	\app{\app{(\abs{x}{\Delta})}{(\app{y}{y})}}{\Delta}$ seen above is
	normal for Plotkin, but meaningless.
\end{enumerate}

The inadequacy of Plotkin's CbV is due to \emph{open} terms, that is,
terms containing free variables (such as $t$ above). Indeed, Plotkin's
CbV works well only when reduction is \emph{weak} (it does not go
under abstractions)  and restricted to \emph{closed} terms (without
free variables): this is enough to properly model evaluation  of
functional programs. But Plotkin's does not model adequately
proof-assistants~\cite{GregoireLeroy02}, where evaluation proceeds
under abstraction, and hence on possibly (locally) open terms. On the
contrary, \CBVSymb is adequate because meaningfulness can be
characterized operationally  inside the calculus
(\Cref{lem:cbvBKRV_characterizes_meaningfulness}) and observational
equivalence does not equate terms with a different operational
behavior (\Cref{sec:theories}).

\paragraph*{\textbf{Surface Genericity.}} Genericity can be stated in
various ways, according to the previous different characterizations of
meaningfulness (\Cref{lem:cbvBKRV_characterizes_meaningfulness}). As
we shall see, an operational presentation of genericity is more
informative than a logical one, it particularly allows for the
extraction of quantitative information of the reduction sequences to
normal form, as well as observational information about these normal
forms. Nonetheless, proving logical genericity, called here
\emph{typed genericity}, is a more straightforward endeavor (just an
induction on full contexts), so that we start with it.

\begin{restatable}[\CBVSymb Typed Genericity]{lemma}{CbvSurfaceTypedGenericity}
    \LemmaToFromProof{t_meaningless_and_Pi_|>_F<t>_==>_Pi'_|>_F<u>}%
    Let $t \in \setCbvTerms$ be \CBVSymb-meaningless. If $\Pi
    \cbvTrBKRV \Gamma \vdash
    \cbvCCtxt<t> : \sigma$, then there
    is $\Pi' \cbvTrBKRV \Gamma \vdash
    \cbvCCtxt<u> : \sigma$ for all $u
    \in \setCbvTerms$.%
\end{restatable}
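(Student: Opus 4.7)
The plan is to proceed by structural induction on the full context $\cbvCCtxt$, discharging the base case vacuously via the logical characterization of meaningfulness supplied by \Cref{lem:cbvBKRV_characterizes_meaningfulness}. For $\cbvCCtxt = \Hole$, we have $\cbvCCtxt<t> = t$, so a derivation $\Pi \cbvTrBKRV \Gamma \vdash t : \sigma$ would witness $\cbvTypeSysBKRV$-typability of $t$ and hence \CBVSymb-meaningfulness of $t$, contradicting the hypothesis. No such $\Pi$ exists, and the statement holds vacuously at the hole.

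\textbf{Inductive step.} In the non-hole cases, the outer constructor of $\cbvCCtxt$ forces the last rule of $\Pi$: the abstraction rule when $\cbvCCtxt = \abs{x}{\cbvCCtxt'}$, the application rule when $\cbvCCtxt = \app{\cbvCCtxt'}{s}$ or $\app{s}{\cbvCCtxt'}$, and the ES rule when $\cbvCCtxt = \cbvCCtxt'\esub{x}{s}$ or $s\esub{x}{\cbvCCtxt'}$. In every case, some premise(s) of that rule amount to a (family of) sub-derivation(s) typing $\cbvCCtxt'<t>$ under some typing context and some type. I will invoke the induction hypothesis on each such sub-derivation to obtain a sub-derivation typing $\cbvCCtxt'<u>$ with the same typing context and the same type, and then replay the last rule unchanged, assembling a derivation $\Pi'$ of $\Gamma \vdash \cbvCCtxt<u> : \sigma$. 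Bound-variable names carry through by $\alpha$-conventions ensuring contexts avoid capturing $t$ and $u$.

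\textbf{Main obstacle.} The only delicate point is positions where $\cbvCCtxt'<t>$ must receive a \emph{multiset} type $\M = \mset{\sigma_i}_{i \in I}$, as happens for the argument of an application, the right-hand side of an ES, or (via the $i \in I$ indexing inside the abstraction rule) for the body of an abstraction. Such a typing corresponds to an $I$-indexed bundle of sub-derivations, each yielding $\cbvCCtxt'<t> : \sigma_i$ with typing context $\Gamma_i$, the total context being $+_{i \in I} \Gamma_i$. The induction hypothesis has to be invoked uniformly on each element of the bundle to produce the derivations for $\cbvCCtxt'<u>$, after which the contexts sum exactly as before. The empty case $I = \emptyset$ is what gives the argument its flexibility: no sub-derivation is required, so arbitrary $u$ is admitted freely, and this is precisely the mechanism by which a meaningless subterm can sit harmlessly inside an overall typable derivation.
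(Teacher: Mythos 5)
Your proof is correct and takes essentially the same route as the paper's: induction on the full context, with the base case $\cbvCCtxt = \Hole$ dismissed as impossible via the logical characterization of meaningfulness, and each inductive case handled by inverting the last typing rule, applying the induction hypothesis to the premise(s) typing $\cbvCCtxt'<t>$, and replaying the rule unchanged (the $I = \emptyset$ instance of the abstraction rule being, as you say, what lets the meaningless subterm sit inside a typable term). The only, harmless, inaccuracy is in your description of the multiset-typed positions: in $\cbvTypeSysBKRV$ the argument of an application and the right-hand side of an ES are each typed by a \emph{single} sub-derivation whose conclusion carries the multiset type $\M$ (the $I$-indexed families of premises occur only in the abstraction rule), and since the type $\sigma$ in the lemma already ranges over multiset types, the induction hypothesis applies to that sub-derivation directly, with no bundling needed.
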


For instance, in the derivation \eqref{eq:derivation}, we can replace $\Omega$, which is \CBVSymb-meaningless, with any $u \in \setCbvTerms$ without changing any type.

The logical characterization of meaningfulness
(\Cref{lem:cbvBKRV_characterizes_meaningfulness}) together with the
previous \CBVSymb Typed Genericity result suffice to derive the
following first (weak) surface genericity theorem.

\begin{restatable}[\CBVSymb Qualitative Surface Genericity]{theorem}{RecCbvQualitativeSurfaceGenericity}
    \label{lem:Cbv_Qualitative_Surface_Genericity}%
    Let  $t \in
    \setCbvTerms$ be \CBVSymb-meaningless. If $\cbvCCtxt<t>$ is
    \CBVSymb-meaningful then $\cbvCCtxt<u>$ is also
    \CBVSymb-meaningful for all $u \in \setCbvTerms$.
\end{restatable}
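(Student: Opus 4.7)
The plan is to combine the logical characterization of meaningfulness (Lemma~\ref{lem:cbvBKRV_characterizes_meaningfulness}) with the Typed Genericity Lemma (\Cref{lem:t_meaningless_and_Pi_|>_F<t>_==>_Pi'_|>_F<u>}). Since meaningfulness and $\cbvTypeSysBKRV$-typability coincide, we can cross over from the operational world (meaningfulness) to the logical world (typing), invoke the Typed Genericity statement which is tailor-made for this transfer, and then cross back.

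Concretely, assume the hypotheses: $t \in \setCbvTerms$ is \CBVSymb-meaningless and $\cbvCCtxt<t>$ is \CBVSymb-meaningful. By the logical characterization in \Cref{lem:cbvBKRV_characterizes_meaningfulness}, there exist $\Gamma$ and $\sigma$ together with a derivation $\Pi \cbvTrBKRV \Gamma \vdash \cbvCCtxt<t> : \sigma$. Fix now an arbitrary $u \in \setCbvTerms$. Applying \Cref{lem:t_meaningless_and_Pi_|>_F<t>_==>_Pi'_|>_F<u>} to $t$ (which is meaningless) and the context $\cbvCCtxt$ yields a derivation $\Pi' \cbvTrBKRV \Gamma \vdash \cbvCCtxt<u> : \sigma$. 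Hence $\cbvCCtxt<u>$ is $\cbvTypeSysBKRV$-typable, and appealing once more to the logical characterization of \Cref{lem:cbvBKRV_characterizes_meaningfulness}, we conclude that $\cbvCCtxt<u>$ is \CBVSymb-meaningful.

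There is no genuine obstacle here: both ingredients have already been established earlier in the section, and the argument is essentially a one-line chase through the characterizations. The only minor point to note is that $\Gamma$ and $\sigma$ are preserved through Typed Genericity, which is important so that the replaced term really is typable (not just that some judgement holds for $\cbvCCtxt<u>$ under unrelated data). The qualitative nature of the statement is precisely what allows such a clean proof; the harder, quantitative and stratified refinement is what will require the novel machinery developed in \Cref{sec:methodology,sec:Cbv_Stratified_Genericity}.
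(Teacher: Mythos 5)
Your proof is correct and follows exactly the same route as the paper's: apply the logical characterization of meaningfulness to obtain a typing derivation for $\cbvCCtxt<t>$, invoke Typed Genericity (\Cref{lem:t_meaningless_and_Pi_|>_F<t>_==>_Pi'_|>_F<u>}) to transfer it to $\cbvCCtxt<u>$ with the same $\Gamma$ and $\sigma$, and conclude via the characterization again. Nothing to add.
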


\begin{proof}
    Let $t$ be \CBVSymb-meaningless and
    $\cbvCCtxt<t>$ be
    \CBVSymb-meaningful. By 
    \Cref{{lem:cbvBKRV_characterizes_meaningfulness}},
    $\cbvCCtxt<t>$ is
    $\cbvTypeSysBKRV$-typable. By
    \Cref{lem:t_meaningless_and_Pi_|>_F<t>_==>_Pi'_|>_F<u>},
    $\cbvCCtxt<u>$ is also
    $\cbvTypeSysBKRV$-typable,  hence
    $\cbvCCtxt<u>$ is
    \CBVSymb-meaningful by
    \Cref{{lem:cbvBKRV_characterizes_meaningfulness}}.
\end{proof}

These preliminary outcomes will be further bolstered and refined in
the subsequent sections of the paper: on one hand we will prove full
genericity, as explained in \Cref{sec:intro}, on the other hand, we
will enrich the genericity property quantitatively.

\section{Towards  Full Genericity: Key Tools}
\label{sec:methodology}

In this paper, we prove genericity results for \CBVSymb and \CBNSymb,
starting from \emph{qualitative surface genericity} up to
\emph{quantitative stratified genericity}, thus including full
genericity as a particular case, an important result for the theory of
call-by-value that has never been proved before. All these results
stand upon a first \emph{typed genericity} result
(\Cref{lem:t_meaningless_and_Pi_|>_F<t>_==>_Pi'_|>_F<u>}) that we
swiftly proved using the existing logical characterization of
meaningfulness in~\cite{AccattoliGuerrieri22bis}
(\Cref{lem:cbvBKRV_characterizes_meaningfulness}). For the rest of our
results, we use a technique based on some additional key ingredients
that we discuss in what follows. We use a \emph{generic} vocabulary
``calculus'', ``stratified context'', ``stratified reduction'', etc, the
intention being that the same methodology will be applied to
call-by-value (\Cref{sec:Cbv_Stratified_Genericity}) and to
call-by-name (\Cref{sec:cbn}).

\paragraph{\defn{Specifying Unknown Subterms: Meaningful Approximants}.}  
The set of terms $\setTerms$ of the calculus is extended with a new
constant $\genAprxBot$ representing an \emph{unknown piece of term}.
The resulting set of \emph{\tPartial terms} is noted $\setPTerms$,
using bold font. All the previous notions are lifted consequently:
\tPartial stratified contexts, \tPartial stratified reduction,
\tPartial normalization, etc. They are all defined as conservative
extensions to \tPartial elements of their underlying original ones.

\tPartial^ terms are endowed with a \emph{\tPartial order}
$\genAprxLeq$, defined as the \tPartial\ full context closure of the
relation $\genAprxBot \genAprxLeq \aprxT$, for all $\aprxT \in
\setPTerms$. Intuitively, $\aprxT \genAprxLeq \aprxU$ means that
$\aprxU$ carries more information than $\aprxT$.

We use the partial calculus to abstract from meaningless subterms. For
that purpose, for each term $t$ we construct its \emph{meaningful approximant} $\MeanApprox{t}$, a \tPartial term obtained by replacing all meaningless subterms of $t$ with $\cbvAprxBot$.
As expected, $\MeanApprox{t}$ (under-)approximates $t$, in
that $\MeanApprox{t} \genAprxLeq t$. Meaningful approximants are the
cornerstone of our approach: they will be used as intermediary
stakeholders to transfer a reduction sequence between two terms into
another reduction sequence between two  other related terms.

\paragraph{\defn{\Assumps}.}

The method used to derive quantitative stratified genericity from the
preliminary typed genericity result incorporates all the previously
mentioned elements, in addition to some general statements on the
\tPartial calculus and meaningful approximants. These statements,
called \emph{\assumps}, \emph{uniformly} outline the prerequisites
that are necessary for achieving quantitative and full genericity from
qualitative and surface genericity. Nonetheless, they require
validation for each individual calculus, given that their suitability
is closely tied to their respective concept of stratified reduction.

We
now detail the four  \assumps.
Our approach strongly relies on the operational characterization of
meaningfulness. Thus, all the \assumps are related to
reduction sequences and normal forms.

Numerous natural approaches exist to extend a given calculus into some
corresponding \tPartial calculus
(\Cref{subsec:Cbv_Partial_Calculus,sec:Cbn_Partial_Terms}).
However, not all of these methods consistently allow for an accurate
approximation of the calculus's dynamic behavior. What we mean by this
is the capacity to relate reductions in the calculus to those
occurring in its \tPartial counterpart, and vice versa.

We assume that $\indexK$-reduction sequences on terms can be
approximated by means of $\indexK$-reduction sequences on meaningful
approximants. Ideally, this means that $t \genArr_{\indexK} u$ implies
$\MeanApprox{t} \rightarrow_{\indexK} \MeanApprox{u}$. However, this
is not always true in our frameworks, and it is sufficient to
approximate reduction sequences in the following weaker way.
We denote by $\genArr_{\indexK}^=$ the reflexive closure of
$\genArr_{\indexK}$.

\begin{assumption}[Dynamic Approximation]
    \label{ax:Gen_t_->Sn_u_=>_OA(t)_->*Sn_|u_OA(u)}%
    Let $\indexK \in \setOrdinals$ and $t, u \in \setTerms$.
    If $t \genArr_{\indexK} u$ then there is
    $\aprxU \in \setPTerms$ such that $\MeanApprox{t}
    \rightarrow_{\indexK}^= \aprxU \genAprxGeq \MeanApprox{u}$.%
\end{assumption}

We also  assume that the \tPartial order gives a natural lifting of
\tPartial $\indexK$-reduction sequences into other reduction sequence
(on terms or \tPartial terms) of same length.

\begin{assumption}[Dynamic Partial Lifting]
    \label{ax:Gen_t_->Sn_u_and_t_<=_t'_==>_t'_->Sn_u'_and_u_<=_u'}%
    Let $\indexK \in \setOrdinals$ and $\aprxT, \aprxU, \aprxT' \in
    \setPTerms$. If\, $\aprxT \genArr_{\indexK} \aprxU$ and $\aprxT
    \genAprxLeq \aprxT'$, then $\aprxT' \genArr_{\indexK} \aprxU'
    \genAprxGeq \aprxU$ for some $\aprxU' \in \setPTerms$.%
\end{assumption}

Stratified normal forms are exactly the meaningful and observable
outcomes behind the notion of stratified computation. Let
${\tt no}_\indexK$ be the set of $\indexK$-normal forms. 
\tPartial^ $\indexK$-normal forms without any occurrence of $\cbvAprxBot$ at
depth $\indexK$, noted ${\tt \textbf{bno}}_\indexK$, provide perfect
approximations of $\indexK$-normal forms by keeping all their
meaningful content. Our next \assump asserts that the meaningful
approximant of an $\indexK$-normal form does not loose any meaningful
information.

\begin{assumption}[Observability of Normal Form Approximants]
    \label{ax:Gen_t_in_BnoSn_==>_OA(t)_in_noSn}%
    Let $\indexK \in \setOrdinals$ and $\aprxT \in {\tt
    no}_\indexK$, then $\MeanApprox{t} \in
    {\tt\textbf{bno}}_\indexK$.%
\end{assumption}

Finally, \tPartial terms in  ${\tt \textbf{bno}}_\indexK$ are supposed
to provide perfect approximants for $\indexK$-normal forms. The last
\assump asserts that they are stable by increasing information. 

\begin{assumption}[Stability of Meaningful Observables]
    \label{ax:Gen_|t_in_bnoSn_and_t_<=_u_==>_u_in_BnoSn}%
    Let $\indexK \in \setOrdinals$ and $\aprxT \in
    {\tt\textbf{bno}}_\indexK$. If $\aprxU \in \setPTerms$
    and $\aprxT \genAprxLeq \aprxU$, then $\aprxU \in
    {\tt\textbf{bno}}_\indexK$ and $\aprxT \genEq^\indexK \aprxU$.%
\end{assumption}

\section{Call-by-Value Stratified Genericity}
\label{sec:Cbv_Stratified_Genericity}

In this section we first introduce the \emph{\tPartial}\
\CBVSymb-calculus (\Cref{subsec:Cbv_Partial_Calculus}), an extension
of \CBVSymb enabling the simplification of some (meaningless)
subterms. Then, we prove that the axioms presented in
\Cref{sec:methodology} hold in the partial \CBVSymb
(\Cref{lem:Cbv_t_->Sn_u_=>_MA(t)_->*Sn_MA(u),lem:cbvAGK_Aprx_Simulation_MultiStep,lem:Cbv_t_in_BnoSn_==>_OA(t)_in_noSn,lem:Cbv_|t_in_bnoSn_and_t_<=_u_==>_u_in_BnoSn}).
We use these properties to strengthen our first \CBVSymb qualitative
surface genericity result into a quantitative stratified one
(\Cref{lem:Cbv_Quantitative_Stratified_Genericity}).

\subsection{The Partial \CBVSymb-Calculus}
\label{subsec:Cbv_Partial_Calculus}%


\paragraph*{\textbf{Syntax.}} Terms of the \tPartial\
\CBVSymb-calculus share the same structure as those of \CBVSymb, with
the additional constant $\cbvAprxBot$. Formally, the sets
$\setCbvAprxTerms \supsetneq \setCbvTerms$ of \textbf{\tPartTerm+} and
$\setCbvAprxValues \supsetneq \setCbvValues$ of \textbf{\tPartValue+}
for the \tPartial\ \CBVSymb-calculus, denoted by the same symbols for
terms but distinguished by a bold font, are defined inductively as
follows:
\begin{equation*}
    \begin{array}{r rcl}
        \textbf{(\tPartial^ terms)} \quad
        &\aprxT, \aprxU&\Coloneqq& \aprxV
            \vsep \app[\,]{\aprxT}{\aprxU}
            \vsep \aprxT\esub{x}{\aprxU}
            \vsep \cbvAprxBot
    \\
        \textbf{(\tPartial^ values)} \quad
        &\aprxV&\Coloneqq& x \in \setCbvVariables
            \vsep \abs{x}{\aprxT}
    \end{array}
\end{equation*}


\paragraph*{\textbf{Contexts.}} As in \CBVSymb, there is a stratified
operational semantics, but now defined by means of \tPartial contexts.
The sets of \textbf{\tPartLCtxt+} $(\cbvAprxLCtxt)$ and
\textbf{\tPartSnCtxt+} $(\cbvAprxSCtxt_\indexK)$, for $\indexK \in
\setOrdinals$, are inductively defined as follows (where $\indexI \in
\setIntegers$): 
\begin{equation*}
    \begin{array}{c}
        \begin{array}{rcl}
            \cbvAprxLCtxt &\Coloneqq& \Hole
                \vsep \cbvAprxLCtxt\esub{x}{\aprxT}
        \\[0.2cm]
            \cbvAprxSCtxt_0 &\;\Coloneqq& \Hole
                \vsep \app[\,]{\cbvAprxSCtxt_0}{\aprxT}
                \vsep \app[\,]{\aprxT}{\cbvAprxSCtxt_0}
                \vsep \cbvAprxSCtxt_0\esub{x}{\aprxT}
                \vsep \aprxT\esub{x}{\cbvAprxSCtxt_0}
        \\
            \cbvAprxSCtxt_{\indexI + 1} &\Coloneqq& \Hole
                \vsep \abs{x}{\cbvAprxSCtxt_\indexI}
                \vsep \app[\,]{\cbvAprxSCtxt_{\indexI + 1}}{\aprxT}
                \vsep \app[\,]{\aprxT}{\cbvAprxSCtxt_{\indexI + 1}}
                \vsep \cbvAprxSCtxt_{\indexI + 1}\esub{x}{\aprxT}
                \vsep \aprxT\esub{x}{\cbvAprxSCtxt_{\indexI + 1}}
        \\
            \cbvAprxSCtxt_\indexOmega &\Coloneqq& \Hole
                \vsep \abs{x}{\cbvAprxSCtxt_\indexOmega}
                \vsep \app[\,]{\cbvAprxSCtxt_\indexOmega}{\aprxT}
                \vsep \app[\,]{\aprxT}{\cbvAprxSCtxt_\indexOmega}
                \vsep \cbvAprxSCtxt_\indexOmega\esub{x}{\aprxT}
                \vsep \aprxT\esub{x}{\cbvAprxSCtxt_\indexOmega}
        \end{array}
    \end{array}
\end{equation*}
In particular, we call \defn{\tPartial surface} (\resp \defn{full})
\defn{context} the base (\resp transfinite) level $\cbvAprxSCtxt_0$
(\resp $\cbvAprxSCtxt_\indexOmega$).

\paragraph*{\textbf{(Stratified) Operational Semantics.}} The
following rewriting rules are the base components of our reduction
relations. Any \tPartTerm having the shape of the left-hand side of
one of these three rules is called a \textbf{redex}.
\begin{equation*}
    \begin{array}{c c c}
        \app{\cbvAprxLCtxt\cbvCtxtPlug{\abs{x}{\aprxT}}}{\aprxU}
            \;\mapstoR[\cbvAprxSymbBeta]\;
        \cbvAprxLCtxt\cbvCtxtPlug{\aprxT\esub{x}{\aprxU}}
    &\hspace{0.25cm}&
        \aprxT\esub{x}{\cbvAprxLCtxt\cbvCtxtPlug{\aprxV}}
            \;\mapstoR[\cbvAprxSymbSubs]\;
        \cbvAprxLCtxt\cbvCtxtPlug{\aprxT\isub{x}{\aprxV}}
    \end{array}
\end{equation*}

Building upon the established pattern used for terms, the
\defn{\tPartSnRed} (or $\cbvAprxSymbSurface_\indexK$-reduction in bold) $\cbvAprxArr_{S_\indexK}$ is the \tPartial
$\indexK$-stratified closure ($\cbvAprxSCtxt_\indexK$-closure) of the
two rewriting rules $\mapstoR[\cbvAprxSymbBeta]$ and
$\mapstoR[\cbvAprxSymbSubs]$. In particular, the reduction
$\cbvAprxArr_{S_0}$ (\resp $\cbvAprxArr_{S_\indexOmega}$) is called
\defn{\tPartial surface} (resp. \defn{full}) \defn{reduction}. Its
reflexive and transitive closure is again denoted by $\cbvAprxArr*_{S_\indexK}$. A \tPartTerm
$\aprxT$ is a \defn{$\cbvAprxSymbSurface_\indexK$-normal form} (or \defn{$\cbvAprxSymbSurface_\indexK$-normal}) if
there is no $\aprxU \in \setCbvAprxTerms$ such that $\aprxT \cbvAprxArr_{S_\indexK}
\aprxU$. A \tPartTerm is
\defn{$\cbvAprxSymbSurface_\indexK$-normalizing} if $\aprxT
\cbvAprxArr*_{S_\indexK} \aprxU$ for some 
$\cbvAprxSymbSurface_\indexK$-normal form $\aprxU$, and similarly for the
\tPartial surface and full reductions.
\begin{example} \label{example:partial_reduction} 
    $\aprxT_0 =
    \big(\abs{y}{\app{\cbvAprxBot}{w}}\big)\esub{w}{\abs{z}{\cbvAprxBot}}
    \cbvAprxArr_{S_\indexK}
    \abs{y}{\app[\,]{\cbvAprxBot}{\big(\abs{z}{\cbvAprxBot}}\big)} =
    \aprxT_1$ for any $\indexK \in \setOrdinals$. In particular,
    $\aprxT_1$ is \tPartial surface normal.
\end{example}

\subsection{Quantitative Stratified Genericity}
\label{subsec:Meaningful_Approximation}%

\paragraph*{\textbf{Meaningful Approximation.}} As explained before,
our approach is based on abstracting meaningless subterms by means of
\tPartial terms. This is implemented through the tool of
\emph{meaningful approximation}, which associates with each term a
partial term obtained by replacing all its meaningless subterms with
$\cbvAprxBot$. Formally, the
\defn{\CBVSymb-meaningful approximant}
$\cbvTApprox{t}$ of a term $t \in \setCbvTerms$ is defined as
$\cbvAprxBot$ if $t$ is \CBVSymb-meaningless, and inductively
on $t$ as~follows~otherwise:
\begin{align*}\hspace{-0.3cm}
    \hspace{-0.3cm}
        \cbvTApprox{x}
            &\coloneqq x
     &&&
     \hspace{-0.4cm}
     \cbvTApprox{t\esub{x}{u}}
     &\coloneqq \cbvTApprox{t}\esub{x}{\cbvTApprox{u}}
    \\
    \hspace{-0.3cm}
        \cbvTApprox{\abs{x}{t}}
            &\coloneqq \abs{x}{\cbvTApprox{t}}
     &&&
     \hspace{-0.4cm}
     \cbvTApprox{\app{t}{u}}
     &\coloneqq \app{\cbvTApprox{t}}{\cbvTApprox{u}}
\end{align*}

\begin{example}\label{ex:approximant}
$\cbvTApprox{\app{(\abs{x}{\app{x}{\Omega}})}{(\abs{y}{\Omega})}} =
\app{(\abs{x}{\cbvAprxBot})}{(\abs{y}{\cbvAprxBot})}$.
\end{example}

Another \tPartial calculus already exists in a CbV
setting~\cite{KerinecManzonettoPagani20,KerinecMR21,ManzonettoPR2019},
which only approximates values. Their approach seems incompatible with
our definition of meaningful approximant: it cannot be used to
approximate meaningless terms such as $\Omega$.

We equip the set of \tPartial terms $\setPTerms$ with a
\defn{\tPartial order} $\genAprxLeq$ defined as the \tPartial\ full
context closure of $\genAprxBot \genAprxLeq \aprxT$ for any \tPartial
term $\aprxT \in \setPTerms$. This definition does not only allow us
to approximate values, as done in~\cite{KerinecManzonettoPagani20},
but any kind of term. As expected, the meaningful approximation of a
term approximates it.%
\begin{restatable}{lemma}{CbvMeaningfulApproximationLeqTerm}
    \LemmaToFromProof{Cbv_MA(t)_<=_t}%
    Let $t \in \setCbvTerms$, then $\cbvTApprox{t} \cbvAprxLeq t$.
\end{restatable}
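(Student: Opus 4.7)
The plan is to proceed by structural induction on $t \in \setCbvTerms$, splitting at each inductive step into two subcases according to whether $t$ is \CBVSymb-meaningful or not. The meaningless subcase is immediate from the definition of $\cbvTApprox{\cdot}$ and the generating relation of $\cbvAprxLeq$, while the meaningful subcase follows syntax-directed compositional reasoning along the clauses defining $\cbvTApprox{\cdot}$.

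If $t$ is \CBVSymb-meaningless, then $\cbvTApprox{t} = \cbvAprxBot$ by definition, and $\cbvAprxBot \cbvAprxLeq t$ is precisely the generating relation of $\cbvAprxLeq$, so nothing further is required. If $t$ is \CBVSymb-meaningful, I handle each syntactic form in turn. When $t = x$, I have $\cbvTApprox{x} = x$, and $x \cbvAprxLeq x$ follows from reflexivity of $\cbvAprxLeq$ (obtained by closing the generating relation under the identity context $\Hole$). When $t = \abs{x}{t'}$, the induction hypothesis yields $\cbvTApprox{t'} \cbvAprxLeq t'$, and closing under the partial full context $\abs{x}{\Hole}$ gives $\cbvTApprox{t} = \abs{x}{\cbvTApprox{t'}} \cbvAprxLeq \abs{x}{t'} = t$, as required.

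The two binary cases, $t = \app{t_1}{t_2}$ and $t = t_1\esub{x}{t_2}$, are structurally identical. From the induction hypothesis I obtain $\cbvTApprox{t_i} \cbvAprxLeq t_i$ for $i=1,2$, and I then combine the two through two context closure steps (first in one hole position, then in the other) chained by transitivity. The main subtlety, and essentially the only point worth checking carefully, is that the ``partial full context closure'' definition of $\cbvAprxLeq$ genuinely supports this kind of compositional assembly, \ie that closure under a single-holed context together with the generating clause yields both reflexivity and transitivity, so that two independent improvements (one per subterm) can be composed into a single comparison at the binary construct. Once this is granted, the induction goes through routinely, and the lemma follows; I do not expect any further obstacles since the definition of $\cbvTApprox{\cdot}$ mirrors the structural recursion of the argument exactly.
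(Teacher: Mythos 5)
Your proof is correct and follows essentially the same route as the paper's: a case split on whether $t$ is \CBVSymb-meaningless (where $\cbvTApprox{t}=\cbvAprxBot\cbvAprxLeq t$ is immediate from the generating clause) and otherwise a structural induction whose binary cases combine the two inductive hypotheses by contextual closure and transitivity of $\cbvAprxLeq$. The compositionality point you flag is exactly what the paper's proof silently relies on when writing steps like $\app{\cbvTApprox{t_1}}{\cbvTApprox{t_2}} \cbvAprxLeq \app{t_1}{t_2}$, and it holds since $\cbvAprxLeq$ is by definition a partial order closed under full partial contexts.
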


Using qualitative surface genericity
(\Cref{lem:Cbv_Qualitative_Surface_Genericity}), we obtain a
genericity result for meaningful approximants which states that
replacing a meaningless subterm by any other kind of term can only
refine its meaningful approximant.

\begin{restatable}[\CBVSymb Approximant Genericity]{lemma}{CbvApproximantGenericity}
    \LemmaToFromProof{Cbv_t_meaningless_=>_MA(F<t>)_<=_MA(F<u>)}%
    Let $\cbvCCtxt$ be a full context. If $t \in \setCbvTerms$
    is \CBVSymb-meaningless, then $\cbvTApprox{\cbvCCtxt<t>}
    \cbvAprxLeq \cbvTApprox{\cbvCCtxt<u>}$ for all $u \in
    \setCbvTerms$.
\end{restatable}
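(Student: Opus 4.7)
The plan is to proceed by structural induction on the full context $\cbvCCtxt$, using \Cref{lem:Cbv_Qualitative_Surface_Genericity} (qualitative surface genericity) as the essential ingredient in every inductive case.

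In the base case $\cbvCCtxt = \Hole$, the plug yields $\cbvCCtxt<t> = t$, which is \CBVSymb-meaningless by hypothesis. Hence $\cbvTApprox{\cbvCCtxt<t>} = \cbvAprxBot$, and since $\cbvAprxBot$ is the minimum of the information order $\cbvAprxLeq$, we immediately obtain $\cbvAprxBot \cbvAprxLeq \cbvTApprox{u} = \cbvTApprox{\cbvCCtxt<u>}$.

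For the inductive step, I treat each top-level constructor of $\cbvCCtxt$ (abstraction, either-side application, either-side explicit substitution) uniformly. First I case-split on whether the term $\cbvCCtxt<t>$ is \CBVSymb-meaningful. If it is meaningless, then $\cbvTApprox{\cbvCCtxt<t>} = \cbvAprxBot$ and the conclusion is immediate. Otherwise, $\cbvCCtxt<t>$ is meaningful, and I invoke \Cref{lem:Cbv_Qualitative_Surface_Genericity} to deduce that $\cbvCCtxt<u>$ is meaningful for every $u \in \setCbvTerms$. Consequently, both approximants $\cbvTApprox{\cbvCCtxt<t>}$ and $\cbvTApprox{\cbvCCtxt<u>}$ unfold through the \emph{same} top-level constructor according to the compositional clauses of meaningful approximation, exposing an inner subcontext $\cbvCCtxt'$ to which the induction hypothesis applies. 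This gives $\cbvTApprox{\cbvCCtxt'<t>} \cbvAprxLeq \cbvTApprox{\cbvCCtxt'<u>}$, and monotonicity of $\cbvAprxLeq$ under the full-context closure lifts this inequality through the top-level constructor to yield the goal.

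The main obstacle is precisely the asymmetry ruled out by \Cref{lem:Cbv_Qualitative_Surface_Genericity}: without it, one could fear a case where $\cbvCCtxt<t>$ is meaningful while $\cbvCCtxt<u>$ is meaningless, so that $\cbvTApprox{\cbvCCtxt<t>}$ has a genuine top-level constructor but $\cbvTApprox{\cbvCCtxt<u>} = \cbvAprxBot$; there would then be no way to compare the two approximants through the information order. Qualitative surface genericity closes this gap, and turns the argument into a routine structural induction whose only remaining ingredients are the compositional clauses of $\cbvTApprox{\cdot}$ and the monotonicity of $\cbvAprxLeq$ under full contexts.
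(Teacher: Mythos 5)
Your proof is correct and follows essentially the same route as the paper's: a structural induction on $\cbvCCtxt$ combined with a case split on whether $\cbvCCtxt<t>$ is \CBVSymb-meaningful, using \Cref{lem:Cbv_Qualitative_Surface_Genericity} to transfer meaningfulness to $\cbvCCtxt<u>$ so that both approximants unfold compositionally and the inequality lifts by contextual closure. The only (immaterial) difference is that the paper places the meaningful/meaningless case split outermost and the induction inside, whereas you interleave them.
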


\begin{example}\label{ex:approximant-genericty}
	One has $\cbvTApprox{\app{(\abs{x}{\app{x}{\Omega}})}{(\abs{y}{\Omega})}} =
	\app{(\abs{x}{\cbvAprxBot})}{(\abs{y}{\cbvAprxBot})} \cbvAprxLeq 
	\app{(\abs{x}{x\Id})}{(\abs{y}{\cbvAprxBot})} = \cbvTApprox{\app{(\abs{x}{\app{x}{\Id}})}{(\abs{y}{\Omega})}}
	$ since $\Id$ is \CBVSymb-meaningful.
\end{example}

The previous property constitutes the core of our quantitative
stratified genericity result. It will be used in particular to
construct a common \tPartial term approximating two distinct terms
$\cbvCCtxt<t>$ and $\cbvCCtxt<u>$, where $t$ is necessarily a
meaningless subterm. This common \tPartial term will then be used to
transform a reduction sequence from $\cbvCCtxt<t>$ into a reduction
sequence from $\cbvCCtxt<u>$.

\paragraph*{\textbf{Dynamic Approximation and Lifting.}} We now seek
to transfer reduction sequences between two terms by using a common
\tPartial term approximating both of them.
First, we show that term reduction steps are simulated by the
corresponding meaningful approximants. This corresponds to
\Cref{ax:Gen_t_->Sn_u_=>_OA(t)_->*Sn_|u_OA(u)} for \CBVSymb.

\begin{restatable}[\CBVSymb Dynamic Approximation]{lemma}{CbvDynamicApproximation}
    \LemmaToFromProof{Cbv_t_->Sn_u_=>_MA(t)_->*Sn_MA(u)}%
    Let $\indexK \in \setOrdinals$ and $t, u \!\in\! \setCbvTerms$.
    If\, $t \cbvArr_{S_\indexK}\! u$ then \mbox{$\cbvTApprox{t}
    \cbvAprxArr^=_{S_\indexK} \!\aprxU \cbvAprxGeq \cbvTApprox{u}$ for some  $\aprxU \!\in\!
    \setCbvAprxTerms$}.
\end{restatable}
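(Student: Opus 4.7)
The plan is to proceed by induction on the derivation of $t \cbvArr_{S_\indexK} u$, i.e., on the stratified context $\cbvSCtxt_\indexK$ such that $t = \cbvSCtxt_\indexK<r>$ and $u = \cbvSCtxt_\indexK<r'>$ for some root redex $r \mapsto r'$, with a case analysis on the rewrite rule ($\cbvSymbBeta$ or $\cbvSymbSubs$) at the leaf. At every step I first split on whether the enclosing subterm is \CBVSymb-meaningful. In the meaningless subcase, preservation of meaninglessness under any $\cbvArr_{S_\indexK}$ step---a consequence of subject reduction for $\cbvTypeSysBKRV$ via the logical characterization \Cref{lem:cbvBKRV_characterizes_meaningfulness}---forces both approximants to equal $\cbvAprxBot$, and the claim holds via zero steps with $\aprxU = \cbvAprxBot$.

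In the meaningful root case, meaningfulness of $t$ propagates to every structural subterm at a surface position: if $t = \app{t_1}{t_2}$ or $t = t_1\esub{x}{t_2}$ surface-normalizes, then both $t_1$ and $t_2$ do, and iterating on the ES forming the list context $\cbvLCtxt$ inside the redex shows that every component of $\cbvLCtxt$ is meaningful (the displayed value or abstraction being trivially so). Hence $\cbvTApprox{t}$ retains the exact redex shape of $t$, only with deeper meaningless subterms replaced by $\cbvAprxBot$, and the matching $\cbvAprxSymbBeta$- or $\cbvAprxSymbSubs$-redex is available. For $\cbvSymbBeta$, firing it produces $\cbvTApprox{u}$ on the nose. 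For $\cbvSymbSubs$, the substituted value $v$ stays a value under $\cbvTApprox{\cdot}$, the step fires yielding $\cbvTApprox{s}\isub{x}{\cbvTApprox{v}}$ wrapped in the approximated ES, and the required $\aprxU \cbvAprxGeq \cbvTApprox{u}$ reduces to the substitution-commutation inequality $\cbvTApprox{s\isub{x}{v}} \cbvAprxLeq \cbvTApprox{s}\isub{x}{\cbvTApprox{v}}$, which I would establish as a separate lemma by structural induction on $s$. The non-trivial context case is then routine: case-analyze on the outermost constructor of $\cbvSCtxt_\indexK$, apply the IH to the subterm in the hole, and propagate the resulting step and $\cbvAprxLeq$-inequality through the constructor via monotonicity of $\cbvAprxLeq$ (immediate from its definition as a full-context closure), with stratification preserved by the grammar of $\cbvSCtxt_{\indexI+1}$.

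The main obstacle is the substitution-commutation lemma used in the $\cbvSymbSubs$ case: substituting a value may unblock new divergences, as witnessed by the meaningful term $xx$ becoming $\Omega$ when $\Delta$ is substituted for $x$. The delicate subcase in the lemma's proof is precisely when $s$ is meaningful but $s\isub{x}{v}$ is not: there the left-hand side $\cbvTApprox{s\isub{x}{v}}$ collapses to $\cbvAprxBot$ and the inequality holds vacuously. Once this lemma and forward preservation of meaningfulness (via subject reduction for $\cbvTypeSysBKRV$) are available, the entire induction proceeds without further surprises.
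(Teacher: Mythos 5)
Your proposal is correct and follows essentially the same route as the paper's proof: a top-level split on meaningfulness settled by type preservation, then induction on the stratified context, with the $\cbvSymbBeta$-redex firing exactly in the approximant and the $\cbvSymbSubs$ case reduced to the auxiliary inequality $\cbvTApprox{s\isub{x}{v}} \cbvAprxLeq \cbvTApprox{s}\isub{x}{\cbvTApprox{v}}$, whose delicate subcase (meaningful $s$ with meaningless $s\isub{x}{v}$) you identify exactly as the paper does. One minor terminological point: forward preservation of meaninglessness is the contrapositive of subject \emph{expansion} rather than subject reduction, but the characterization lemma you invoke covers both directions, so nothing is missing.
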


Thus, the meaningful approximant $\cbvTApprox{t}$ replicates the
\emph{essence} of the reduction sequence from $t$. Still, there are two
important differences. First, some steps may be
equated by their meaningful approximants, \eg $t_0 \coloneqq \Omega
\cbvArr_{S_0} (\app{x}{x})\esub{x}{\Delta} \eqqcolon t_1$ while
$\cbvTApprox{t_0} = \cbvAprxBot = \cbvTApprox{t_1}$. Secondly, some
reduction steps do not yield the meaningful approximation but an
over-approximation of it, \eg consider $t_2 \coloneqq
(\app{x}{(\abs{y}{z\Delta})})\esub{z}{\Delta} \cbvArr_{S_0}
\app{x}{(\abs{y}{\Omega})} \eqqcolon t_3$, then $\cbvTApprox{t_1} =
(\app{x}{(\abs{y}{z\Delta})})\esub{z}{\Delta} \cbvAprxArr_{S_0}
\app{x}{(\abs{y}{\Omega})}$ while $\app{x}{(\abs{y}{\Omega})}
\cbvAprxGeq \app{x}{\cbvAprxBot} = \cbvTApprox{t_3}$.

\tPartial^ reduction sequences can also be lifted to greater \tPartial
terms, and in particular to terms. This is
\Cref{ax:Gen_t_->Sn_u_and_t_<=_t'_==>_t'_->Sn_u'_and_u_<=_u'} for
\CBVSymb.

\begin{restatable}[\CBVSymb Dynamic \tPartial^ Lift]{lemma}{CbvDynamicLift}
    \LemmaToFromProof{cbvAGK_Aprx_Simulation_MultiStep}%
    Let $\indexK \in \setOrdinals$ and $\aprxT, \aprxU, \aprxT'
    \!\in\! \setCbvAprxTerms$. If\, $\aprxT \cbvAprxArr_{S_\indexK}\!
    \aprxU$ and $\aprxT \cbvAprxLeq \aprxT'$, then $\aprxT'
    \!\cbvAprxArr_{S_\indexK}\! \aprxU' \!\cbvAprxGeq \aprxU$
    \mbox{for some $\aprxU' \!\in\! \setCbvAprxTerms$}.%
    \label{lem:RecApproximantSimulation}
\end{restatable}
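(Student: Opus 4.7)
The plan is to proceed by induction on the derivation of $\aprxT \cbvAprxArr_{S_\indexK} \aprxU$, which factors through the choice of a stratified context $\cbvAprxSCtxt_\indexK$ and a root reduction step $\aprxR \mapsto \aprxR'$ (either $\cbvAprxSymbBeta$ or $\cbvAprxSymbSubs$), so that $\aprxT = \cbvAprxSCtxt_\indexK\cbvCtxtPlug{\aprxR}$ and $\aprxU = \cbvAprxSCtxt_\indexK\cbvCtxtPlug{\aprxR'}$. At each step I also perform a case analysis on the shape of $\aprxT'$ using the fact that $\aprxT \cbvAprxLeq \aprxT'$ is, by definition, the partial full-context closure of $\cbvAprxBot \cbvAprxLeq \aprxS$.

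The key observation is that since $\aprxT$ here has a redex, we have $\aprxT \neq \cbvAprxBot$, so $\aprxT'$ must share the same outermost constructor as $\aprxT$, with corresponding immediate subterms related by $\cbvAprxLeq$. This lets the induction on $\cbvAprxSCtxt_\indexK$ go through cleanly: for each inductive context former (application, ES, or $\lambda$-abstraction when $\indexK \geq 1$), I invoke the IH on the strictly smaller context and on the corresponding subterm of $\aprxT'$, then re-wrap the result with the matching context former to produce $\aprxU'$, verifying $\aprxU \cbvAprxLeq \aprxU'$ from the congruence of $\cbvAprxLeq$.

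The core of the argument is the base case $\cbvAprxSCtxt_\indexK = \Hole$, where $\aprxT$ is itself a redex. For a $\cbvAprxSymbBeta$-redex $\aprxT = \app{\cbvAprxLCtxt\cbvCtxtPlug{\abs{x}{\aprxS}}}{\aprxU_2}$, the same reasoning shows that $\aprxT'$ must be an application $\app{\aprxT'_1}{\aprxU'_2}$ with $\cbvAprxLCtxt\cbvCtxtPlug{\abs{x}{\aprxS}} \cbvAprxLeq \aprxT'_1$ and $\aprxU_2 \cbvAprxLeq \aprxU'_2$. A short auxiliary induction on $\cbvAprxLCtxt$ shows that $\aprxT'_1 = \cbvAprxLCtxt'\cbvCtxtPlug{\abs{x}{\aprxS'}}$ for some list context $\cbvAprxLCtxt'$ and some $\aprxS \cbvAprxLeq \aprxS'$ (at each step the outermost former of $\aprxT'_1$ must match, since the left-hand side is not $\cbvAprxBot$). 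Firing the redex then yields $\aprxU' = \cbvAprxLCtxt'\cbvCtxtPlug{\aprxS'\esub{x}{\aprxU'_2}}$, which satisfies $\aprxU \cbvAprxLeq \aprxU'$ by congruence. The $\cbvAprxSymbSubs$-case is analogous, with the additional observation that a partial value remains a partial value under $\cbvAprxLeq$ (since the only way $\abs{x}{\aprxS} \cbvAprxLeq \aprxV'$ or $x \cbvAprxLeq \aprxV'$ is that $\aprxV'$ is itself a partial value), so that after the auxiliary induction on the list context one can legitimately substitute on the right.

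The main obstacle is precisely this handling of list contexts: the partial order might shift structure between the list context and the substituted term, so one must carefully check that no $\cbvAprxBot$ is introduced along the spine $\cbvAprxLCtxt$ and the inner abstraction or value. Once this rigidity of the redex pattern under $\cbvAprxLeq$ is established, the rest of the proof is a routine congruence-style argument, and the resulting reduction is a single step as claimed (matching the length, which makes the subsequent quantitative genericity argument possible).
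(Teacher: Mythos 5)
Your proposal is correct and follows essentially the same route as the paper's proof: induction on the stratified context closing the root step, with the base case handled by an auxiliary induction on the list context to show that the redex pattern is rigid under $\cbvAprxLeq$ (since the left-hand side is never $\cbvAprxBot$), plus the congruence/substitution compatibility of $\cbvAprxLeq$ to conclude $\aprxU \cbvAprxLeq \aprxU'$. Nothing to add.
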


\noindent Recalling \Cref{example:partial_reduction}, let
$\aprxT'_0 \coloneqq
\big(\abs{y}{\app{\Id}{w}}\big)\esub{w}{\abs{z}{\abs{x}{\cbvAprxBot}}}$,
so that $\aprxT_0 \cbvAprxLeq \aprxT'_0$. Then $\aprxT'_0
\cbvAprxArr_{S_\indexK}
\abs{y}{\app[\,]{\Id}{\big(\abs{z}{\abs{x}{\cbvAprxBot}}}\big)} =:
\aprxT'_1$ for all $\indexK \in \setOrdinals$, and $\aprxT_1
\cbvAprxLeq \aprxT'_1$.

\paragraph*{\textbf{Observability.}} Stratified normal forms are
precisely the meaningful and observable outcomes behind the notion of
stratified computation. For $\indexK \in \setOrdinals$, the set $\cbvAprxBnoS_\indexK$ of \tPartial terms introduced below aims
to provide perfect approximation of them, being
$\cbvAprxSymbSurface_\indexK$-normal forms without any occurrence of
$\cbvAprxBot$ at depth $\indexK$.
\begin{align*}
        \cbvAprxBvrS_\indexK &\coloneqq \, x
            \vsep \cbvAprxBvrS_\indexK\esub{x}{\cbvAprxBneS_\indexK}
    \\[0.2cm]
        \cbvAprxBneS_\indexK &\coloneqq \, \app[\,]{\cbvAprxBvrS_\indexK}{\cbvAprxBnoS_\indexK}
            \vsep \app[\,]{\cbvAprxBneS_k}{\cbvAprxBnoS_\indexK}
            \vsep \cbvAprxBneS_k\esub{x}{\cbvAprxBneS_\indexK}
    \\[0.2cm]
        \cbvAprxBnoS_0 &\coloneqq \, \abs{x}{\aprxT}
            \vsep \cbvAprxBvrS_0
            \vsep \cbvAprxBneS_0
            \vsep \cbvAprxBnoS_0\esub{x}{\cbvAprxBneS_0}
    \\
        \cbvAprxBnoS_{\indexI+1} &\coloneqq \, \abs{x}{\cbvAprxBnoS_\indexI}
            \vsep \cbvAprxBvrS_{\indexI+1}
            \vsep \cbvAprxBneS_{\indexI+1}
            \vsep \cbvAprxBnoS_{\indexI+1}\esub{x}{\cbvAprxBneS_{\indexI+1}} \ \ (\indexI \in \setIntegers)
     \\
        \cbvAprxBnoS_\indexOmega &\coloneqq \, \abs{x}{\cbvAprxBnoS_\indexOmega}
            \vsep \cbvAprxBvrS_\indexOmega
            \vsep \cbvAprxBneS_\indexOmega
            \vsep \cbvAprxBnoS_\indexOmega\esub{x}{\cbvAprxBneS_\indexOmega}
\end{align*}

Remarkably, our meaningful approximants are precise enough to
perfectly observe the important level of these meaningful results. We
thus obtain the \CBVSymb instance of
\Cref{ax:Gen_t_in_BnoSn_==>_OA(t)_in_noSn}.

\begin{restatable}[\CBVSymb Observability of Normal Form Approximants]{lemma}{CbvObservabilityNormalFormApproximant}
    \LemmaToFromProof{Cbv_t_in_BnoSn_==>_OA(t)_in_noSn}%
    Let $\indexK \in \setOrdinals$ and $\aprxT \in
    \cbvNoS_\indexK$, then $\cbvTApprox{t} \in \cbvAprxBnoS_\indexK$.%
\end{restatable}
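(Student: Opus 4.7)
The plan is to proceed by straightforward mutual induction on the grammar of $\cbvVrS_\indexK$, $\cbvNeS_\indexK$, and $\cbvNoS_\indexK$ (which characterizes the $\cbvSymbSurface_\indexK$-normal forms by \Cref{lem:cbv_NoSn_Characterization}), showing in parallel that $\cbvTApprox{t}$ lands in the corresponding partial class $\cbvAprxBvrS_\indexK$, $\cbvAprxBneS_\indexK$, or $\cbvAprxBnoS_\indexK$.

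The single non-routine preliminary step is ruling out the $\cbvAprxBot$ branch of $\cbvTApprox{\cdot}$ at every subterm reached during the recursion. To this end, I would first observe that $\cbvArr_{S_0} \subseteq \cbvArr_{S_\indexK}$ yields $\cbvNoS_\indexK \subseteq \cbvNoS_0$, and likewise $\cbvVrS_\indexK, \cbvNeS_\indexK \subseteq \cbvNoS_0$ by a straightforward mutual inspection of the grammars. Hence every term met along the induction is $\cbvSymbSurface_0$-normal, trivially $\cbvSymbSurface_0$-normalizing, and therefore \CBVSymb-meaningful by the operational characterization (\Cref{lem:cbvBKRV_characterizes_meaningfulness}). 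Consequently the recursive clauses of $\cbvTApprox{\cdot}$ apply throughout and no spurious $\cbvAprxBot$ is introduced at the top of any visited node.

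After this, each production maps mechanically to its partial counterpart via the IH: a variable $x$ gives $x \in \cbvAprxBvrS_\indexK$; an ES-closure $t'\esub{x}{u'}$ with $t' \in \cbvVrS_\indexK$, $u' \in \cbvNeS_\indexK$ gives $\cbvTApprox{t'}\esub{x}{\cbvTApprox{u'}} \in \cbvAprxBvrS_\indexK$; the application productions of $\cbvNeS_\indexK$ and the remaining ES clauses transfer identically. The only case worth attention is the abstraction inside $\cbvNoS_\indexK$. At level $0$, the body $u$ of $\abs{x}{u} \in \cbvNoS_0$ is unrestricted, but $\cbvAprxBnoS_0$ accepts $\abs{x}{\aprxT}$ for \emph{any} partial term $\aprxT$, so $\abs{x}{\cbvTApprox{u}}$ is admissible (if $u$ is meaningless, its approximant is $\cbvAprxBot$, which is a perfectly legitimate partial term here). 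At level $\indexI{+}1$ (resp.\ $\indexOmega$), the body $u$ lies in $\cbvNoS_\indexI$ (resp.\ $\cbvNoS_\indexOmega$), and the IH yields $\cbvTApprox{u} \in \cbvAprxBnoS_\indexI$ (resp.\ $\cbvAprxBnoS_\indexOmega$), which is precisely what $\cbvAprxBnoS_{\indexI+1}$ (resp.\ $\cbvAprxBnoS_\indexOmega$) requires under a binder.

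I do not anticipate a substantive obstacle here: the statement is essentially a syntactic bookkeeping lemma on stratified normal forms. The only point to track carefully is the level decrement under binders, namely that the recursive call $\cbvNoS_\indexI$ appearing under $\lambda$ in $\cbvNoS_{\indexI+1}$ matches the call $\cbvAprxBnoS_\indexI$ appearing under $\lambda$ in $\cbvAprxBnoS_{\indexI+1}$, so that the IH can be invoked at the correct strata. This alignment is built into both grammars by design.
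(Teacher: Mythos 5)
Your proposal is correct and matches the paper's own proof: the paper likewise strengthens the statement to a mutual induction over $\cbvVrS_\indexK$, $\cbvNeS_\indexK$, $\cbvNoS_\indexK$ (targeting $\cbvAprxBvrS_\indexK$, $\cbvAprxBneS_\indexK$, $\cbvAprxBnoS_\indexK$), first observing that membership in $\cbvNoS_\indexK$ entails $\cbvSymbSurface_0$-normality and hence \CBVSymb-meaningfulness, so the $\cbvAprxBot$ branch of $\cbvTApprox{\cdot}$ never fires on the recursed subterms. Your handling of the level-$0$ abstraction body (where $\cbvAprxBnoS_0$ admits an arbitrary partial body) is exactly the paper's case analysis as well.
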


\begin{example}\label{ex:normal-approximant}
$t_0 \coloneqq
\abs{x}{(\app{\app{x}{(\abs{y}{\app{\Id}{\Id}})}}{(\abs{z}{\app{\Id}{\Omega}})})}$
is $\cbvSymbSurface_1$-normal and its meaningful approximant is
\mbox{$\cbvTApprox{t_0} =
	\abs{x}{(\app{\app{x}{(\abs{y}{\app{\Id}{\Id}})}}{(\abs{z}{\cbvAprxBot})})}
	\in \cbvAprxBnoS_1$.}
\end{example}

Finally, we show that \tPartial terms in $\cbvAprxBnoS_\indexK$
entirely determinate the first $\indexK$ levels of the
$\indexK$-normal forms they approximate. Consequently, any greater
approximant aligns with its first $\indexK$ levels. This is the
\CBVSymb\ precisely the instance of
\Cref{ax:Gen_|t_in_bnoSn_and_t_<=_u_==>_u_in_BnoSn}.

\begin{restatable}[\CBVSymb Stability of Meaningful Observables]{lemma}{CbvStabilityMeaningfulObservables}
    \LemmaToFromProof{Cbv_|t_in_bnoSn_and_t_<=_u_==>_u_in_BnoSn}%
    Let $\indexK \in \setOrdinals$. If\, $\aprxT \in
    \cbvAprxBnoS_\indexK$ and $\aprxT \cbvAprxLeq \aprxU \in
    \setCbvAprxTerms$, then $\aprxU \in \cbvAprxBnoS_\indexK$ and
    $\aprxT \cbvEq[\indexK] \aprxU$.%
\end{restatable}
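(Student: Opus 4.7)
The plan is to proceed by simultaneous structural induction on the derivation of $\aprxT$ in the mutually recursive grammars $\cbvAprxBvrS_\indexK$, $\cbvAprxBneS_\indexK$, and $\cbvAprxBnoS_\indexK$. The guiding observation is that $\cbvAprxBot$ appears in these three grammars via exactly one production, namely the rule $\abs{x}{\aprxT}$ of $\cbvAprxBnoS_0$, where the body $\aprxT \in \setCbvAprxTerms$ is unconstrained and may itself be or contain $\cbvAprxBot$. Every other production forces immediate subterms to remain in one of the three classes, none of which admits $\cbvAprxBot$ directly. Since $\aprxT \cbvAprxLeq \aprxU$ amounts to refining occurrences of $\cbvAprxBot$ in $\aprxT$ by arbitrary partial terms, the top-level structure of $\aprxT$ is preserved in $\aprxU$ wherever $\aprxT$ is not $\cbvAprxBot$.

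In the cases where $\aprxT$ is a variable, an application, or an explicit substitution produced without invoking the permissive rule, inversion of the partial order forces $\aprxU$ to have the same constructor at the root, with subterms dominating the corresponding ones of $\aprxT$. The induction hypothesis applied to each subterm yields membership of $\aprxU$'s subterms in the same grammar classes together with the equivalence $\cbvEq[\indexK]$ pairwise; the congruence rules for application and ES in $\cbvEq[\indexK]$ then close the case. Abstractions split on $\indexK$: when $\indexK = \indexI + 1$ or $\indexK = \indexOmega$ and $\aprxT = \abs{x}{\aprxT_0}$ with $\aprxT_0$ in the appropriate class, inversion yields $\aprxU = \abs{x}{\aprxU_0}$ with $\aprxT_0 \cbvAprxLeq \aprxU_0$, and the IH applied to $\aprxT_0$ combined with the corresponding $\cbvEq[\indexK]$-rule for abstractions delivers the result.

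The distinguished case is $\indexK = 0$ with $\aprxT = \abs{x}{\aprxT_0}$, where $\aprxT_0$ is arbitrary — the only situation in which a genuine $\cbvAprxBot$ may be refined. Here $\aprxU = \abs{x}{\aprxU_0}$ for some arbitrary $\aprxU_0$, so $\aprxU \in \cbvAprxBnoS_0$ follows at once from the same grammar production, while $\aprxT \cbvEq[0] \aprxU$ follows from the special $\cbvEq[0]$-rule that equates any two abstractions regardless of their bodies. The proof is thus essentially routine; the main challenge is bookkeeping across the three mutually recursive grammar classes in parallel with the three regimes of $\cbvEq[\indexK]$ at levels $0$, $\indexI + 1$, and $\indexOmega$. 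The fact that $\cbvEq[0]$ is insensitive to abstraction bodies is precisely what makes the only source of $\cbvAprxBot$ in the grammar — the rule $\abs{x}{\aprxT}$ of $\cbvAprxBnoS_0$ — harmless, and thereby makes the whole induction go through cleanly.
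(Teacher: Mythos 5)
Your proposal is correct and matches the paper's proof: both proceed by a simultaneous (mutual) structural induction over the three grammar classes $\cbvAprxBvrS_\indexK$, $\cbvAprxBneS_\indexK$, $\cbvAprxBnoS_\indexK$, inverting the partial order to preserve the root constructor, applying the induction hypothesis to subterms, and closing each case with the corresponding congruence rule of $\cbvEq[\indexK]$. Your identification of the level-$0$ abstraction production as the sole entry point for $\cbvAprxBot$, handled by the $\cbvEq[0]$-rule that equates arbitrary abstractions, is exactly the key observation underlying the paper's argument.
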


Note that $\aprxU$ can be a term. Taking
$\cbvTApprox{t_0}$ from \Cref{ex:normal-approximant}, where
$\cbvTApprox{t_0} \genAprxLeq t_0$ as expected, one verifies that both
\tPartial terms $\cbvTApprox{t_0}$ and $t_0$ share the same structure
on the first $2$-levels. And similarly for any other \tPartial term
greater than $\cbvTApprox{t_0}$.

\paragraph*{\textbf{Quantitative Stratified Genericity.}} We now have
all the material to enhance our first \emph{qualitative} surface
genericity (\Cref{lem:Cbv_Qualitative_Surface_Genericity}) into a
stronger \emph{quantitative} stratified genericity result.

\begin{theorem}[\CBVSymb Quantitative Stratified Genericity]
    \label{lem:Cbv_Quantitative_Stratified_Genericity}%
    Let $k \in \setOrdinals$. If $t \in \setCbvTerms$ is
    \CBVSymb-meaningless and $\cbvCCtxt<t>$ is
    $\cbvSymbSurface_\indexK$-normalizing, then there is $i \in
    \mathbb{N}$ such that, for every $u \in \setCbvTerms$, there are
    $t', u' \in \cbvNoS_\indexK$ such that $\cbvCCtxt<t>
    \cbvArr^{i}_{S_\indexK} t' \;\cbvEq[\indexK]\; u'
    \;\sideset{^{i}_{\mathrm{v}_\indexK\!\!\!\!}}{}\revArr\;
	    \cbvCCtxt<u>$.
\end{theorem}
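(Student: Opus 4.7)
The strategy is to pivot through the common meaningful approximant $\cbvTApprox{\cbvCCtxt<t>}$: a single partial reduction issued from it is lifted \emph{uniformly} to both $\cbvCCtxt<t>$ and $\cbvCCtxt<u>$, yielding two reduction sequences of exactly the same length whose endpoints are $\cbvEq[k]$-equivalent normal forms. All the machinery developed in this section is used: Approximant Genericity (\Cref{lem:Cbv_t_meaningless_=>_MA(F<t>)_<=_MA(F<u>)}) and \Cref{lem:Cbv_MA(t)_<=_t}, together with the \CBVSymb instances of the four axioms of \Cref{sec:methodology}.

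Concretely, I would first fix a $\cbvSymbSurface_k$-normalizing reduction $\cbvCCtxt<t> \cbvArr*_{S_k} s$ with $s \in \cbvNoS_k$ and apply iterated Dynamic Approximation (\Cref{lem:Cbv_t_->Sn_u_=>_MA(t)_->*Sn_MA(u)}) to obtain a partial reduction $\cbvTApprox{\cbvCCtxt<t>} \cbvAprxArr^i_{S_k} \aprxV$ with $\aprxV \cbvAprxGeq \cbvTApprox{s}$; this length $i$ will be the witness of the theorem. Observability (\Cref{lem:Cbv_t_in_BnoSn_==>_OA(t)_in_noSn}) gives $\cbvTApprox{s} \in \cbvAprxBnoS_k$, and Stability (\Cref{lem:Cbv_|t_in_bnoSn_and_t_<=_u_==>_u_in_BnoSn}) then yields $\aprxV \in \cbvAprxBnoS_k$ together with $\cbvTApprox{s} \cbvEq[k] \aprxV$. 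Approximant Genericity combined with \Cref{lem:Cbv_MA(t)_<=_t} provides $\cbvTApprox{\cbvCCtxt<t>} \cbvAprxLeq \cbvTApprox{\cbvCCtxt<u>} \cbvAprxLeq \cbvCCtxt<u>$, while $\cbvTApprox{\cbvCCtxt<t>} \cbvAprxLeq \cbvCCtxt<t>$ is immediate. Iterating Dynamic Partial Lift (\Cref{lem:RecApproximantSimulation}) symmetrically on these two inequalities produces reductions $\cbvCCtxt<t> \cbvArr^i_{S_k} t'$ and $\cbvCCtxt<u> \cbvArr^i_{S_k} u'$ of exactly $i$ steps each, with $t', u' \cbvAprxGeq \aprxV$; since the partial calculus is conservative over $\setCbvTerms$, both $t'$ and $u'$ remain free of $\cbvAprxBot$. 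A final application of Stability to $\aprxV \cbvAprxLeq t'$ and $\aprxV \cbvAprxLeq u'$ puts $t', u' \in \cbvAprxBnoS_k \cap \setCbvTerms = \cbvNoS_k$ and gives $t' \cbvEq[k] \aprxV \cbvEq[k] u'$, exactly as required.

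The main technical subtlety is enforcing the \emph{same} step count $i$ on both sides. A naive plan that picks $i$ as the length of the original reduction of $\cbvCCtxt<t>$ fails, because Dynamic Approximation may absorb steps happening entirely inside meaningless subterms, so the partial reduction from $\cbvTApprox{\cbvCCtxt<t>}$ could be strictly shorter and the lift would then produce a correspondingly shorter reduction from $\cbvCCtxt<u>$. The remedy is to take $i$ to be the length of the \emph{partial} reduction itself and to rely on Dynamic Partial Lift, which preserves step counts exactly; the normalization hypothesis on $\cbvCCtxt<t>$ is then used only to feed Observability of Normal Form Approximants, not to determine $i$.
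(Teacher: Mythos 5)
Your proposal is correct and follows essentially the same route as the paper's proof: pivot through $\cbvTApprox{\cbvCCtxt<t>}$, extract the step count $i$ from the partial reduction, and use Observability, Stability, Approximant Genericity and the two dynamic lemmas exactly as the paper does. The only minor imprecision is that building the length-$i$ partial reduction requires interleaving Dynamic Approximation with Dynamic Partial Lift (since each approximation step lands above, not at, the next approximant), which the paper makes explicit and you attribute to Dynamic Approximation alone.
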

\begin{proof}
    By definition, there is $s \in \cbvNoS_\indexK$ such that
    $\cbvCCtxt<t> \cbvArr*_{S_\indexK} s$. By iterating
    \Cref{lem:Cbv_t_->Sn_u_=>_MA(t)_->*Sn_MA(u),lem:cbvAGK_Aprx_Simulation_MultiStep},
    there are $\aprxS \in \setCbvAprxTerms$ and $i \in \mathbb{N}$
    such that $\cbvTApprox{\cbvCCtxt<t>} \cbvAprxArr^i_{S_\indexK}
    \aprxS \cbvAprxGeq \cbvTApprox{s}$. Since $s \in \cbvNoS_\indexK$,
    then $\cbvTApprox{s} \in \cbvAprxBnoS_\indexK$ using
    \Cref{lem:Cbv_t_in_BnoSn_==>_OA(t)_in_noSn}, thus $\aprxS \in
    \cbvAprxBnoS_\indexK$ using
    \Cref{lem:Cbv_|t_in_bnoSn_and_t_<=_u_==>_u_in_BnoSn}. 
    On one hand, $\cbvTApprox{\cbvCCtxt<t>} \cbvAprxLeq \cbvCCtxt<t>$ 
    by \Cref{lem:Cbv_MA(t)_<=_t}, thus by
    iterating \Cref{lem:cbvAGK_Aprx_Simulation_MultiStep}, one has
    $\cbvCCtxt<t> \cbvArr^i_{S_\indexK} t'$ for some $t' \in
    \setCbvTerms$ such that $\aprxS \cbvAprxLeq t'$. By
    \Cref{lem:Cbv_|t_in_bnoSn_and_t_<=_u_==>_u_in_BnoSn}, $t' \in \cbvNoS_\indexK$ and $t' \cbvEq[\indexK] \aprxS$. On
    the other hand, $\cbvTApprox{\cbvCCtxt<t>} \cbvAprxLeq 
    \cbvTApprox{\cbvCCtxt<u>}$ according to
    \Cref{lem:Cbv_t_meaningless_=>_MA(F<t>)_<=_MA(F<u>)},
    thus $\cbvTApprox{\cbvCCtxt<t>} \cbvAprxLeq \cbvCCtxt<u>$ by
    \Cref{lem:Cbv_MA(t)_<=_t} and transitivity. By iterating
    \Cref{lem:cbvAGK_Aprx_Simulation_MultiStep} again, one has that
    $\cbvCCtxt<u> \cbvArr^i_{S_\indexK} u'$ for some $u' \in
    \setCbvTerms$ such that $\aprxS \cbvAprxLeq u'$. By
    \Cref{lem:Cbv_|t_in_bnoSn_and_t_<=_u_==>_u_in_BnoSn}, $u' \in \cbvNoS_\indexK$ and $u' \cbvEq[\indexK] \aprxS$.
    Finally, $t' \cbvEq[\indexK] u'$ by transitivity of $\cbvEq[\indexK]$.
\end{proof}

    Graphically, these are the main steps in the proof of \Cref{lem:Cbv_Quantitative_Stratified_Genericity}:
    \begin{center}
        \begin{tikzpicture}
            \node               (A)         at (0, 0.25)    {$\cbvTApprox{\cbvCCtxt<t>}$};
            \node               (A')        at (0, -2)      {$\aprxS\!$};

            \node               (Ct)        at (-1, -0.5)   {$\cbvCCtxt<t>$};
            \node               (t')        at (-1, -2.5)   {$t'\!$};
            \node               (t'_in)     at (-1.25, -2.53)  {$\ni$};
            \node               (t'_set)    at (-1.7, -2.58)   {$\cbvNoS_\indexK$};

            \node               (Cu)        at (1, -0.5)    {$\cbvCCtxt<u>$};
            \node               (u')        at (1, -2.5)    {$u'\!$};
            \node[rotate=0]     (u'_in)     at (1.25, -2.53)   {$\in$};
            \node               (u'_set)    at (1.7, -2.58)    {$\cbvNoS_\indexK$};

            \draw[->>, line width=0.2mm] (A) to (A'.north);
            \node at ([yshift=-2.05cm, xshift=1.8mm]A.north)
                {{$\scriptscriptstyle n$}};
            \node at ([yshift=-2.05cm, xshift=-1.8mm]A.north)
                {{$\scriptscriptstyle \cbvAprxSymbSurface_\indexK$}};

            \draw[->>, line width=0.2mm] (Ct) to (t'.north);
            \node at ([yshift=-1.8cm, xshift=1.8mm]Cu.north)
                {{$\scriptscriptstyle n$}};
            \node at ([yshift=-1.8cm, xshift=-1.8mm]Cu.north)
                {{$\scriptscriptstyle \cbvSymbSurface_\indexK$}};

            \draw[->>, line width=0.2mm] (Cu) to (u'.north);
            \node at ([yshift=-1.8cm, xshift=1.8mm]Ct.north)
                {{$\scriptscriptstyle n$}};
            \node at ([yshift=-1.8cm, xshift=-1.8mm]Ct.north)
                {{$\scriptscriptstyle \cbvSymbSurface_\indexK$}};

            \node[rotate=25]    (deduceArrowLeft)   at (-0.5, -1.25) {$\leftsquigarrow$};
            \node at ([xshift=1mm, yshift=0.8mm]deduceArrowLeft.north)
            {{$\scriptscriptstyle {\tt Lem}.\ref{lem:cbvAGK_Aprx_Simulation_MultiStep}$}};


            \node[rotate=-25]   (deduceArrowRight)  at (0.5, -1.25)  {$\rightsquigarrow$};
            \node at ([xshift=-1mm, yshift=0.8mm]deduceArrowRight.north)
            {{$\scriptscriptstyle {\tt Lem}.\ref{lem:cbvAGK_Aprx_Simulation_MultiStep}$}};

            \node[rotate=35]    (LeqCt)     at (-0.6, -0.15)    {$\cbvAprxGeq$};
            \node   at ([yshift=1mm, xshift=-5mm]LeqCt)
                {$\scriptscriptstyle {\tt Lem.} \ref{lem:Cbv_MA(t)_<=_t}$};

            \node[rotate=35]    (LeqT')     at (-0.5, -2.13)    {$\cbvAprxGeq$};

            \node[rotate=-40]   (LeqCu)  at (0.6, -0.15)     {$\cbvAprxLeq$}; 
            \node   at ([yshift=1mm, xshift=9mm]LeqCu)
                {$\scriptscriptstyle {\tt Lem.} \ref{lem:Cbv_t_meaningless_=>_MA(F<t>)_<=_MA(F<u>)} \;\&\; {\tt Lem.} \ref{lem:Cbv_MA(t)_<=_t}$};

            \node[rotate=-40]   (LeqU')  at (0.5, -2.13)     {$\cbvAprxLeq$};


            \node               (eq)         at (0, -2.5)     {$\cbvLongEq[\indexK]$};
            \node   at ([yshift=-2.75mm]eq)
                {$\scriptscriptstyle {\tt Lem.} \ref{lem:Cbv_|t_in_bnoSn_and_t_<=_u_==>_u_in_BnoSn}$};
        \end{tikzpicture}
        \hfill\qedhere
    \end{center}

A particular case is given by the following statement since
$\indexK$-equality for $\indexK = \indexOmega$ is just the equality
between terms.

\begin{corollary}[\CBVSymb Quantitative Full Genericity]
    \label{lem:Cbv_Quantitative_Full_Genericity}%
    If $t \in \setCbvTerms$ is \CBVSymb-meaningless and $\cbvCCtxt<t>
    \cbvArr*_{S_\indexOmega} s$ where $s$ is
    $\cbvSymbSurface_\indexOmega$-normal, then there is $i \in
    \mathbb{N}$ such that, for every $u \in \setCbvTerms$,
    $\cbvCCtxt<t> \cbvArr^i_{S_\indexOmega} s
    \;\sideset{^{i}_{\mathrm{v}_\indexOmega\!\!\!\!}}{}\revArr\;
    \cbvCCtxt<u>$.
\end{corollary}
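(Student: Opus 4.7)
The plan is to instantiate the preceding Theorem~\ref{lem:Cbv_Quantitative_Stratified_Genericity} at the transfinite level $\indexK = \indexOmega$. The hypotheses transfer immediately: $t$ is \CBVSymb-meaningless by assumption, and $\cbvCCtxt<t>$ is $\cbvSymbSurface_\indexOmega$-normalizing since it reduces to the $\cbvSymbSurface_\indexOmega$-normal form $s$. Instantiating the theorem then produces some $i \in \mathbb{N}$ and, for every $u \in \setCbvTerms$, two normal forms $t', u' \in \cbvNoS_\indexOmega$ satisfying $\cbvCCtxt<t> \cbvArr^i_{S_\indexOmega} t'$, $\cbvCCtxt<u> \cbvArr^i_{S_\indexOmega} u'$, and $t' \cbvEq[\indexOmega] u'$.

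The next step is to collapse $t'$ and $u'$ into a single term. For this I would invoke the observation recorded just after the definition of stratified equality: $\cbvEq[\indexOmega]$ coincides with syntactic equality on terms, so $t' = u'$. Thus the two reduction sequences of length $i$ produced by the theorem already land on the very same normal form, call it $s'$.

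Finally, I would identify $s'$ with the particular normal form $s$ fixed in the hypothesis by appealing to confluence of the full \CBVSymb reduction (Lemma~\ref{lem:Cbv_Surface_Confluence}). Indeed, $\cbvCCtxt<t>$ $\cbvSymbSurface_\indexOmega$-reduces both to $s$ and to $s'$, and both are $\cbvSymbSurface_\indexOmega$-normal; confluence therefore forces $s = s'$. Substituting back rewrites the diagram into the claimed form $\cbvCCtxt<t> \cbvArr^i_{S_\indexOmega} s \;\sideset{^{i}_{\mathrm{v}_\indexOmega\!\!\!\!}}{}\revArr\; \cbvCCtxt<u>$.

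I do not anticipate any genuine obstacle here: the corollary is truly a clean specialization of the stratified theorem to $\indexOmega$. The only mildly subtle point is the routine confluence argument needed to align the generic reduct produced by the theorem with the specific normal form $s$ declared in the statement, but this is immediate from Lemma~\ref{lem:Cbv_Surface_Confluence} and uniqueness of normal forms.
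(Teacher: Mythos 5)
Your proof is correct and follows essentially the same route as the paper, which presents this corollary as an immediate specialization of Theorem~\ref{lem:Cbv_Quantitative_Stratified_Genericity} using the fact that $\cbvEq[\indexOmega]$ is syntactic equality. Your additional confluence step (via Lemma~\ref{lem:Cbv_Surface_Confluence}) to identify the reduct produced by the theorem with the specific normal form $s$ of the hypothesis is a detail the paper leaves implicit, and making it explicit is a sound and welcome precaution.
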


For instance, $(\abs{y}\Id)\abs{z}\Omega \cbvArr^2_{S_\indexOmega} \Id
\;\sideset{^{2}_{\mathrm{v}_\indexOmega\!\!\!\!}}{}\revArr\;
(\abs{y}\Id)\abs{z}u$ for any $u \in \setCbvTerms$.

\Cref{lem:Cbv_Quantitative_Stratified_Genericity},
\Cref{lem:Cbv_Quantitative_Full_Genericity} and the general
methodology used to prove such results are among the main
contributions of this~paper.

\section{Call-by-Name Stratified Genericity}
\label{sec:cbn}

\Cref{sec:syntax-cbn} introduces the syntax and operational semantics
of the \CBNSymb-calculus \cite{AK10,AK12}, including the new notion of \emph{stratified
reduction}. We also recall the $\cbnTypeSysBKRV$ type system~\cite{KesnerV14}. The
qualitative and quantitative stratified  genericity results follow in
\Cref{sec:genericity-cbn}.

\subsection{Syntax and Stratified Operational Semantics}
\label{sec:syntax-cbn}

%
\paragraph*{\textbf{Syntax.}} The term syntax of the \CBNSymb-calculus
is exactly the same as in \CBVSymb. We keep the same notations for
terms ($s, t, u, \dots$) and the set of terms ($\setCbnTerms$). In the
operational semantics of \CBNSymb (see below), values do not play any
role, so we do not distinguish them. Free variables,
$\alpha$-conversion and substitution are defined as in \CBVSymb.

\paragraph*{\textbf{Contexts.}}  The operational semantics of the
\CBNSymb-calculus is often specified by means of a \emph{head}
reduction that forbids evaluation inside arguments, called here
\emph{surface} reduction. Here again, we equip the \CBNSymb-calculus
with a new \emph{stratified} operational semantics which generalizes
surface reduction to different \emph{levels}.

The set of \textbf{list contexts} $(\cbnLCtxt)$ and \textbf{stratified
contexts} $(\cbnSCtxt_\indexK)$ for $\indexK \in \setOrdinals$, that
can be seen as terms containing exactly one hole $\Hole$, are
inductively defined as follows (where $\indexI \in \setIntegers$):
\begin{equation*}
        \begin{array}{rcl}
            \cbnLCtxt &\;\Coloneqq\;& \Hole
                \vsep \cbnLCtxt\esub{x}{t}
        \\[0.2cm]
            \cbnSCtxt_0 &\;\Coloneqq\;& \Hole
                \vsep \abs{x}{\cbnSCtxt_0}
                \vsep \app[\,]{\cbnSCtxt_0}{t}
                \vsep \cbnSCtxt_0\esub{x}{t}
        \\
            \cbnSCtxt_{\indexI+1} &\;\Coloneqq\;& \Hole
                \vsep \abs{x}{\cbnSCtxt_{\indexI+1}}
                \vsep \app[\,]{\cbnSCtxt_{\indexI+1}}{t}
                \vsep \app[\,]{t}{\cbnSCtxt_\indexI}
                \vsep \cbnSCtxt_{\indexI+1}\esub{x}{t}
                \vsep t\esub{x}{\cbnSCtxt_\indexI}\; 
        \\
            \cbnSCtxt_\indexOmega, \cbnCCtxt &\;\Coloneqq\;& \Hole
                \vsep \abs{x}{\cbnSCtxt_\indexOmega}
                \vsep \app[\,]{\cbnSCtxt_\indexOmega}{t}
                \vsep \app[\,]{t}{\cbnSCtxt_\indexOmega}
                \vsep \cbnSCtxt_\indexOmega\esub{x}{t}
                \vsep t\esub{x}{\cbnSCtxt_\indexOmega}
        \end{array}
\end{equation*}

In particular, \textbf{surface contexts} (whose hole is not in an
argument) and \textbf{full contexts} (the restriction-free one-hole
contexts, also denoted by $\cbnCCtxt$) are special cases of stratified context: surface contexts
are base level $\cbnSCtxt_0$ and full contexts the transfinite level
$\cbnSCtxt_\indexOmega$. In stratified contexts $\cbnSCtxt_\indexK$,
the hole can occur everywhere as long as its depth is at most
$\indexK$, while in $\cbnSCtxt_\indexOmega$ it can occur everywhere
and in $\cbnSCtxt_0$ it cannot occur in a top level argument. Thus for
example $\app[\,]{y}{(\abs{x}{\Hole})}$ is in $\cbnSCtxt_1$, and thus
in every $\cbnSCtxt_\indexK$ with $\indexK \geq 1$, including
$\cbnSCtxt_\indexOmega$, but it is not in
$\cbnSCtxt_0$. For all $\indexK \in
\setOrdinals$, we write $\cbnSCtxt_\indexK<t>$ for the term obtained
by replacing the hole in $\cbnSCtxt_\indexK$ with the term $t$, and
similarly for $\cbnLCtxt$.

We define the \defn{stratified equality} comparing terms up to a
given depth. It is defined by the following inductive rules (with
$\indexI \in \mathbb{N})$:%
\begin{center}
        \begin{prooftree}
            \hypo{\phantom{\cbvEq[\indexI]}}
            \inferCbnEqVar[1]{x \cbvEq[\indexI] x}
        \end{prooftree}
    \hspace{1.5cm}
        \begin{prooftree}
            \hypo{t \cbvEq[\indexI] u }
            \inferCbnEqAbs{\abs{x}{t} \cbvEq[\indexI] \abs{x}{u}}
        \end{prooftree}
    \\[0.2cm]
        \begin{prooftree}
            \hypo{t_1 \cbvEq[0] u_1}
            \inferCbnEqApp[1]{\app{t_1}{t_2} \cbvEq[0] \app{u_1}{u_2}}
        \end{prooftree}
    \hspace{0.3cm}
        \begin{prooftree}[separation=1em]
            \hypo{t_1 \cbvEq[\indexI+1] u_1}
            \hypo{t_2 \cbvEq[\indexI] u_2}
            \inferCbnEqApp{\app{t_1}{t_2} \cbvEq[\indexI+1] \app{u_1}{u_2}}
        \end{prooftree}
    \hspace{0.3cm}
        \begin{prooftree}[separation=1em]
            \hypo{t_1 \cbvEq[\indexOmega] u_1}
            \hypo{t_2 \cbvEq[\indexOmega] u_2}
            \inferCbnEqApp{\app{t_1}{t_2} \cbvEq[\indexOmega] \app{u_1}{u_2}}
        \end{prooftree}
    \\[0.2cm]
	\mbox{%
	\begin{prooftree}
		\hypo{t \cbvEq[0] u}
		\inferCbnEqEs[1]{t\esub{x}{t'} \cbvEq[0] u\esub{x}{u'}}
	\end{prooftree}
	\hspace{0.1cm}
	\begin{prooftree}
		\hypo{t \cbvEq[\indexI+1] u}
		\hypo{t' \cbvEq[\indexI] u'}
		\inferCbnEqEs{t\esub{x}{t'} \cbvEq[\indexI+1] u\esub{x}{u'}}
	\end{prooftree}
	\hspace{0.1cm}
	\begin{prooftree}
		\hypo{t \cbvEq[\indexOmega] u}
		\hypo{t' \cbvEq[\indexOmega] u'}
		\inferCbnEqEs{t\esub{x}{t'} \cbvEq[\indexOmega] u\esub{x}{u'}}
	\end{prooftree}
}
\end{center}

\begin{example}
    Let $t_0 := \app{x}{\Id}\esub{x}{\app{y}{\Omega}}$ and $t_1 :=
    \app{x}{\Id}\esub{x}{\app{y}{\Id}}$. We have $t_0 \cbvEq[0] t_1$
    and $t_0 \cbvEq[1] t_1$ but not $t_0 \cbvEq[\indexK] t_1$ for any
    $\indexK \geq 2$.
\end{example}

While the same intuition behind stratification is shared between
\CBVSymb and \CBNSymb, their respective implementations turn out to be
quite different, depending on the underlying notion of reduction.

\paragraph*{\textbf{(Stratified) Operational Semantics.}} The notions
of contexts already established, we can now equip the \CBNSymb with
an operational semantics acting \emph{at a distance}~\cite{AK10,AK12}. The
following rewriting rules are the base components of such operational
semantics. 
\begin{equation*}
    \begin{array}{c c c}
        \app{\cbnLCtxt<\abs{x}{s}>}{t}
            \;\mapstoR[\cbnSymbBeta]\;
        \cbnLCtxt<s\esub{x}{t}>
    &\hspace{0.25cm}&
        t\esub{x}{u}
            \;\mapstoR[\cbnSymbSubs]\;
        t\isub{x}{u}
    \end{array}
\end{equation*}
Rule $\cbnSymbBeta$ assumed to be \emph{capture free}, so no free
variable of $t$ is captured by the context $\cbnLCtxt$. The main
difference between \CBNSymb and \CBVSymb is that rule $\cbvSymbSubs$
of the latter is replaced by rule $\cbnSymbSubs$ in the former.

The \defn{stratified reduction} (or
\defn{$\cbnSymbSurface_\indexK$-reduction})
$\cbnArr_{S_\indexK}$ is the $\cbnSCtxt_\indexK$-closure of the two
rewriting rules $\mapstoR[\cbnSymbBeta]$ and $\mapstoR[\cbnSymbSubs]$.
In particular, $\cbnArr_{S_0}$ (\resp $\cbnArr_{S_\indexOmega}$) is
called \defn{surface} (resp.  \defn{full}) \defn{reduction}. 
We write $\cbnArr*_{S_\indexK}$ for the reflexive and
transitive closure of reduction $\cbnArr_{S_\indexK}$.

\begin{lemma}[\cite{AK12}]
    \label{lem:Cbn_Surface_and_Full_Confluence}
    The \CBNSymb \tSurface and \tFull reductions are
    \mbox{confluent}.
\end{lemma}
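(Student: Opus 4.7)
The plan is to use the standard Tait--Martin-L\"of parallel reduction technique, which is the method used in~\cite{AK12} from which this result is cited. For each $\indexK \in \setOrdinals$, I would define a parallel variant $\cbnPar_\indexK$ of $\cbnArr_{S_\indexK}$ that, in a single step, can simultaneously contract any finite set of $\cbnSymbBeta$- and $\cbnSymbSubs$-redexes located in $\cbnSCtxt_\indexK$-positions. The definition proceeds by induction on terms: a congruence clause for each constructor (respecting the depth bound at binders and at application arguments, according to the grammar of $\cbnSCtxt_\indexK$), together with two parallel redex rules of the shape
\begin{equation*}
    \frac{s \cbnPar_\indexK s' \quad t \cbnPar_\indexK t' \quad \cbnLCtxt \cbnParCtxt_\indexK \cbnLCtxt'}
    {(\cbnLCtxt<\abs{x}{s}>)\, t \;\cbnPar_\indexK\; \cbnLCtxt'<s'\esub{x}{t'}>}
\qquad
    \frac{t \cbnPar_\indexK t'}{t\esub{x}{u} \cbnPar_\indexK t'\isub{x}{u'}}
\end{equation*}
accommodating reduction at a distance.

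The proof of confluence then rests on three classical ingredients, which I would establish in order. First, the sandwich $\cbnArr_{S_\indexK} \;\subseteq\; \cbnPar_\indexK \;\subseteq\; \cbnArr^*_{S_\indexK}$, giving $(\cbnPar_\indexK)^* = (\cbnArr^*_{S_\indexK})$. Second, a substitutivity lemma: if $t \cbnPar_\indexK t'$ and $u \cbnPar_\indexK u'$ then $t\isub{x}{u} \cbnPar_\indexK t'\isub{x}{u'}$ (taking care of how the free occurrences of $x$ in $t$ may sit at various depths, which must stay within $\indexK$). Third, the diamond property for $\cbnPar_\indexK$: from $t_1 \;\leftPar_\indexK\; t \;\cbnPar_\indexK\; t_2$ I construct the common reduct $s$ by contracting, on both sides, all residuals of the redexes contracted by the opposite step. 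Confluence of $\cbnArr_{S_\indexK}$ follows immediately from the standard lemma that a relation with the diamond property is confluent, combined with the first point.

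The main obstacle is verifying the diamond property jointly with the stratification constraint. Two issues must be treated carefully: (i)~\emph{reduction at a distance}, so that when a $\cbnSymbBeta$-redex $(\cbnLCtxt<\abs{x}{s}>)\,t$ is part of a parallel step, the list context $\cbnLCtxt$ of explicit substitutions must be pushed through the residual analysis (standard but syntactically heavy); and (ii)~the \emph{depth bookkeeping}, since a $\cbnSymbSubs$-step can relocate subterms and the new positions must still lie below the allowed level $\indexK$. For the two cases actually used in the paper, $\indexK = 0$ (surface) and $\indexK = \indexOmega$ (full), these issues are well-behaved: in the surface case no parallel redex ever descends into the argument of a top-level application or under a binder, so substitutivity is almost trivial; in the full case every position is allowed, which is exactly the setting of~\cite{AK12}. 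Consequently, both statements of the lemma follow directly from the parallel-reduction argument, the full case being the one already proved in~\cite{AK12} and the surface case being a mild restriction of it.
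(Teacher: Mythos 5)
The paper does not prove this lemma at all: it is imported wholesale from~\cite{AK12}, so there is no internal proof to compare against. Your Tait--Martin-L\"of strategy is the standard way such results are obtained and is workable for the two cases the lemma actually asserts, but two concrete points in your sketch are wrong or misleading. First, you justify the surface case ($\indexK=0$) by claiming that ``no parallel redex ever descends into the argument of a top-level application or under a binder''. That describes the \emph{call-by-value} surface contexts, not the call-by-name ones: the grammar of $\cbnSCtxt_0$ in this paper is $\Hole \mid \abs{x}{\cbnSCtxt_0} \mid \app{\cbnSCtxt_0}{t} \mid \cbnSCtxt_0\esub{x}{t}$, so \CBNSymb\ surface reduction \emph{does} go under abstractions (it is head reduction at a distance); what it forbids is descending into arguments of applications and into the bodies of explicit substitutions. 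Your ``well-behavedness'' argument for $\indexK=0$ therefore rests on the wrong grammar, and the case you actually need to close is the overlap between a root $\cbnSymbSubs$-step $t\esub{x}{u}\mapsto t\isub{x}{u}$ and a surface step inside $t$ (the left branch of the ES), which survives substitution but must be checked to remain in surface position.

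Second, your setup defines a parallel relation and aims at a diamond property \emph{for every} $\indexK\in\setOrdinals$, relegating the stratification constraint to ``depth bookkeeping''. This cannot succeed uniformly: the paper explicitly remarks that $\cbnSymbSurface_\indexK$-reduction is \emph{not} confluent for $\indexK\notin\{0,\indexOmega\}$, so any candidate $\cbnArr_{S_\indexK}\subseteq R\subseteq\cbnArr^{*}_{S_\indexK}$ with the diamond property is impossible at intermediate levels. You do restrict your final claim to $\indexK\in\{0,\indexOmega\}$, which saves the statement, but the framing obscures that the obstruction at intermediate levels is essential, not technical, and that the two endpoint cases must be argued separately rather than as instances of a uniform lemma. (Minor: your second parallel rule produces $t'\isub{x}{u'}$ from a single premise $t\mathrel{\Rrightarrow}t'$, leaving $u'$ unbound; it needs the premise $u\mathrel{\Rrightarrow}u'$.)
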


As in \CBVSymb, \cgiulio{the \CBNSymb  surface reduction is confluent whereas
the stratified reduction one is not}{For $\indexK \notin \{0,\indexOmega\}$, {$\cbnSymbSurface_\indexK$-reduction} is not confluent.
In general, $\cbnArr_{S_\indexI} \,\subsetneq\, \cbnArr_{S_{\indexI+1}} \,\subsetneq\, \cbnArr_{S_\indexOmega}$ for all $i \in \mathbb{N}$}.

\paragraph*{\textbf{Normal Forms.}} Similar to $\CBVSymb$, each level
of the stratified reduction $\cbnArr_{S_\indexK}$ yields a
corresponding notion of \defn{$\cbnSymbSurface_\indexK$-normal form}
and \defn{$\cbnSymbSurface_\indexK$-normalizability}. Normal forms can
be \emph{syntactically} characterized by the following grammar
$\cbnNoS_\indexOmega$, where the subgrammar $\cbnNeS_\indexOmega$
generates special normal forms called \emph{neutral}, which never
create a redex when applied to an argument.

\begin{equation*}
    \begin{array}{c}
        \begin{array}{rcl}
        \cbnNeS_0 &\coloneqq& x
            \vsep \app{\cbnNeS_0}{t}
        \\
        \cbnNeS_{\indexI+1} &\coloneqq& x
            \vsep \app{\cbnNeS_{\indexI+1}}{\cbnNoS_{\indexI}}
                    \qquad (\indexI \in \setIntegers)
        \\
        \cbnNeS_\indexOmega &\coloneqq& x
            \vsep \app{\cbnNeS_\indexOmega}{\cbnNoS_\indexOmega}
    \\[0.2cm]
            \cbnNoS_\indexK &\coloneqq& \abs{x}{\cbnNoS_\indexK}
                \vsep \cbnNeS_\indexK
                \qquad (\indexK \in \setOrdinals)
        \end{array}
    \end{array}
\end{equation*}

\begin{restatable}{lemma}{CbnNoSnCharacterization}
    \LemmaToFromProof{cbn_NnoNn_Characterization}%
Let $\indexK \in \Natinfty$. Then $t \in \setCbnTerms$ is $\cbnSymbSurface_\indexK$-normal iff $t \in
\cbnNoS_\indexK$.
\end{restatable}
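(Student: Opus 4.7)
The plan is to prove both directions by a single structural induction, after first recording one preliminary observation that explains the absence of explicit substitutions in the grammar. The rule $t'\esub{x}{u} \mapstoR[\cbnSymbSubs] t'\isub{x}{u}$ has no side condition, and since $\Hole \in \cbnSCtxt_\indexK$ for every $\indexK \in \setOrdinals$, any term of the form $t'\esub{x}{u}$ is a $\cbnSymbSurface_\indexK$-redex at every level. Hence a $\cbnSymbSurface_\indexK$-normal form contains no top-level ES, and a simple induction then rules out ES anywhere in the term.

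For the ($\Leftarrow$) direction I would argue by mutual induction on the grammars $\cbnNeS_\indexK$ and $\cbnNoS_\indexK$. The variable case is immediate. For $t = \abs{x}{t'}$ with $t' \in \cbnNoS_\indexK$, any redex in $t$ must occur inside $t'$ (as $\abs{x}{\cbnSCtxt_\indexK} \in \cbnSCtxt_\indexK$), which contradicts the IH. For $t = \app{t_1}{t_2}$ with $t_1 \in \cbnNeS_\indexK$, a quick sub-induction on $\cbnNeS_\indexK$ shows that $t_1$ is an applicative spine headed by a variable and free of ES; in particular $t_1$ is not of the shape $\cbnLCtxt<\abs{y}{s}>$, which rules out a root $\cbnSymbBeta$-redex. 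Redexes strictly inside $t_1$ or (when $\indexK \geq 1$) inside $t_2$ are then ruled out by the IH on $t_1 \in \cbnNoS_\indexK$ and on $t_2 \in \cbnNoS_{\indexK-1}$ (or $t_2 \in \cbnNoS_\omega$ when $\indexK = \omega$).

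For the ($\Rightarrow$) direction I would proceed by structural induction on $t$. The case $t = t'\esub{x}{u}$ is impossible by the preliminary observation. For $t = x$ we have $x \in \cbnNeS_\indexK \subseteq \cbnNoS_\indexK$. For $t = \abs{x}{t'}$, any $\cbnSymbSurface_\indexK$-reduction of $t'$ lifts to $t$ via $\abs{x}{\cbnSCtxt_\indexK}$, so $t'$ is normal and the IH gives $t' \in \cbnNoS_\indexK$, hence $t \in \cbnNoS_\indexK$. For $t = \app{t_1}{t_2}$: the context $\app[\,]{\cbnSCtxt_\indexK}{t_2} \in \cbnSCtxt_\indexK$ forces $t_1$ to be $\cbnSymbSurface_\indexK$-normal, and by IH $t_1 \in \cbnNoS_\indexK$; moreover $t_1$ is neither an abstraction (else a root $\cbnSymbBeta$-redex) nor carries an ES, so the grammar forces $t_1 \in \cbnNeS_\indexK$. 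The constraint on $t_2$ is level-dependent: none at $\indexK = 0$ (no $\app[\,]{t_1}{\cbnSCtxt_0}$ in $\cbnSCtxt_0$); at $\indexK = \indexI + 1$ the context $\app[\,]{t_1}{\cbnSCtxt_\indexI}$ forces $t_2$ to be $\cbnSymbSurface_\indexI$-normal so $t_2 \in \cbnNoS_\indexI$ by IH; at $\indexK = \omega$ similarly $t_2 \in \cbnNoS_\omega$.

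The proof is routine and presents no real obstacle; the one piece of bookkeeping to keep straight is the asymmetry in the stratified context grammar at application, where the level drops by one on the right at finite successor levels but is preserved at levels $0$ and $\omega$. This asymmetry matches exactly the level constraints that the grammar of $\cbnNeS_\indexK$ imposes on the argument of an application, so the match-up is essentially mechanical.
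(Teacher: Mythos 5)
Your proof is correct and takes essentially the same route as the paper's: a double implication, with a mutual induction on the grammars $\cbnNeS_\indexK$, $\cbnNoS_\indexK$ for soundness and a structural induction on $t$ for completeness, the key auxiliary fact being that a neutral term is never of the shape $\cbnLCtxt<\abs{y}{s}>$ (the paper carries this as a strengthened induction hypothesis ``$t$ is normal and $\neg\cbnAbsPred{t}$'' for $\cbnNeS_\indexK$, whereas you extract it from the grammar by a sub-induction; the two are interchangeable).

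One remark on your preliminary observation: the claim that a $\cbnSymbSurface_\indexK$-normal form contains no ES \emph{anywhere} is false for finite $\indexK$. For instance $\app{x}{(y\esub{z}{w})}$ is $\cbnSymbSurface_0$-normal, since $\app[\,]{t}{\cbnSCtxt_0}$ is not a production of $\cbnSCtxt_0$ and so the argument position is unreachable at level $0$; and indeed the grammar admits it, as $\cbnNeS_0 \Coloneqq x \vsep \app{\cbnNeS_0}{t}$ places no constraint on the argument. More generally an ES sitting in argument position at depth greater than $\indexK$ survives. What is true, and what your argument actually uses, is only the root-level version: the outermost constructor of a $\cbnSymbSurface_\indexK$-normal term is never an ES (since $\Hole \in \cbnSCtxt_\indexK$), and likewise for any subterm occurrence reachable by a $\cbnSCtxt_\indexK$-context, such as the left subterm of an application. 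Since every place you invoke ES-freeness (dismissing the closure case in the completeness direction, and ruling out $t_1 = \cbnLCtxt<\abs{y}{s}>$ with $\cbnLCtxt \neq \Hole$ in the application cases) only needs this weaker statement, nothing in the proof breaks; you should simply weaken the preliminary observation accordingly.
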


\paragraph*{\textbf{Quantitative Typing.}} We now present the
quantitative typing system $\cbnTypeSysBKRV$ based on intersection
types which was introduced in~\cite{KesnerV14}. The typing rule of
system $\cbnTypeSysBKRV$ are defined as follows:
\begin{equation*}
    \begin{array}{ccc}
        \begin{prooftree}
            \inferCbnBKRVVar{x : \mset{\sigma} \vdash x : \sigma}
        \end{prooftree}
    \hspace{1cm}
        \begin{prooftree}
            \hypo{\Pi \cbnTrBKRV \Gamma; x : \M \vdash t : \sigma}
            \inferCbnBKRVAbs{\Gamma \vdash \abs{x}{t} : \M \typeArrow \sigma \quad}
        \end{prooftree}
    \\[0.5cm]
        \begin{prooftree}
            \hypo{\Pi_1 \cbnTrBKRV \Gamma_1 \vdash t : \mset{\tau_i}_{i \in I} \typeArrow \sigma}
            \hypo{\Pi_2^i \cbnTrBKRV \Gamma_2^i \vdash u : \tau_i}
            \delims{\left(}{\right)_{i \in I}}
            \inferCbnBKRVApp{\Gamma_1 +_{i \in I} \Gamma_2^i &\vdash \app{t}{u} : \sigma\hspace{2.8cm}}
        \end{prooftree}
    \\[0.5cm]
        \begin{prooftree}
            \hypo{\Pi_1 \cbnTrBKRV \Gamma_1, x : \mset{\tau_i}_{i \in I} \vdash t : \sigma}
            \hypo{\Pi_2^i \cbnTrBKRV \Gamma_2^i \vdash u : \tau_i}
            \delims{\left(}{\right)_{i \in I}}
            \inferCbnBKRVEs{\Gamma_1 +_{i \in I} \Gamma_2^i &\vdash t\esub{x}{u} : \sigma\hspace{2.5cm}}
        \end{prooftree}
    \end{array}
\end{equation*}
with $I$ finite in rules $(\cbnBKRVAppRuleName)$ and
$(\cbnBKRVEsRuleName)$. As with system $\cbvTypeSysBKRV$, a
\textbf{(type) derivation} is a tree obtained by applying the
(inductive) typing rules of system $\cbnTypeSysBKRV$. Similarly, the
notation $\Pi \cbnTrBKRV \Gamma \vdash t : \sigma$ means there is a
derivation of the judgment $\Gamma \vdash t : \sigma$ in system
$\cbnTypeSysBKRV$ and we say that $t$ is
\textbf{$\cbnTypeSysBKRV$-typable} if $\Pi \cbnTrBKRV \Gamma \vdash t
: \sigma$ holds for some $\Gamma, \sigma$.

System $\cbnTypeSysBKRV$ \emph{characterizes} surface normalization
(see \Cref{lem:cbnBKRV_characterizes_meaningfulness}).


\paragraph*{\textbf{Meaningfulness.}} We start by defining 
\CBNSymb-meaningfulness.
\begin{definition}\label{def:cbn-meaningful}
    A term $t\in \setCbnTerms$ is \CBNSymb \defn{meaningful} if there
    is a testing context $\cbnTCtxt$ such that $\cbnTCtxt<t>
    \cbnArr*_{S_0} \Id$ (with $\cbvTCtxt \Coloneqq \Hole \vsep
    \app[\,]{\cbvTCtxt}{u} \vsep
    \app[\,]{(\abs{x}{\cbvTCtxt})}{u}$).\footnotemark
    \footnotetext{Usually, \CBNSymb-meaningfulness (aka \emph{solvability})
    is defined using contexts of the form $(\abs{x_1\dots
    x_m}\Hole)N_1\dots N_n$ ($m,n \geq 0$)
    \cite{Barendregt75,barendregt84nh,RoccaP04}, instead of testing
    contexts. It is easy to check that the two definitions are
    equivalent in \CBNSymb. The benefit of our definition is that the
    same testing contexts are also used to define
    \CBVSymb-meaningfulness~(\Cref{sec:cbv}).}
\end{definition}
As for the  \CBVSymb-case, \CBNSymb-meaningfulness can be characterized \emph{operationally},
through the notion of surface-normalization, and \emph{logically},
through typability in type system $\cbnTypeSysBKRV$. 

\begin{lemma}[\CBNSymb Meaningful Characterizations~\cite{BucciarelliKR21,BKRV20}]
    \label{lem:cbnBKRV_characterizes_meaningfulness}%
    Let $t \in \setCbvTerms$. Then, the following characterizations hold:
    \begin{itemize}[leftmargin=6em]
    \item[(Operational)] $t$ is \CBNSymb-meaningful iff $t$ is
        \CBNSymb surface-normalizing.

    \item[(Logical)] $t$ is \CBNSymb-meaningful iff $t$ is
        $\cbnTypeSysBKRV$-typable.
  \end{itemize}
\end{lemma}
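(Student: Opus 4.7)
The plan is to prove both characterizations at once via the cycle
$\text{meaningful} \Rightarrow \cbnTypeSysBKRV\text{-typable} \Rightarrow \text{surface-normalizing} \Rightarrow \text{meaningful}$,
with $\cbnTypeSysBKRV$ serving as the bridge between the operational and logical perspectives.
For $\text{meaningful} \Rightarrow \cbnTypeSysBKRV\text{-typable}$, I would start from the trivial derivation
$\vdash \Id : \mset{\alpha} \typeArrow \alpha$, apply subject expansion of $\cbnTypeSysBKRV$ along
$\cbnArr_{S_0}$ to lift it to a derivation of $\cbnTCtxt<t>$, and then extract a subderivation for $t$ by a
structural induction on $\cbnTCtxt$ (each constructor of testing contexts corresponds to an application
or abstraction rule in the system, so the subtree typing $t$ is preserved by projection).

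For $\cbnTypeSysBKRV\text{-typable} \Rightarrow \text{surface-normalizing}$, I would invoke the standard
non-idempotent intersection-type argument from \cite{KesnerV14,BKRV20}: a natural-number measure $|\Pi|$
attached to each derivation $\Pi \cbnTrBKRV \Gamma \vdash t : \sigma$ strictly decreases along every
$\cbnArr_{S_0}$-step, ruling out infinite surface reductions. For the remaining
$\text{surface-normalizing} \Rightarrow \text{meaningful}$ implication,
\Cref{lem:cbn_NnoNn_Characterization} gives $t \cbnArr*_{S_0} u$ with
$u = \abs{x_1}{\dots \abs{x_n}{y\,s_1 \dots s_m}}$ for some $n, m \geq 0$. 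Letting
$f \coloneqq \abs{z_1}{\dots \abs{z_m}{\Id}}$, I would construct a testing context $\cbnTCtxt$ as follows:
when $y = x_j$ for some $j$, take $\cbnTCtxt \coloneqq \app{\Hole}{a_1}\dots a_n$ with $a_j \coloneqq f$ and
the remaining $a_i$ arbitrary; when $y$ is free in $u$, take
$\cbnTCtxt \coloneqq \app{(\abs{y}{(\app{\Hole}{a_1}\dots a_n)})}{f}$. Both are testing contexts by
\Cref{def:cbn-meaningful}. In each case, the $n$ head $\beta$-redexes of $\cbnTCtxt<u>$ generate ES
on $x_1,\dots,x_n$ (and on $y$, in the free case) whose $\cbnSymbSubs$-reduction substitutes $y$ by $f$,
exposing an $m$-fold $\beta$-chain that further reduces to $\Id$. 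Since testing contexts lie in $\cbnSCtxt_0$,
all these steps are surface steps, yielding $\cbnTCtxt<t> \cbnArr*_{S_0} \cbnTCtxt<u> \cbnArr*_{S_0} \Id$.

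The main obstacle is the construction in this last step: the case split on whether the head variable $y$
is free or bound requires distinct testing contexts, and one must verify that all the intermediate uses of
the at-a-distance $\cbnSymbBeta$ and $\cbnSymbSubs$ rules, acting on the accumulated substitutions
$\esub{x_i}{a_i}$ (and possibly $\esub{y}{f}$), stay within level-$0$ contexts $\cbnSCtxt_0$, so that the
whole derivation remains a genuine surface reduction. The equivalence with the Barendregt-style
head contexts of the form $\app{(\abs{x_1\dots x_m}{\Hole})}{N_1}\dots N_n$ noted in the footnote of
\Cref{def:cbn-meaningful} should also be invoked to align the argument with the statements of
\cite{BucciarelliKR21,BKRV20}.
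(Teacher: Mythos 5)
The paper does not prove this lemma at all: it is imported verbatim from the cited literature (\cite{BucciarelliKR21,BKRV20}), so there is no in-paper argument to compare against. Your reconstruction is the standard one from those references and is essentially sound. The cycle meaningful $\Rightarrow$ typable $\Rightarrow$ surface-normalizing $\Rightarrow$ meaningful is the right decomposition, and each leg works: subject expansion in $\cbnTypeSysBKRV$ holds for the full reduction of the distance calculus (hence a fortiori for $\cbnArr_{S_0}$), and the extraction of a subderivation for $t$ from one for $\cbnTCtxt<t>$ goes through precisely because the hole of a testing context always sits in head position, which is the one premise of the application rule that is typed even when the multiset $I$ is empty --- this is worth saying explicitly, since the argument would fail for a context placing the hole in argument position. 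The quantitative weighted-subject-reduction argument for typable $\Rightarrow$ surface-normalizing is exactly the content of \cite{KesnerV14,BKRV20}. For the last leg, your case split on whether the head variable of the $\cbnSymbSurface_0$-normal form $\abs{x_1}{\dots\abs{x_n}{y\,s_1\dots s_m}}$ is bound or free is the classical Wadsworth-style construction; the only points needing verification are the ones you already flag, namely that testing contexts are themselves $\cbnSCtxt_0$-contexts closed under composition (so $\cbnTCtxt<t> \cbnArr*_{S_0} \cbnTCtxt<u>$ is legitimate) and that the accumulated explicit substitutions $\esub{x_i}{a_i}$ can all be fired by $\cbnSymbSubs$ inside $\cbnSCtxt_0$, yielding exactly $\Id$ with no residual closures. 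I see no gap; the proposal is a correct reconstruction of the cited result.
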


\paragraph*{\textbf{Surface Genericity.}} As in $\CBVSymb$, since
the typing system characterizes meaningfulness, we can
prove a typed genericity result.

\begin{restatable}[\CBNSymb Typed Genericity]{lemma}{CbnTypedSurfaceGenericity}
    \LemmaToFromProof{Cbn_t_meaningless_and_Pi_|>_F<t>_==>_Pi'_|>_F<u>}%
    Let $t \in \setCbnTerms$ be \CBNSymb-meaningless. If $\Pi
    \cbnTrBKRV \Gamma \vdash \cbnCCtxt<t> : \sigma$, then there is
    $\Pi' \cbnTrBKRV \Gamma \vdash \cbnCCtxt<u> : \sigma$ for all $u
    \in \setCbnTerms$.%
\end{restatable}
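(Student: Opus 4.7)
The plan is to induct on the structure of the full context $\cbnCCtxt$, mirroring the strategy used for the analogous CbV result (\Cref{lem:t_meaningless_and_Pi_|>_F<t>_==>_Pi'_|>_F<u>}). The key tool is the logical characterization of meaningfulness (\Cref{lem:cbnBKRV_characterizes_meaningfulness}): since $t$ is \CBNSymb-meaningless, it is not $\cbnTypeSysBKRV$-typable, so any occurrence of $t$ inside a derivation $\Pi$ must sit in a position where no subderivation involving $t$ is demanded.

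For the base case $\cbnCCtxt = \Hole$, the hypothesis $\Pi \cbnTrBKRV \Gamma \vdash t : \sigma$ directly contradicts the non-typability of $t$, so the statement holds vacuously. For the inductive cases in which the hole lies in a head-like position --- abstraction body ($\cbnCCtxt = \abs{x}{\cbnCCtxt_1}$), function side of an application ($\cbnCCtxt = \app{\cbnCCtxt_1}{s}$), or body of an ES ($\cbnCCtxt = \cbnCCtxt_1\esub{x}{s}$) --- the system is syntax-directed, so the last rule of $\Pi$ is fixed and its distinguished premise types $\cbnCCtxt_1<t>$. I would apply the IH to that subderivation and reassemble the final rule with the remaining premises unchanged, yielding $\Pi'$ with the same $\Gamma$ and $\sigma$.

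The delicate cases are when the hole occurs in an argument ($\cbnCCtxt = \app{s}{\cbnCCtxt_1}$) or in the content of an ES ($\cbnCCtxt = s\esub{x}{\cbnCCtxt_1}$). Here $\Pi$ carries a family of subderivations $(\Pi_2^i \cbnTrBKRV \Gamma_2^i \vdash \cbnCCtxt_1<t> : \tau_i)_{i \in I}$. If $I \neq \emptyset$, each such subderivation can be transformed by the IH into one typing $\cbnCCtxt_1<u>$; if $I = \emptyset$, there is no subderivation to transform at all, and $\Pi'$ is obtained by applying the same rule to $\cbnCCtxt<u>$ with the empty family of argument subderivations. Either way, the typing context $\Gamma_1 +_{i \in I} \Gamma_2^i$ and the final type $\sigma$ are preserved.

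The main conceptual point --- and what really makes the argument tick --- is this last empty-multiset case: genericity rests on the non-idempotent, multiset nature of $\cbnTypeSysBKRV$, where an argument or ES content used zero times carries no subderivation at all. This is precisely why a meaningless $t$, though itself untypable, can appear inside a typable $\cbnCCtxt<t>$ only at such silent positions, and can thus be freely swapped for any $u \in \setCbnTerms$. The only bookkeeping concerns are $\alpha$-renaming and the splitting of the typing contexts in the application and ES rules, both of which are routine.
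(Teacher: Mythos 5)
Your proof is correct and follows essentially the same route as the paper's: induction on the full context $\cbnCCtxt$, with the base case dismissed via the logical characterization of meaningfulness and each inductive case obtained by applying the IH to the relevant premise(s) of the syntax-directed last rule, including the (possibly empty) family of argument/ES subderivations. Your remark that the empty-multiset case is what lets an untypable $t$ hide inside a typable $\cbnCCtxt<t>$ is accurate and is exactly what the paper's proof exploits implicitly.
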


Likewise, it can be transformed into a qualitative surface genericity
using the logical characterization
(\Cref{lem:cbnBKRV_characterizes_meaningfulness}). 

\begin{restatable}[\CBNSymb Qualitative Surface Genericity]{theorem}{CbnQualitativeSurfaceGenericity}
    \LemmaToFromProof{Cbn_Qualitative_Surface_Genericity}%
    Let  $t \in \setCbnTerms$ be \CBNSymb-meaningless. If
    $\cbnCCtxt<t>$ is \CBNSymb-meaningful then $\cbnCCtxt<u>$ is also
    \CBNSymb-meaningful for all $u \in \setCbnTerms$.
\end{restatable}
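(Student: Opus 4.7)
The plan is to mirror the proof of \CBVSymb Qualitative Surface Genericity (\Cref{lem:Cbv_Qualitative_Surface_Genericity}) verbatim, since all the required ingredients are now established in the call-by-name setting. Both the \CBNSymb logical characterization of meaningfulness (\Cref{lem:cbnBKRV_characterizes_meaningfulness}) and \CBNSymb Typed Genericity (\Cref{lem:Cbn_t_meaningless_and_Pi_|>_F<t>_==>_Pi'_|>_F<u>}) are stated just above, and together they are exactly what is needed: the characterization lets us translate the operational notion of meaningfulness into typability in $\cbnTypeSysBKRV$, and typed genericity then lets us replace the meaningless subterm while preserving typability.

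Concretely, I would reason as follows. Assume $t$ is \CBNSymb-meaningless and $\cbnCCtxt<t>$ is \CBNSymb-meaningful. By the logical direction of \Cref{lem:cbnBKRV_characterizes_meaningfulness}, meaningfulness of $\cbnCCtxt<t>$ gives a derivation $\Pi \cbnTrBKRV \Gamma \vdash \cbnCCtxt<t> : \sigma$ in $\cbnTypeSysBKRV$ for some context $\Gamma$ and type $\sigma$. Then I apply \Cref{lem:Cbn_t_meaningless_and_Pi_|>_F<t>_==>_Pi'_|>_F<u>} with the same context $\cbnCCtxt$ and target term $u$: because $t$ is \CBNSymb-meaningless, that lemma yields $\Pi' \cbnTrBKRV \Gamma \vdash \cbnCCtxt<u> : \sigma$. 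Finally, applying the logical characterization \Cref{lem:cbnBKRV_characterizes_meaningfulness} in the opposite direction turns typability of $\cbnCCtxt<u>$ back into \CBNSymb-meaningfulness, which is the desired conclusion. Since $u$ was arbitrary, we are done.

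There is essentially no obstacle here: the hard work has been absorbed into \Cref{lem:Cbn_t_meaningless_and_Pi_|>_F<t>_==>_Pi'_|>_F<u>}, whose proof (by induction on the full context $\cbnCCtxt$) is where the actual interaction between meaningless subterms and type derivations is handled, together with the already-established characterization \Cref{lem:cbnBKRV_characterizes_meaningfulness}. The only point worth double-checking is that both invocations of the characterization are legitimate at the \emph{full} context level, i.e. that neither direction requires the ambient term to be closed or otherwise restricted; but the statement of \Cref{lem:cbnBKRV_characterizes_meaningfulness} applies to arbitrary $t \in \setCbnTerms$, so plugging in $\cbnCCtxt<t>$ and $\cbnCCtxt<u>$ is unproblematic. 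As in the \CBVSymb case, this qualitative surface statement is only a stepping stone: the richer quantitative stratified genericity for \CBNSymb will then be obtained by instantiating the generic methodology of \Cref{sec:methodology}.
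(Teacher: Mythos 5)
Your proof is correct and matches the paper's own argument exactly: both apply the logical characterization of \CBNSymb-meaningfulness to obtain a derivation for $\cbnCCtxt<t>$, invoke \CBNSymb{} Typed Genericity to transfer it to $\cbnCCtxt<u>$, and then use the characterization in the other direction to conclude. No differences worth noting.
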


\subsection{Call-by-Name Genericity}
\label{sec:genericity-cbn}

\paragraph*{\textbf{Partial \CBNSymb-Calculus}}
\label{sec:Cbn_Partial_Terms}

The set $\setCbnAprxTerms$ of \textbf{\tPartTerm+} of the \tPartial\
\CBNSymb-calculus, denoted by the same symbols used for terms but in a
bold font, is given by the following inductive definition:
\begin{equation*}
    \begin{array}{r rcl}
        \textbf{(\tPartial^ terms)} \quad
        &\aprxT, \aprxU&\Coloneqq& x \in \setCbnVariables
            \vsep \abs{x}{\aprxT}
            \vsep \app{\aprxT}{\aprxU}
            \vsep \aprxT\esub{x}{\aprxU}
            \vsep \cbnAprxBot
    \end{array}
\end{equation*}
As in the \CBNSymb-calculus, there is a stratified operational, but
now defined by means of \tPartial contexts. The set of
\textbf{\tPartLCtxt+} $(\cbnAprxLCtxt)$ and \textbf{\tPartSnCtxt+}
$(\cbnAprxSCtxt_\indexK)$ for some $\indexK \in \setOrdinals$ are
inductively defined as follows (where $\indexI \in \setIntegers$):
\begin{equation*}
        \begin{array}{rcl}
            \cbnAprxLCtxt &\;\Coloneqq\;& \Hole
                \vsep \cbnAprxLCtxt\esub{x}{t}
        \\[0.2cm]
            \cbnAprxSCtxt_0 &\;\Coloneqq\;& \Hole
                \vsep \abs{x}{\cbnAprxSCtxt_0}
                \vsep \app[\,]{\cbnAprxSCtxt_0}{t}
                \vsep \cbnAprxSCtxt_0\esub{x}{t}
        \\
            \cbnAprxSCtxt_{\indexI+1} &\;\Coloneqq\;& \Hole
                \vsep \abs{x}{\cbnAprxSCtxt_{\indexI+1}}
                \vsep \app[\,]{\cbnAprxSCtxt_{\indexI+1}}{t}
                \vsep \app[\,]{t}{\cbnAprxSCtxt_\indexI}
                \vsep \cbnAprxSCtxt_{\indexI+1}\esub{x}{t}
                \vsep t\esub{x}{\cbnAprxSCtxt_\indexI}
        \\
            \cbnAprxSCtxt_\indexOmega &\;\Coloneqq\;& \Hole
                \vsep \abs{x}{\cbnAprxSCtxt_\indexOmega}
                \vsep \app[\,]{\cbnAprxSCtxt_\indexOmega}{t}
                \vsep \app[\,]{t}{\cbnAprxSCtxt_\indexOmega}
                \vsep \cbnAprxSCtxt_\indexOmega\esub{x}{t}
                \vsep t\esub{x}{\cbnAprxSCtxt_\indexOmega}
        \end{array}
\end{equation*}
The following rewriting rules are the base components of our reduction
relations. Any \tPartTerm having the shape of the left-hand side of
one of these three rules is called a \textbf{redex}.
\begin{equation*}
    \begin{array}{c c c}
        \app{\cbnAprxLCtxt<\abs{x}{\aprxT}>}{\aprxU}
            \;\mapstoR[\cbnAprxSymbBeta]\;
        \cbnAprxLCtxt\cbnCtxtPlug{\aprxT\esub{x}{\aprxU}}
    &\hspace{0.25cm}&
        \aprxT\esub{x}{\aprxU}
            \;\mapstoR[\cbnAprxSymbSubs]\;
        \aprxT\isub{x}{\aprxU}
    \end{array}
\end{equation*}

Building upon the established pattern used for terms, the
\defn{\tPartSnRed} (or $\cbnSymbSurface_\indexK$-reduction)
$\cbnAprxArr_{S_\indexK}$ is defined  as the \tPartial
$\indexK$-stratified closure ($\cbnAprxSCtxt_\indexK$-closure) of the
two rewriting rules $\mapstoR[\cbnAprxSymbBeta]$ and
$\mapstoR[\cbnAprxSymbSubs]$. Its reflexive-transitive closure is
denoted by $\cbnAprxArr*_{S_\indexK}$.
\begin{example} \label{example:cbn_partial_reduction}  $\aprxT_0 =
    \big(\abs{y}{\app{\cbnAprxBot}{w}}\big)\esub{w}{\abs{z}{\cbnAprxBot}}
    \cbnAprxArr_{S_\indexK}
    \abs{y}{\big(\app[\,]{\cbnAprxBot}{\abs{z}{\cbnAprxBot}}\big)}$ $=
    \aprxT_1$ for any $\indexK \in \setOrdinals$. In particular,
    $\aprxT_1$ is $\cbnAprxSymbSurface_0$-normal form.
\end{example}

\paragraph*{\textbf{Meaningful Approximation.}} The 
definition is the same as in \CBVSymb, just adapting the notion of meaningfulness
to \CBNSymb. Indeed, the \defn{\CBNSymb-meaningful
approximant} $\cbnTApprox{t}$ of $t \in \setCbnTerms$ is
$\cbnAprxBot$ if $t$ is \CBNSymb-meaningless, and is defined
inductively on $t$ \mbox{otherwise}:
\begin{align*}\hspace{-0.3cm}
    \hspace{-0.25cm}
        \cbnTApprox{x}
            &\coloneqq x
    &&&
    \hspace{-0.25cm}
        \cbnTApprox{\app{t}{u}}
            &\coloneqq \app{\cbnTApprox{t}}{\cbnTApprox{u}}
    \\
    \hspace{-0.25cm}
        \cbnTApprox{\abs{x}{t}}
            &\coloneqq \abs{x}{\cbnTApprox{t}}
    &&&
    \hspace{-0.25cm}
        \cbnTApprox{t\esub{x}{u}}
            &\coloneqq \cbnTApprox{t}\esub{x}{\cbnTApprox{u}}
\end{align*}

The meaningful approximation of a term $t$ approximates $t$:%
\begin{restatable}{lemma}{CbnMeaningfulApproximationLeqTerm}
    \LemmaToFromProof{Cbn_MA(t)_<=_t}%
    Let $t \in \setCbnTerms$, then $\cbnTApprox{t} \cbnAprxLeq t$.
\end{restatable}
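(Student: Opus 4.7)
The plan is to proceed by structural induction on $t$, mirroring the proof of the analogous call-by-value statement \Cref{lem:Cbv_MA(t)_<=_t}. The key observation underpinning the proof is that, since $\cbnAprxLeq$ is defined as the partial full context closure of the base relation $\cbnAprxBot \cbnAprxLeq \aprxT$, it is reflexive and compatible with every term constructor: whenever $\aprxS \cbnAprxLeq \aprxS'$, plugging both sides into the same full context yields related partial terms. So structural cases reduce to invoking the induction hypothesis at immediate subterms.

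The induction splits according to whether $t$ is \CBNSymb-meaningful. If $t$ is \CBNSymb-meaningless, then by definition $\cbnTApprox{t} = \cbnAprxBot$, and $\cbnAprxBot \cbnAprxLeq t$ holds directly since $\cbnAprxBot \cbnAprxLeq \aprxU$ for every $\aprxU \in \setCbnAprxTerms$ (in particular for $\aprxU = t$). Note that this case cleanly handles the fact that $t$ itself may contain meaningless subterms: there is no need to recurse.

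If $t$ is \CBNSymb-meaningful, we distinguish the four syntactic shapes. For $t = x$, we have $\cbnTApprox{x} = x \cbnAprxLeq x$ by reflexivity. For $t = \abs{x}{s}$, the inductive definition gives $\cbnTApprox{t} = \abs{x}{\cbnTApprox{s}}$; the induction hypothesis applied to $s$ yields $\cbnTApprox{s} \cbnAprxLeq s$, and then contextual closure under the full context $\abs{x}{\Hole}$ gives $\abs{x}{\cbnTApprox{s}} \cbnAprxLeq \abs{x}{s}$. The cases $t = \app{s}{u}$ and $t = s\esub{x}{u}$ are analogous, applying the induction hypothesis to both immediate subterms and composing two applications of contextual closure (or one application using a full context with both holes instantiated). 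No obstacle is expected: the statement is an elementary structural fact about meaningful approximants, and the only care needed is a clean split on meaningfulness at the top level so that meaningless subterms never need to be unfolded recursively.
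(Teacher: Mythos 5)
Your proof is correct and follows essentially the same route as the paper: a top-level case split on whether $t$ is \CBNSymb-meaningless (in which case $\cbnTApprox{t} = \cbnAprxBot \cbnAprxLeq t$ immediately), and otherwise a structural induction over the four term constructors, invoking the induction hypothesis on immediate subterms and closing under contexts. No gaps.
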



\begin{restatable}[\CBNSymb Approximant Genericity]{lemma}{CbnApproximantGenericity}
    \LemmaToFromProof{cbn_Approximant_Genericity}%
    Let $\cbnCCtxt$ be a full  context. If $t \in \setCbnTerms$
    is \CBNSymb-meaningless, then $\cbnTApprox{\cbnCCtxt<t>}
    \cbnAprxLeq \cbnTApprox{\cbnCCtxt<u>}$ for all $u \in
    \setCbnTerms$.
\end{restatable}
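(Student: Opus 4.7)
The plan is to mirror the proof of the CBV analogue (\Cref{lem:Cbv_t_meaningless_=>_MA(F<t>)_<=_MA(F<u>)}), which proceeds by a straightforward induction on the structure of the full context $\cbnCCtxt$, leveraging CBN Qualitative Surface Genericity (\Cref{lem:Cbn_Qualitative_Surface_Genericity}) at each inductive step to transport meaningfulness from $\cbnCCtxt<t>$ to $\cbnCCtxt<u>$.

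In the base case $\cbnCCtxt = \Hole$, I have $\cbnTApprox{\cbnCCtxt<t>} = \cbnTApprox{t} = \cbnAprxBot$ because $t$ is \CBNSymb-meaningless by hypothesis, and then $\cbnAprxBot \cbnAprxLeq \cbnTApprox{u} = \cbnTApprox{\cbnCCtxt<u>}$ holds immediately by the definition of the partial order $\cbnAprxLeq$. For the inductive cases where $\cbnCCtxt$ is built from $\abs{x}{\cbnCCtxt'}$, $\app{\cbnCCtxt'}{s}$, $\app{s}{\cbnCCtxt'}$, $\cbnCCtxt'\esub{x}{s}$, or $s\esub{x}{\cbnCCtxt'}$, I split on whether $\cbnCCtxt<t>$ is \CBNSymb-meaningful. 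If it is meaningless, then by the definition of meaningful approximant we get $\cbnTApprox{\cbnCCtxt<t>} = \cbnAprxBot$, and the desired inequality is again immediate. If $\cbnCCtxt<t>$ is meaningful, CBN Qualitative Surface Genericity applied to the full context $\cbnCCtxt$ gives that $\cbnCCtxt<u>$ is also meaningful, so both approximants unfold structurally through the outermost constructor, e.g.\ $\cbnTApprox{\app{\cbnCCtxt'<t>}{s}} = \app{\cbnTApprox{\cbnCCtxt'<t>}}{\cbnTApprox{s}}$ and symmetrically for $u$. The inductive hypothesis on the smaller full context $\cbnCCtxt'$ yields $\cbnTApprox{\cbnCCtxt'<t>} \cbnAprxLeq \cbnTApprox{\cbnCCtxt'<u>}$, and the compatibility of $\cbnAprxLeq$ with the surrounding constructor (abstraction, application, or explicit substitution) propagates the inequality to the composite terms.

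I do not anticipate a significant obstacle: the argument rests only on (a) the two-clause definition of $\cbnTApprox{-}$, (b) CBN qualitative surface genericity that has already been established, and (c) the fact that $\cbnAprxLeq$ is the contextual closure of $\cbnAprxBot \cbnAprxLeq \aprxT$ and is therefore stable under every term constructor. The only mild care needed is to keep the meaningful/meaningless case split explicit at every inductive step, so that the correct clause in the definition of $\cbnTApprox{-}$ is applied on each branch; apart from that bookkeeping, the proof is a direct transcription of the CBV case to the CBN setting.
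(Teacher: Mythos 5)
Your proof is correct and follows essentially the same route as the paper's: a case split on whether the plugged term is \CBNSymb-meaningless (giving $\cbnAprxBot$ on the left immediately) versus meaningful (where \Cref{lem:Cbn_Qualitative_Surface_Genericity} transports meaningfulness to $\cbnCCtxt<u>$ so both approximants unfold structurally), combined with induction on $\cbnCCtxt$ and contextual closure of $\cbnAprxLeq$. The only difference is organizational — you perform the meaningless/meaningful split at every inductive step while the paper does it once up front (rendering the $\Hole$ case vacuous in the meaningful branch) — which is immaterial.
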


\paragraph*{\textbf{Dynamic Approximation and Lifting.}} To replicate
the proof of quantitative stratified genericity on the
\CBNSymb-calculus, we now need to be able to approximate term
reductions and lift \tPartial reductions. Fortunately, both instances
of
\Cref{ax:Gen_t_->Sn_u_and_t_<=_t'_==>_t'_->Sn_u'_and_u_<=_u',ax:Gen_t_->Sn_u_=>_OA(t)_->*Sn_|u_OA(u)}
for \CBNSymb hold. Consequently, the \tPartial\ \CBNSymb-calculus can
effectively be used to approximate reductions of the
\CBNSymb-calculus.

\paragraph*{\textbf{Observability.}} We now check that the
\tPartial\ \CBNSymb-calculus correctly captures the meaningful and
observable outcomes.

For $\indexK \in
\setOrdinals$, the set $\cbnAprxBnoS_\indexK$ of \tPartial terms introduced below aims to provide
perfect approximation of them, being
$\cbnAprxSymbSurface_\indexK$-normal forms without any occurrence of
$\cbnAprxBot$ at depth $\indexK$.
\begin{equation*}
    \begin{array}{rcl}
        \cbnAprxBneS_0 &\coloneqq& x
            \vsep \app{\cbnAprxBneS_0}{\aprxT}
    \\
        \cbnAprxBneS_{\indexI+1} &\coloneqq& x
            \vsep \app{\cbnAprxBneS_{\indexI+1}}{\cbnAprxBnoS_\indexI}
                    \qquad (\indexI \in \setIntegers)
    \\
        \cbnAprxBneS_\indexOmega &\coloneqq& x
            \vsep \app{\cbnAprxBneS_\indexOmega}{\cbnAprxBnoS_\indexOmega}
    \\[0.2cm]
        \cbnAprxBnoS_\indexK &\coloneqq& \abs{x}{\cbnAprxBnoS_\indexK}
            \vsep \cbnAprxBneS_\indexK
            \qquad (\indexK \in \setOrdinals)
    \end{array}
\end{equation*}

Remarkably, the \tPartial terms in
$\cbnAprxBnoS_\indexK$ are precise enough to perfectly observe the
$\cbnSymbSurface_\indexK$-normal forms and entirely determine them up
to depth $\indexK$, so that the \CBNSymb instances of
\Cref{ax:Gen_t_in_BnoSn_==>_OA(t)_in_noSn,ax:Gen_|t_in_bnoSn_and_t_<=_u_==>_u_in_BnoSn}
hold.

\paragraph*{\textbf{Quantitative Stratified Genericity.}} We now have
all the material to enhance our \CBNSymb qualitative surface
genericity (\Cref{lem:Cbn_Qualitative_Surface_Genericity}) into a
stronger quantitative stratified genericity result.

\begin{restatable}[\CBNSymb Quantitative Stratified Genericity]{theorem}{CbnQuantitativeStratifiedGenericity}
    \LemmaToFromProof{cbn_Quantitative_Stratified_Genericity}%
    \label{t:quant-stratified-genericity}
    Let $k \in \Natinfty$. If $t \in \setCbnTerms$ is
    \CBNSymb-meaningless and $\cbnCCtxt<t>$ is
    $\cbnSymbSurface_\indexK$-normalizing, then there is $i \in
    \mathbb{N}$ such that, for every $u \in \setCbnTerms$, there are
    $t', u' \in \cbnNoS_\indexK$ such that $\cbnCCtxt<t>
    \cbnArr^{i}_{S_\indexK} t' \;\cbvEq[\indexK]\; u'
    \;\sideset{^{i}_{\cbnSymbSurface_\indexK\!\!\!\!}}{}\revArr\;
    \cbnCCtxt<u>$. 
\end{restatable}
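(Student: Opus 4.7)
The plan is to mirror the proof of Theorem \ref{lem:Cbv_Quantitative_Stratified_Genericity} step by step, replacing each \CBVSymb-ingredient with its \CBNSymb-analogue. All the required tools are already at hand: the meaningful approximant $\cbnTApprox{\cdot}$, Lemma \ref{lem:Cbn_MA(t)_<=_t} stating $\cbnTApprox{t} \cbnAprxLeq t$, the \CBNSymb Approximant Genericity (Lemma \ref{lem:cbn_Approximant_Genericity}), and the four \CBNSymb-instantiations of Axioms \ref{ax:Gen_t_->Sn_u_=>_OA(t)_->*Sn_|u_OA(u)}--\ref{ax:Gen_|t_in_bnoSn_and_t_<=_u_==>_u_in_BnoSn}, which this section explicitly claims to hold. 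Thus the whole argument boils down to gluing these pieces together exactly as in the \CBVSymb-case, using a common partial term $\aprxS$ as a bridge between $\cbnCCtxt<t>$ and $\cbnCCtxt<u>$.

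Concretely, I would proceed as follows. Since $\cbnCCtxt<t>$ is $\cbnSymbSurface_\indexK$-normalizing, fix a reduction $\cbnCCtxt<t> \cbnArr*_{S_\indexK} s$ with $s \in \cbnNoS_\indexK$. Iterating Dynamic Approximation one obtains $\cbnTApprox{\cbnCCtxt<t>} \cbnAprxArr^i_{S_\indexK} \aprxS$ for some $i \in \mathbb{N}$ and some $\aprxS$ with $\aprxS \cbnAprxGeq \cbnTApprox{s}$. Observability of Normal Form Approximants yields $\cbnTApprox{s} \in \cbnAprxBnoS_\indexK$, and Stability of Meaningful Observables then forces $\aprxS \in \cbnAprxBnoS_\indexK$. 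On the left, since $\cbnTApprox{\cbnCCtxt<t>} \cbnAprxLeq \cbnCCtxt<t>$ by Lemma \ref{lem:Cbn_MA(t)_<=_t}, iterating Dynamic Partial Lifting along $\cbnTApprox{\cbnCCtxt<t>} \cbnAprxArr^i_{S_\indexK} \aprxS$ produces $t' \in \setCbnTerms$ with $\cbnCCtxt<t> \cbnArr^i_{S_\indexK} t'$ and $\aprxS \cbnAprxLeq t'$; Stability then gives $t' \in \cbnNoS_\indexK$ together with $t' \cbvEq[\indexK] \aprxS$. On the right, Approximant Genericity gives $\cbnTApprox{\cbnCCtxt<t>} \cbnAprxLeq \cbnTApprox{\cbnCCtxt<u>}$, and composing with Lemma \ref{lem:Cbn_MA(t)_<=_t} via transitivity yields $\cbnTApprox{\cbnCCtxt<t>} \cbnAprxLeq \cbnCCtxt<u>$. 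A second application of Dynamic Partial Lifting along the \emph{same} $i$-step reduction delivers $u' \in \setCbnTerms$ with $\cbnCCtxt<u> \cbnArr^i_{S_\indexK} u'$ and $\aprxS \cbnAprxLeq u'$, and Stability once more gives $u' \in \cbnNoS_\indexK$ and $u' \cbvEq[\indexK] \aprxS$. Symmetry and transitivity of $\cbvEq[\indexK]$ conclude $t' \cbvEq[\indexK] u'$, yielding the desired chain.

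The theorem proper is therefore a straight transcription; the real work lies in the four axiom verifications for \CBNSymb that the paper treats as given. I expect the most delicate of these to be Dynamic Approximation: as already observed in the \CBVSymb case, approximating meaningless subterms by $\cbnAprxBot$ may collapse several reduction steps into zero or one step on the approximant (notably when the fired redex is located inside a \CBNSymb-meaningless subterm), which is precisely why the statement of that axiom uses the reflexive closure $\cbnAprxArr^=_{S_\indexK}$ and allows $\aprxU$ to strictly exceed $\cbnTApprox{u}$. Once this asymmetry is absorbed, the diagram appended to the proof of Theorem \ref{lem:Cbv_Quantitative_Stratified_Genericity} serves as a faithful blueprint for the \CBNSymb-case.
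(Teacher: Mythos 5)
Your proposal is correct and follows exactly the same argument as the paper, which itself states that the proof of this theorem is a straight transcription of the \CBVSymb case (\Cref{lem:Cbv_Quantitative_Stratified_Genericity}): approximate the reduction to the $\cbnSymbSurface_\indexK$-normal form on the meaningful approximant, pin down the common partial term $\aprxS$ in $\cbnAprxBnoS_\indexK$ via Observability and Stability, then lift the $i$-step partial reduction to both $\cbnCCtxt\langle t\rangle$ and $\cbnCCtxt\langle u\rangle$ using Dynamic Partial Lifting together with $\cbnTApprox{\cbnCCtxt\langle t\rangle} \cbnAprxLeq \cbnCCtxt\langle t\rangle$ and Approximant Genericity, and conclude by transitivity of $\cbvEq[\indexK]$. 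Your remark that the real content lies in verifying the four \CBNSymb axioms (in particular Dynamic Approximation with its reflexive closure) also matches where the paper places the technical work, namely in the appendix lemmas instantiating those axioms.
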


The proof of this theorem follows the exact same schematic as
\Cref{lem:Cbv_Quantitative_Stratified_Genericity}. A particular case
is given by the following statement since $\indexK$-equality for
$\indexK = \indexOmega$ is just the equality between terms.

\begin{corollary}[\CBNSymb Quantitative Full Genericity]
    \label{lem:Cbn_Quantitative_Full_Genericity}%
    If $t \in \setCbnTerms$ is \CBNSymb-meaningless and $\cbnCCtxt<t>
    \cbnArr*_{S_\indexOmega} s$ where $s$ is
    $\cbnSymbSurface_\indexOmega$-normal, then there is $i \in
    \mathbb{N}$ such that, for every $u \in \setCbnTerms$,
    $\cbnCCtxt<t> \cbnArr^{i}_{S_\indexOmega} s
    \;\sideset{^{i}_{\mathrm{n}_\indexOmega\!\!\!\!}}{}\revArr\;
    \cbnCCtxt<u>$.
\end{corollary}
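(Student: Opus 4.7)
The plan is to invoke the stratified theorem (\Cref{t:quant-stratified-genericity}) at the maximal level $\indexK = \indexOmega$ and then collapse the resulting stratified equivalence to plain syntactic equality. Since full genericity is a specialization of the stratified statement at the top level, essentially no new machinery is required: the corollary should fall out of two elementary observations.

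First, I would note that the hypothesis $\cbnCCtxt<t> \cbnArr*_{S_\indexOmega} s$ with $s$ full-normal directly witnesses $\cbnSymbSurface_\indexOmega$-normalization of $\cbnCCtxt<t>$, so \Cref{t:quant-stratified-genericity} applies with $\indexK = \indexOmega$. This produces $i \in \mathbb{N}$ such that, for every $u \in \setCbnTerms$, there exist $t', u' \in \cbnNoS_\indexOmega$ with
\[
\cbnCCtxt<t> \;\cbnArr^{i}_{S_\indexOmega}\; t' \;\cbvEq[\indexOmega]\; u' \;\sideset{^{i}_{\mathrm{n}_\indexOmega\!\!\!\!}}{}\revArr\; \cbnCCtxt<u>.
\]
It remains only to identify $t'$ and $u'$ with $s$. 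The equality $\cbvEq[\indexOmega]$ coincides with syntactic equality on terms: this is a straightforward induction on the rules defining $\cbvEq[\indexK]$, observing that at level $\indexOmega$ each constructor rule simply propagates $\cbvEq[\indexOmega]$ to all subterms without any relaxation. Hence $t' = u'$. Moreover, $t'$ and $s$ are both $\cbnSymbSurface_\indexOmega$-normal forms reachable from $\cbnCCtxt<t>$ via full reduction, so confluence of full reduction (\Cref{lem:Cbn_Surface_and_Full_Confluence}) forces $t' = s$, and therefore $u' = s$ as well. Substituting $t' = u' = s$ into the chain above yields exactly the claimed conclusion.

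There is no real obstacle here: all the substantive work, namely the construction of meaningful approximants and the verification of the four key axioms of \Cref{sec:methodology} for \CBNSymb, has already been carried out in the proof of the stratified theorem. The corollary requires nothing beyond noticing that $\cbvEq[\indexOmega]$ is syntactic equality and invoking confluence to identify the common normal form on both sides.
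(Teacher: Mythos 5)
Your proof is correct and takes essentially the same route as the paper, which presents the corollary as the instance $\indexK=\indexOmega$ of \Cref{t:quant-stratified-genericity} combined with the observation that $\indexOmega$-equality is just syntactic equality. Your explicit appeal to confluence of full reduction to identify the normal form $t'$ produced by the theorem with the given $s$ is precisely the step the paper leaves implicit, and it is both needed and valid.
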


\section{Theories}
\label{sec:theories}

In this section, we prove that
the meaningful (resp. observational)
theory $\genThH$ (resp. $\genThH*$) generated by \CBVSymb
and \CBNSymb is \emph{consistent} (resp. \emph{consistent}
  and \emph{sensible}). This is when  genericity comes into play. To streamline
the presentation of our approach and avoid unnecessary repetitions, we
employ a \emph{unified reasoning} methodology for both frameworks. To achieve
this, we begin by presenting the proofs of consistency and sensibility
for a \emph{generic} calculus (\Cref{sec:generic-theories}), for which
we assume a set of axioms, mirroring our approach when describing the
methodology used to obtain genericity (\Cref{sec:methodology}).
Then, we instantiate the generic calculus with \CBVSymb\ (\Cref{sec:consistent-sensible-cbv})
and \CBNSymb\ (\Cref{sec:consistent-sensible-cbn}),
showing that each of these instantiations enjoys the axioms stated
before for the generic calculus.

\subsection{Generic Theories}
\label{sec:generic-theories}
We introduce here a generic calculus and three associated
theories: the computational theory $\genThLambda$, the meaningful
theory $\genThH$ and the observational theory $\genThH*$. We provide an
axiomatized proof of the consistency of $\genThH$, and of the
consistency and sensibility~of~$\genThH*$.%

Let us consider a \defn{generic calculus}, based on the set of terms
$\setGenTerms$ equipped with a notion of stratified contexts for all
$\indexK \in \setOrdinals$. In particular, \emph{full contexts} ($\indexK =
\indexOmega$) are denoted by $\genCCtxt$. We also assume the
generic calculus to be equipped with a set $\setrules$ of rewrite
rules. 
The following definitions should
be subscripted by $\setrules$ but we omit it for better readability.
As in \Cref{sec:methodology}, we also assume the generic calculus to
be equipped with a notion of \emph{stratified reduction} $\genArr_\indexK$
generated by the closure of the rules in $\setrules$ by stratified
contexts of level $\indexK \in \setOrdinals$. This gives rise to
corresponding notions of $\indexK$-normal form and
$\indexK$-normalization. In particular the generic calculus  is
equipped with a notion of surface reduction (level $\indexK = 0$) as
well as a notion of full reduction (level $\indexK = \indexOmega$). We
also assume that the generic calculus is equipped with an appropriate
notion~of~\emph{meaningfulness}.

A \defn{theory} is a set of equations of the form $t \genThHEq u$ with
$t,u \in \setGenTerms$. A \defn{congruence} $\genThCongr$ is the
smallest theory containing a set $\genThBaseEquations$ of \emph{base
equations}, and closed by reflexivity, symmetry, transitivity and full
contexts. Alternatively, $\genThCongr$ can be specified by the rules
below:
\begin{equation*}
   \begin{array}{c}
        \begin{prooftree}
            \hypo{\phantom{t \doteq u}}
            \inferGenThCongrRefl[1]{t}{t}
        \end{prooftree}
    \hspace{0.5cm}
        \begin{prooftree}
            \hypoGenThCongr{t}{u}
            \inferGenThCongrSym{u}{t}
        \end{prooftree}
    \hspace{0.5cm}
        \begin{prooftree}[separation = 1em]
            \hypoGenThCongr{t}{s}
            \hypoGenThCongr{s}{u}
            \inferGenThCongrTrans{t}{u}
        \end{prooftree}
   \\[0.3cm]
        \begin{prooftree}
            \hypoGenThCongr{t}{u}
            \inferGenThCongrCtxt{\genCCtxt<t>}{\genCCtxt<u>}
        \end{prooftree}
    \hspace{0.5cm}
        \begin{prooftree}
            \inferGenThCongrStep{t}{u}
        \end{prooftree}
   \end{array}
\end{equation*}
Given a theory $\genThBaseEquations$, we write
  $\genThCongrTail{t}{u}$ if $t \genThCongrEq u \in \genThCongr$, and
  $\genThBaseEquations \not\vdash t \genThCongrEq u$ if $t
  \genThCongrEq u \notin \genThCongr$. A congruence $\genThCongr$ is
\defn{consistent}, if it does not equate all terms (\ie there are $t,
u \in \setGenTerms$ such that $\genThBaseEquations \not\vdash t
\genThCongrEq u$).

The \textbf{computational theory} $\genThLambda$ is the congruence
$\genThCongr_\genThLambdaBaseEquations$ (recall that
$\genThLambdaBaseEquations$ is the set of rewrite rules of the generic calculus). 
We write $t \genSymArr_{\setrules} u$ for the reflexive,
symmetric, transitive closure of the reduction $\genArr_\indexOmega$.
As expected, the congruence $\genThLambda$ axiomatizes the equivalence
$\genSymArr_{\setrules}$: 

\begin{restatable}{lemma}{GenThTCharacterization}
    \label{lem:Theories_lambda_|-_t_=_u_iff_t_=F_u}%
    Let $t, u \in \setGenTerms$, then $\genThLambdaTail{t}{u}$ if and
    only if $t \genSymArr_{\setrules} u$.
\end{restatable}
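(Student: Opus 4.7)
The plan is to prove both implications by straightforward inductions, exploiting the fact that the two formulations --- the congruence $\genThCongr_{\genThLambdaBaseEquations}$ and the symmetric-reflexive-transitive closure $\genSymArr_{\setrules}$ of $\genArr_{\indexOmega}$ --- are built from the same primitive ingredients (rewrite rules in $\setrules$ and closure under full contexts), differing only in how these ingredients are packaged.

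For the direction $\genThLambdaTail{t}{u} \Rightarrow t \genSymArr_{\setrules} u$, I would proceed by induction on the derivation of $t \genThCongrEq u$ in $\genThCongr_{\genThLambdaBaseEquations}$. The reflexivity, symmetry and transitivity rules are immediate since $\genSymArr_{\setrules}$ is by definition reflexive, symmetric and transitive. For the base case (the \textsc{step} rule) a base equation $t \genThCongrEq u$ stems from a rule $t \mapstoR u \in \setrules$; plugging into the empty context yields $t \genArr_{\indexOmega} u$ and hence $t \genSymArr_{\setrules} u$. For the \textsc{ctxt} rule, assuming $t \genSymArr_{\setrules} u$ by the IH, I need $\genCCtxt<t> \genSymArr_{\setrules} \genCCtxt<u>$; this follows by a secondary induction on the zig-zag sequence $t \genSymArr_{\setrules} u$, using that each single step $s \genArr_{\indexOmega} s'$ closed under a full context $\genCCtxt'$ gives $\genCCtxt'<s> \genArr_{\indexOmega} \genCCtxt'<s'>$ because $\genArr_{\indexOmega}$ is, by definition, the closure of $\setrules$ under full contexts (which are themselves closed under composition with $\genCCtxt$).

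For the converse direction $t \genSymArr_{\setrules} u \Rightarrow \genThLambdaTail{t}{u}$, I would induct on the length of the zig-zag $t \genSymArr_{\setrules} u$. Reflexivity and transitivity dispatch the zero-length and concatenation cases via the corresponding congruence rules. For a single step $t \genArr_{\indexOmega} u$, write $t = \genCCtxt<t'>$ and $u = \genCCtxt<u'>$ with $t' \mapstoR u'$ in $\setrules$; then \textsc{step} yields $t' \genThCongrEq u'$ and \textsc{ctxt} lifts this to $\genCCtxt<t'> \genThCongrEq \genCCtxt<u'>$, so $\genThLambdaTail{t}{u}$. A reversed step is handled by an additional use of \textsc{sym}.

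I do not expect any serious obstacle: the result is purely a reformulation lemma showing that the equational and rewriting presentations of the computational theory coincide. The only mildly delicate point is the \textsc{ctxt} case in the first direction, where one must unwind the zig-zag and reassemble it after context closure; this is routine provided the full-context closure of $\genArr_{\indexOmega}$ is genuinely stable under further context plugging, which holds for the generic calculus as specified.
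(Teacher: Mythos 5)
Your proof is correct and is exactly the routine double induction that the paper implicitly relies on: it states this lemma with the remark ``as expected'' and omits the proof entirely from the appendix, treating the coincidence of the equational and rewriting presentations as standard. You correctly identify and discharge the only non-trivial point, namely that closing a zig-zag under a full context requires each $\genArr_\indexOmega$-step to remain a $\genArr_\indexOmega$-step after plugging, which holds because full contexts are closed under composition.
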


A \defn{\genLambdaTheoryTxt} $\genThT$ is any congruence
containing the computational theory $\genThLambda$, that is,  $\genThT = \genThCongr$ for some $\genThBaseEquations \supseteq \setrules$. 
A \genLambdaTheoryTxt\ $\genThCongr$ is
\defn{sensible} if it equates all meaningless terms, that is, $\genThBaseEquations \vdash t \genThCongrEq u$ if $t$ and $u$ are meaningless. The
\defn{meaningful theory $\genThH$} is the smallest sensible
\genLambdaTheoryTxt (\ie $\genThH \coloneqq \genThCongr_\genThHBaseEquations$
where $\genThHBaseEquations \coloneqq \genThLambdaBaseEquations \cup \{(t, u)
\vsep t, u \text{ meaningless}\}$).

Working directly on $\genThH$ is not easy, but it is possible to
refine its specification following \cite{barendregt84nh} (see
\Cref{sec:ProofTheories} for details).

In order to abstractly prove consistency of the \genLambdaTheoryTxt
$\genThH$, we first assume consistency of the \genLambdaTheoryTxt
$\genThLambda$ as an \emph{axiom}.

\begin{assumption}[\textbf{Consistency}]
    \label{ax:thLambda_Consistent}%
    The \genLambdaTheoryTxt $\genThLambda$ is consistent.
\end{assumption}

Why  this assumption is reasonable? The  \genLambdaTheoryTxt $\genThH$
is built as an extension of $\genThLambda$ to ensure its invariance
under reduction. If $\genThLambda$ is inconsistent then all its
extensions (including $\genThH$) are inconsistent as well, regardless
of the definition of meaningfulness. Taking $\genThLambda$
to be consistent as an axiom
is therefore reasonable  in our framework. 

We abstractly prove that consistency of $\genThH$ can be reduced to
that of $\genThLambda$, following the same ideas
in~\cite{barendregt84nh}. For that, we start by showing in particular
that $\genThH$ and $\genThLambda$ both agree on $\indexOmega$-normal
forms, a forthcoming property
(\Cref{lem:Theories_T_|-_t_=_u_and_u_FNF_==>_lambda_|-_t_=_u}) which
uses the fact that the generic calculus satisfies the following
\cref{ax:Confluence,ax:Full_Genericity}.

\begin{restatable}[\textbf{Confluence and Normal Form}]{assumption}{AssumptionTheoryFullConfluence}
    \label{ax:Confluence}%
    Reduction $\genArr_\indexOmega$ is confluent and
    there exists an $\indexOmega$-normal form.
\end{restatable}

Indeed, normal forms represent the results of computations, thus being
without $\indexOmega$-normal forms would mean that the calculus never
yields a result (also, variables are $\indexOmega$-normal in any
reasonable rewriting system). Moreover, confluence of
$\genArr_\indexOmega$ implies that two evaluations from the same term
cannot yield different outcomes. 
Hence, \Cref{ax:Confluence} is a perfectly reasonable assumption.

\begin{assumption}[\textbf{Full Genericity}]
    \label{ax:Full_Genericity}%
    If $t \in \setGenTerms$ is meaningless and $\genCCtxt<t>
    \cbvArr*_\indexOmega s$ where $s$ is $\indexOmega$-normal, then
    for every $u \in \setGenTerms$, 
    $\genCCtxt<t> \cbvArr*_\indexOmega s
    \;\sideset{^*_\indexOmega}{}\revArr\; \genCCtxt<u>$.
\end{assumption}


Since $\genThH$ is the smallest sensible \genLambdaTheoryTxt, proving
$\genCCtxt<t> \genThLambdaEq \genCCtxt<u>$ in $\genThH$ can be done either by using the rules for
$\genThLambda$ or --thanks to sensibility-- by replacing $t$ with $u$ when they are both meaningless.
Full genericity (\Cref{ax:Full_Genericity}) allows one to do that directly in
$\genThLambda$, in the special case where $\genCCtxt<t>$ or $\genCCtxt<u>$ is $\indexOmega$-normalizing.
Therefore,

\begin{restatable}{lemma}{GenTheoryHAndLambdaAgreeFNF}
    \LemmaToFromProof{Theories_T_|-_t_=_u_and_u_FNF_==>_lambda_|-_t_=_u}%
    If $u \in \setGenTerms$ has a $\indexOmega$-normal form, then for
    any $t \in \setGenTerms$, if $\genThHTail{t}{u}$ then
    $\genThLambdaTail{t}{u}$.
\end{restatable}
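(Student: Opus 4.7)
The plan is to induct on the number $m$ of \emph{meaningless-replacement steps} appearing in a derivation of $\genThHTail{t}{u}$, where such a step is the elementary transformation obtained by plugging a base equation $(s_1, s_2)$ with both $s_1, s_2$ meaningless inside some full context. Unfolding the inductive definition of $\genThCongr$, any proof of $\genThHTail{t}{u}$ can be reorganized as a finite zigzag $t = t_0, t_1, \dots, t_n = u$ where each elementary step is either a $\lambda$-step (an equation in $\setrules$ plugged in a full context, possibly in reverse) or a meaningless-replacement step.

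The base case $m = 0$ is immediate: the entire chain consists of $\lambda$-steps, so $t \genSymArr_\setrules u$ and \Cref{lem:Theories_lambda_|-_t_=_u_iff_t_=F_u} yields $\genThLambdaTail{t}{u}$.

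For the inductive case $m \geq 1$, I isolate the \emph{last} meaningless-replacement step, say $t_k = \genCCtxt<s_1>$ and $t_{k+1} = \genCCtxt<s_2>$ (with $s_1, s_2$ meaningless) for some full context $\genCCtxt$. The suffix $t_{k+1}, \dots, t_n = u$ then uses only $\lambda$-steps, whence $\genThLambdaTail{t_{k+1}}{u}$ by \Cref{lem:Theories_lambda_|-_t_=_u_iff_t_=F_u}. Since $u$ admits an $\indexOmega$-normal form $u^*$, confluence (\Cref{ax:Confluence}) forces $t_{k+1} = \genCCtxt<s_2>$ to reduce to $u^*$ as well. Applying \Cref{ax:Full_Genericity} to the meaningless term $s_2$ in the context $\genCCtxt$ supplies a reduction of $\genCCtxt<v>$ to $u^*$ for every term $v$; taking $v = s_1$ yields that $t_k$ also reduces to $u^*$, and in particular $t_k$ admits an $\indexOmega$-normal form. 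The prefix $t = t_0, \dots, t_k$ contains exactly $m - 1$ meaningless-replacement steps, so the induction hypothesis applies and gives $\genThLambdaTail{t}{t_k}$. Concatenating with the $\lambda$-derivations $t_k \genSymArr_\setrules u^*$ and $u \genSymArr_\setrules u^*$ (the latter reversed) produces $t \genSymArr_\setrules u$, and \Cref{lem:Theories_lambda_|-_t_=_u_iff_t_=F_u} concludes $\genThLambdaTail{t}{u}$.

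The main obstacle is the apparent circularity of the induction: its hypothesis requires the intermediate term $t_k$ to have a normal form, but \emph{a priori} the terms occurring along a $\genThH$-derivation need not normalize. The key trick that unblocks this is the choice of the \emph{last} meaningless-replacement step, so that the suffix is a pure $\lambda$-derivation; confluence then pulls the normal form of $u$ backwards to $\genCCtxt<s_2>$, and full genericity transports it across the replacement to $\genCCtxt<s_1> = t_k$, supplying exactly the normalization hypothesis required by the inductive call.
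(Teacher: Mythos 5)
Your proof is correct and follows essentially the same route as the paper's: both decompose the congruence proof of $\genThHTail{t}{u}$ into a finite chain of single contextual steps and pull the $\indexOmega$-normal form of $u$ backwards through that chain, using confluence to cross the $\lambda$-steps and Full Genericity to cross the meaningless-replacement steps, before concluding via the characterization of $\genThLambda$ as $\genSymArr_{\setrules}$. The only differences are organizational: you induct on the number of meaningless-replacement steps and peel off the last one together with its trailing $\lambda$-suffix, whereas the paper first formally justifies the chain decomposition through an auxiliary presentation of the theory and then inducts on the length of the chain, peeling off one step at a time.
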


\begin{theorem}
    \label{lem:Theories_H_Consistent}%
    The \genLambdaTheoryTxt $\genThH$ is consistent.
\end{theorem}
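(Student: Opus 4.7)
The plan is to argue by contradiction, leveraging the consistency of the computational theory $\genThLambda$ (\Cref{ax:thLambda_Consistent}) together with the preceding \Cref{lem:Theories_T_|-_t_=_u_and_u_FNF_==>_lambda_|-_t_=_u}, whose very purpose is to transport equations from $\genThH$ down to $\genThLambda$ whenever one side is $\omega$-normalizing. All three axioms introduced so far are used exactly once: full genericity is consumed inside the lemma, confluence-plus-existence of a normal form provides a convenient witness, and consistency of $\genThLambda$ yields the concrete pair of terms to be separated.

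Concretely, I would suppose for contradiction that $\genThH$ is inconsistent, meaning $\genThHTail{t}{u}$ holds for every $t,u \in \setGenTerms$. By \Cref{ax:thLambda_Consistent}, pick two specific terms $t_1, t_2 \in \setGenTerms$ such that $\genThLambda \not\vdash t_1 \genThCongrEq t_2$. By \Cref{ax:Confluence}, fix an $\omega$-normal form $s \in \setGenTerms$.

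Under the inconsistency hypothesis we have both $\genThHTail{t_1}{s}$ and $\genThHTail{t_2}{s}$. Since $s$ is already $\omega$-normal, it trivially has itself as $\omega$-normal form, so \Cref{lem:Theories_T_|-_t_=_u_and_u_FNF_==>_lambda_|-_t_=_u} (instantiated with the normalizing term $s$ on the right) yields $\genThLambdaTail{t_1}{s}$ and $\genThLambdaTail{t_2}{s}$. Symmetry and transitivity of the congruence $\genThLambda$ then produce $\genThLambdaTail{t_1}{t_2}$, contradicting the choice of $t_1$ and $t_2$. Hence $\genThH$ is consistent.

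Structurally the argument is very light: essentially all the weight has been shifted to \Cref{lem:Theories_T_|-_t_=_u_and_u_FNF_==>_lambda_|-_t_=_u}, which itself distils \Cref{ax:Full_Genericity}. The only point requiring attention is that the hypothesis of that lemma is actually met in the intended application, which holds because a normal form is its own normal form; no further obstacle arises here.
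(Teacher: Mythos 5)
Your proof is correct and follows essentially the same route as the paper: both arguments take the two $\genThLambda$-inseparable terms from \Cref{ax:thLambda_Consistent}, an $\indexOmega$-normal form from \Cref{ax:Confluence}, and then use \Cref{lem:Theories_T_|-_t_=_u_and_u_FNF_==>_lambda_|-_t_=_u} to push the hypothetical $\genThH$-equations down to $\genThLambda$ and derive a contradiction by transitivity. The only difference is cosmetic (the paper chains $t_1 \doteq u \doteq t_2$ directly, while you invoke symmetry first), and your remark that the normal form is trivially its own normal form is exactly the point making the lemma applicable.
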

\begin{proof}
By \Cref{ax:thLambda_Consistent}, there exist two terms $t_1, t_2 \in
\setGenTerms$ such that $t_1 \genThLambdaEq t_2$ is not provable in
$\genThLambda$. Moreover, using \Cref{ax:Confluence}, there exists an
$\indexOmega$-normal form $u \in \setGenTerms$. Suppose by absurd that
$\genThH$ is inconsistent, then $\genThHTail{t_1}{u}$ and
$\genThHTail{u}{t_2}$. Thus $\genThLambdaTail{t_1}{u}$ and
$\genThLambdaTail{u}{t_2}$ using
\Cref{lem:Theories_T_|-_t_=_u_and_u_FNF_==>_lambda_|-_t_=_u}, hence
$\genThLambdaTail{t_1}{t_2}$ by transitivity which
contradicts~\Cref{ax:thLambda_Consistent}. One then concludes that $\genThH$ is consistent.
\end{proof}

Now we study the \defn{observational theory $\genThH*$} associated
with a generic calculus, which is defined by the following set:
\begin{equation*}
    \genThH* \coloneqq \{t \genThHEq u \vsep \forall\; \genCCtxt, \; \genCCtxt<t> \text{ meaningful } \Leftrightarrow \genCCtxt<u> \text{ meaningful}\}
\end{equation*}

To prove that the theory $\genThH*$ is a consistent and
sensible \genLambdaTheoryTxt, we use the following complementary
axioms, which are natural and self-explained, as well as consistency of $\genThH$ (\Cref{lem:Theories_H_Consistent}).
    
\begin{assumption}[\textbf{Meaningfulness Stability}]
    \label{ax:Meaningfulness_Stable_Reduction}%
    Let $t, u \in \setGenTerms$ such that $t \genArr_\indexOmega u$,
    then $t$ is meaningful if and only if $u$ is meaningful.
\end{assumption}

\begin{assumption}[\textbf{Surface Genericity}]
    \label{ax:Surface_Genericity}%
    If $t \in \setGenTerms$ is meaningless and $\genCCtxt<t>$
    is meaningful, then for any $u \in \setGenTerms$, $\genCCtxt<u>$
    is meaningful.
\end{assumption}

\begin{assumption}[\textbf{Meaningful Existence}]
    \label{ax:Exists_Meaningless}%
    There is a meaningless term in the generic calculus.
\end{assumption}


\begin{restatable}{theorem}{GenTheoryHStarConsistent}
    \label{thm:Theory_H*}%
    $\genThH*$ is a sensible and consistent \genLambdaTheoryTxt. 

\end{restatable}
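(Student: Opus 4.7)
The proof splits into three tasks: showing $\genThH*$ is (1) a $\lambda$-theory, (2) sensible, and (3) consistent.

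For (1), the equivalence relation properties of $\genThH*$ are immediate from the biconditional defining it. Closure under full contexts also follows directly: if $t$ and $u$ are $\genThH*$-equivalent and $\genCCtxt'$ is any full context, then for every full context $\genCCtxt$ the composite $\genCCtxt'' \coloneqq \genCCtxt \circ \genCCtxt'$ is again a full context, and the equi-meaningfulness of $\genCCtxt''<t>$ and $\genCCtxt''<u>$ is inherited from the $\genThH*$-equivalence of $t$ and $u$. To establish the inclusion $\genThLambda \subseteq \genThH*$, I would invoke \Cref{lem:Theories_lambda_|-_t_=_u_iff_t_=F_u} to rewrite $t \genThLambdaEq u$ as $t \genSymArr_{\setrules} u$, context-close to get $\genCCtxt<t> \genSymArr_{\setrules} \genCCtxt<u>$ for every full $\genCCtxt$, and appeal to \Cref{ax:Meaningfulness_Stable_Reduction} to transfer meaningfulness along each reduction step.

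For (2), suppose $t_1, t_2$ are both meaningless and fix any full context $\genCCtxt$. Two applications of \Cref{ax:Surface_Genericity} yield the required biconditional: using the meaninglessness of $t_1$, if $\genCCtxt<t_1>$ is meaningful then so is $\genCCtxt<t_2>$ (instantiating the axiom's generic argument with $t_2$); symmetrically, using the meaninglessness of $t_2$, the converse implication holds.

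For (3), the plan is to leverage consistency of $\genThH$ (\Cref{lem:Theories_H_Consistent}) to obtain a meaningful term, then separate it from a meaningless one using the trivial context. By consistency of $\genThH$ there exist terms $t, u$ that are not equated in $\genThH$; since $\genThH$ is sensible, at least one of $t, u$ must be meaningful (otherwise they would be equated), say $t$. Pairing $t$ with a meaningless $t_0$ given by \Cref{ax:Exists_Meaningless}, the trivial context $\Hole$ distinguishes them ($t$ is meaningful while $t_0$ is not), so $t$ and $t_0$ are not $\genThH*$-equivalent, establishing consistency. The main subtlety of the argument is precisely here: since the inclusion $\genThH \subseteq \genThH*$ (which is a consequence of (1)--(2) together with minimality of $\genThH$) goes in the wrong direction to transfer consistency, one cannot apply \Cref{lem:Theories_H_Consistent} as a black box, and the trick is to combine consistency and sensibility of $\genThH$ to extract a meaningful term.
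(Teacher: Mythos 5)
Your proposal is correct and follows essentially the same route as the paper: establish the congruence/$\lambda$-theory structure via \Cref{ax:Meaningfulness_Stable_Reduction}, derive sensibility from \Cref{ax:Surface_Genericity} applied in both directions, and obtain consistency by extracting a meaningful term from a non-equated pair in $\genThH$ (using sensibility of $\genThH$) and separating it from the meaningless term of \Cref{ax:Exists_Meaningless} with the empty context. Your closing remark about the inclusion $\genThH \subseteq \genThH*$ going the wrong way for a black-box transfer of consistency correctly identifies the same subtlety the paper's proof navigates.
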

\begin{proof}
    (\genLambdaTheoryTxt) \cgiulio{By induction on $\genThLambdaTail{t}{u}$
    using \Cref{ax:Meaningfulness_Stable_Reduction}}{It suffices to show $\genThH* =
    \genThCongr_{\genThH*}$ and $\genThLambdaBaseEquations \subseteq \genThH*$.
	If $(t,u) \in \genThLambdaBaseEquations$ then, for every full context $\genCCtxt$, we have $\genCCtxt<t> \genArr_\indexOmega \genCCtxt<u>$ and so $t \genThLambdaEq u \in \genThH*$ by \Cref{ax:Meaningfulness_Stable_Reduction}.
	By definition of $\genThH*$\!, it is easy to \mbox{check that $t \!\genThLambdaEq\! u \in \genThH*$ iff $t \!\genThLambdaEq\! u \in \genThCongr_{\genThH*}$}}.

    (Sensibility) \giulio{As $\genThH*$ is a \genLambdaTheoryTxt, it suffices to show that $\genThH*$ equates all meaningless terms}.
    Let $t, u \in \setGenTerms$ be meaningless. By
    surface genericity (\Cref{ax:Surface_Genericity}), $\genCCtxt<t>$
    is meaningful iff so is $\genCCtxt<u>$, thus $t \doteq u \in
    \genThH^*$.

    (Consistency) As $\genThH$ is consistent
    (\Cref{lem:Theories_H_Consistent}), $\mathsf{H} \not\vdash t \genThLambdaEq
    u$ for some $t, u \in
    \setGenTerms$. This means that $t$ or $u$ --say $t$-- is meaningful
    (otherwise $\genThHTail{t}{u}$). By \Cref{ax:Exists_Meaningless},
    there is a meaningless $s \in \setGenTerms$. For $\genCCtxt =
    \Hole$, $\genCCtxt<t> = t$ is meaningful and $\genCCtxt<s> = s$ is
    meaningless, then $t \genThLambdaEq s \notin \genThH* =
    \genThCongr_{\genThH*}$ thus $\genThH* \not\vdash t
    \doteq s$, that is, $\genThH*$~is~consistent.
\end{proof}

\subsection{\CBVSymb Theories}
\label{sec:consistent-sensible-cbv}

We now instantiate the generic theories
above to the \CBVSymb-calculus\giulio{: let $\cbvThH$ (\resp $\cbvThH*$) be the instantiation of $\genThH$ (resp. $\genThH*$) obtained by replacing ``meaningful'' with ``\CBVSymb-meaningful'' (see \Cref{def:cbv-meaningful}) in its definiton.}
\cgiulio{, thus proving}{Thus, we prove} consistency (\resp and
sensibility) of $\cbvThH$ (\resp $\cbvThH*$). We also prove
that $\cbvThH*$ is the \emph{unique} \emph{maximal} consistent
\cbvLambdaTheoryTxt extending $\cbvThH$.

\begin{theorem}
    The \CBVSymb-calculus verifies
    \Cref{ax:Confluence,ax:Full_Genericity,ax:thLambda_Consistent,ax:Meaningfulness_Stable_Reduction,ax:Surface_Genericity,ax:Exists_Meaningless}.
\end{theorem}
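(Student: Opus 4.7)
The plan is to discharge the six axioms one by one, relying mainly on results already established in \Cref{sec:cbv,sec:Cbv_Stratified_Genericity}. Most of the axioms are in fact direct restatements or immediate consequences of results proved earlier, so the bulk of the proof should be a careful inventory rather than a new construction.

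First I would address \Cref{ax:Confluence}: confluence of $\cbvArr_{S_\indexOmega}$ is exactly \Cref{lem:Cbv_Surface_Confluence}, and for the existence of an $\indexOmega$-normal form one just observes that any variable $x$ is trivially $\cbvSymbSurface_\indexOmega$-normal (since $x \in \cbvNoS_\indexOmega$ by \Cref{lem:cbv_NoSn_Characterization}). Next, \Cref{ax:Full_Genericity} is exactly \Cref{lem:Cbv_Quantitative_Full_Genericity} (forgetting the quantitative information about $i$). Similarly, \Cref{ax:Surface_Genericity} is exactly \Cref{lem:Cbv_Qualitative_Surface_Genericity}. For \Cref{ax:Exists_Meaningless}, one takes $\Omega = \Delta\Delta$: by the operational characterization (\Cref{lem:cbvBKRV_characterizes_meaningfulness}) it suffices to note that $\Omega$ has no $\cbvSymbSurface_0$-normal form, as its only possible reduct is of the shape $(xx)\esub{x}{\Delta} \cbvArr_{S_0} \Delta\Delta\esub{x}{\Delta} \cbvArr_{S_0} \dots$, hence $\Omega$ is \CBVSymb-meaningless.

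Then I turn to \Cref{ax:thLambda_Consistent}: I would show that the two distinct $\cbvSymbSurface_\indexOmega$-normal forms $x$ and $y$ (with $x \neq y$) are not equated by $\genThLambda$. By \Cref{lem:Theories_lambda_|-_t_=_u_iff_t_=F_u}, $\genThLambdaTail{x}{y}$ would imply $x \,\simeq_{\cbvArr_{S_\indexOmega}}\, y$ in the reflexive-symmetric-transitive closure of $\cbvArr_{S_\indexOmega}$, and then confluence (\Cref{ax:Confluence}, just proved) would force a common reduct, which is impossible as both are distinct normal forms. Finally, for \Cref{ax:Meaningfulness_Stable_Reduction}, I would combine the operational characterization of meaningfulness (\Cref{lem:cbvBKRV_characterizes_meaningfulness}) with confluence of $\cbvArr_{S_\indexOmega}$. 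Specifically, assume $t \cbvArr_{S_\indexOmega} u$. If $t$ is meaningful, then $t$ is $\cbvSymbSurface_0$-normalizing, say $t \cbvArr*_{S_0} s \in \cbvNoS_0$; by confluence of $\cbvArr_{S_\indexOmega}$ (which contains $\cbvArr_{S_0}$), the terms $u$ and $s$ have a common $\cbvArr_{S_\indexOmega}$-reduct $r$; since $s$ is $\cbvSymbSurface_0$-normal but $\cbvArr_{S_\indexOmega}$-reductions from it can only happen under $\lambda$, a small analysis on the grammar of $\cbvNoS_0$ shows that $r$ is still in $\cbvNoS_0$, hence $u$ $\cbvSymbSurface_0$-normalizes to $r$, and thus $u$ is meaningful. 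Conversely, if $u$ is meaningful and $u \cbvArr*_{S_0} s'$ with $s' \in \cbvNoS_0$, one needs the fact that meaningfulness of $u$ transfers back to $t$: here, since $t \cbvArr_{S_\indexOmega} u$, it follows that $t \cbvArr*_{S_\indexOmega} s'$, and again by confluence combined with the standardization-like properties of the stratified reductions available in \CBVSymb, $t$ itself $\cbvSymbSurface_0$-normalizes.

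The only genuinely delicate part is this last point for \Cref{ax:Meaningfulness_Stable_Reduction}, namely the backward direction (meaningfulness of $u$ implies meaningfulness of $t$ when $t \cbvArr_{S_\indexOmega} u$), because $\cbvArr_{S_\indexOmega}$ allows non-surface steps which could in principle erase a divergence that obstructs $\cbvSymbSurface_0$-normalization of $t$. I would handle this using the quantitative typing system $\cbvTypeSysBKRV$ as a black box: the logical characterization in \Cref{lem:cbvBKRV_characterizes_meaningfulness} says meaningfulness coincides with $\cbvTypeSysBKRV$-typability, and subject expansion for $\cbvTypeSysBKRV$ with respect to $\cbvArr_{S_\indexOmega}$ (a standard property of intersection type systems, implicit in~\cite{BKRV20}) immediately gives typability of $t$ from typability of $u$, completing the argument.
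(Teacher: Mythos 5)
Your inventory matches the paper's proof almost item for item: \Cref{ax:Confluence} via confluence of full reduction plus a trivially normal term, \Cref{ax:Full_Genericity} as a special case of \Cref{lem:Cbv_Quantitative_Full_Genericity}, \Cref{ax:Surface_Genericity} as \Cref{lem:Cbv_Qualitative_Surface_Genericity}, \Cref{ax:Exists_Meaningless} via the $\cbvArr_{S_0}$-divergence of $\Omega$ and \Cref{lem:cbvBKRV_characterizes_meaningfulness}, and \Cref{ax:Meaningfulness_Stable_Reduction} ultimately via the type system $\cbvTypeSysBKRV$ --- which is exactly the paper's route. You also spell out \Cref{ax:thLambda_Consistent} (two distinct $\indexOmega$-normal forms such as $x$ and $y$ cannot be equated, by \Cref{lem:Theories_lambda_|-_t_=_u_iff_t_=F_u} and confluence); the paper leaves this item implicit, and your argument for it is the standard and correct one.

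One caveat on \Cref{ax:Meaningfulness_Stable_Reduction}: the operational argument you sketch for the forward direction does not go through as stated. From $t \cbvArr_{S_\indexOmega} u$ and $t \cbvArr*_{S_0} s$ with $s$ surface-normal, confluence gives a common reduct $r$ of $u$ and $s$ only for the \emph{full} reduction; even if $r$ is again surface-normal, the sequence $u \cbvArr*_{S_\indexOmega} r$ is not a surface reduction sequence, so concluding that $u$ is surface-normalizing requires an additional factorization (standardization) theorem that you invoke only vaguely --- and the same issue recurs in your backward direction before you switch tactics. The clean fix is the one you already adopt at the end: both directions follow at once from full subject reduction \emph{and} expansion in $\cbvTypeSysBKRV$ together with the logical characterization in \Cref{lem:cbvBKRV_characterizes_meaningfulness}, which is precisely what the paper does; the confluence detour is unnecessary. (Also, a harmless slip: $(\app{x}{x})\esub{x}{\Delta} \cbvArr_{S_0} \app{\Delta}{\Delta}$ with no leftover explicit substitution, since the $\cbvSymbSubs$-rule consumes it; this does not affect the divergence of $\Omega$.)
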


\begin{proof}
    For \Cref{ax:Confluence} see \cite{AP12}.
    \Cref{ax:Full_Genericity}
    is a special case of \Cref{lem:Cbv_Quantitative_Full_Genericity}.
    \Cref{ax:Meaningfulness_Stable_Reduction} is proved using Full
    Subject Reduction and Expansion in system $\cbvTypeSysBKRV$~\cite{BKRV20}  and
    \Cref{lem:cbvBKRV_characterizes_meaningfulness}. For
    \Cref{ax:Surface_Genericity} we use
    \Cref{lem:Cbv_Qualitative_Surface_Genericity}. For
    \Cref{ax:Exists_Meaningless}: the term $\Omega$ is $\cbvArr_{S_0}$
    divergent thus by \Cref{lem:cbvBKRV_characterizes_meaningfulness},
    one concludes that it is \CBVSymb-meaningless.
\end{proof}

\begin{corollary}\label{cor:consistency}
    \begin{enumerate}
    \item The \cbvLambdaTheoryTxt $\cbvThH$ is consistent. 

    \item The theory $\cbvThH*$ is a sensible and
    consistent \cbvLambdaTheoryTxt.
    \end{enumerate}
\end{corollary}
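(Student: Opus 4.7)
The plan is to obtain the corollary as a direct instantiation of the abstract results \Cref{lem:Theories_H_Consistent} and \Cref{thm:Theory_H*} proved in \Cref{sec:generic-theories}. Indeed, these abstract results establish consistency of $\genThH$, respectively sensibility and consistency of $\genThH*$, for any generic calculus fulfilling \Cref{ax:thLambda_Consistent,ax:Confluence,ax:Full_Genericity,ax:Meaningfulness_Stable_Reduction,ax:Surface_Genericity,ax:Exists_Meaningless}. The immediately preceding theorem precisely states that the \CBVSymb-calculus satisfies all these axioms, once ``meaningful'' is instantiated to \CBVSymb-meaningful in accordance with \Cref{def:cbv-meaningful}. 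Since $\cbvThH$ (respectively $\cbvThH*$) is by definition exactly the instantiation of $\genThH$ (respectively $\genThH*$) to \CBVSymb, both claims follow at once.

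More concretely, for item (1), I would simply invoke \Cref{lem:Theories_H_Consistent} instantiated at \CBVSymb to conclude that $\cbvThH$ is consistent. For item (2), I would invoke \Cref{thm:Theory_H*}, again instantiated at \CBVSymb, yielding that $\cbvThH*$ is a sensible and consistent \cbvLambdaTheoryTxt.

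The proof is therefore essentially a one-liner, with no genuine obstacle at this stage: all the heavy lifting has already been done in verifying the axioms. The only thing to be mildly careful about is to make sure that the instantiation is stated explicitly, so that the reader sees that the notion of \emph{meaningful} used in the axioms \Cref{ax:Full_Genericity,ax:Meaningfulness_Stable_Reduction,ax:Surface_Genericity,ax:Exists_Meaningless} is indeed that of \CBVSymb-meaningfulness (\Cref{def:cbv-meaningful}), and that the notion of $\indexOmega$-reduction instantiates to full \CBVSymb-reduction $\cbvArr_{S_\indexOmega}$. Since all these correspondences are immediate from the way the generic calculus has been parameterized, no further argument is needed.
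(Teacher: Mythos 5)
Your proposal is correct and matches the paper's (implicit) argument exactly: the corollary is obtained by instantiating the generic results \Cref{lem:Theories_H_Consistent} and \Cref{thm:Theory_H*} to \CBVSymb, which is licensed by the immediately preceding theorem verifying \Cref{ax:Confluence,ax:Full_Genericity,ax:thLambda_Consistent,ax:Meaningfulness_Stable_Reduction,ax:Surface_Genericity,ax:Exists_Meaningless} for \CBVSymb. Nothing further is needed.
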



Sensibility means that $\cbvThH*$ is an extension of $\cbvThH$.
We prove that $\cbvThH*$ is actually the unique consistent
maximal \cbvLambdaTheoryTxt extending $\cbvThH$,
where \defn{maximal} means that, for every $t, u \in \setGenTerms$, either $\cbvThH* \vdash t \doteq u$ or $\cbvThH*$ extended with the base equation $t \genThHEq u$ is inconsistent.
For that we need the following property.

\begin{restatable}{lemma}{CbvTheoryHExtended}
    \LemmaToFromProof{Cbv_Theory_H_Extended_Single}%
    Let $t, u \in \setCbvTerms$. If $\cbvThH$ is extended with the
    base equation $t \genThHEq u$ is consistent, then for all full
    context $\cbvCCtxt$, one has that $\cbvCCtxt<t>$ is
    \CBVSymb-meaningful iff $\cbvCCtxt<u>$ is \CBVSymb-meaningful.
\end{restatable}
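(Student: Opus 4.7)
The plan is to argue contrapositively: assume a full context $\cbvCCtxt$ with, WLOG, $\cbvCCtxt<t>$ \CBVSymb-meaningful and $\cbvCCtxt<u>$ \CBVSymb-meaningless, and derive that the extension of $\cbvThH$ by the base equation $t \doteq u$ equates every pair of terms, contradicting consistency. Write $\doteq$ for derivability in this extension.

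First, I will reduce the mismatch to a value being equated with $\Omega$ in the extension. Compositionality yields $\cbvCCtxt<t> \doteq \cbvCCtxt<u>$; sensibility of $\cbvThH$ (inherited by the extension) gives $\cbvCCtxt<u> \doteq \Omega$. From meaningfulness of $\cbvCCtxt<t>$ pick a testing context $\cbvTCtxt$ and a value $v$ with $\cbvTCtxt<\cbvCCtxt<t>> \cbvArr*_{S_0} v$. Compositionality applied with $\cbvTCtxt$, together with invariance of $\doteq$ under $\cbvArr_{S_0}$, yields $v \doteq \cbvTCtxt<\Omega>$. Since the composition of two testing contexts is again a testing context (a straightforward induction on $\cbvTCtxt \Coloneqq \Hole \vsep \app{\cbvTCtxt}{u} \vsep \app{(\abs{x}{\cbvTCtxt})}{u}$), meaninglessness of $\Omega$ transfers to $\cbvTCtxt<\Omega>$, and sensibility then delivers $v \doteq \Omega$.

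Next, I will propagate this single equation to every term. For any $s \in \setCbvTerms$ and a fresh variable $x \notin \freeVar{s}$, compositionality with the full context $\app{(\abs{x}{s})}{\Hole}$ gives $\app{(\abs{x}{s})}{v} \doteq \app{(\abs{x}{s})}{\Omega}$. In \CBVSymb one has $\app{(\abs{x}{s})}{v} \cbvArr_{S_0} s\esub{x}{v} \cbvArr_{S_0} s\isub{x}{v} = s$, where the second step is enabled precisely because $v$ is a value; on the other side, $\app{(\abs{x}{s})}{\Omega} \cbvArr_{S_0} s\esub{x}{\Omega}$. The main technical step is to show that $s\esub{x}{\Omega}$ is \CBVSymb-meaningless: by the logical characterization (\Cref{lem:cbvBKRV_characterizes_meaningfulness}), $\Omega$ is not $\cbvTypeSysBKRV$-typable, so the ES-rule cannot derive any judgement for $s\esub{x}{\Omega}$, which by the same characterization is meaningless. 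A purely operational justification is also available: the ES $\esub{x}{\Omega}$ can never fire because $\Omega$ is not of the form $\cbvLCtxt<v'>$, and every $\cbvArr_{S_0}$-reduction eventually loops inside $\Omega$, so no $\cbvSymbSurface_0$-normal form is reachable. Sensibility then gives $s\esub{x}{\Omega} \doteq \Omega$, and chaining the equations yields $s \doteq \Omega$ in the extension.

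Since $s$ was arbitrary, every term is equated with $\Omega$, hence all terms are equated with each other, contradicting the assumed consistency. The main obstacle is the meaninglessness of $s\esub{x}{\Omega}$, which hinges on the absence of a garbage-collection rule for explicit substitutions whose argument is non-normalizing --- a feature precisely captured by the internal characterization of meaningfulness in \CBVSymb (\Cref{lem:cbvBKRV_characterizes_meaningfulness}).
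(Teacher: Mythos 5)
Your proof is correct and follows essentially the same route as the paper: argue by contradiction, use a testing context to extract a value from the meaningful side, wrap with $\app{(\abs{x}{s})}{\Hole}$ for an arbitrary $s$ and fresh $x$, and exploit the CbV-specific fact that a meaningless argument cannot be erased. The only (cosmetic) difference is that you first isolate the equation $v \doteq \Omega$ and then invoke strictness in the form ``$s\esub{x}{\Omega}$ is untypable, hence meaningless,'' whereas the paper applies the same typability argument one $\beta$-step earlier, to the application $\app{(\abs{x}{s})}{\cbvTCtxt\langle\cbvCCtxt\langle u\rangle\rangle}$.
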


By \Cref{lem:Cbv_Theory_H_Extended_Single} and sensibility of
$\cbvThH*$ (\Cref{thm:Theory_H*}), we have:

\begin{restatable}{corollary}{CbvTheoryHStarHPComplete}
    \LemmaToFromProof{cbvTheories_HStar_HP_Complete}%
    \label{cor:maximality}%
    The theory $\cbvThH*$ is the unique maximal consistent and sensible
    \cbvLambdaTheoryTxt.
\end{restatable}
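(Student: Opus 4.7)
The plan is to derive the corollary directly from \Cref{lem:Cbv_Theory_H_Extended_Single} and Theorem~\ref{thm:Theory_H*}, which has already established that $\cbvThH^*$ is a consistent sensible \cbvLambdaTheoryTxt. The key intermediate step is a \emph{containment statement}: every consistent sensible \cbvLambdaTheoryTxt $\mathcal{T}$ is a sub-theory of $\cbvThH^*$.

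To prove this containment, fix an arbitrary equation $t \doteq u \in \mathcal{T}$; the goal is $t \doteq u \in \cbvThH^*$. Since $\mathcal{T}$ is sensible, it contains $\cbvThH$ (the smallest sensible \cbvLambdaTheoryTxt). Hence the congruence obtained by extending $\cbvThH$ with the single base equation $t \doteq u$ is a sub-theory of $\mathcal{T}$ and therefore consistent, because any sub-theory of a consistent theory is consistent. \Cref{lem:Cbv_Theory_H_Extended_Single} then yields that for every full context $\cbvCCtxt$, the term $\cbvCCtxt<t>$ is \CBVSymb-meaningful iff $\cbvCCtxt<u>$ is. By the very definition of $\cbvThH^*$, this means $t \doteq u \in \cbvThH^*$, which proves the containment.

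Maximality and uniqueness then follow routinely. For maximality, any strict consistent sensible extension $\mathcal{T} \supsetneq \cbvThH^*$ would contain some equation $t \doteq u \notin \cbvThH^*$, contradicting the containment. For uniqueness, if $\mathcal{T}'$ is another maximal consistent sensible \cbvLambdaTheoryTxt, the containment gives $\mathcal{T}' \subseteq \cbvThH^*$; since $\cbvThH^*$ is itself consistent and sensible (Theorem~\ref{thm:Theory_H*}), maximality of $\mathcal{T}'$ forces the inclusion to be an equality, so $\mathcal{T}' = \cbvThH^*$.

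I do not foresee any real obstacle: all the substantive work has already been packaged into \Cref{lem:Cbv_Theory_H_Extended_Single} (which itself rests on genericity via \Cref{lem:Cbv_Qualitative_Surface_Genericity} and \Cref{lem:Cbv_Quantitative_Full_Genericity}) and into Theorem~\ref{thm:Theory_H*}. What remains is the set-theoretic bookkeeping sketched above, together with the elementary observation that sensibility automatically entails inclusion of $\cbvThH$ and that consistency is inherited by sub-theories.
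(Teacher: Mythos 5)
Your proposal is correct and follows essentially the same route as the paper: both arguments reduce everything to \Cref{lem:Cbv_Theory_H_Extended_Single} together with the definition of $\cbvThH*$ and \Cref{thm:Theory_H*}. The only difference is organizational --- you factor out a single containment claim (every consistent sensible \cbvLambdaTheoryTxt is included in $\cbvThH*$) and derive maximality and uniqueness from it, whereas the paper proves the two properties separately, each time invoking the same lemma via the consistency of $\genThCongr_{\mathsf{H}_\mathsf{v} \cup \{(t,u)\}}$.
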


The \cbvLambdaTheoryTxt $\cbvThH*$ equates \emph{more} then
$\cbvThH$. For example, $\Id \cbvThHEq \abs{x}{\abs{y}{\app{x}{y}}}$
is provable in $\cbvThH*$ but not in $\cbvThH$.

Finally, we can now prove our last contribution: the
\cbvLambdaTheoryTxt $\cbvThH*$ coincides with the observational
equivalence $\cong$  defined in \Cref{sec:cbv}. To prove this, we
define the auxiliary notion of \defn{open-observational equivalence}
$\cong^o$ in \CBVSymb: given $t, u \in \setCbvTerms$, $t \cong^o u$ if
for \emph{every} full context $\cbvCCtxt$, $\cbvCCtxt<t>
\cbvArr*_{S_\indexOmega} v_1$ iff $\cbvCCtxt<u>
\cbvArr*_{S_\indexOmega} v_2$ for some values $v_1, v_2 \in
\setCbvValues$. Note that, differently from $\cong$, $\cong^o$
quantifies over all full contexts and not only on \emph{closing} full
contexts, hence $\cong^o \,\subseteq\, \cong$.
 
%

\begin{corollary}\label{cor:operational-equivalence}
	Let $t, u \in \setCbvTerms$, then $t \cong u$
	iff $\cbvThH* \vdash t  \genThTEq u$.
\end{corollary}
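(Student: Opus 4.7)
The plan is to prove both inclusions $\cbvThH^* \subseteq \cong$ and $\cong \subseteq \cbvThH^*$ directly, with the key auxiliary being a \emph{Closed Meaningfulness Lemma}: for every \emph{closed} $s \in \setCbvTerms$, $s$ is \CBVSymb-meaningful iff $s \cbvArr*_{S_\indexOmega} v$ for some value $v \in \setCbvValues$. For the $(\Leftarrow)$ direction, closedness is not actually needed: it follows from \Cref{ax:Meaningfulness_Stable_Reduction} (iterated along $\cbvArr*_{S_\indexOmega}$) together with the trivial observation that every value is meaningful via the empty testing context $\cbvTCtxt = \Hole$. For the converse, the operational characterization (\Cref{lem:cbvBKRV_characterizes_meaningfulness}) gives a surface normal form $s_\star$ with $s \cbvArr*_{S_0} s_\star$; inspecting the grammar of $\cbvNoS_0$ by a straightforward induction shows that neither $\cbvVrS_0$ nor $\cbvNeS_0$ admits closed instances (each production either exhibits or demands a free variable, feeding the induction indefinitely), so a closed surface normal form must be an abstraction, hence a value.

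The inclusion $\cbvThH^* \subseteq \cong$ is then almost immediate: given a closing full context $\cbvCCtxt$ with $\cbvCCtxt<t> \cbvArr*_{S_\indexOmega} v_1$, the lemma yields meaningfulness of the closed term $\cbvCCtxt<t>$; the hypothesis $\cbvThH^* \vdash t \doteq u$ transfers it to $\cbvCCtxt<u>$; and a second application of the lemma to the closed $\cbvCCtxt<u>$ yields some value $v_2$ with $\cbvCCtxt<u> \cbvArr*_{S_\indexOmega} v_2$. The symmetric implication is identical.

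The reverse inclusion $\cong \subseteq \cbvThH^*$ is the delicate direction and uses a closing-context trick. Given $t \cong u$ and an arbitrary full context $\cbvCCtxt$ with a testing context $\cbvTCtxt$ witnessing meaningfulness of $\cbvCCtxt<t>$ via $\cbvTCtxt<\cbvCCtxt<t>> \cbvArr*_{S_0} v$, I would let $x_1, \ldots, x_n$ contain all free variables of $\cbvTCtxt<\cbvCCtxt<t>>$ and $\cbvTCtxt<\cbvCCtxt<u>>$, and form the closing full context $\cbvCCtxt'<\cdot> := \cbvTCtxt<\cbvCCtxt<\cdot>>\esub{x_1}{\Id}\ldots\esub{x_n}{\Id}$. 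Propagating the value substitutions through $v$ via $\cbvSymbSubs$ shows that $\cbvCCtxt'<t>$ surface-reduces, and hence $\indexOmega$-reduces, to a closed value; invoking $t \cong u$ gives $\cbvCCtxt'<u> \cbvArr*_{S_\indexOmega} v_2'$, so $\cbvCCtxt'<u>$ is meaningful by \Cref{ax:Meaningfulness_Stable_Reduction}. Now the iterated context $T^* := \app{(\abs{x_n}{\ldots \app{(\abs{x_1}{\cbvTCtxt})}{\Id} \ldots})}{\Id}$ is itself a testing context (iterate the third production of the grammar), and its exposed outer $\beta$-redexes are all surface-accessible, so $T^*<\cbvCCtxt<u>> \cbvArr*_{S_0} \cbvCCtxt'<u>$. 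Reversing \Cref{ax:Meaningfulness_Stable_Reduction} along this reduction transports meaningfulness back to $T^*<\cbvCCtxt<u>>$, and the elementary fact that testing contexts are closed under composition (a three-case induction on the outer context) lets any witnessing testing context for $T^*<\cbvCCtxt<u>>$ be pre-composed with $T^*$ into a testing context witness for $\cbvCCtxt<u>$; symmetry concludes.

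The main obstacle will be the Closed Meaningfulness Lemma, specifically the grammar inspection establishing the absence of closed $\cbvVrS_0$ and $\cbvNeS_0$ normal forms. The rest is careful but essentially routine manipulation of testing contexts and closing substitutions using tools already developed earlier in the paper.
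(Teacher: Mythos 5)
Your proposal is correct, but it takes a genuinely different route from the paper. The paper first shows $\cbvThH* \vdash t \genThTEq u$ iff $t \cong^o u$ by invoking quantitative stratified genericity at level $0$ (\Cref{lem:Cbv_Quantitative_Stratified_Genericity}), deduces $\cbvThH* \subseteq\ \cong$ and sensibility of $\cong$ from $\cong^o\ \subseteq\ \cong$, and then obtains the hard inclusion $\cong\ \subseteq\ \cbvThH*$ abstractly from the maximality of $\cbvThH*$ (\Cref{cor:maximality}, which rests on \Cref{lem:Cbv_Theory_H_Extended_Single}). You instead prove both inclusions by hand: your Closed Meaningfulness Lemma replaces the genericity step (and is sound --- the mutual induction does go through, although your slogan that each production of $\cbvVrS_0$/$\cbvNeS_0$ ``exhibits or demands a free variable'' is slightly loose, since the exhibited head variable can be captured by an enclosing explicit substitution; the induction must descend into the neutral term in the \emph{argument} position of each ES, which is forced to be closed), and your closing-substitution plus testing-context argument for $\cong\ \subseteq\ \cbvThH*$ is essentially the content of \Cref{lem:Cbv_Theory_H_Extended_Single} repackaged as a direct proof rather than routed through consistency and maximality. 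What each approach buys: the paper's proof is shorter given the machinery already established and exhibits the corollary as a consequence of the structural theory of $\lambda_v$-theories; yours is more self-contained and elementary for this particular statement --- it uses neither genericity nor the maximality corollary, only the operational/logical characterization of meaningfulness, \Cref{ax:Meaningfulness_Stable_Reduction}, and the grammar of surface normal forms --- at the cost of redoing the closing-context manipulation and adding the (easy but new) analysis showing that closed surface normal forms are abstractions.
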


\begin{proof}
	$\cbvThH* \vdash t  \genThTEq
	u$ means that for every full context $\cbvCCtxt$,
	$\cbvCCtxt<t>$ is meaningless iff $\cbvCCtxt<u>$
	is. By quantitative stratified genericity with
	$\indexK = 0$
	(\Cref{lem:Cbv_Quantitative_Stratified_Genericity}), this is
	equivalent to say that, for every full context $\cbvCCtxt$,
	$\cbvCCtxt<t> \cbvArr*_{S_0} v$ for some value $v$ iff
	$\cbvCCtxt<u> \cbvArr*_{S_0} v'$ for some value $v'$, that is, $t
	\cong^o u$. Thus, $\cbvThH* \vdash t  \genThTEq u$ if and only if
	$t \cong^o u$. 
	
	Therefore, as $\cong^o \,\subseteq\, \cong$, if $\cbvThH*
	\vdash t \genThTEq u$ then $t \cong u$, in particular $\cong$ is sensible (since so is $\cbvThH*$, \Cref{cor:consistency}).
	As $\cong$ is a consistent (clearly,
	$\Id \not\cong \Omega$) $\lambda$-theory (trivial), one
	has that $\cbvThH* \vdash t  \genThTEq u$ if and only if $t
	\cong u$, by maximality of $\cbvThH*$
	(\Cref{cor:maximality}).
\end{proof}

\Cref{cor:consistency,cor:maximality,cor:operational-equivalence} are among our \mbox{main contributions}.
\giulio{In particular, the fact that $\cbvThH* = \ \cong$ means that two different approaches to define a semantics in \CBVSymb actually coincide. 
	This further backs up the idea that what we call \CBVSymb-meaningfulness approppriately represents meaningfuness in \CBVSymb.}

\subsection{\CBNSymb Theories}
\label{sec:consistent-sensible-cbn}


We now instantiate the generic theories of
\Cref{sec:generic-theories} to the
\CBNSymb-calculus\cgiulio{,}{: let $\cbnThH$ (\resp $\cbnThH*$) be the instantiation of $\genThH$ (\resp $\genThH*$) obtained by replacing ``meaningful'' with \CBNSymb-meaningful (see \Cref{def:cbn-meaningful})} \cgiulio{thus showing}{Thus, we prove} that $\cbnThH$ (\resp $\cbnThH*$) is
consistent (\resp consistent and sensible). 

\begin{theorem}
    The \CBNSymb-calculus verifies
    \Cref{ax:Confluence,ax:Full_Genericity,ax:thLambda_Consistent,ax:Meaningfulness_Stable_Reduction,ax:Surface_Genericity,ax:Exists_Meaningless}.
\end{theorem}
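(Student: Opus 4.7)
The plan is to follow the exact same pattern used for the \CBVSymb version of this theorem proved just above, discharging each of the six axioms by invoking the corresponding \CBNSymb result already developed in this paper, plus well-known meta-theoretic properties of \CBNSymb and its intersection type system $\cbnTypeSysBKRV$.

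First, \Cref{ax:Confluence} is immediate from \Cref{lem:Cbn_Surface_and_Full_Confluence}, which gives confluence of full reduction $\cbnArr_{S_\indexOmega}$; any variable is trivially $\indexOmega$-normal, so an $\indexOmega$-normal form exists. For \Cref{ax:thLambda_Consistent}, \Cref{lem:Theories_lambda_|-_t_=_u_iff_t_=F_u} reduces provability in $\genThLambda$ to the symmetric-transitive closure of $\cbnArr_{S_\indexOmega}$; by confluence, two distinct $\indexOmega$-normal forms (for instance, two distinct variables) cannot be equated, witnessing consistency.

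\Cref{ax:Full_Genericity} is a direct specialization of \Cref{lem:Cbn_Quantitative_Full_Genericity} obtained by forgetting the quantitative count of reduction steps, and \Cref{ax:Surface_Genericity} follows at once from \Cref{lem:Cbn_Qualitative_Surface_Genericity}. For \Cref{ax:Meaningfulness_Stable_Reduction}, the standard route combines full subject reduction and full subject expansion for $\cbnTypeSysBKRV$ (known from \cite{KesnerV14,BKRV20}) with the logical characterization of meaningfulness in \Cref{lem:cbnBKRV_characterizes_meaningfulness}: if $t \cbnArr_{S_\indexOmega} u$ then $\cbnTypeSysBKRV$-typability is preserved in both directions, and hence so is meaningfulness. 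Finally, \Cref{ax:Exists_Meaningless} is witnessed by $\Omega = \app{\Delta}{\Delta}$, which $\cbnArr_{S_0}$-diverges and is therefore \CBNSymb-meaningless by the operational characterization in \Cref{lem:cbnBKRV_characterizes_meaningfulness}.

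No real obstacle is expected: every axiom reduces either to a lemma or corollary already proved in this paper, or to a well-documented property of $\cbnTypeSysBKRV$. The only point requiring mild attention is that subject reduction and expansion for $\cbnTypeSysBKRV$ must apply to \emph{full} reduction $\cbnArr_{S_\indexOmega}$ (not only to surface reduction), which is indeed the version available in the cited literature.
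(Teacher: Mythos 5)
Your proof is correct and follows essentially the same route as the paper: confluence from the cited literature, \Cref{ax:Full_Genericity} as a special case of \Cref{t:quant-stratified-genericity}, \Cref{ax:Meaningfulness_Stable_Reduction} via full subject reduction/expansion for $\cbnTypeSysBKRV$ together with \Cref{lem:cbnBKRV_characterizes_meaningfulness}, \Cref{ax:Surface_Genericity} from \Cref{lem:Cbn_Qualitative_Surface_Genericity}, and $\Omega$ as the witness for \Cref{ax:Exists_Meaningless}. If anything, you are slightly more complete than the paper's own proof, which silently omits \Cref{ax:thLambda_Consistent}; your argument for it (via \Cref{lem:Theories_lambda_|-_t_=_u_iff_t_=F_u}, confluence, and two distinct $\indexOmega$-normal forms) is exactly the intended one.
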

\begin{proof} 
    For \Cref{ax:Confluence} see~\cite{BKRV23}. 
    For
    \Cref{ax:Full_Genericity}, it is a special case of~\Cref{t:quant-stratified-genericity}. 
    For \Cref{ax:Meaningfulness_Stable_Reduction}  is proved using
    Full Subject Reduction and Expansion in system $\cbnTypeSysBKRV$~\cite{BKRV20} and
    \Cref{lem:cbnBKRV_characterizes_meaningfulness}. For
    \Cref{ax:Surface_Genericity} we use
    \Cref{lem:Cbn_Qualitative_Surface_Genericity}. For
    \Cref{ax:Exists_Meaningless}: the term $\Omega$ is $\cbnArr_{S_0}$
    divergent thus by \Cref{lem:cbnBKRV_characterizes_meaningfulness},
    one concludes that it is \CBNSymb-meaningless.
\end{proof}

\begin{corollary}     \begin{enumerate}
    \item The \cbnLambdaTheoryTxt $\cbnThH$ is consistent. 

    \item The theory $\cbnThH*$ is a sensible and
        consistent \cbnLambdaTheoryTxt.
    \end{enumerate}
\end{corollary}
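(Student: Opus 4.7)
The plan is to obtain both statements by a direct instantiation of the abstract results proved in \Cref{sec:generic-theories} to the \CBNSymb-calculus. Specifically, part (1) is a direct consequence of \Cref{lem:Theories_H_Consistent}, which establishes consistency of the meaningful theory $\genThH$ from \Cref{ax:thLambda_Consistent,ax:Confluence,ax:Full_Genericity}. Part (2) then follows from \Cref{thm:Theory_H*}, which yields consistency and sensibility of $\genThH^*$ from the consistency of $\genThH$ just obtained together with \Cref{ax:Meaningfulness_Stable_Reduction,ax:Surface_Genericity,ax:Exists_Meaningless}. Since the theorem immediately preceding this corollary asserts that the \CBNSymb-calculus validates all six axioms, both statements are one-line applications of the abstract theorems.

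First I would check, almost mechanically, the matching of the generic framework with the concrete \CBNSymb setting: the generic rewrite rules $\setrules$ are instantiated by $\{\cbnSymbBeta, \cbnSymbSubs\}$, the generic full contexts $\genCCtxt$ by \CBNSymb full contexts $\cbnCCtxt$, the generic stratified reduction $\genArr_\indexOmega$ by $\cbnArr_{S_\indexOmega}$, and the generic notion of meaningfulness by \CBNSymb-meaningfulness from \Cref{def:cbn-meaningful}. Under this translation, the meaningful theory $\genThH$ becomes $\cbnThH$ and the observational theory $\genThH^*$ becomes $\cbnThH^*$, so the conclusions of \Cref{lem:Theories_H_Consistent,thm:Theory_H*} specialise exactly to the statements of parts (1) and (2).

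The main conceptual content actually lies in the preceding theorem verifying the axioms for \CBNSymb, not in the corollary itself. In that verification the hard input is \Cref{ax:Full_Genericity}, supplied by the Quantitative Full Genericity result \Cref{lem:Cbn_Quantitative_Full_Genericity}, and \Cref{ax:Surface_Genericity}, supplied by Qualitative Surface Genericity (\Cref{lem:Cbn_Qualitative_Surface_Genericity}); these encapsulate the technical core of the paper. The remaining axioms are routine: confluence of full reduction and existence of normal forms (e.g.\ any variable) come from~\cite{BKRV23}; meaningfulness stability follows from Subject Reduction and Expansion for the type system $\cbnTypeSysBKRV$ combined with the logical characterisation \Cref{lem:cbnBKRV_characterizes_meaningfulness}; existence of a meaningless term is witnessed by $\Omega$; and consistency of $\genThLambda$ is immediate from \Cref{lem:Theories_lambda_|-_t_=_u_iff_t_=F_u} together with confluence, since two distinct normal forms (e.g.\ two different variables) cannot be equated.

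The only genuine obstacle, which is minor, is ensuring that the statement of \Cref{ax:Full_Genericity} matches \Cref{lem:Cbn_Quantitative_Full_Genericity} after forgetting the quantitative information on the number of steps; this is immediate because the quantitative statement strictly implies the qualitative one. Once this alignment is checked, the corollary follows with no additional computation.
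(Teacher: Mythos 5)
Your proposal is correct and follows exactly the paper's route: the corollary is a direct instantiation of the generic consistency result for $\genThH$ and of \Cref{thm:Theory_H*} for $\genThH*$, enabled by the immediately preceding theorem verifying \Cref{ax:Confluence,ax:Full_Genericity,ax:thLambda_Consistent,ax:Meaningfulness_Stable_Reduction,ax:Surface_Genericity,ax:Exists_Meaningless} for \CBNSymb. Your additional remarks on how each axiom is discharged match the paper's verification, so there is nothing to add.
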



Using arguments similar to \CBVSymb, we can establish that the theory
$\cbnThH*$ is the unique maximal \giulio{consistent} \cbnLambdaTheoryTxt extending
$\cbnThH$ and coincides with the observational equivalence where the observable is being $\cbnSymbSurface_0$-normalizing, analogously to what is well-known for the pure
$\lambda$-calculus \cite{Hyland75,Wadsworth76,barendregt84nh}.

\section{Related Works and Conclusion}
\label{sec:conclusion}

\paragraph{Related Works}
There are at least six different methods to prove (full) \emph{genericity}~in~CbN.
\begin{itemize}[leftmargin=*,labelindent=0pt]
 \item Barendregt~\cite[Ch. 14.3]{barendregt84nh} uses a
tree-topology on 
	terms based  on Böhm trees, and continuity of term application with respect~to~it.
      \item Takahashi's~\cite{Takahashi94} operational proof  exploits the fact that
        the solvable terms are precisely the
       ones having a head normal form.
	\item Kuper's~\cite{Kuper95} operational proof is based on leftmost reduction and
	can also be extended to other calculi (\eg with types, PCF).
	\item Kennaway~\emph{et al.} \cite{KennawayOostromVries99} abstractly provide three axioms in order for a set of $\lambda$-terms to be coherently equated and  satisfy genericity.
	\item Ghilezan~\cite{Ghilezan01} exploits a topology induced by an intersection type system and, as in~\cite{barendregt84nh}, continuity of application with respect~to~it.
	\item Barbarossa and Manzonetto~\cite{BarbarossaM2019} prove that genericity follows from the commutation of normalization and Taylor~expansion.
\end{itemize}


Let us see why the existing methods to prove CbN genericity cannot be
easily adapted to CbV. Barendregt's method does not easily lift to CbV
because CbV \emph{Böhm trees} and their properties are still to be
explored; a first attempt is in~\cite{KerinecManzonettoPagani20,KerinecMR21}. Takahashi's method is
based on \emph{lifting} genericity with respect to substitutions to
genericity with respect to \emph{full contexts}: such a method breaks
in CbV, because it relies on adding abstractions, which in CbV (and
not in CbN) can turn a meaningless term into a meaningful one.
Kuper's approach relies on \emph{leftmost} reduction, which is a
normalizing strategy for CbN; the equivalent normalizing strategy for
CbV~\cite{AccattoliCondoluciCoen21,AccattoliGuerrieriLeberle23} is
involuted and not really leftmost in a syntactic sense, so this method
would not be handy. Kennaway et al.'s \emph{axiomatics} targets
first-order term rewriting systems and CbN $\lambda$-calculus, but it
is unclear how to adapt the technique to \emph{higher-order}
reduction, in particular to reduction at a distance. 
Ghilezan's approach can only work with \emph{idempotent} intersection types, and this makes impossible to extract any quantitative information as we do.
Finally, it is
tedious to adapt Barbarossa and Manzonetto's method to \emph{CbV
Taylor expansion}, whose underlying properties are still not well
explored: some preliminary attempts appear in~\cite{Ehrhard12,CarraroGuerrieri14,KerinecManzonettoPagani20,KerinecMR21}.
Moreover the Taylor expansion exploits some kind of
non-determinism that in fact is not needed to prove genericity.
Indeed, our approach still uses some kind of
approximations of $\lambda$-terms, but contrary to the Taylor
expansion, its reduction rules \mbox{are not non-deterministic}.

Genericity and $\lambda$-theories of CbV was scarcely approached in
the literature on CbV.  The non-adequacy problem mentioned
in~\Cref{sec:cbv} for Plotkin's CbV was already noticed by~Paolini,
Ronchi Della Rocca \emph{et
al.}~\cite{paolini99tia,RoccaP04,AP12,CarraroGuerrieri14,GuerrieriPaoliniRonchi17},
who studied different properties of CbV solvability.
As mentioned before, this notion does not yield a consistent
associated theory. 

Garcia-Perez and Nogueira~\cite{Garcia-PerezN16} proposed a \emph{different}
notion of solvability for Plotkin's CbV that takes into account
  the \emph{order} of a term (number of arguments it can take),  for which they prove
genericity. The $\lambda$-theory generated by their notion of
solvability is different from $\cbvThH$ and it can also be proved to be
consistent. However, the original calculus still suffers from
\emph{non-adequacy}. 

Independently, Accattoli and Lancelot~\cite{AL24} have recently
developed an alternative study of what we call \emph{surface}
genericity in this paper, for both CbV and CbN, via different tools
(\eg bisimulations). As far as we know, the challenging proof of
\emph{full} genericity for \CBVSymb-meaningless terms has never been
proved before.

The consistency of the theory $\cbvThH$ is already proved in the
literature ---or can be easily derived--- from the consistency of some
related
$\lambda_v$-theories~\cite{EgidiHonsellRonchi92,paolini99tia,CarraroGuerrieri14,AccattoliGuerrieri22bis,AL24}.
However, this paper provides the first proof of consistency of the
theory $\cbvThH$ that is \emph{direct} ---not based on consistency of some
related theories--- and \emph{operational} ---not based on semantical
methods.

It would also be reasonable to study $\lambda$-theories that are consistent but non sensible, as it is done for CbN, see \eg \cite{Barendregt92a}.

\paragraph{Conclusion}
We introduce a novel technique to prove \emph{stratified} genericity,
yielding both \emph{surface} and \emph{full} genericity as special
cases. Remarkably, this technique is specified by an \emph{axiomatic}
approach which applies to both \emph{call-by-value} and
\emph{call-by-name}. Moreover, our genericity theorems are not only
\emph{qualitative} (as usual in the literature) but also
\emph{quantitative}, capturing the fact that replacing a meaningless
subterm in a normalizing term $t$ does not affect the number of steps
to reach the normal form of $t$.

Furthermore, we establish the \emph{first direct} and
\emph{operational} proof that the smallest congruence relation
resulting from equating all meaningless $\lambda$-terms is
\emph{consistent}. We substantiate this proof by introducing an
\emph{axiomatic} approach, which applies to both \emph{call-by-value}
and \emph{call-by-name} paradigms. Moreover, we show that these two
theories admit a unique \emph{maximal} consistent extension which
coincides with well-known \emph{observational equivalences}.

Alternatively, our methodology based on meaningful approximants can be
replaced by a non-idempotent intersection type system characterizing
full normalization, as~\cite{BKV17,KesnerV14} for \CBNSymb
and~\cite{AccattoliGuerrieriLeberle23} for \CBVSymb. The obtained
result would however be weaker, since such typing systems only
characterize normalizability of terms, but do not provide any
information about the form of the resulting normal forms. Some ideas
coming from~\cite{BL11} will be studied in this direction.

Through the tools developed in this paper, we aim to extend our
approach to establish genericity within a more encompassing unifying
framework, such as call-by-push-value~\cite{Levy99} or the bang
calculus~\cite{Ehrhard2016,EhrardGuerrieri16,GuerrieriManzonetto18,BKRV20,BKRV23,ArrialGuerrieriKesner23}.
These frameworks, by explicitly distinguishing between values and
computations, have the capability to subsume both call-by-value and
call-by-name at the same time.

An equally challenging model of computation, still unexplored in terms
of meaningfulness and genericity, is
call-by-need~\cite{AriolaF97,AriolaFMOW95} We believe that by
stratifying the notion of call-by-need contexts and reduction, we can
construct a well-defined notion of meaningfulness and then achieve a
corresponding genericity result.

\giulio{We are also interested in understanding the relation between the notion of CbV meaningless studied here and the one of \emph{unsolvability of order $0$} introduced by \cite{Garcia-PerezN16} for Plotkin's CbV.}

\giulio{The notion of meaningfulness in CbN (solvability) is strictly related to the one of Böhm tree, which represents a term as a possible infinite tree that is the sup of its approximations. 
Our notion of meaningful approximation yields a notion of Böhm tree in CbV, which  slightly differs from the one introduced in \cite{KerinecManzonettoPagani20,KerinecMR21} (because of their different notion of approximation). It would be interesting to compare the two notions of CbV Böhm trees and characterize the $\lambda$-theories they generate in a denotational or operational way.}

{We believe that our approach to approximation and stratification are powerful tools that can be used not only to prove genericity but also other important results in $\lambda$-calculus such as Scott's continuity, Berry's stability or Kahn and Plotkin's sequentiality theory, simplifing the proofs provided in \cite{BarbarossaM2019} via the notion of Taylor expansion.

\giulio{
	An open question in CbV was the existence of an appropriate notion of meaningfulness and its associated  $\lambda$-theory fulfilling \emph{all} the basic properties and criteria we discussed in \Cref{sec:intro}, which are well-known for the theory $\genThH$ in CbN. 
	This paper fills the gap by answering positively to this fundamental question, through a new approach that is robust as it works uniformly for CbN~and~CbV.}
%


\bibliographystyle{ACM-Reference-Format}
\nocite{AccattoliGK20} 
\bibliography{biblio}


\begin{thebibliography}{58}


\ifx \showCODEN    \undefined \def \showCODEN     #1{\unskip}     \fi
\ifx \showDOI      \undefined \def \showDOI       #1{#1}\fi
\ifx \showISBNx    \undefined \def \showISBNx     #1{\unskip}     \fi
\ifx \showISBNxiii \undefined \def \showISBNxiii  #1{\unskip}     \fi
\ifx \showISSN     \undefined \def \showISSN      #1{\unskip}     \fi
\ifx \showLCCN     \undefined \def \showLCCN      #1{\unskip}     \fi
\ifx \shownote     \undefined \def \shownote      #1{#1}          \fi
\ifx \showarticletitle \undefined \def \showarticletitle #1{#1}   \fi
\ifx \showURL      \undefined \def \showURL       {\relax}        \fi
\providecommand\bibfield[2]{#2}
\providecommand\bibinfo[2]{#2}
\providecommand\natexlab[1]{#1}
\providecommand\showeprint[2][]{arXiv:#2}

\bibitem[Accattoli et~al\mbox{.}(2021)]%
        {AccattoliCondoluciCoen21}
\bibfield{author}{\bibinfo{person}{Beniamino Accattoli},
  \bibinfo{person}{Andrea Condoluci}, {and} \bibinfo{person}{Claudio~Sacerdoti
  Coen}.} \bibinfo{year}{2021}\natexlab{}.
\newblock \showarticletitle{Strong Call-by-Value is Reasonable, Implosively}.
  In \bibinfo{booktitle}{\emph{36th Annual {ACM/IEEE} Symposium on Logic in
  Computer Science, {LICS} 2021}} (June 29 - July 2, 2021).
  \bibinfo{publisher}{{IEEE}}, \bibinfo{address}{Rome, Italy},
  \bibinfo{pages}{1--14}.
\newblock
\urldef\tempurl%
\url{https://doi.org/10.1109/LICS52264.2021.9470630}
\showDOI{\tempurl}


\bibitem[Accattoli et~al\mbox{.}(2020)]%
        {AccattoliGK20}
\bibfield{author}{\bibinfo{person}{Beniamino Accattoli},
  \bibinfo{person}{St{\'{e}}phane Graham{-}Lengrand}, {and}
  \bibinfo{person}{Delia Kesner}.} \bibinfo{year}{2020}\natexlab{}.
\newblock \showarticletitle{Tight typings and split bounds, fully developed}.
\newblock \bibinfo{journal}{\emph{Journal of Functional Programming}}
  \bibinfo{volume}{30} (\bibinfo{year}{2020}), \bibinfo{pages}{e14}.
\newblock
\urldef\tempurl%
\url{https://doi.org/10.1017/S095679682000012X}
\showDOI{\tempurl}


\bibitem[Accattoli and Guerrieri(2016)]%
        {AccattoliGuerrieri16}
\bibfield{author}{\bibinfo{person}{Beniamino Accattoli} {and}
  \bibinfo{person}{Giulio Guerrieri}.} \bibinfo{year}{2016}\natexlab{}.
\newblock \showarticletitle{Open Call-by-Value}. In
  \bibinfo{booktitle}{\emph{Programming Languages and Systems - 14th Asian
  Symposium, {APLAS} 2016, Hanoi, Vietnam, November 21-23, 2016, Proceedings}}
  \emph{(\bibinfo{series}{Lecture Notes in Computer Science},
  Vol.~\bibinfo{volume}{10017})}, \bibfield{editor}{\bibinfo{person}{Atsushi
  Igarashi}} (Ed.). \bibinfo{pages}{206--226}.
\newblock
\urldef\tempurl%
\url{https://doi.org/10.1007/978-3-319-47958-3\_12}
\showDOI{\tempurl}


\bibitem[Accattoli and Guerrieri(2022)]%
        {AccattoliGuerrieri22bis}
\bibfield{author}{\bibinfo{person}{Beniamino Accattoli} {and}
  \bibinfo{person}{Giulio Guerrieri}.} \bibinfo{year}{2022}\natexlab{}.
\newblock \showarticletitle{The theory of call-by-value solvability}.
\newblock \bibinfo{journal}{\emph{Proceedings of the ACM on Programming
  Languages}} \bibinfo{volume}{6}, \bibinfo{number}{{ICFP}}
  (\bibinfo{year}{2022}), \bibinfo{pages}{855--885}.
\newblock
\urldef\tempurl%
\url{https://doi.org/10.1145/3547652}
\showDOI{\tempurl}


\bibitem[Accattoli et~al\mbox{.}(2023)]%
        {AccattoliGuerrieriLeberle23}
\bibfield{author}{\bibinfo{person}{Beniamino Accattoli},
  \bibinfo{person}{Giulio Guerrieri}, {and} \bibinfo{person}{Maico Leberle}.}
  \bibinfo{year}{2023}\natexlab{}.
\newblock \showarticletitle{Strong Call-by-Value and Multi Types}. In
  \bibinfo{booktitle}{\emph{Theoretical Aspects of Computing - {ICTAC} 2023 -
  20th International Colloquium, Proceedings}} (December 4-8, 2023)
  \emph{(\bibinfo{series}{Lecture Notes in Computer Science},
  Vol.~\bibinfo{volume}{14446})}, \bibfield{editor}{\bibinfo{person}{Erika
  {\'{A}}brah{\'{a}}m}, \bibinfo{person}{Clemens Dubslaff}, {and}
  \bibinfo{person}{Silvia Lizeth~Tapia Tarifa}} (Eds.).
  \bibinfo{publisher}{Springer}, \bibinfo{address}{Lima, Peru},
  \bibinfo{pages}{196--215}.
\newblock
\urldef\tempurl%
\url{https://doi.org/10.1007/978-3-031-47963-2\_13}
\showDOI{\tempurl}


\bibitem[Accattoli and Kesner(2010)]%
        {AK10}
\bibfield{author}{\bibinfo{person}{Beniamino Accattoli} {and}
  \bibinfo{person}{Delia Kesner}.} \bibinfo{year}{2010}\natexlab{}.
\newblock \showarticletitle{The structural {\it lambda}-calculus}. In
  \bibinfo{booktitle}{\emph{Proceedings of 24th EACSL Conference on Computer
  Science Logic}} \emph{(\bibinfo{series}{LNCS}, Vol.~\bibinfo{volume}{6247})},
  \bibfield{editor}{\bibinfo{person}{Anuj Dawar} {and} \bibinfo{person}{Helmut
  Veith}} (Eds.). \bibinfo{publisher}{Springer}, \bibinfo{address}{Brno, Czech
  Republic}, \bibinfo{pages}{381--395}.
\newblock


\bibitem[Accattoli and Kesner(2012)]%
        {AK12}
\bibfield{author}{\bibinfo{person}{Beniamino Accattoli} {and}
  \bibinfo{person}{Delia Kesner}.} \bibinfo{year}{2012}\natexlab{}.
\newblock \showarticletitle{The Permutative {\(\lambda\)}-Calculus}. In
  \bibinfo{booktitle}{\emph{Logic for Programming, Artificial Intelligence, and
  Reasoning - 18th International Conference, LPAR-18, M{\'{e}}rida, Venezuela,
  March 11-15, 2012. Proceedings}} \emph{(\bibinfo{series}{Lecture Notes in
  Computer Science}, Vol.~\bibinfo{volume}{7180})},
  \bibfield{editor}{\bibinfo{person}{Nikolaj~S. Bj{\o}rner} {and}
  \bibinfo{person}{Andrei Voronkov}} (Eds.). \bibinfo{publisher}{Springer},
  \bibinfo{pages}{23--36}.
\newblock
\urldef\tempurl%
\url{https://doi.org/10.1007/978-3-642-28717-6\_5}
\showDOI{\tempurl}


\bibitem[Accattoli and Lancelot(2024)]%
        {AL24}
\bibfield{author}{\bibinfo{person}{Beniamino Accattoli} {and}
  \bibinfo{person}{Adrienne Lancelot}.} \bibinfo{year}{2024}\natexlab{}.
\newblock \showarticletitle{Light Genericity}. In
  \bibinfo{booktitle}{\emph{Foundations of Software Science and Computation
  Structures - 27th International Conference, FoSSaCS 2024, Held as Part of the
  European Joint Conferences on Theory and Practice of Software, {ETAPS} 2024,
  Luxembourg City, Luxembourg, April 6–11, 2024, Proceedings}}
  \emph{(\bibinfo{series}{Lecture Notes in Computer Science})},
  \bibfield{editor}{\bibinfo{person}{Naoki Kobayashi} {and}
  \bibinfo{person}{James Worrell}} (Eds.). \bibinfo{publisher}{Springer}.
\newblock
\newblock
\shownote{To appear}.


\bibitem[Accattoli and Paolini(2012)]%
        {AP12}
\bibfield{author}{\bibinfo{person}{Beniamino Accattoli} {and}
  \bibinfo{person}{Luca Paolini}.} \bibinfo{year}{2012}\natexlab{}.
\newblock \showarticletitle{Call-by-Value Solvability, Revisited}. In
  \bibinfo{booktitle}{\emph{FLOPS}} (May 23-25, 2012)
  \emph{(\bibinfo{series}{LNCS}, Vol.~\bibinfo{volume}{7294})},
  \bibfield{editor}{\bibinfo{person}{Tom Schrijvers} {and}
  \bibinfo{person}{Peter Thiemann}} (Eds.). \bibinfo{publisher}{Springer},
  \bibinfo{address}{Kobe, Japan}, \bibinfo{pages}{4--16}.
\newblock


\bibitem[Ariola and Felleisen(1997)]%
        {AriolaF97}
\bibfield{author}{\bibinfo{person}{Zena~M. Ariola} {and}
  \bibinfo{person}{Matthias Felleisen}.} \bibinfo{year}{1997}\natexlab{}.
\newblock \showarticletitle{The Call-By-Need lambda Calculus}.
\newblock \bibinfo{journal}{\emph{Journal of Functional Programming}}
  \bibinfo{volume}{7}, \bibinfo{number}{3} (\bibinfo{year}{1997}),
  \bibinfo{pages}{265--301}.
\newblock
\urldef\tempurl%
\url{https://doi.org/10.1017/S0956796897002724}
\showDOI{\tempurl}


\bibitem[Ariola et~al\mbox{.}(1995)]%
        {AriolaFMOW95}
\bibfield{author}{\bibinfo{person}{Zena~M. Ariola}, \bibinfo{person}{Matthias
  Felleisen}, \bibinfo{person}{John Maraist}, \bibinfo{person}{Martin Odersky},
  {and} \bibinfo{person}{Philip Wadler}.} \bibinfo{year}{1995}\natexlab{}.
\newblock \showarticletitle{The Call-by-Need Lambda Calculus}. In
  \bibinfo{booktitle}{\emph{Conference Record of POPL'95: 22nd {ACM}
  {SIGPLAN-SIGACT} Symposium on Principles of Programming Languages}} (January
  23-25, 1995), \bibfield{editor}{\bibinfo{person}{Ron~K. Cytron} {and}
  \bibinfo{person}{Peter Lee}} (Eds.). \bibinfo{publisher}{{ACM} Press},
  \bibinfo{address}{San Francisco, California, USA}, \bibinfo{pages}{233--246}.
\newblock
\urldef\tempurl%
\url{https://doi.org/10.1145/199448.199507}
\showDOI{\tempurl}


\bibitem[Barbarossa and Manzonetto(2020)]%
        {BarbarossaM2019}
\bibfield{author}{\bibinfo{person}{Davide Barbarossa} {and}
  \bibinfo{person}{Giulio Manzonetto}.} \bibinfo{year}{2020}\natexlab{}.
\newblock \showarticletitle{Taylor subsumes Scott, Berry, Kahn and Plotkin}.
\newblock \bibinfo{journal}{\emph{Proceedings of the {ACM} on Programming
  Languages}} \bibinfo{volume}{4}, \bibinfo{number}{{POPL}}
  (\bibinfo{year}{2020}), \bibinfo{pages}{1--23}.
\newblock
\urldef\tempurl%
\url{https://doi.org/10.1145/3371069}
\showDOI{\tempurl}


\bibitem[Barendregt(1971)]%
        {Barendregt71}
\bibfield{author}{\bibinfo{person}{Henk Barendregt}.}
  \bibinfo{year}{1971}\natexlab{}.
\newblock \emph{\bibinfo{title}{Some extensional term models for combinatory
  logics and \(\lambda\)-calculi}}.
\newblock \bibinfo{thesistype}{Ph.\,D. Dissertation}. \bibinfo{school}{Univ.
  Utrecht}.
\newblock


\bibitem[Barendregt(1973)]%
        {Barendregt73}
\bibfield{author}{\bibinfo{person}{Henk Barendregt}.}
  \bibinfo{year}{1973}\natexlab{}.
\newblock \showarticletitle{A Characterization of Terms of the lambda
  I-Calculus Having a Normal Form}.
\newblock \bibinfo{journal}{\emph{Journal Symbolic Logic}}
  \bibinfo{volume}{38}, \bibinfo{number}{3} (\bibinfo{year}{1973}),
  \bibinfo{pages}{441--445}.
\newblock
\urldef\tempurl%
\url{https://doi.org/10.2307/2273041}
\showDOI{\tempurl}


\bibitem[Barendregt(1977)]%
        {Barendregt75}
\bibfield{author}{\bibinfo{person}{Henk Barendregt}.}
  \bibinfo{year}{1977}\natexlab{}.
\newblock \showarticletitle{Solvability in lambda-calculi}. In
  \bibinfo{booktitle}{\emph{Colloque international de logique :
  Clermont-Ferrand, 18-25 juillet 1975}},
  \bibfield{editor}{\bibinfo{person}{M.~Guillaume}} (Ed.).
  \bibinfo{publisher}{Éditions du C.N.R.S.}, \bibinfo{address}{Paris},
  \bibinfo{pages}{209--219}.
\newblock


\bibitem[Barendregt(1984)]%
        {barendregt84nh}
\bibfield{author}{\bibinfo{person}{Henk Barendregt}.}
  \bibinfo{year}{1984}\natexlab{}.
\newblock \bibinfo{booktitle}{\emph{The Lambda Calculus: Its Syntax and
  Semantics} (\bibinfo{edition}{revised} ed.)}. \bibinfo{series}{Studies in
  logic and the foundation of mathematics}, Vol.~\bibinfo{volume}{103}.
\newblock \bibinfo{publisher}{North-Holland}, \bibinfo{address}{Amsterdam}.
\newblock


\bibitem[Barendregt and Manzonetto(2022)]%
        {BarendregtManzonetto22}
\bibfield{author}{\bibinfo{person}{Henk Barendregt} {and}
  \bibinfo{person}{Giulio Manzonetto}.} \bibinfo{year}{2022}\natexlab{}.
\newblock \bibinfo{booktitle}{\emph{A Lambda Calculus Satellite}}.
\newblock \bibinfo{publisher}{College Publications}.
\newblock
\showISBNx{978-1-84890-415-6}
\urldef\tempurl%
\url{https://www.collegepublications.co.uk/logic/mlf/?00035}
\showURL{%
\tempurl}


\bibitem[Berline(2000)]%
        {Berline00}
\bibfield{author}{\bibinfo{person}{Chantal Berline}.}
  \bibinfo{year}{2000}\natexlab{}.
\newblock \showarticletitle{From computation to foundations via functions and
  application: The -calculus and its webbed models}.
\newblock \bibinfo{journal}{\emph{Theoretical Computer Science}}
  \bibinfo{volume}{249}, \bibinfo{number}{1} (\bibinfo{year}{2000}),
  \bibinfo{pages}{81--161}.
\newblock
\urldef\tempurl%
\url{https://doi.org/10.1016/S0304-3975(00)00057-8}
\showDOI{\tempurl}


\bibitem[Berline et~al\mbox{.}(2007)]%
        {BerlineManzonettoSalibra07}
\bibfield{author}{\bibinfo{person}{Chantal Berline}, \bibinfo{person}{Giulio
  Manzonetto}, {and} \bibinfo{person}{Antonino Salibra}.}
  \bibinfo{year}{2007}\natexlab{}.
\newblock \showarticletitle{Lambda Theories of Effective Lambda Models}. In
  \bibinfo{booktitle}{\emph{Computer Science Logic, 21st International
  Workshop, {CSL} 2007, 16th Annual Conference of the EACSL, Proceedings}}
  (September 11-15, 2007) \emph{(\bibinfo{series}{Lecture Notes in Computer
  Science}, Vol.~\bibinfo{volume}{4646})},
  \bibfield{editor}{\bibinfo{person}{Jacques Duparc} {and}
  \bibinfo{person}{Thomas~A. Henzinger}} (Eds.). \bibinfo{publisher}{Springer},
  \bibinfo{address}{Lausanne, Switzerland}, \bibinfo{pages}{268--282}.
\newblock
\urldef\tempurl%
\url{https://doi.org/10.1007/978-3-540-74915-8\_22}
\showDOI{\tempurl}


\bibitem[Berline et~al\mbox{.}(2009)]%
        {BerlineManzonettoSalibra09}
\bibfield{author}{\bibinfo{person}{Chantal Berline}, \bibinfo{person}{Giulio
  Manzonetto}, {and} \bibinfo{person}{Antonino Salibra}.}
  \bibinfo{year}{2009}\natexlab{}.
\newblock \showarticletitle{Effective lambda-models versus recursively
  enumerable lambda-theories}.
\newblock \bibinfo{journal}{\emph{Mathematical Structures in Computer Science}}
  \bibinfo{volume}{19}, \bibinfo{number}{5} (\bibinfo{year}{2009}),
  \bibinfo{pages}{897--942}.
\newblock
\urldef\tempurl%
\url{https://doi.org/10.1017/S0960129509990053}
\showDOI{\tempurl}


\bibitem[Bernadet and Lengrand(2011)]%
        {BL11}
\bibfield{author}{\bibinfo{person}{Alexis Bernadet} {and}
  \bibinfo{person}{St{\'e}phane Lengrand}.} \bibinfo{year}{2011}\natexlab{}.
\newblock \showarticletitle{Complexity of Strongly Normalising {\it }-Terms via
  Non-idempotent Intersection Types}. In \bibinfo{booktitle}{\emph{FOSSACS}}
  \emph{(\bibinfo{series}{LNCS}, Vol.~\bibinfo{volume}{6604})},
  \bibfield{editor}{\bibinfo{person}{Martin Hofmann}} (Ed.).
  \bibinfo{publisher}{Springer}, \bibinfo{pages}{88--107}.
\newblock
\showISBNx{978-3-642-19804-5}


\bibitem[Breuvart(2014)]%
        {Breuvart14}
\bibfield{author}{\bibinfo{person}{Flavien Breuvart}.}
  \bibinfo{year}{2014}\natexlab{}.
\newblock \showarticletitle{On the characterization of models of \emph{H}}. In
  \bibinfo{booktitle}{\emph{Joint Meeting of the Twenty-Third {EACSL} Annual
  Conference on Computer Science Logic {(CSL)} and the Twenty-Ninth Annual
  {ACM/IEEE} Symposium on Logic in Computer Science (LICS), {CSL-LICS} '14}}
  (July 14 - 18, 2014), \bibfield{editor}{\bibinfo{person}{Thomas~A. Henzinger}
  {and} \bibinfo{person}{Dale Miller}} (Eds.). \bibinfo{publisher}{{ACM}},
  \bibinfo{address}{Vienna, Austria}, \bibinfo{pages}{24:1--24:10}.
\newblock
\urldef\tempurl%
\url{https://doi.org/10.1145/2603088.2603111}
\showDOI{\tempurl}


\bibitem[Breuvart et~al\mbox{.}(2016)]%
        {BreuvartManzonettoPolonskyRuoppolo16}
\bibfield{author}{\bibinfo{person}{Flavien Breuvart}, \bibinfo{person}{Giulio
  Manzonetto}, \bibinfo{person}{Andrew Polonsky}, {and}
  \bibinfo{person}{Domenico Ruoppolo}.} \bibinfo{year}{2016}\natexlab{}.
\newblock \showarticletitle{New Results on Morris's Observational Theory: The
  Benefits of Separating the Inseparable}. In \bibinfo{booktitle}{\emph{1st
  International Conference on Formal Structures for Computation and Deduction,
  {FSCD} 2016}} (June 22-26, 2016) \emph{(\bibinfo{series}{LIPIcs},
  Vol.~\bibinfo{volume}{52})}, \bibfield{editor}{\bibinfo{person}{Delia Kesner}
  {and} \bibinfo{person}{Brigitte Pientka}} (Eds.). \bibinfo{publisher}{Schloss
  Dagstuhl - Leibniz-Zentrum f{\"{u}}r Informatik}, \bibinfo{address}{Porto,
  Portugal}, \bibinfo{pages}{15:1--15:18}.
\newblock
\urldef\tempurl%
\url{https://doi.org/10.4230/LIPICS.FSCD.2016.15}
\showDOI{\tempurl}


\bibitem[Bucciarelli et~al\mbox{.}(2020)]%
        {BKRV20}
\bibfield{author}{\bibinfo{person}{Antonio Bucciarelli}, \bibinfo{person}{Delia
  Kesner}, \bibinfo{person}{Alejandro R{\'{\i}}os}, {and}
  \bibinfo{person}{Andr{\'{e}}s Viso}.} \bibinfo{year}{2020}\natexlab{}.
\newblock \showarticletitle{The Bang Calculus Revisited}. In
  \bibinfo{booktitle}{\emph{Functional and Logic Programming - 15th
  International Symposium, {FLOPS} 2020, Proceedings}} (September 14-16, 2020)
  \emph{(\bibinfo{series}{Lecture Notes in Computer Science},
  Vol.~\bibinfo{volume}{12073})}, \bibfield{editor}{\bibinfo{person}{Keisuke
  Nakano} {and} \bibinfo{person}{Konstantinos Sagonas}} (Eds.).
  \bibinfo{publisher}{Springer}, \bibinfo{address}{Akita, Japan},
  \bibinfo{pages}{13--32}.
\newblock
\showISBNx{978-3-030-59024-6}
\urldef\tempurl%
\url{https://doi.org/10.1007/978-3-030-59025-3}
\showDOI{\tempurl}


\bibitem[Bucciarelli et~al\mbox{.}(2021)]%
        {BucciarelliKR21}
\bibfield{author}{\bibinfo{person}{Antonio Bucciarelli}, \bibinfo{person}{Delia
  Kesner}, {and} \bibinfo{person}{Simona Ronchi~Della Rocca}.}
  \bibinfo{year}{2021}\natexlab{}.
\newblock \showarticletitle{Solvability = Typability + Inhabitation}.
\newblock \bibinfo{journal}{\emph{Logical Methods in Computer Science}}
  \bibinfo{volume}{17}, \bibinfo{number}{1} (\bibinfo{year}{2021}).
\newblock
\urldef\tempurl%
\url{https://lmcs.episciences.org/7141}
\showURL{%
\tempurl}


\bibitem[Bucciarelli et~al\mbox{.}(2023)]%
        {BKRV23}
\bibfield{author}{\bibinfo{person}{Antonio Bucciarelli}, \bibinfo{person}{Delia
  Kesner}, \bibinfo{person}{Alejandro Ríos}, {and} \bibinfo{person}{Andrés
  Viso}.} \bibinfo{year}{2023}\natexlab{}.
\newblock \showarticletitle{The bang calculus revisited}.
\newblock \bibinfo{journal}{\emph{Information and Computation}}
  \bibinfo{volume}{293} (\bibinfo{date}{Aug.} \bibinfo{year}{2023}),
  \bibinfo{pages}{105047}.
\newblock
\showISSN{0890-5401}
\urldef\tempurl%
\url{https://doi.org/10.1016/j.ic.2023.105047}
\showDOI{\tempurl}


\bibitem[Bucciarelli et~al\mbox{.}(2017)]%
        {BKV17}
\bibfield{author}{\bibinfo{person}{Antonio Bucciarelli}, \bibinfo{person}{Delia
  Kesner}, {and} \bibinfo{person}{Daniel Ventura}.}
  \bibinfo{year}{2017}\natexlab{}.
\newblock \showarticletitle{{Non-idempotent intersection types for the
  Lambda-Calculus}}.
\newblock \bibinfo{journal}{\emph{Logic Journal of the IGPL}}
  \bibinfo{volume}{25}, \bibinfo{number}{4} (\bibinfo{date}{07}
  \bibinfo{year}{2017}), \bibinfo{pages}{431--464}.
\newblock
\showISSN{1367-0751}
\urldef\tempurl%
\url{https://doi.org/10.1093/jigpal/jzx018}
\showDOI{\tempurl}
\showeprint{https://academic.oup.com/jigpal/article-pdf/25/4/431/19368065/jzx018.pdf}


\bibitem[Carraro and Guerrieri(2014)]%
        {CarraroGuerrieri14}
\bibfield{author}{\bibinfo{person}{Alberto Carraro} {and}
  \bibinfo{person}{Giulio Guerrieri}.} \bibinfo{year}{2014}\natexlab{}.
\newblock \showarticletitle{A Semantical and Operational Account of
  Call-by-Value Solvability}. In \bibinfo{booktitle}{\emph{Foundations of
  Software Science and Computation Structures - 17th International Conference,
  {FOSSACS} 2014, Held as Part of the European Joint Conferences on Theory and
  Practice of Software, {ETAPS} 2014, Proceedings}} (April 5-13, 2014)
  \emph{(\bibinfo{series}{Lecture Notes in Computer Science},
  Vol.~\bibinfo{volume}{8412})}, \bibfield{editor}{\bibinfo{person}{Anca
  Muscholl}} (Ed.). \bibinfo{publisher}{Springer}, \bibinfo{address}{Grenoble,
  France}, \bibinfo{pages}{103--118}.
\newblock
\urldef\tempurl%
\url{https://doi.org/10.1007/978-3-642-54830-7\_7}
\showDOI{\tempurl}


\bibitem[CHURCH(1941)]%
        {Church}
\bibfield{author}{\bibinfo{person}{ALONZO CHURCH}.}
  \bibinfo{year}{1941}\natexlab{}.
\newblock \bibinfo{booktitle}{\emph{The Calculi of Lambda Conversion. (AM-6)}}.
\newblock \bibinfo{publisher}{Princeton University Press}.
\newblock
\showISBNx{9780691083940}
\urldef\tempurl%
\url{http://www.jstor.org/stable/j.ctt1b9x12d}
\showURL{%
\tempurl}


\bibitem[Coppo and Dezani{-}Ciancaglini(1978)]%
        {CDC78}
\bibfield{author}{\bibinfo{person}{Mario Coppo} {and}
  \bibinfo{person}{Mariangiola Dezani{-}Ciancaglini}.}
  \bibinfo{year}{1978}\natexlab{}.
\newblock \showarticletitle{A new type assignment for {\(\lambda\)}-terms}.
\newblock \bibinfo{journal}{\emph{Archiv f{\"u}r mathematische Logik und
  Grundlagenforschung}} \bibinfo{volume}{19}, \bibinfo{number}{1}
  (\bibinfo{year}{1978}), \bibinfo{pages}{139--156}.
\newblock
\urldef\tempurl%
\url{https://doi.org/10.1007/BF02011875}
\showDOI{\tempurl}


\bibitem[Coppo and Dezani{-}Ciancaglini(1980)]%
        {CDC80}
\bibfield{author}{\bibinfo{person}{Mario Coppo} {and}
  \bibinfo{person}{Mariangiola Dezani{-}Ciancaglini}.}
  \bibinfo{year}{1980}\natexlab{}.
\newblock \showarticletitle{An extension of the basic functionality theory for
  the {\(\lambda\)}-calculus}.
\newblock \bibinfo{journal}{\emph{Notre Dame Journal of Formal Logic}}
  \bibinfo{volume}{21}, \bibinfo{number}{4} (\bibinfo{year}{1980}),
  \bibinfo{pages}{685--693}.
\newblock
\urldef\tempurl%
\url{https://doi.org/10.1305/ndjfl/1093883253}
\showDOI{\tempurl}


\bibitem[Egidi et~al\mbox{.}(1992)]%
        {EgidiHonsellRonchi92}
\bibfield{author}{\bibinfo{person}{Lavinia Egidi}, \bibinfo{person}{Furio
  Honsell}, {and} \bibinfo{person}{Simona Ronchi~Della Rocca}.}
  \bibinfo{year}{1992}\natexlab{}.
\newblock \showarticletitle{Operational, denotational and logical descriptions:
  a case study}.
\newblock \bibinfo{journal}{\emph{Fundamenta Informaticae}}
  \bibinfo{volume}{16}, \bibinfo{number}{1} (\bibinfo{year}{1992}),
  \bibinfo{pages}{149--169}.
\newblock


\bibitem[Ehrhard(2012)]%
        {Ehrhard12}
\bibfield{author}{\bibinfo{person}{Thomas Ehrhard}.}
  \bibinfo{year}{2012}\natexlab{}.
\newblock \showarticletitle{Collapsing non-idempotent intersection types}. In
  \bibinfo{booktitle}{\emph{Computer Science Logic (CSL'12)}}
  \emph{(\bibinfo{series}{LIPIcs}, Vol.~\bibinfo{volume}{16})},
  \bibfield{editor}{\bibinfo{person}{Patrick C{\'e}gielski} {and}
  \bibinfo{person}{Arnaud Durand}} (Eds.). \bibinfo{publisher}{Schloss Dagstuhl
  - Leibniz-Zentrum für Informatik GmbH, Dagstuhl Publishing},
  \bibinfo{address}{Fontainebleau, France}, \bibinfo{pages}{259--273}.
\newblock


\bibitem[Ehrhard(2016)]%
        {Ehrhard2016}
\bibfield{author}{\bibinfo{person}{Thomas Ehrhard}.}
  \bibinfo{year}{2016}\natexlab{}.
\newblock \showarticletitle{Call-By-Push-Value from a Linear Logic Point of
  View}. In \bibinfo{booktitle}{\emph{Programming Languages and Systems - 25th
  European Symposium on Programming, {ESOP} 2016, Held as Part of the European
  Joint Conferences on Theory and Practice of Software, {ETAPS} 2016,
  Proceedings}} (April 2-8, 2016) \emph{(\bibinfo{series}{Lecture Notes in
  Computer Science}, Vol.~\bibinfo{volume}{9632})},
  \bibfield{editor}{\bibinfo{person}{Peter Thiemann}} (Ed.).
  \bibinfo{publisher}{Springer}, \bibinfo{address}{Eindhoven, The Netherlands},
  \bibinfo{pages}{202--228}.
\newblock
\urldef\tempurl%
\url{https://doi.org/10.1007/978-3-662-49498-1\_9}
\showDOI{\tempurl}


\bibitem[Ehrhard and Guerrieri(2016)]%
        {EhrardGuerrieri16}
\bibfield{author}{\bibinfo{person}{Thomas Ehrhard} {and}
  \bibinfo{person}{Giulio Guerrieri}.} \bibinfo{year}{2016}\natexlab{}.
\newblock \showarticletitle{The Bang Calculus: an untyped lambda-calculus
  generalizing call-by-name and call-by-value}. In
  \bibinfo{booktitle}{\emph{Proceedings of the 18th International Symposium on
  Principles and Practice of Declarative Programming}} (September 5-7, 2016),
  \bibfield{editor}{\bibinfo{person}{James Cheney} {and}
  \bibinfo{person}{Germ{\'{a}}n Vidal}} (Eds.). \bibinfo{publisher}{{ACM}},
  \bibinfo{address}{Edinburgh, United Kingdom}, \bibinfo{pages}{174--187}.
\newblock
\urldef\tempurl%
\url{https://doi.org/10.1145/2967973.2968608}
\showDOI{\tempurl}


\bibitem[Garc{\'{\i}}a{-}P{\'{e}}rez and Nogueira(2016)]%
        {Garcia-PerezN16}
\bibfield{author}{\bibinfo{person}{{\'{A}}lvaro Garc{\'{\i}}a{-}P{\'{e}}rez}
  {and} \bibinfo{person}{Pablo Nogueira}.} \bibinfo{year}{2016}\natexlab{}.
\newblock \showarticletitle{No solvable lambda-value term left behind}.
\newblock \bibinfo{journal}{\emph{Logical Methods in Computer Science}}
  \bibinfo{volume}{12}, \bibinfo{number}{2} (\bibinfo{year}{2016}),
  \bibinfo{pages}{1--43}.
\newblock
\urldef\tempurl%
\url{https://doi.org/10.2168/LMCS-12(2:12)2016}
\showDOI{\tempurl}


\bibitem[Gr{\'{e}}goire and Leroy(2002)]%
        {GregoireLeroy02}
\bibfield{author}{\bibinfo{person}{Benjamin Gr{\'{e}}goire} {and}
  \bibinfo{person}{Xavier Leroy}.} \bibinfo{year}{2002}\natexlab{}.
\newblock \showarticletitle{A compiled implementation of strong reduction}. In
  \bibinfo{booktitle}{\emph{Proceedings of the Seventh {ACM} {SIGPLAN}
  International Conference on Functional Programming {(ICFP} '02), Pittsburgh,
  Pennsylvania, USA, October 4-6, 2002}},
  \bibfield{editor}{\bibinfo{person}{Mitchell Wand} {and}
  \bibinfo{person}{Simon L.~Peyton Jones}} (Eds.). \bibinfo{publisher}{{ACM}},
  \bibinfo{pages}{235--246}.
\newblock
\urldef\tempurl%
\url{https://doi.org/10.1145/581478.581501}
\showDOI{\tempurl}


\bibitem[Guerrieri and Manzonetto(2018)]%
        {GuerrieriManzonetto18}
\bibfield{author}{\bibinfo{person}{Giulio Guerrieri} {and}
  \bibinfo{person}{Giulio Manzonetto}.} \bibinfo{year}{2018}\natexlab{}.
\newblock \showarticletitle{The Bang Calculus and the Two Girard's
  Translations}. In \bibinfo{booktitle}{\emph{Proceedings Joint International
  Workshop on Linearity {\&} Trends in Linear Logic and Applications,
  Linearity-TLLA@FLoC 2018}} \emph{(\bibinfo{series}{{EPTCS}},
  Vol.~\bibinfo{volume}{292})}. \bibinfo{address}{Oxford, UK},
  \bibinfo{pages}{15--30}.
\newblock
\urldef\tempurl%
\url{https://doi.org/10.4204/EPTCS.292.2}
\showDOI{\tempurl}


\bibitem[Guerrieri et~al\mbox{.}(2017)]%
        {GuerrieriPaoliniRonchi17}
\bibfield{author}{\bibinfo{person}{Giulio Guerrieri}, \bibinfo{person}{Luca
  Paolini}, {and} \bibinfo{person}{Simona Ronchi~Della Rocca}.}
  \bibinfo{year}{2017}\natexlab{}.
\newblock \showarticletitle{Standardization and Conservativity of a Refined
  Call-by-Value lambda-Calculus}.
\newblock \bibinfo{journal}{\emph{Logical Methods Computer Science}}
  \bibinfo{volume}{13}, \bibinfo{number}{4} (\bibinfo{year}{2017}),
  \bibinfo{pages}{1--27}.
\newblock
\urldef\tempurl%
\url{https://doi.org/10.23638/LMCS-13(4:29)2017}
\showDOI{\tempurl}


\bibitem[Herbelin and Zimmermann(2009)]%
        {HerbelinZimmermann09}
\bibfield{author}{\bibinfo{person}{Hugo Herbelin} {and}
  \bibinfo{person}{St{\'{e}}phane Zimmermann}.}
  \bibinfo{year}{2009}\natexlab{}.
\newblock \showarticletitle{An Operational Account of Call-by-Value Minimal and
  Classical lambda-Calculus in "Natural Deduction" Form}. In
  \bibinfo{booktitle}{\emph{Typed Lambda Calculi and Applications, 9th
  International Conference, {TLCA} 2009, Brasilia, Brazil, July 1-3, 2009.
  Proceedings}} \emph{(\bibinfo{series}{Lecture Notes in Computer Science},
  Vol.~\bibinfo{volume}{5608})},
  \bibfield{editor}{\bibinfo{person}{Pierre{-}Louis Curien}} (Ed.).
  \bibinfo{publisher}{Springer}, \bibinfo{pages}{142--156}.
\newblock
\urldef\tempurl%
\url{https://doi.org/10.1007/978-3-642-02273-9\_12}
\showDOI{\tempurl}


\bibitem[Hutton(2023)]%
        {Hutton23}
\bibfield{author}{\bibinfo{person}{Graham Hutton}.}
  \bibinfo{year}{2023}\natexlab{}.
\newblock \showarticletitle{Programming language semantics: It’s easy as
  1,2,3}.
\newblock \bibinfo{journal}{\emph{Journal of Functional Programming}}
  \bibinfo{volume}{33} (\bibinfo{year}{2023}), \bibinfo{pages}{e9}.
\newblock
\urldef\tempurl%
\url{https://doi.org/10.1017/S0956796823000072}
\showDOI{\tempurl}


\bibitem[Hyland(1975)]%
        {Hyland75}
\bibfield{author}{\bibinfo{person}{J.~M.~E. Hyland}.}
  \bibinfo{year}{1975}\natexlab{}.
\newblock \showarticletitle{A survey of some useful partial order relations on
  terms of the lambda calculus}. In \bibinfo{booktitle}{\emph{Lambda-Calculus
  and Computer Science Theory, Proceedings of the Symposium Held in Rome,
  Italy, March 25-27, 1975}} \emph{(\bibinfo{series}{Lecture Notes in Computer
  Science}, Vol.~\bibinfo{volume}{37})},
  \bibfield{editor}{\bibinfo{person}{Corrado B{\"{o}}hm}} (Ed.).
  \bibinfo{publisher}{Springer}, \bibinfo{pages}{83--95}.
\newblock
\urldef\tempurl%
\url{https://doi.org/10.1007/BFB0029520}
\showDOI{\tempurl}


\bibitem[Kennaway et~al\mbox{.}(1999)]%
        {KennawayOostromVries99}
\bibfield{author}{\bibinfo{person}{Richard Kennaway}, \bibinfo{person}{Vincent
  van Oostrom}, {and} \bibinfo{person}{Fer{-}Jan de Vries}.}
  \bibinfo{year}{1999}\natexlab{}.
\newblock \showarticletitle{Meaningless Terms in Rewriting}.
\newblock \bibinfo{journal}{\emph{Journal of Functional and Logic Programming}}
  \bibinfo{volume}{1999}, \bibinfo{number}{1} (\bibinfo{year}{1999}),
  \bibinfo{pages}{1--35}.
\newblock
\urldef\tempurl%
\url{http://danae.uni-muenster.de/lehre/kuchen/JFLP/articles/1999/A99-01/A99-01.html}
\showURL{%
\tempurl}


\bibitem[Kerinec et~al\mbox{.}(2020)]%
        {KerinecManzonettoPagani20}
\bibfield{author}{\bibinfo{person}{Axel Kerinec}, \bibinfo{person}{Giulio
  Manzonetto}, {and} \bibinfo{person}{Michele Pagani}.}
  \bibinfo{year}{2020}\natexlab{}.
\newblock \showarticletitle{Revisiting Call-by-value B{\"{o}}hm trees in light
  of their Taylor expansion}.
\newblock \bibinfo{journal}{\emph{Logical Methods in Computer Science}}
  \bibinfo{volume}{16}, \bibinfo{number}{3} (\bibinfo{year}{2020}),
  \bibinfo{pages}{6:1--6:26}.
\newblock
\urldef\tempurl%
\url{https://lmcs.episciences.org/6638}
\showURL{%
\tempurl}


\bibitem[Kerinec et~al\mbox{.}(2021)]%
        {KerinecMR21}
\bibfield{author}{\bibinfo{person}{Axel Kerinec}, \bibinfo{person}{Giulio
  Manzonetto}, {and} \bibinfo{person}{Simona Ronchi~Della Rocca}.}
  \bibinfo{year}{2021}\natexlab{}.
\newblock \showarticletitle{Call-By-Value, Again!}. In
  \bibinfo{booktitle}{\emph{6th International Conference on Formal Structures
  for Computation and Deduction, {FSCD} 2021 (Virtual Conference)}} (July
  17-24, 2021) \emph{(\bibinfo{series}{LIPIcs}, Vol.~\bibinfo{volume}{195})},
  \bibfield{editor}{\bibinfo{person}{Naoki Kobayashi}} (Ed.).
  \bibinfo{publisher}{Schloss Dagstuhl - Leibniz-Zentrum für Informatik GmbH,
  Dagstuhl Publishing}, \bibinfo{address}{Buenos Aires, Argentina},
  \bibinfo{pages}{7:1--7:18}.
\newblock
\urldef\tempurl%
\url{https://doi.org/10.4230/LIPIcs.FSCD.2021.7}
\showDOI{\tempurl}


\bibitem[Kesner and Ventura(2014)]%
        {KesnerV14}
\bibfield{author}{\bibinfo{person}{Delia Kesner} {and} \bibinfo{person}{Daniel
  Ventura}.} \bibinfo{year}{2014}\natexlab{}.
\newblock \showarticletitle{Quantitative Types for the Linear Substitution
  Calculus}. In \bibinfo{booktitle}{\emph{Theoretical Computer Science - 8th
  {IFIP} {TC} 1/WG 2.2 International Conference, {TCS} 2014, Proceedings}}
  (September 1-3, 2014.) \emph{(\bibinfo{series}{Lecture Notes in Computer
  Science}, Vol.~\bibinfo{volume}{8705})},
  \bibfield{editor}{\bibinfo{person}{Josep D{\'{\i}}az}, \bibinfo{person}{Ivan
  Lanese}, {and} \bibinfo{person}{Davide Sangiorgi}} (Eds.).
  \bibinfo{publisher}{Springer}, \bibinfo{address}{Rome, Italy},
  \bibinfo{pages}{296--310}.
\newblock
\urldef\tempurl%
\url{https://doi.org/10.1007/978-3-662-44602-7\_23}
\showDOI{\tempurl}


\bibitem[Kuper(1995)]%
        {Kuper95}
\bibfield{author}{\bibinfo{person}{Jan Kuper}.}
  \bibinfo{year}{1995}\natexlab{}.
\newblock \showarticletitle{Proving the Genericity Lemma by Leftmost Reduction
  is Simple}. In \bibinfo{booktitle}{\emph{Rewriting Techniques and
  Applications, 6th International Conference, RTA-95, Proceedings}} (April 5-7,
  1995) \emph{(\bibinfo{series}{Lecture Notes in Computer Science},
  Vol.~\bibinfo{volume}{914})}, \bibfield{editor}{\bibinfo{person}{Jieh
  Hsiang}} (Ed.). \bibinfo{publisher}{Springer},
  \bibinfo{address}{Kaiserslautern, Germany}, \bibinfo{pages}{271--278}.
\newblock
\urldef\tempurl%
\url{https://doi.org/10.1007/3-540-59200-8\_63}
\showDOI{\tempurl}


\bibitem[Levy(1999)]%
        {Levy99}
\bibfield{author}{\bibinfo{person}{Paul~Blain Levy}.}
  \bibinfo{year}{1999}\natexlab{}.
\newblock \showarticletitle{Call-by-Push-Value: A Subsuming Paradigm}. In
  \bibinfo{booktitle}{\emph{Typed Lambda Calculi and Applications}},
  \bibfield{editor}{\bibinfo{person}{Jean-Yves Girard}} (Ed.).
  \bibinfo{publisher}{Springer Berlin Heidelberg}, \bibinfo{address}{Berlin,
  Heidelberg}, \bibinfo{pages}{228--243}.
\newblock
\showISBNx{978-3-540-48959-7}
\urldef\tempurl%
\url{https://doi.org/10.1007/3-540-48959-2\_17}
\showDOI{\tempurl}


\bibitem[Paolini et~al\mbox{.}(2006)]%
        {PaoliniPimentelRonchi06}
\bibfield{author}{\bibinfo{person}{Luca Paolini}, \bibinfo{person}{Elaine
  Pimentel}, {and} \bibinfo{person}{Simona {Ronchi Della Rocca}}.}
  \bibinfo{year}{2006}\natexlab{}.
\newblock \showarticletitle{An Operational Characterization of Strong
  Normalization}. In \bibinfo{booktitle}{\emph{Foundations of Software Science
  and Computation Structures, 9th International Conference, {FOSSACS} 2006,
  Proceedings}} (March 25-31, 2006) \emph{(\bibinfo{series}{Lecture Notes in
  Computer Science}, Vol.~\bibinfo{volume}{3921})},
  \bibfield{editor}{\bibinfo{person}{Luca Aceto} {and} \bibinfo{person}{Anna
  Ing{\'{o}}lfsd{\'{o}}ttir}} (Eds.). \bibinfo{publisher}{Springer},
  \bibinfo{address}{Vienna, Austria}, \bibinfo{pages}{367--381}.
\newblock
\urldef\tempurl%
\url{https://doi.org/10.1007/11690634\_25}
\showDOI{\tempurl}


\bibitem[Paolini and {Ronchi Della Rocca}(1999)]%
        {paolini99tia}
\bibfield{author}{\bibinfo{person}{Luca Paolini} {and} \bibinfo{person}{Simona
  {Ronchi Della Rocca}}.} \bibinfo{year}{1999}\natexlab{}.
\newblock \showarticletitle{Call-By-Value Solvability}.
\newblock \bibinfo{journal}{\emph{{RAIRO} Theoretical Informatics and
  Applications}} \bibinfo{volume}{33}, \bibinfo{number}{6}
  (\bibinfo{year}{1999}), \bibinfo{pages}{507--534}.
\newblock
\urldef\tempurl%
\url{https://doi.org/10.1051/ita:1999130}
\showDOI{\tempurl}


\bibitem[Plotkin(1975)]%
        {Plotkin75}
\bibfield{author}{\bibinfo{person}{Gordon~D. Plotkin}.}
  \bibinfo{year}{1975}\natexlab{}.
\newblock \showarticletitle{Call-by-Name, Call-by-Value and the
  lambda-Calculus}.
\newblock \bibinfo{journal}{\emph{Theoretical Computer Science}}
  \bibinfo{volume}{1}, \bibinfo{number}{2} (\bibinfo{year}{1975}),
  \bibinfo{pages}{125--159}.
\newblock
\urldef\tempurl%
\url{https://doi.org/10.1016/0304-3975(75)90017-1}
\showDOI{\tempurl}


\bibitem[Rocca and Paolini(2004)]%
        {RoccaP04}
\bibfield{author}{\bibinfo{person}{Simona Ronchi~Della Rocca} {and}
  \bibinfo{person}{Luca Paolini}.} \bibinfo{year}{2004}\natexlab{}.
\newblock \bibinfo{booktitle}{\emph{The Parametric Lambda Calculus - {A}
  Metamodel for Computation}}.
\newblock \bibinfo{publisher}{Springer}.
\newblock
\showISBNx{978-3-642-05746-5}
\urldef\tempurl%
\url{https://doi.org/10.1007/978-3-662-10394-4}
\showDOI{\tempurl}


\bibitem[Scott and Strachey(1971)]%
        {ScottStrachey71}
\bibfield{author}{\bibinfo{person}{Dana Scott} {and}
  \bibinfo{person}{Christopher Strachey}.} \bibinfo{year}{1971}\natexlab{}.
\newblock \bibinfo{booktitle}{\emph{Toward a Mathematical Semantics for
  Computer Languages}}.
\newblock \bibinfo{type}{{T}echnical {R}eport} PRG06.
  \bibinfo{institution}{OUCL}. \bibinfo{pages}{49} pages.
\newblock


\bibitem[Scott(1970)]%
        {Scott70}
\bibfield{author}{\bibinfo{person}{Dana~S. Scott}.}
  \bibinfo{year}{1970}\natexlab{}.
\newblock \showarticletitle{Outline of a mathematical theory of computation}.
  In \bibinfo{booktitle}{\emph{Fourth Annual Princeton Conference on
  Information Sciences and Systems}}. \bibinfo{publisher}{Departement of
  Elctrical Engeneering, Princeton Univ.}, \bibinfo{pages}{169--176}.
\newblock


\bibitem[Scott(1972)]%
        {Scott72}
\bibfield{author}{\bibinfo{person}{Dana~S. Scott}.}
  \bibinfo{year}{1972}\natexlab{}.
\newblock \showarticletitle{Mathematical concepts in programming language
  semantics}. In \bibinfo{booktitle}{\emph{American Federation of Information
  Processing Societies: {AFIPS} Conference Proceedings: 1972 Spring Joint
  Computer Conference, Atlantic City, NJ, USA, May 16-18, 1972}}
  \emph{(\bibinfo{series}{{AFIPS} Conference Proceedings},
  Vol.~\bibinfo{volume}{40})}. \bibinfo{publisher}{{AFIPS}},
  \bibinfo{pages}{225--234}.
\newblock
\urldef\tempurl%
\url{https://doi.org/10.1145/1478873.1478903}
\showDOI{\tempurl}


\bibitem[Takahashi(1994)]%
        {Takahashi94}
\bibfield{author}{\bibinfo{person}{Masako Takahashi}.}
  \bibinfo{year}{1994}\natexlab{}.
\newblock \showarticletitle{A simple proof of the genericity lemma}.
\newblock In \bibinfo{booktitle}{\emph{Logic, Language and Computation}}.
  \bibinfo{publisher}{Springer-Verlag}, \bibinfo{pages}{117--118}.
\newblock
\urldef\tempurl%
\url{https://doi.org/10.1007/bfb0032397}
\showDOI{\tempurl}


\bibitem[Wadsworth(1971)]%
        {Wadsworth71}
\bibfield{author}{\bibinfo{person}{Christopher~P. Wadsworth}.}
  \bibinfo{year}{1971}\natexlab{}.
\newblock \emph{\bibinfo{title}{Semantics and pragmatics of the
  lambda-calculus}}.
\newblock Ph.{D}. \bibinfo{school}{University of Oxford}.
\newblock
\urldef\tempurl%
\url{https://ethos.bl.uk/OrderDetails.do?uin=uk.bl.ethos.476186}
\showURL{%
\tempurl}
\newblock
\shownote{Accepted: 1971}.


\bibitem[Wadsworth(1976)]%
        {Wadsworth76}
\bibfield{author}{\bibinfo{person}{Christopher~P. Wadsworth}.}
  \bibinfo{year}{1976}\natexlab{}.
\newblock \showarticletitle{The {Relation} between {Computational} and
  {Denotational} {Properties} for {Scott}'s
  \$\{{\textbackslash}text\{{D}\}\}\_{\textbackslash}infty \$-{Models} of the
  {Lambda}-{Calculus}}.
\newblock \bibinfo{journal}{\emph{SIAM J. Comput.}} \bibinfo{volume}{5},
  \bibinfo{number}{3} (\bibinfo{date}{Sept.} \bibinfo{year}{1976}),
  \bibinfo{pages}{488--521}.
\newblock
\showISSN{0097-5397}
\urldef\tempurl%
\url{https://doi.org/10.1137/0205036}
\showDOI{\tempurl}
\newblock
\shownote{Publisher: Society for Industrial and Applied Mathematics}.


\end{thebibliography}

\newpage
\onecolumn
\appendix
\booltrue{inAppendix}

\section*{Technical Appendix}

\section{Proofs of Section \ref{sec:cbv}}

Let us recall the mutual inductive definitions of $\cbvVrS_\indexK$,
$\cbvNeS_\indexK$ and $\cbvNoS_\indexK$ for $\indexK \in
\setOrdinals$:
\begin{equation*}
    \begin{array}{c}
        \begin{array}{rcl}
            \cbvVrS_\indexK &\coloneqq& x
                \vsep \cbvVrS_\indexK\esub{x}{\cbvNeS_\indexK}
        \\[0.2cm]
            \cbvNeS_\indexK &\coloneqq& \app[\,]{\cbvVrS_\indexK}{\cbvNoS_\indexK}
                \vsep \app[\,]{\cbvNeS_\indexK}{\cbvNoS_\indexK}
                \vsep \cbvNeS_\indexK\esub{x}{\cbvNeS_\indexK}
        \\[0.2cm]
            \cbvNoS_0 &\coloneqq& \abs{x}{t}
                \vsep \cbvNeS_0
                \vsep \cbvVrS_0
                \vsep \cbvNoS_0\esub{x}{\cbvNeS_0}
        \\
            \cbvNoS_{\indexI+1} &\coloneqq& \abs{x}{\cbvNoS_\indexI}
                \vsep \cbvNeS_{\indexI+1}
                \vsep \cbvVrS_{\indexI+1}
                \vsep \cbvNoS_{\indexI+1}\esub{x}{\cbvNeS_{\indexI+1}}\ (\indexI \in \Nat)
        \\
            \cbvNoS_\indexOmega &\coloneqq& \abs{x}{\cbvNoS_\indexOmega}
                \vsep \cbvVrS_\indexOmega
                \vsep \cbvNeS_\indexOmega
                \vsep \cbvNoS_\indexOmega\esub{x}{\cbvNeS_\indexOmega}
        \end{array}
    \end{array}
\end{equation*}

From now on we write $\cbvVarPred{t}$ iff $t= \cbvLCtxt<x>$ for some
list context $\cbvLCtxt$ and some $x$, and $\cbvAbsPred{t}$ iff
$t=\cbvLCtxt<\abs{x}{u}>$ for some list context $\cbvLCtxt$ and some
$u$.

\CbvNoSnCharacterization*
\label{prf:cbv_NoSn_Characterization}%
\stableProof{
     \begin{proof}
Let $\indexK \in \setOrdinals$ and $t \in \setCbvTerms$. We strengthen
the property as follows:
\begin{itemize}
\item[\bltI] $t \in \cbvVrS_\indexK$ if and only if $t$ is a
    $\cbvSymbSurface_\indexK$-normal form and $\cbvVarPred{t}$.

\item[\bltI] $t \in \cbvNeS_\indexK$ if and only if $t$ is a
    $\cbvSymbSurface_\indexK$-normal form and $\neg\cbvAbsPred{t}$ and
    $\neg\cbvVarPred{t}$.

\item[\bltI] $t \in \cbvNoS_\indexK$ if and only if $t$ is a
    $\cbvSymbSurface_\indexK$-normal form.
\end{itemize}
By double implication:
\begin{itemize}
\item[$(\Rightarrow)$] By mutual induction on $t \in \cbvVrS_\indexK$,
    $t \in \cbvNeS_\indexK$ and $t \in \cbvNoS_\indexK$:
    \begin{itemize}
    \item[\bltI] $t \in \cbvVrS_\indexK$: We distinguish two cases:
        \begin{itemize}
        \item[\bltII] $t = x$: Then $t$ is a
            $\cbvSymbSurface_\indexK$-normal form and in particular
            $\cbvVarPred{t}$.

        \item[\bltII] $t = t_1\esub{x}{t_2}$ with $t_1 \in
            \cbvVrS_\indexK$ and $t_2 \in \cbvNeS_\indexK$: By \ih on
            $t_1$ and $t_2$, one has that both are
            $\cbvSymbSurface_\indexK$-normal forms and in particular
            $\cbvVarPred{t_1}$, $\neg\cbvAbsPred{t_2}$ and
            $\neg\cbvVarPred{t_2}$. Thus $t = t_1\esub{x}{t_2}$ is a
            $\cbvSymbSurface_\indexK$-normal form and
            $\cbvVarPred{t}$.
        \end{itemize}

    \item[\bltI] $t \in \cbvNeS_\indexK$: We distinguish three cases: 
        \begin{itemize}
        \item[\bltII] $t = \app{t_1}{t_2}$ with $t_1
            \in\cbvVrS_\indexK$ and $t_2 \in \cbvNoS_\indexK$: Then by
            \ih on $t_1$ and $t_2$, one has that both are
            $\cbvSymbSurface_\indexK$-normal forms and
            $\cbvVarPred{t_1}$. Thus, $t = \app{t_1}{t_2}$ is also a
            $\cbvSymbSurface_\indexK$-normal form and in particular
            $\neg\cbvAbsPred{t}$ and $\neg\cbvVarPred{t}$.

        \item[\bltII] $t = \app{t_1}{t_2}$ with $t_1 \in
            \cbvNeS_\indexK$ and $t_2 \in \cbvNoS_\indexK$: Then by
            \ih on $t_1$ and $t_2$, one has that both are
            $\cbvSymbSurface_\indexK$-normal forms and
            $\neg\cbvAbsPred{t_1}$. Thus, $t = \app{t_1}{t_2}$ is also
            a $\cbvSymbSurface_\indexK$-normal form and in particular
            $\neg\cbvAbsPred{t}$ and $\neg\cbvVarPred{t}$.

        \item[\bltII] $t = t_1\esub{x}{t_2}$ with $t_1, t_2 \in
            \cbvNeS_\indexK$: Then by \ih on $t_1$ and $t_2$, one has
            that both are $\cbvSymbSurface_\indexK$-normal forms and
            $\neg\cbvAbsPred{t_1}$, $\neg\cbvVarPred{t_1}$,
            $\neg\cbvVarPred{t_2}$ and $\neg\cbvVarPred{t_2}$. Thus,
            $t = t_1\esub{x}{t_2}$ is also a
            $\cbvSymbSurface_\indexK$-normal form and in particular
            $\neg\cbvAbsPred{t}$ and $\neg\cbvVarPred{t}$.
        \end{itemize}

    \item[\bltI] $t \in \cbvNoS_\indexK$: We distinguish four cases:
        \begin{itemize}
        \item[\bltII] $t = \abs{x}{t'}$: We distinguish three cases:
            \begin{itemize}
            \item[\bltIII] $\indexK = 0$: Then $t = \abs{x}{t'}$ is a
                $\cbvSymbSurface_0$-normal form.

            \item[\bltIII] $\indexK = \indexI + 1$ and $t' \in
                \cbvNoS_\indexI$ for some $\indexI \in \mathbb{N}$: By
                \ih on $t'$, one has that $t'$ is a
                $\cbvSymbSurface_\indexI$-normal form thus $t = \abs{x}{t'}$
                is a $\cbvSymbSurface_\indexK$-normal form.

            \item[\bltIII] $\indexK = \indexOmega$ and $t' \in
                \cbvNoS_\indexOmega$: By \ih on $t'$, one has that
                $t'$ is a $\cbvSymbSurface_\indexOmega$-normal form
                thus $t = \abs{x}{t'}$ is also a
                $\cbvSymbSurface_\indexOmega$-normal form.
            \end{itemize}

        \item[\bltII] $t \in \cbvVrS_\indexK$: Then by \ih on $t$, it
            is a $\cbvSymbSurface_\indexK$-normal form.

        \item[\bltII] $t \in \cbvNeS_\indexK$: Then by \ih on $t$, it
            is a $\cbvSymbSurface_\indexK$-normal form.

        \item[\bltII] $t = t_1\esub{x}{t_2}$ with $t_1 \in
            \cbvNoS_\indexK$ and $t_2 \in \cbvNeS_\indexK$: Then by
            \ih on $t_1$ and $t_2$, one has that both are
            $\cbvSymbSurface_\indexK$-normal forms and,
            $\neg\cbvAbsPred{t_2}$ and $\neg\cbvVarPred{t_2}$. Thus,
            $t = t_1\esub{x}{t_2}$ is also a
            $\cbvSymbSurface_\indexK$-normal form.
        \end{itemize}
    \end{itemize}

\item[$(\Leftarrow)$] By induction on $t$:
    \begin{itemize}
    \item[\bltI] $t = x$: Then $\cbvVarPred{t}$ and $t \in
    \cbvVrS_\indexK$.

    \item[\bltI] $t = \abs{x}{t'}$: Then $\cbvAbsPred{t}$. We
        distinguish three cases:
        \begin{itemize}
        \item[\bltII] $\indexK = 0$: Then $t = \abs{x}{t'} \in
        \cbvNoS_0$.

        \item[\bltII] $\indexK = \indexI + 1$ for some $\indexI \in
            \mathbb{N}$: Then, $t'$ is necessarily a
            $\cbvSymbSurface_\indexI$-normal form. By \ih on $t'$, one
            has that $t' \in \cbvNoS_\indexI$ thus $t = \abs{x}{t'}
            \in \abs{x}{\cbvNoS_i} \subseteq \cbvNoS_\indexK$.

        \item[\bltII] $\indexK = \indexOmega$: By \ih on $t'$, one has
            that $t' \in \cbvNoS_\indexOmega$ thus $t = \abs{x}{t'}
            \in \abs{x}{\cbvNoS_\indexOmega} \subseteq
            \cbvNoS_\indexOmega$.
        \end{itemize}

    \item[\bltI] $t = \app{t_1}{t_2}$: Then $\neg\cbvAbsPred{t}$ and
        $\neg\cbvVarPred{t}$. Moreover, $t_1$ and $t_2$ are
        necessarily $\cbvSymbSurface_\indexK$-normal forms and
        $\neg\cbvAbsPred{t_1}$. By \ih on $t_2$, one has that $t_2 \in
        \cbvNoS_\indexK$. We distinguish two cases:
        \begin{itemize}
        \item[\bltII] $\cbvVarPred{t_1}$: Then, by \ih on $t_1$, one
            has that $t_1 \in \cbvVrS_\indexK$, thus $t =
            \app{t_1}{t_2} \in \app{\cbvVrS_\indexK}{\cbvNoS_\indexK}
            \subseteq \cbvNeS_\indexK$.

        \item[\bltII] $\neg\cbvVarPred{t_1}$: Then, by \ih on $t_1$,
            one has that $t_1 \in \cbvNeS_\indexK$, thus $t =
            \app{t_1}{t_2} \in \app{\cbvNeS_\indexK}{\cbvNoS_\indexK}
            \subseteq \cbvNeS_\indexK$.
        \end{itemize}

    \item[\bltI] $t = t_1\esub{x}{t_2}$: Then both $t_1$ and $t_2$ are
        necessarily $\cbvSymbSurface_\indexK$-normal forms and,
        $\neg\cbvAbsPred{t_2}$ and $\neg\cbvVarPred{t_2}$. By \ih on
        $t_2$, one has that $t_2 \in \cbvNeS_\indexK$. We distinguish
        three cases:
        \begin{itemize}
        \item[\bltII] $\cbvVarPred{t_1}$: Then, by \ih on $t_1$, one
            has that $t_1 \in \cbvVrS_\indexK$ thus $t =
            t_1\esub{x}{t_2} \in
            \cbvVrS_\indexK\esub{x}{\cbvNeS_\indexK} \subseteq
            \cbvVrS_\indexK$.

        \item[\bltII] $\neg\cbvVarPred{t_1}$ and
            $\neg\cbvAbsPred{t_1}$: Then, by \ih on $t_1$, one has
            that $t_1 \in \cbvNeS_\indexK$ thus $t = t_1\esub{x}{t_2}
            \in \cbvNeS_\indexK\esub{x}{\cbvNeS_\indexK} \subseteq
            \cbvNeS_\indexK$.

        \item[\bltII] otherwise: Then, by \ih on $t_1$, one has that
            $t_1 \in \cbvNoS_\indexK$ thus $t = t_1\esub{x}{t_2} \in
            \cbvNoS_\indexK\esub{x}{\cbvNeS_\indexK} \subseteq
            \cbvNoS_\indexK$.
        \end{itemize}
    \end{itemize}
\end{itemize}
\end{proof}
}

\CbvSurfaceTypedGenericity*
\label{prf:t_meaningless_and_Pi_|>_F<t>_==>_Pi'_|>_F<u>}%
\stableProof{
    \begin{proof}
By induction on $\cbvCCtxt$:
\begin{itemize}
\item[\bltI] $\cbvCCtxt = \Hole$: Then $\cbvCCtxt<t> = t$ thus
    $\cbvCCtxt<t>$ is meaningless which is impossible since it
    contradicts \Cref{lem:cbvBKRV_characterizes_meaningfulness}.

\item[\bltI] $\cbvCCtxt = \abs{x}{\cbvCCtxt'}$: Then $\Pi$ is
    necessarily of the following form, for some finite set $I$:
    \begin{equation*}
        \begin{prooftree}
            \hypo{\Pi_i \cbvTrBKRV \Gamma_i, x : \M_i \vdash \cbvCCtxt'<t> : \tau_i}
            \delims{\left(}{\right)_{i \in I}}
            \inferCbvBKRVAbs{\quad +_{i \in I} \Gamma_i \vdash \abs{x}{\cbvCCtxt'<t>} : \mset{\M_i \typeArrow \tau_i}_{i \in  I} \quad}
        \end{prooftree}
    \end{equation*}
    with $\Gamma = +_{i \in I} \Gamma_i$ and $\sigma = \mset{\M_i
    \typeArrow \tau_i}_{i \in I}$. For each $i \in I$, by \ih on
    $\Pi_i$, one has $\Pi'_i \cbvTrBKRV \Gamma_i, x : \M_i \cbvTrBKRV
    \cbvCCtxt'<u> : \tau_i$ and we set $\Pi'$ as the following
    derivation:
    \begin{equation*}
        \begin{prooftree}
            \hypo{\Pi'_i \cbvTrBKRV \Gamma_i, x : \M_i \vdash \cbvCCtxt'<u> : \tau_i}
            \delims{\left(}{\right)_{i \in I}}
            \inferCbvBKRVAbs{\quad +_{i \in I} \Gamma_i \vdash \abs{x}{\cbvCCtxt'<u>} : \mset{\M_i \typeArrow \tau_i}_{i \in  I} \quad}
        \end{prooftree}
    \end{equation*}

\item[\bltI] $\cbvCCtxt = \app{\cbvCCtxt'}{s}$: Then $\Pi$ is
    necessarily of the following form:
    \begin{equation*}
        \begin{prooftree}
            \hypo{\Pi_1 \cbvTrBKRV \Gamma_1 \vdash \cbvCCtxt'<t> : \mset{\M \typeArrow \sigma}}
            \hypo{\Pi_2 \cbvTrBKRV \Gamma_2 \vdash s : \M}
            \inferCbvBKRVApp{\Gamma_1 + \Gamma_2 \vdash \app{\cbvCCtxt'<t>}{s} : \sigma}
        \end{prooftree}
    \end{equation*}
    with $\Gamma = \Gamma_1 + \Gamma_2$. By \ih on $\Pi_1$, one has
    $\Pi'_1 \cbvTrBKRV \Gamma_1 \vdash \cbvCCtxt'<u> : \mset{\M
    \typeArrow \sigma}$ and we set $\Pi'$ as the following derivation:
    \begin{equation*}
        \begin{prooftree}
            \hypo{\Pi'_1 \cbvTrBKRV \Gamma_1 \vdash \cbvCCtxt'<u> : \mset{\M \typeArrow \sigma}}
            \hypo{\Pi_2 \cbvTrBKRV \Gamma_2 \vdash s : \M}
            \inferCbvBKRVApp{\Gamma_1 + \Gamma_2 \vdash \app{\cbvCCtxt'<u>}{s} : \sigma}
        \end{prooftree}
    \end{equation*}

\item[\bltI] $\cbvCCtxt = \app[\,]{s}{\cbvCCtxt'}$: Then $\Pi$ is
    necessarily of the following form:
    \begin{equation*}
        \begin{prooftree}
            \hypo{\Pi_1 \cbvTrBKRV \Gamma_1 \vdash s : \mset{\M \typeArrow \sigma}}
            \hypo{\Pi_2 \cbvTrBKRV \Gamma_2 \vdash \cbvCCtxt'<t> : \M}
            \inferCbvBKRVApp{\Gamma_1 + \Gamma_2 \vdash \app{s}{\cbvCCtxt'<t>} : \sigma}
        \end{prooftree}
    \end{equation*}
    with $\Gamma = \Gamma_1 + \Gamma_2$. By \ih on $\Pi_2$, one has
    $\Pi'_2 \cbvTrBKRV \Gamma_2 \vdash \cbvCCtxt'<u> : \M$ and we set
    $\Pi'$ as the following derivation:
    \begin{equation*}
        \begin{prooftree}
            \hypo{\Pi_1 \cbvTrBKRV \Gamma_1 \vdash s : \mset{\M \typeArrow \sigma}}
            \hypo{\Pi'_2 \cbvTrBKRV \Gamma_2 \vdash \cbvCCtxt'<u> : \M}
            \inferCbvBKRVApp{\Gamma_1 + \Gamma_2 \vdash \app{s}{\cbvCCtxt'<u>} : \sigma}
        \end{prooftree}
    \end{equation*}

\item[\bltI] $\cbvCCtxt = \cbvCCtxt'\esub{x}{s}$: Then $\Pi$ is
    necessarily of the following form:
    \begin{equation*}
        \begin{prooftree}
            \hypo{\Pi_1 \cbvTrBKRV \Gamma_1, x : \M \vdash \cbvCCtxt'<t> : \sigma}
            \hypo{\Pi_2 \cbvTrBKRV \Gamma_2 \vdash s : \M}
            \inferCbvBKRVEs{\Gamma_1 + \Gamma_2 \vdash \cbvCCtxt'<t>\esub{x}{s} : \sigma}
        \end{prooftree}
    \end{equation*}
    with $\Gamma = \Gamma_1 + \Gamma_2$. By \ih on $\Pi_1$, one has
    $\Pi'_1 \cbvTrBKRV \Gamma_1, x : \M \vdash \cbvCCtxt'<u> : \sigma$
    and we set $\Pi'$ as the following derivation:
    \begin{equation*}
        \begin{prooftree}
            \hypo{\Pi'_1 \cbvTrBKRV \Gamma_1, x : \M \vdash \cbvCCtxt'<u> : \sigma}
            \hypo{\Pi_2 \cbvTrBKRV \Gamma_2 \vdash s : \M}
            \inferCbvBKRVEs{\Gamma_1 + \Gamma_2 \vdash \cbvCCtxt'<u>\esub{x}{s} : \sigma}
        \end{prooftree}
    \end{equation*}

\item[\bltI] $\cbvCCtxt = s\esub{x}{\cbvCCtxt'}$: Then $\Pi$ is
    necessarily of the following form:
    \begin{equation*}
        \begin{prooftree}
            \hypo{\Pi_1 \cbvTrBKRV \Gamma_1, x : \M \vdash s : \sigma}
            \hypo{\Pi_2 \cbvTrBKRV \Gamma_2 \vdash \cbvCCtxt'<t> : \M}
            \inferCbvBKRVEs{\Gamma_1 + \Gamma_2 \vdash s\esub{x}{\cbvCCtxt'<t>} : \sigma}
        \end{prooftree}
    \end{equation*}
    with $\Gamma = \Gamma_1 + \Gamma_2$. By \ih on $\Pi_2$, one has
    $\Pi_1 \cbvTrBKRV \Gamma_1, x : \M \vdash \cbvCCtxt'<u> : \sigma$
    and we set $\Pi'$ as the following derivation:
    \begin{equation*}
        \begin{prooftree}
            \hypo{\Pi_1 \cbvTrBKRV \Gamma_1, x : \M \vdash s : \sigma}
            \hypo{\Pi'_2 \cbvTrBKRV \Gamma_2 \vdash \cbvCCtxt'<u> : \M}
            \inferCbvBKRVEs{\Gamma_1 + \Gamma_2 \vdash s\esub{x}{\cbvCCtxt'<u>} : \sigma}
        \end{prooftree}
    \end{equation*}
\end{itemize}
\end{proof}
}%

\section{Proofs of Section \ref{sec:Cbv_Stratified_Genericity}}

\subsection{Technical Lemmas}

\begin{lemma}
    \label{lem:cbv_t1_<=_t2_and_v1_<=_v2_==>_t1{x:=v1}_<=_t2{x:=v2}}%
    Let $\aprxT_1, \aprxT_2 \in \setCbvAprxTerms$ and $\aprxV_1,
    \aprxV_2 \in \setCbvAprxValues$ such that $\aprxT_1 \cbvAprxLeq
    \aprxT_2$ and $\aprxV_1 \cbvAprxLeq \aprxV_2$, then
    $\aprxT_1\isub{x}{\aprxV_1} \cbvAprxLeq
    \aprxT_2\isub{x}{\aprxV_2}$.
\end{lemma}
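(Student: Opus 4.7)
The plan is a routine structural induction on $\aprxT_1$, based on an inductive characterization of $\cbvAprxLeq$: namely, $\aprxT_1 \cbvAprxLeq \aprxT_2$ iff either $\aprxT_1 = \cbvAprxBot$, or $\aprxT_1 = \aprxT_2$ is a variable, or both terms share the same outer constructor (abstraction, application, or ES) with sub-components pointwise related by $\cbvAprxLeq$. This follows at once from the definition of $\cbvAprxLeq$ as the reflexive contextual closure of $\cbvAprxBot \cbvAprxLeq \aprxT$, and it essentially fixes the shape of $\aprxT_2$ given that of $\aprxT_1$.

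The base cases are immediate. If $\aprxT_1 = \cbvAprxBot$, then $\aprxT_1\isub{x}{\aprxV_1} = \cbvAprxBot \cbvAprxLeq \aprxT_2\isub{x}{\aprxV_2}$ by definition. If $\aprxT_1 = y$ with $y \neq x$, the characterization forces $\aprxT_2 = y$, and both sides reduce to $y$ after substitution, which is related to itself by reflexivity. If $\aprxT_1 = x$, then $\aprxT_2 = x$ as well, and the two substituted terms are precisely $\aprxV_1$ and $\aprxV_2$, related by the second hypothesis.

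For the inductive cases $\aprxT_1 = \app[\,]{\aprxT'_1}{\aprxU'_1}$, $\aprxT_1 = \aprxT'_1\esub{y}{\aprxU'_1}$, and $\aprxT_1 = \abs{y}{\aprxT'_1}$, the characterization gives $\aprxT_2$ with matching outer shape and sub-components related by $\cbvAprxLeq$. Two (or one, in the abstraction case) applications of the induction hypothesis, followed by the congruence of $\cbvAprxLeq$ under full contexts, conclude. The only delicate point is in the binder cases (abstraction and ES), where one uses $\alpha$-conversion to pick the bound variable $y$ fresh for $x$ and for $\freeVar{\aprxV_1} \cup \freeVar{\aprxV_2}$, so that the meta-level substitution propagates under the binder without capture and the induction hypothesis applies verbatim to the bodies. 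No genuine obstacle is expected: this is a standard congruence lemma for substitution over a partial order in which $\cbvAprxBot$ acts as an absorbing bottom element.
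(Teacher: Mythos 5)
Your proposal is correct and follows essentially the same route as the paper's proof: a structural induction on $\aprxT_1$ using the fact that $\aprxT_1 \cbvAprxLeq \aprxT_2$ forces $\aprxT_2$ to share the outer constructor (unless $\aprxT_1 = \cbvAprxBot$), with $\alpha$-conversion to keep the bound variable fresh for $x$, $\freeVar{\aprxV_1}$ and $\freeVar{\aprxV_2}$ in the binder cases, and contextual closure of $\cbvAprxLeq$ to conclude. No gaps.
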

\stableProof{
    \begin{proof}
By induction on $\aprxT_1$:
\begin{itemize}
\item[\bltI] $\aprxT_1 = \cbvAprxBot$: Then
	$\aprxT_1\isub{x}{\aprxV_1} = \cbvAprxBot \cbvAprxLeq
	\aprxT_2\isub{x}{\aprxV_2}$.

\item[\bltI] $\aprxT_1 = y$: Then $\aprxT_2 = y$ and we distinguish
    two cases:
    \begin{itemize}
    \item[\bltII] $x = y$: Thus $\aprxT_1\isub{x}{\aprxV_1} = \aprxV_1
        \cbvAprxLeq \aprxV_2 = \aprxT_2\isub{x}{\aprxV_2}$.

    \item[\bltII] $x \neq y$: Thus $\aprxT_1\isub{x}{\aprxV_1} = y =
        \aprxT_2\isub{x}{\aprxV_2}$, thus $\aprxT_1\isub{x}{\aprxV_1}
        \cbvAprxLeq \aprxT_2\isub{x}{\aprxV_2}$ by reflexivity of
        $\cbvAprxLeq$.
    \end{itemize}

\item[\bltI] $\aprxT_1 = \abs{y}{\aprxT'_1}$: Then
	$\aprxT_1\isub{x}{\aprxV_1} =
	\abs{y}{\aprxT'_1\isub{x}{\aprxV_1}}$ (by supposing $y \notin
	\freeVar{\aprxV_1} \cup \freeVar{\aprxV_2}$ without loss of
	generality) and necessarily $\aprxT_2 = \abs{y}{\aprxT'_2}$ with
	$\aprxT'_1 \cbvAprxLeq \aprxT'_2$ thus $\aprxT_2\isub{x}{\aprxV_2}
	= \abs{y}{\aprxT'_2\isub{y}{\aprxV_2}}$. By \ih on $\aprxT'_1$,
	one has that $\aprxT'_1\isub{x}{\aprxV_1} \cbvAprxLeq
	\aprxT'_2\isub{x}{\aprxV_2}$ hence $\aprxT_1\isub{x}{\aprxV_1} =
	\abs{y}{\aprxT'_1\isub{x}{\aprxV_1}} \cbvAprxLeq
	\abs{y}{\aprxT'_2\isub{x}{\aprxV_2}} = \aprxT_2\isub{x}{\aprxV_2}$
	by contextual closure.

\item[\bltI] $\aprxT_1 = \app{\aprxT_1^1}{\aprxT_1^2}$: Then
	$\aprxT_1\isub{x}{\aprxV_1} =
	\app{(\aprxT_1^1\isub{x}{\aprxV_1})}{(\aprxT_1^2\isub{x}{\aprxV_1})}$
	and necessarily $\aprxT_2 = \app[\,]{\aprxT_2^1}{\aprxT_2^2}$ with
	$\aprxT_1^1 \cbvAprxLeq \aprxT_2^1$ and $\aprxT_1^2 \cbvAprxLeq
	\aprxT_2^2$, thus $\aprxT_2\isub{y}{\aprxV_2} =
	\app{(\aprxT_2^1\isub{x}{\aprxV_2})}{(\aprxT_2^2\isub{x}{\aprxV_2})}$.
	By \ih on $\aprxT_1^1$ and $\aprxT_1^2$, one obtains that
	$\aprxT_1^1\isub{x}{\aprxV_1} \cbvAprxLeq
	\aprxT_2^1\isub{x}{\aprxV_2}$ and $\aprxT_1^2\isub{x}{\aprxV_1}
	\cbvAprxLeq \aprxT_2^2\isub{x}{\aprxV_2}$, hence
	$\aprxT_1\isub{x}{\aprxV_1} =
	\app{(\aprxT_1^1\isub{x}{\aprxV_1})}{(\aprxT_1^2\isub{x}{\aprxV_1})}
	\cbvAprxLeq
	\app{(\aprxT_2^1\isub{x}{\aprxV_2})}{(\aprxT_2^2\isub{x}{\aprxV_2})}
	= \aprxT_2\isub{x}{\aprxV_2}$ by contextual closure.

\item[\bltI] $\aprxT_1 = \aprxT_1^1\esub{y}{\aprxT_1^2}$: Then
	$\aprxT_1\isub{x}{\aprxV_1} =
	\aprxT_1^1\isub{x}{\aprxV_1}\esub{y}{\aprxT_1^2\isub{x}{\aprxV_1}}$
	and necessarily $\aprxT_2 = \aprxT_2^1\esub{y}{\aprxT_2^2}$ with
	$\aprxT_1^1 \cbvAprxLeq \aprxT_2^1$ and $\aprxT_1^2 \cbvAprxLeq
	\aprxT_2^2$, thus $\aprxT_2\isub{y}{\aprxV_2} =
	\aprxT_2^1\isub{x}{\aprxV_2}\esub{y}{\aprxT_2^2\isub{x}{\aprxV_2}}$.
	By \ih on $\aprxT_1^1$ and $\aprxT_1^2$, one has
	$\aprxT_1^1\isub{x}{\aprxV_1} \cbvAprxLeq
	\aprxT_2^1\isub{x}{\aprxV_2}$ and $\aprxT_1^2\isub{x}{\aprxV_1}
	\cbvAprxLeq \aprxT_2^2\isub{x}{\aprxV_2}$, so by contextual
	closure $\aprxT_1\isub{x}{\aprxV_1} =
	\aprxT_1^1\isub{x}{\aprxV_1}\esub{y}{\aprxT_1^2\isub{x}{\aprxV_1}}
	\cbvAprxLeq
	\aprxT_2^1\isub{x}{\aprxV_2}\esub{y}{\aprxT_2^2\isub{x}{\aprxV_2}}
	= \aprxT_2\isub{x}{\aprxV_2}$.
	\qedhere
\end{itemize}
\end{proof}
}

\begin{definition}
    The partial order $\cbvAprxLeq$ is extended to \CBVSymb contexts
    by setting $\Hole \cbvAprxLeq \Hole$.
\end{definition}

For example,
$\app{\Hole\esub{x}{\abs{y}{\cbvAprxBot}}}{(\app{z}{\cbvAprxBot})}
\cbvAprxLeq
\app{\Hole\esub{x}{\abs{y}{(\app{z}{z})}}}{(\app{z}{\cbvAprxBot})}$.

\begin{lemma}
    \label{lem:cbv_L1_<=_L2_and_t1_<=_t2_==>_L1<t1>_<=_L2<t2>}%
    Let $\cbvAprxLCtxt_1 \cbvAprxLeq \cbvAprxLCtxt_2$ and $\aprxT_1
    \cbvAprxLeq \aprxT_2$ for some list contexts $\cbvAprxLCtxt_1,
    \cbvAprxLCtxt_2$ and some $\aprxT_1, \aprxT_2 \in
    \setCbvAprxTerms$, then $\cbvAprxLCtxt_1<\aprxT_1> \cbvAprxLeq
    \cbvAprxLCtxt_2<\aprxT_2>$.
\end{lemma}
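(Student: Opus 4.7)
The plan is to proceed by induction on the structure of the list context $\cbvAprxLCtxt_1$, leveraging the fact that $\cbvAprxLeq$ is closed under \tPartial\ full contexts (by definition as the contextual closure of $\cbvAprxBot \cbvAprxLeq \aprxT$), combined with the auxiliary convention that $\Hole \cbvAprxLeq \Hole$ makes holes match only with holes. The key observation, which drives the induction, is a \emph{shape-preservation} lemma for list contexts: if $\cbvAprxLCtxt_1 \cbvAprxLeq \cbvAprxLCtxt_2$, then either both are $\Hole$, or else $\cbvAprxLCtxt_1 = \cbvAprxLCtxt'_1\esub{x}{\aprxU_1}$ and $\cbvAprxLCtxt_2 = \cbvAprxLCtxt'_2\esub{x}{\aprxU_2}$ with $\cbvAprxLCtxt'_1 \cbvAprxLeq \cbvAprxLCtxt'_2$ and $\aprxU_1 \cbvAprxLeq \aprxU_2$. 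This holds because $\Hole$ is treated as an atom that is only related to itself, and $\cbvAprxBot$ is not a list context, so the only way to build $\cbvAprxLCtxt_1 \cbvAprxLeq \cbvAprxLCtxt_2$ by the context closure is to follow the grammar of list contexts symmetrically.

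The base case is $\cbvAprxLCtxt_1 = \Hole$: by the shape observation, $\cbvAprxLCtxt_2 = \Hole$ as well, so $\cbvAprxLCtxt_1<\aprxT_1> = \aprxT_1 \cbvAprxLeq \aprxT_2 = \cbvAprxLCtxt_2<\aprxT_2>$ directly from the hypothesis on $\aprxT_1, \aprxT_2$.

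In the inductive step $\cbvAprxLCtxt_1 = \cbvAprxLCtxt'_1\esub{x}{\aprxU_1}$, the shape observation gives $\cbvAprxLCtxt_2 = \cbvAprxLCtxt'_2\esub{x}{\aprxU_2}$ with $\cbvAprxLCtxt'_1 \cbvAprxLeq \cbvAprxLCtxt'_2$ and $\aprxU_1 \cbvAprxLeq \aprxU_2$. By the induction hypothesis applied to $\cbvAprxLCtxt'_1, \cbvAprxLCtxt'_2, \aprxT_1, \aprxT_2$, we obtain $\cbvAprxLCtxt'_1<\aprxT_1> \cbvAprxLeq \cbvAprxLCtxt'_2<\aprxT_2>$. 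Then contextual closure of $\cbvAprxLeq$ (applied with the ES context $\Hole\esub{x}{\cdot}$ on the right, and $\Hole$ on the left) yields
\[
\cbvAprxLCtxt_1<\aprxT_1> \;=\; \cbvAprxLCtxt'_1<\aprxT_1>\esub{x}{\aprxU_1} \;\cbvAprxLeq\; \cbvAprxLCtxt'_2<\aprxT_2>\esub{x}{\aprxU_2} \;=\; \cbvAprxLCtxt_2<\aprxT_2>,
\]
which concludes the induction.

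The only real subtlety (not an obstacle, but worth stating explicitly) is justifying the shape-preservation observation at the very beginning, since $\cbvAprxLeq$ on contexts is inherited via its (partial) definition by context closure from $\Hole \cbvAprxLeq \Hole$ and $\cbvAprxBot \cbvAprxLeq \aprxT$; a short derivation-induction on $\cbvAprxLCtxt_1 \cbvAprxLeq \cbvAprxLCtxt_2$ discharges it, after which the main proof is entirely routine.
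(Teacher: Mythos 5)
Your proof is correct and follows essentially the same route as the paper's: induction on the structure of $\cbvAprxLCtxt_1$, observing in each case that $\cbvAprxLCtxt_2$ must have the matching shape (the paper simply writes ``necessarily'' where you spell out the shape-preservation observation), and closing the inductive step by contextual closure of $\cbvAprxLeq$. Your explicit remark that the shape-preservation step deserves its own small derivation-induction is a reasonable refinement of what the paper leaves implicit, but it does not change the argument.
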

\stableProof{
    \begin{proof}
By induction on $\cbvAprxLCtxt_1$:
\begin{itemize}
\item[\bltI] $\cbvAprxLCtxt_1 = \Hole$: Then necessarily
    $\cbvAprxLCtxt_2 = \Hole$ thus $\cbvAprxLCtxt_1<\aprxT_1> =
    \aprxT_1 \cbvAprxLeq \aprxT_2 = \cbvAprxLCtxt_2<\aprxT_2>$.

\item[\bltI] $\cbvAprxLCtxt_1 = \cbvAprxLCtxt'_1\esub{x}{\aprxU_1}$:
    Then necessarily $\cbvAprxLCtxt_2 =
    \cbvAprxLCtxt'_2\esub{x}{\aprxU_2}$ with $\cbvAprxLCtxt'_1
    \cbvAprxLeq \cbvAprxLCtxt'_2$ and $\aprxU_1 \cbvAprxLeq \aprxU_2$.
    By \ih on $\cbvAprxLCtxt'_1$, one has that
    $\cbvAprxLCtxt'_1<\aprxT_1> \cbvAprxLeq
    \cbvAprxLCtxt'_2<\aprxT_2>$ thus $\cbvAprxLCtxt_1<\aprxT_1> =
    \cbvAprxLCtxt'_1<\aprxT_1>\esub{x}{\aprxU_1} \cbvAprxLeq
    \cbvAprxLCtxt'_2<\aprxT_2>\esub{x}{\aprxU_2} =
    \cbvAprxLCtxt_2<\aprxT_2>$ by contextual closure.
    \qedhere
\end{itemize}
\end{proof}
}

\begin{lemma}
    \label{lem:cbv_t_meaningful_and_t_->Sn_u_=>_u_meaningful}%
    Let $t, u \in \setCbvTerms$ such that $t \cbvArr_{S_\indexK} u$.
    Then, $t$ is \CBVSymb-meaningful if and only if $u$ is
    \CBVSymb-meaningful.
\end{lemma}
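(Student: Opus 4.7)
The plan is to reduce this to the logical characterization of meaningfulness (\Cref{lem:cbvBKRV_characterizes_meaningfulness}) and to the Subject Reduction and Subject Expansion properties of the type system $\cbvTypeSysBKRV$, both of which are already established in~\cite{BKRV20}.

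First, I would observe that $\cbvArr_{S_\indexK} \subseteq \cbvArr_{S_\indexOmega}$ for every $\indexK \in \setOrdinals$ by definition of the stratified reductions (each $\cbvSCtxt_\indexK$ embeds into $\cbvSCtxt_\indexOmega$). Hence it suffices to prove the statement for a single full reduction step $t \cbvArr_{S_\indexOmega} u$, and we can then specialize to any smaller level.

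Next, I would invoke the logical characterization: by \Cref{lem:cbvBKRV_characterizes_meaningfulness}, $t$ is \CBVSymb-meaningful iff there exist $\Gamma, \sigma$ with $\Pi \cbvTrBKRV \Gamma \vdash t : \sigma$, and similarly for $u$. The $(\Rightarrow)$ direction then follows from \emph{Subject Reduction} for $\cbvTypeSysBKRV$ with respect to full reduction~\cite{BKRV20}: from $\Pi \cbvTrBKRV \Gamma \vdash t : \sigma$ and $t \cbvArr_{S_\indexOmega} u$ we obtain a derivation $\Pi' \cbvTrBKRV \Gamma \vdash u : \sigma$, so $u$ is $\cbvTypeSysBKRV$-typable and thus meaningful. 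The $(\Leftarrow)$ direction follows symmetrically from \emph{Subject Expansion} for $\cbvTypeSysBKRV$ with respect to full reduction~\cite{BKRV20}: from $\Pi' \cbvTrBKRV \Gamma \vdash u : \sigma$ and $t \cbvArr_{S_\indexOmega} u$ we reconstruct a derivation for $t$, giving its meaningfulness.

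I do not expect any real obstacle: both Subject Reduction and Subject Expansion for $\cbvTypeSysBKRV$ against full reduction are part of the existing quantitative type-theoretic toolkit for \CBVSymb, so the proof is essentially a three-line composition of cited results. The only mildly delicate point is that one must use the \emph{full} versions of Subject Reduction/Expansion (rather than only their surface counterparts), since the hypothesis $t \cbvArr_{S_\indexK} u$ may be a deep step; this is precisely why the reduction to $\cbvArr_{S_\indexOmega}$ via the inclusion $\cbvArr_{S_\indexK} \subseteq \cbvArr_{S_\indexOmega}$ is the first move of the proof.
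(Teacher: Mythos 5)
Your proof is correct and follows essentially the same route as the paper: reduce $\cbvArr_{S_\indexK}$ to $\cbvArr_{S_\indexOmega}$ via the inclusion of stratified into full reduction, then apply full Subject Reduction and Expansion for $\cbvTypeSysBKRV$ together with the logical characterization of meaningfulness (\Cref{lem:cbvBKRV_characterizes_meaningfulness}). The only cosmetic difference is the citation for subject reduction/expansion (the paper points to \cite{AccattoliGuerrieri22bis} here), which does not affect the argument.
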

\stableProof{
\begin{proof}
According to qualitative subject reduction and expansion \cite[Prop.
7.4]{AccattoliGuerrieri22bis}, given $t \cbvArr_{S_\indexOmega} u$,
there is a derivation $\Pi \cbvTrBKRV \Gamma \vdash t : \sigma$ if and
only if there is a derivation $\Pi' \cbvTrBKRV \Gamma \vdash u :
\sigma$. As $\cbvArr_{S_\indexK} \subseteq \cbvArr_{S_\indexOmega}$
for all $\indexK \in \setIntegers$, the same equivalence also holds if
we replace the hypothesis $t \cbvArr_{S_\indexOmega} u$ with $t
\cbvArr_{S_\indexK} u$, for all $\indexK \in \setIntegers$. We
conclude thanks to the type-theoretical characterization of
\CBVSymb-meaningfulness
(\Cref{lem:cbvBKRV_characterizes_meaningfulness}).
\end{proof}
}

\begin{lemma}
    \label{lem:Cbv_solvability_on_app_and_es}%
    Let $t, u \in \setCbvTerms$ and $v \in \setCbvValues$. Then
    \begin{enumerate}
    \item If $\app{t}{u}$ is \CBVSymb-meaningful then $t$ and $u$ are
        \CBVSymb-meaningful.%
        \label{lem:cbv_meaningful_on_app}

    \item If $t\esub{x}{u}$ is \CBVSymb-meaningful then $t$ and $u$
        are \CBVSymb-meaningful.%
        \label{lem:cbv_meaningful_on_esub}

    \item If $t\isub{x}{v}$ is \CBVSymb-meaningful then $t$ is
        \CBVSymb-meaningful.%
        \label{lem:cbv_meaningful_on_isub}
    \end{enumerate}
\end{lemma}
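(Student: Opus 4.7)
The natural approach is to invoke the logical characterization from \Cref{lem:cbvBKRV_characterizes_meaningfulness}: a term is \CBVSymb-meaningful if and only if it is $\cbvTypeSysBKRV$-typable. All three items then reduce to statements about the propagation of typability, and two of them are immediate by inversion.

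For items (1) and (2) the plan is a direct inversion on the last typing rule. Assuming $\app{t}{u}$ is \CBVSymb-meaningful, we obtain $\Pi \cbvTrBKRV \Gamma \vdash \app{t}{u} : \sigma$ for some $\Gamma, \sigma$. The syntactic shape of $\app{t}{u}$ forces $\Pi$ to end with the application rule, exposing subderivations $\Pi_1 \cbvTrBKRV \Gamma_1 \vdash t : \mset{\M \typeArrow \sigma}$ and $\Pi_2 \cbvTrBKRV \Gamma_2 \vdash u : \M$, which directly witness typability of $t$ and of $u$, hence \CBVSymb-meaningfulness via \Cref{lem:cbvBKRV_characterizes_meaningfulness}. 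The argument for (2) is the strict analogue: any derivation of $t\esub{x}{u}$ must end with the ES typing rule, exposing subderivations for both $t$ and $u$ with the same conclusion.

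For item (3) the plan is to route through item (2) via a single reduction step. Since $v \in \setCbvValues$, the closure $t\esub{x}{v}$ is itself a $\cbvSymbSubs$-redex (take the empty list context $\cbvLCtxt = \Hole$), so $t\esub{x}{v} \cbvArr_{S_0} t\isub{x}{v}$ is a surface step. By \Cref{lem:cbv_t_meaningful_and_t_->Sn_u_=>_u_meaningful}, \CBVSymb-meaningfulness is preserved and reflected by reduction, so $t\isub{x}{v}$ meaningful yields $t\esub{x}{v}$ meaningful, whence item~(2) delivers $t$ meaningful. I do not anticipate any substantive obstacle: the inversions in (1) and (2) are forced because applications, closures, and abstractions are disjoint syntactic categories each produced by a unique typing rule, and (3) is then purely compositional. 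The one subtlety worth flagging is that the subderivation yielded by the inversion assigns $u$ the multiset type $\M$, which may be empty; this is nonetheless a legitimate derivation witnessing typability, since the grammar of $\cbvTypeSysBKRV$ explicitly includes multisets among the types $\sigma$, and deriving $\Gamma_2 \vdash u : \mset{}$ is in general non-trivial (e.g.\ impossible for $u = \Omega$, which is not $\cbvTypeSysBKRV$-typable).
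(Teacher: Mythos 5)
Your proof is correct, but it takes a genuinely different route from the paper's. You work through the \emph{logical} characterization of \Cref{lem:cbvBKRV_characterizes_meaningfulness}: items (1) and (2) follow by inversion on the last typing rule of a derivation for $\app{t}{u}$ or $t\esub{x}{u}$ (which is indeed forced by the syntactic shape of the term, since the system has no structural rules), and item (3) is reduced to item (2) via the surface step $t\esub{x}{v} \cbvArr_{S_0} t\isub{x}{v}$ together with subject reduction/expansion. The paper instead works through the \emph{operational} characterization: assuming the subterm meaningless, it builds an infinite surface reduction sequence of the compound term by always contracting inside that subterm, and derives a contradiction with surface normalization via the diamond property of $\cbvArr_{S_0}$; for item (3) it shows by induction that a surface step of $t$ lifts to a surface step of $t\isub{x}{v}$. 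Your approach buys a shorter argument that avoids both the diamond property and the substitution-lifting lemma, at the cost of leaning on the full strength of the typability characterization; the paper's stays entirely on the rewriting side. Your flag about the possibly empty multiset $\M$ in the subderivation for $u$ is well taken and correctly resolved: a derivation of $\Gamma_2 \vdash u : \mset{}$ is still a typability witness, and by \Cref{lem:cbvBKRV_characterizes_meaningfulness} that suffices for meaningfulness (indeed $\Omega$ admits no such derivation, while $\abs{z}{\Omega}$ does).
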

\stableProof{
\begin{proof}
\begin{enumerate}
\item Let $\app{t}{u}$ be \CBVSymb-meaningful and suppose by absurd
    that $t$ is \CBVSymb-meaningless. Using
    \Cref{lem:cbvBKRV_characterizes_meaningfulness}, we know that
    $\app{t}{u}$ \tSurface normalizes. Since $t$ is
    \CBVSymb-meaningless, there exists a $(t_i)_{i \in \mathbb{N}}$
    with $t_0 = t$ and $t_i \cbvArr_{S_0} t_{i+1}$ yielding the
    infinite path $\app{t}{u} \cbvArr_{S_0} \app{t_1}{u} \cbvArr_{S_0}
    \app{t_2}{u} \cbvArr_{S_0} \dots$ obtained by always reducing $t$.
    Since surface reduction is diamond, this contradicts the fact that
    $\app{t}{u}$ is surface normalizing. And similarly for
    $\app{u}{t}$.

\item Same principle as the previous item.

\item Let $t\isub{x}{v}$ be \CBVSymb-meaningful. By induction on $t$,
    one can show that if $t \cbvArr_{S_0} u$ then $t\isub{x}{v}
    \cbvArr_{S_0} u\isub{x}{v}$. Using this, one applied the same
    principles as the two previous points.
\end{enumerate}
\end{proof}
}

\begin{definition}
    Meaningful approximation is extended to list contexts $\cbvLCtxt$
    inductively as follows:
    \begin{equation*}
        \cbvTApprox{\Hole}
            \coloneqq \Hole
    \hspace{2cm}
        \cbvTApprox{\cbvLCtxt\esub{x}{s}}
            \coloneqq
                \cbvTApprox{\cbvLCtxt}\esub{x}{\cbvTApprox{s}}
    \end{equation*}
\end{definition}

\begin{corollary} 
    \label{lem:cbv_MA(L<t>)_=_MA(L)<MA(t)>}%
    Let $\cbvLCtxt<t>$ be \CBVSymb-meaningful, then
    $\cbvTApprox{\cbvLCtxt<t>} =
    \cbvTApprox{\cbvLCtxt}\cbvCtxtPlug{\cbvTApprox{t}}$.
\end{corollary}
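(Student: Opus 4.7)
The plan is to proceed by induction on the structure of the list context $\cbvLCtxt$, using Lemma~\ref{lem:Cbv_solvability_on_app_and_es} (item \ref{lem:cbv_meaningful_on_esub}) to propagate the meaningfulness hypothesis down through the ES that constitute $\cbvLCtxt$. The meaningfulness assumption is crucial because the definition of $\cbvTApprox{(-)}$ is case-split: on meaningless terms it outputs $\cbvAprxBot$, while on meaningful ones it recurses structurally. We must ensure we land in the structural branch at each step.

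In the base case $\cbvLCtxt = \Hole$, there is nothing to do: $\cbvTApprox{\cbvLCtxt<t>} = \cbvTApprox{t} = \Hole\cbvCtxtPlug{\cbvTApprox{t}} = \cbvTApprox{\Hole}\cbvCtxtPlug{\cbvTApprox{t}}$ by the extension of meaningful approximation to list contexts.

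In the inductive step $\cbvLCtxt = \cbvLCtxt'\esub{x}{s}$, the term $\cbvLCtxt<t> = \cbvLCtxt'<t>\esub{x}{s}$ is \CBVSymb-meaningful by hypothesis. Applying Lemma~\ref{lem:Cbv_solvability_on_app_and_es}(\ref{lem:cbv_meaningful_on_esub}), both $\cbvLCtxt'<t>$ and $s$ are \CBVSymb-meaningful. Hence the approximant unfolds by its structural clause: $\cbvTApprox{\cbvLCtxt'<t>\esub{x}{s}} = \cbvTApprox{\cbvLCtxt'<t>}\esub{x}{\cbvTApprox{s}}$. Applying the induction hypothesis to $\cbvLCtxt'$ (legitimate because $\cbvLCtxt'<t>$ is meaningful), we get $\cbvTApprox{\cbvLCtxt'<t>} = \cbvTApprox{\cbvLCtxt'}\cbvCtxtPlug{\cbvTApprox{t}}$, and combining yields $\cbvTApprox{\cbvLCtxt<t>} = \cbvTApprox{\cbvLCtxt'}\cbvCtxtPlug{\cbvTApprox{t}}\esub{x}{\cbvTApprox{s}} = (\cbvTApprox{\cbvLCtxt'}\esub{x}{\cbvTApprox{s}})\cbvCtxtPlug{\cbvTApprox{t}} = \cbvTApprox{\cbvLCtxt}\cbvCtxtPlug{\cbvTApprox{t}}$, again by the extension of $\cbvTApprox{(-)}$ to list contexts.

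No real obstacle is expected: the whole argument is a routine structural induction, and the only nontrivial ingredient is the preservation of meaningfulness through ES given by Lemma~\ref{lem:Cbv_solvability_on_app_and_es}(\ref{lem:cbv_meaningful_on_esub}), which guarantees that the case-splitting definition of $\cbvTApprox{(-)}$ always takes the structural branch rather than collapsing to $\cbvAprxBot$.
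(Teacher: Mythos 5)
Your proposal is correct and follows essentially the same route as the paper's own proof: structural induction on the list context, with Lemma~\ref{lem:Cbv_solvability_on_app_and_es}(\ref{lem:cbv_meaningful_on_esub}) used to propagate meaningfulness to $\cbvLCtxt'\langle t\rangle$ so that both the structural unfolding of $\cbvTApprox{(-)}$ and the induction hypothesis apply. No gaps.
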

\stableProof{
    \begin{proof}
By induction on $\cbvLCtxt$:
\begin{itemize}
\item[\bltI] $\cbvLCtxt = \Hole$: Then $\cbvTApprox{\cbvLCtxt<t>} =
    \cbvTApprox{t} = \Hole\cbvCtxtPlug{\cbvTApprox{t}} =
    \cbvTApprox{\Hole}\cbvCtxtPlug{\cbvTApprox{t}}$.

\item[\bltI] $\cbvLCtxt = \cbvLCtxt'\esub{x}{u}$: Then $\cbvLCtxt<t> =
    \cbvLCtxt'<t>\esub{x}{u}$ is \CBVSymb-meaningful so that
    $\cbvTApprox{\cbvLCtxt<t>} =
    \cbvTApprox{\cbvLCtxt'<t>}\esub{x}{\cbvTApprox{u}}$. Moreover,
    using
    \Cref{lem:Cbv_solvability_on_app_and_es}.(\ref{lem:cbv_meaningful_on_esub}),
    one has that $\cbvLCtxt'<t>$ is \CBVSymb-meaningful. By \ih on
    $\cbvLCtxt'$, one obtains that $\cbvTApprox{\cbvLCtxt'<t>} =
    \cbvTApprox{\cbvLCtxt'}\cbvCtxtPlug{\cbvTApprox{t}}$. Then:
    \begin{equation*}
        \begin{array}{r cl cl}
            \cbvTApprox{\cbvLCtxt<t>}
                &=& \cbvTApprox{\cbvLCtxt'<t>}\esub{x}{\cbvTApprox{u}}
        \\
                &\eqih& \cbvTApprox{\cbvLCtxt'}\cbvCtxtPlug{\cbvTApprox{t}}\esub{x}{\cbvTApprox{u}}
        \\
                &=& \cbvTApprox{\cbvLCtxt'}\esub{x}{\cbvTApprox{u}}\cbvCtxtPlug{\cbvTApprox{t}}
        \\
                &=& \cbvTApprox{\cbvLCtxt'\esub{x}{u}}\cbvCtxtPlug{\cbvTApprox{t}}
                &=& \cbvTApprox{\cbvLCtxt}\cbvCtxtPlug{\cbvTApprox{t}}
        \end{array}
    \end{equation*}
    \qedhere
\end{itemize}
\end{proof}
}

\begin{lemma} 
    \label{lem:cbv_MA(t{x:=v})_<=_MA(t){x:=MA(v)}}%
    Let $t \in \setCbvTerms$ and $v \in \setCbvValues$, then
    $\cbvTApprox{t\isub{x}{v}} \cbvAprxLeq
    \cbvTApprox{t}\isub{x}{\cbvTApprox{v}}$.
\end{lemma}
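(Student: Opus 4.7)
The proof proceeds by structural induction on $t$, following the case analysis built into the definition of $\cbvTApprox{\cdot}$. The only subtle point is that meaningfulness is \emph{not} preserved by substitution in the ``forward'' direction: a meaningful term $t$ may become meaningless after substituting a value $v$ for $x$ (e.g.\ $t = xy$ and $v = \abs{z}{\Omega}$), while conversely \Cref{lem:cbv_meaningful_on_isub} tells us that if $t\isub{x}{v}$ is meaningful then $t$ is meaningful. This asymmetry is exactly why we expect a one-sided inequality $\cbvAprxLeq$ rather than equality.

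The base case $t = y$ is a direct computation: if $y = x$ both sides equal $\cbvTApprox{v}$; otherwise both sides equal $y$. For the abstraction case $t = \abs{y}{t'}$, both $t$ and $t\isub{x}{v}$ are values, hence meaningful, so $\cbvTApprox{t\isub{x}{v}} = \abs{y}{\cbvTApprox{t'\isub{x}{v}}}$ and (up to $\alpha$-renaming $y$ away from $x$ and $\freeVar{v}$) $\cbvTApprox{t}\isub{x}{\cbvTApprox{v}} = \abs{y}{\cbvTApprox{t'}\isub{x}{\cbvTApprox{v}}}$; the IH on $t'$ together with contextual closure of $\cbvAprxLeq$ concludes.

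The main cases are the application $t = \app{t_1}{t_2}$ and the explicit substitution $t = t_1\esub{y}{t_2}$, handled analogously. I perform a two-level case split. First, if $t$ is \CBVSymb-meaningless, then $\cbvTApprox{t} = \cbvAprxBot$, so $\cbvTApprox{t}\isub{x}{\cbvTApprox{v}} = \cbvAprxBot$; by the contrapositive of \Cref{lem:cbv_meaningful_on_isub}, $t\isub{x}{v}$ is also meaningless, hence $\cbvTApprox{t\isub{x}{v}} = \cbvAprxBot$ and the two sides are equal. Second, if $t$ is meaningful, then by \Cref{lem:cbv_meaningful_on_app} (\resp \Cref{lem:cbv_meaningful_on_esub}) both $t_1$ and $t_2$ are meaningful, so $\cbvTApprox{t}\isub{x}{\cbvTApprox{v}}$ unfolds structurally. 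I then split again on $t\isub{x}{v}$: if it is meaningless, $\cbvTApprox{t\isub{x}{v}} = \cbvAprxBot \cbvAprxLeq \cbvTApprox{t}\isub{x}{\cbvTApprox{v}}$ since $\cbvAprxBot$ is the minimum of $\cbvAprxLeq$; if it is meaningful, then by the same subterm lemma each $t_i\isub{x}{v}$ is meaningful, so $\cbvTApprox{t\isub{x}{v}}$ also unfolds structurally and the IH on $t_1$ and $t_2$ combined with contextual closure of $\cbvAprxLeq$ yields the result.

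The main obstacle, as indicated, is disciplined case management: one must always check whether each relevant term (both $t$ and $t\isub{x}{v}$, and their subterms) is meaningful before applying the recursive clause of $\cbvTApprox{\cdot}$, and use the correct lemma (\Cref{lem:cbv_meaningful_on_app}, \Cref{lem:cbv_meaningful_on_esub}, or the contrapositive of \Cref{lem:cbv_meaningful_on_isub}) to justify the step. Once the case distinction is laid out, each branch reduces to either a trivial bound by $\cbvAprxBot$ or to contextual closure of $\cbvAprxLeq$ applied to the inductive hypothesis.
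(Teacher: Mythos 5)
Your proof is correct and essentially the same as the paper's: both arguments are a structural induction on $t$ in which the recursive clauses of $\cbvTApprox{\cdot}$ are unfolded under the appropriate meaningfulness hypotheses, $\cbvAprxBot$ handles the meaningless branches, and item (3) of \Cref{lem:Cbv_solvability_on_app_and_es} transfers meaningfulness between $t\isub{x}{v}$ and $t$. The only (immaterial) difference is bookkeeping: the paper splits first on whether $t\isub{x}{v}$ is meaningless and only then inducts, whereas you nest the meaningfulness splits inside each inductive case (also invoking the application/closure subterm lemmas, which are not actually needed to unfold $\cbvTApprox{t}$ — meaningfulness of $t$ itself suffices).
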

\stableProof{
    \begin{proof}
Let $t \in \setCbvTerms$ and $v \in \setCbvValues$. Cases:
\begin{itemize}
\item[\bltI] $t\isub{x}{v}$ is \CBVSymb-meaningless: Then
    $\cbvTApprox{t\isub{x}{v}} = \cbvAprxBot \cbvAprxLeq
    \cbvTApprox{t}\isub{x}{\cbvTApprox{v}}$.

\item[\bltI] $t\isub{x}{v}$ is \CBVSymb-meaningful:
    Using~\Cref{lem:Cbv_solvability_on_app_and_es}.(\ref{lem:cbv_meaningful_on_isub}),
    one deduces that $t$ is \CBVSymb-meaningful. We proceed by
    induction on $t$. Cases:
    \begin{itemize}
    \item[\bltII] $t = y$: Then $\cbvTApprox{t} = y$ and we
        distinguish two cases:
        \begin{itemize}
        \item[\bltIII] $y = x$: Then $\cbvTApprox{t\isub{x}{v}} =
            \cbvTApprox{v} = x\isub{x}{\cbvTApprox{v}} =
            \cbvTApprox{t}\isub{x}{\cbvTApprox{v}}$.

        \item[\bltIII] $y \neq x$:  Then $\cbvTApprox{t\isub{x}{v}} =
            \cbvTApprox{y} = y = y\isub{x}{\cbvTApprox{v}} =
            \cbvTApprox{t}\isub{x}{\cbvTApprox{v}}$.
        \end{itemize}

    \item[\bltII] $t = \abs{y}{t'}$: Then $\cbvTApprox{t} =
        \abs{y}{\cbvTApprox{t'}}$. By \ih on $t'$, one has that
        $\cbvTApprox{t'\isub{x}{v}} \cbvAprxLeq
        \cbvTApprox{t'}\isub{x}{\cbvTApprox{v}}$. Therefore:
            \begin{equation*}
                \begin{array}{r cl cl}
                    \cbvTApprox{t\isub{x}{v}}
                    &=& \cbvTApprox{\abs{y}{(t'\isub{x}{v})}}
                \\
                    &=& \abs{y}{\cbvTApprox{t'\isub{x}{v}}}
                \\
                    & \cbvAprxLeqIh& \abs{y}{\cbvTApprox{t'}\isub{x}{\cbvTApprox{v}}}
                \\
                    &=& (\abs{y}{\cbvTApprox{t'}})\isub{x}{\cbvTApprox{v}}
                    &=& \cbvTApprox{t}\isub{x}{\cbvTApprox{v}}
                \end{array}
            \end{equation*}

    \item[\bltII] $t = \app{t_1}{t_2}$: Thus $\cbvTApprox{t} =
        \app{\cbvTApprox{t_1}}{\cbvTApprox{t_2}}$. By \ih on $t_1$ and
        $t_2$, one has that $\cbvTApprox{t_1\isub{x}{v}} \cbvAprxLeq
        \cbvTApprox{t_1}\isub{x}{\cbvTApprox{v}}$ and
        $\cbvTApprox{t_2\isub{x}{v}} \cbvAprxLeq
        \cbvTApprox{t_2}\isub{x}{\cbvTApprox{v}}$. Therefore:
        \begin{equation*}
            \begin{array}{r cl cl}
                \cbvTApprox{t\isub{x}{v}}
                &=& \cbvTApprox{\app{t_1\isub{x}{v}}{t_2\isub{x}{v}}}
            \\
                &=& \app{\cbvTApprox{t_1\isub{x}{v}}}{\cbvTApprox{t_2\isub{x}{v}}}
            \\
                & \cbvAprxLeqIh & \app{(\cbvTApprox{t_1}\isub{x}{\cbvTApprox{v}})}{(\cbvTApprox{t_2}\isub{x}{\cbvTApprox{v}})}
            \\
                &=& (\app{\cbvTApprox{t_1}}{\cbvTApprox{t_2}})\isub{x}{\cbvTApprox{v}}
                &=& \cbvTApprox{t}\isub{x}{\cbvTApprox{v}}
            \end{array}
        \end{equation*}

    \item[\bltII] $t = t_1\esub{y}{t_2}$: Thus $\cbvTApprox{t} =
        \cbvTApprox{t_1}\esub{y}{\cbvTApprox{t_2}}$. By \ih on $t_1$
        and $t_2$, one has that $\cbvTApprox{t_1\isub{x}{v}}
        \cbvAprxLeq \cbvTApprox{t_1}\isub{x}{\cbvTApprox{v}}$ and
        $\cbvTApprox{t_2\isub{x}{v}} \cbvAprxLeq
        \cbvTApprox{t_2}\isub{x}{\cbvTApprox{v}}$. Therefore:
        \begin{equation*}
            \begin{array}{r cl cl}
                \cbvTApprox{t\isub{x}{v}}
                &=& \cbvTApprox{t_1\isub{x}{v}\esub{y}{t_2\isub{x}{v}}}
            \\
                &=& \cbvTApprox{t_1\isub{x}{v}}\esub{y}{\cbvTApprox{t_2\isub{x}{v}}}
            \\
                & \cbvAprxLeqIh & \cbvTApprox{t_1}\isub{x}{\cbvTApprox{v}}\esub{y}{\cbvTApprox{t_2}\isub{x}{\cbvTApprox{v}}}
            \\
                &=& (\cbvTApprox{t_1}\esub{y}{\cbvTApprox{t_2}})\isub{x}{\cbvTApprox{v}}
                &=& \cbvTApprox{t}\isub{x}{\cbvTApprox{v}}
            \end{array}
        \end{equation*}
        \qedhere
    \end{itemize}
\end{itemize}
\end{proof}
}

\subsection{Proofs of Subsection
\ref{subsec:Meaningful_Approximation}}

\CbvMeaningfulApproximationLeqTerm* \label{prf:Cbv_MA(t)_<=_t}%
\stableProof{
    \begin{proof}
We distinguish two cases:
\begin{itemize}
\item[\bltI] $t$ is meaningless: Then $\cbvTApprox{t} = \cbvAprxBot
    \cbvAprxLeq t$.

\item[\bltI] $t$ is meaningful: We proceed by induction on $t$.
    Subcases:
    \begin{itemize}
    \item[\bltII] $t = x$: Then $\cbvTApprox{t} = x \cbvAprxLeq x =
        t$.

    \item[\bltII] $t = \abs{x}{t'}$: By \ih on $t'$, one has that
        $\cbvTApprox{t'} \cbvAprxLeq t'$ thus $\cbvTApprox{t} =
        \abs{x}{\cbvTApprox{t'}} \cbvAprxLeq \abs{x}{t'} = t$.

    \item[\bltII] $t = \app{t_1}{t_2}$: By \ih on $t_1$ and $t_2$, one
        has that $\cbvTApprox{t_1} \cbvAprxLeq t_1$ and
        $\cbvTApprox{t_2} \cbvAprxLeq t_2$ so that $\cbvTApprox{t} =
        \app{\cbvTApprox{t_1}}{\cbvTApprox{t_2}} \cbvAprxLeq
        \app{t_1}{t_2} = t$.

    \item[\bltII] $t = t_1\esub{x}{t_2}$: By \ih on $t_1$ and $t_2$,
        $\cbvTApprox{t_1} \cbvAprxLeq t_1$ and $\cbvTApprox{t_2}
        \cbvAprxLeq t_2$,  hence $\cbvTApprox{t} =
        \cbvTApprox{t_1}\esub{x}{\cbvTApprox{t_2}} \cbvAprxLeq
        t_1\esub{x}{t_2} = t$.
        \qedhere
    \end{itemize}
\end{itemize}
\end{proof}
}

\CbvApproximantGenericity*
\label{prf:Cbv_t_meaningless_=>_MA(F<t>)_<=_MA(F<u>)}%
\stableProof{
    \begin{proof}
Let $\cbvCCtxt$ be a full context and $t, u \in \setCbvTerms$ be such
that $t$ is \CBVSymb-meaningless. We distinguish two cases:
\begin{itemize}
\item[\bltI] $\cbvCCtxt<t>$ is \CBVSymb-meaningless: Then
    $\cbvTApprox{\cbvCCtxt<t>} = \cbvAprxBot \cbvAprxLeq
    \cbvTApprox{\cbvCCtxt<u>}$.

\item[\bltI] $\cbvCCtxt<t>$ is meaningful: Then by surface genericity
    (\Cref{lem:Cbv_Qualitative_Surface_Genericity}), one has that
    $\cbvCCtxt<u>$ is also meaningful. We proceed by induction on
    $\cbvCCtxt$:
    \begin{itemize}
    \item[\bltII] $\cbvCCtxt = \Hole$: This case is impossible because
        it would mean $\cbvCCtxt<t> = t$ is \CBVSymb-meaningless (by
        hypothesis) and meaningful at the same time.

    \item[\bltII] $\cbvCCtxt = \abs{x}{\cbvCCtxt'}$: By \ih on
        $\cbvCCtxt'$, one has $\cbvTApprox{\cbvCCtxt'<t>} \cbvAprxLeq
        \cbvTApprox{\cbvCCtxt'<u>}$, thus $\cbvTApprox{\cbvCCtxt<t>} =
        \abs{x}{\cbvTApprox{\cbvCCtxt'<t>}} \cbvAprxLeqIh
        \abs{x}{\cbvTApprox{\cbvCCtxt'<u>}} =
        \cbvTApprox{\cbvCCtxt<u'>}$ by contextual closure.

    \item[\bltII] $\cbvCCtxt = \app{\cbvCCtxt'}{s}$: By \ih on
        $\cbvCCtxt'$, one has $\cbvTApprox{\cbvCCtxt'<t>} \cbvAprxLeq
        \cbvTApprox{\cbvCCtxt'<u>}$, thus $\cbvTApprox{\cbvCCtxt<t>} =
        \app{\cbvTApprox{\cbvCCtxt'<t>}}{\cbvTApprox{s}} \cbvAprxLeqIh
        \app{\cbvTApprox{\cbvCCtxt'<u>}}{\cbvTApprox{s}} \allowbreak=
        \cbvTApprox{\cbvCCtxt<u'>}$ by contextual closure.

    \item[\bltII] $\cbvCCtxt = \app[\,]{s}{\cbvCCtxt'}$: By \ih on
        $\cbvCCtxt'$, one has $\cbvTApprox{\cbvCCtxt'<t>} \cbvAprxLeq
        \cbvTApprox{\cbvCCtxt'<u>}$, thus $\cbvTApprox{\cbvCCtxt<t>} =
        \app{\cbvTApprox{s}}{\cbvTApprox{\cbvCCtxt'<t>}} \cbvAprxLeqIh
        \app{\cbvTApprox{s}}{\cbvTApprox{\cbvCCtxt'<u>}} \allowbreak=
        \cbvTApprox{\cbvCCtxt<u'>}$ by contextual closure.

    \item[\bltII] $\cbvCCtxt = \cbvCCtxt'\esub{x}{s}$: By \ih on
        $\cbvCCtxt'$, one has $\cbvTApprox{\cbvCCtxt'<t>} \cbvAprxLeq
        \cbvTApprox{\cbvCCtxt'<u>}$, and hence
        $\cbvTApprox{\cbvCCtxt<t>} =
        \cbvTApprox{\cbvCCtxt'<t>}\esub{x}{\cbvTApprox{s}}
        \cbvAprxLeqIh
        \cbvTApprox{\cbvCCtxt'<u>}\esub{x}{\cbvTApprox{s}} =
        \cbvTApprox{\cbvCCtxt<u'>}$ by contextual closure.

    \item[\bltII] $\cbvCCtxt = s\esub{x}{\cbvCCtxt'}$: By \ih on
        $\cbvCCtxt'$, one has $\cbvTApprox{\cbvCCtxt'<t>} \cbvAprxLeq
        \cbvTApprox{\cbvCCtxt'<u>}$, and hence
        $\cbvTApprox{\cbvCCtxt<t>} =
        \cbvTApprox{s}\esub{x}{\cbvTApprox{\cbvCCtxt'<t>}}
        \cbvAprxLeqIh
        \cbvTApprox{s}\esub{x}{\cbvTApprox{\cbvCCtxt'<u>}} =
        \cbvTApprox{\cbvCCtxt<u'>}$ by contextual closure.
        \qedhere
    \end{itemize}
\end{itemize}
\end{proof}

}

\CbvDynamicApproximation*
\label{prf:Cbv_t_->Sn_u_=>_MA(t)_->*Sn_MA(u)}%
\stableProof{
    \begin{proof}
We distinguish two cases:
\begin{itemize}
\item[\bltI] $t$ is \CBVSymb-meaningless: Then $u$ is necessarily
    \CBVSymb-meaningless by
    \Cref{lem:cbv_t_meaningful_and_t_->Sn_u_=>_u_meaningful}, thus
    $\cbvTApprox{t} = \cbvAprxBot$ and $\cbvTApprox{u} = \cbvAprxBot$.
    We set $\aprxU \coloneqq \cbvTApprox{u}$, therefore by reflexivity
    of both relations $\cbvTApprox{t} \cbvAprxArr^=_{S_\indexK} \aprxU
    \cbvAprxGeq \cbvTApprox{u}$.

\item[\bltI] $t$ is \CBVSymb-meaningful: Then $u$ is necessarily
    \CBVSymb-meaningful by
    \Cref{lem:cbv_t_meaningful_and_t_->Sn_u_=>_u_meaningful}. Let $t,
    u \in \setCbvTerms$ such that $t \cbvArr_{S_\indexK} u$. Then
    there exists a context $\cbvSCtxt_\indexK$ and $t', u' \in
    \setCbvTerms$ such that $t = \cbvSCtxt_\indexK<t'>$, $u =
    \cbvSCtxt_\indexK<u'>$ and $t' \mapstoR u'$ for some $\rel \in
    \{\cbvSymbBeta, \cbvSymbSubs\}$. We reason by induction on
    $\cbvSCtxt_\indexK$:
    \begin{itemize}
    \item[\bltII] $\cbvSCtxt_\indexK = \Hole$: Then $t = t'$ and $u =
        u'$. We distinguish two cases:
        \begin{itemize}
        \item[\bltIII] $\rel = \cbvSymbBeta$: Then $t' =
            \app{\cbvLCtxt<\abs{x}{s_1}>}{s_2}$ and $u' =
            \cbvLCtxt<s_1\esub{x}{s_2}>$ for some list context
            $\cbvLCtxt$ and some $s_1, s_2 \in \setCbvTerms$. Since
            $t'$ is \CBVSymb-meaningful, then using
            \Cref{lem:Cbv_solvability_on_app_and_es}.(\ref{lem:cbv_meaningful_on_app}),
            one deduces in particular that $\cbvLCtxt<\abs{x}{s_1}>$
            is \CBVSymb-meaningful. Since $u'$ is \CBVSymb-meaningful,
            then using
            \Cref{lem:Cbv_solvability_on_app_and_es}.(\ref{lem:cbv_meaningful_on_esub}),
            one deduces in particular that $s_1\esub{x}{s_2}$ is
            \CBVSymb-meaningful. Thus, using
            \Cref{lem:cbv_MA(L<t>)_=_MA(L)<MA(t)>}, one has that
            $\cbvTApprox{t'} =
            \app{\cbvTApprox{\cbvLCtxt<\abs{x}{s_1}>}}{\cbvTApprox{s_2}}
            =
            \app{\cbvTApprox{\cbvLCtxt}\cbvCtxtPlug{\abs{x}{\cbvTApprox{s_1}}}}{\cbvTApprox{s_2}}$
            and $\cbvTApprox{u'} =
            \cbvTApprox{\cbvLCtxt<s_1\esub{x}{s_2}>} =
            \cbvTApprox{\cbvLCtxt}\cbvCtxtPlug{
            \cbvTApprox{s_1\esub{x}{s_2}}} =
            \cbvTApprox{\cbvLCtxt}\cbvCtxtPlug{\cbvTApprox{s_1}\esub{x}{\cbvTApprox{s_2}}}$.
            Thus $\cbvTApprox{t} = \cbvTApprox{t'}
            \mapstoR[\cbvSymbBeta] \cbvTApprox{u'} = \cbvTApprox{u}$.
            We set $\aprxU := \cbvTApprox{u}$, therefore by
            reflexivity $\cbvTApprox{t} \cbvAprxArr_{S_\indexK} \aprxU
            \cbvAprxGeq \cbvTApprox{u}$.

        \item[\bltIII] $\rel = \cbvSymbSubs$: Then $t' =
            s\esub{x}{\cbvLCtxt<v>}$ and $u' =
            \cbvLCtxt<s\isub{x}{v}>$ for some list context
            $\cbvLCtxt$, and some $s \in \setCbvTerms$ and $v \in
            \setCbvValues$. Since $t'$ is \CBVSymb-meaningful, then
            $\cbvLCtxt<v>$ is \CBVSymb-meaningful by
            \Cref{lem:Cbv_solvability_on_app_and_es}.(\ref{lem:cbv_meaningful_on_esub}).
            Since $u'$ is \CBVSymb-meaningful, then using
            \Cref{lem:Cbv_solvability_on_app_and_es}.(\ref{lem:cbv_meaningful_on_esub}),
            one deduces that $s\isub{x}{v}$ is \CBVSymb-meaningful.
            Using
            \Cref{lem:cbv_MA(L<t>)_=_MA(L)<MA(t)>,lem:cbv_MA(t{x:=v})_<=_MA(t){x:=MA(v)}},
            one has that $\cbvTApprox{t'} =
            \cbvTApprox{s}\esub{x}{\cbvTApprox{\cbvLCtxt}\cbvCtxtPlug{\cbvTApprox{v}}}$
            and $\cbvTApprox{u'} \cbvAprxLeq
            \cbvTApprox{\cbvLCtxt}\cbvCtxtPlug{\cbvTApprox{s}\isub{x}{\cbvTApprox{v}}}$.
            Thus $\cbvTApprox{t} = \cbvTApprox{t'}
            \mapstoR[\cbvSymbSubs]
            \cbvTApprox{\cbvLCtxt}\cbvCtxtPlug{\cbvTApprox{s}\isub{x}{\cbvTApprox{v}}}
            \cbvAprxGeq \cbvTApprox{u'} = \cbvTApprox{u}$. We set
            $\aprxU \coloneqq
            \cbvTApprox{\cbvLCtxt}\cbvCtxtPlug{\cbvTApprox{s}\isub{x}{\cbvTApprox{v}}}$,
            therefore $\cbvTApprox{t} \cbvAprxArr_{S_\indexK} \aprxU
            \cbvAprxGeq \cbvTApprox{u}$.
        \end{itemize}

    \item[\bltII] $\cbvSCtxt_\indexK = \abs{x}{\cbvSCtxt_{\indexK-1}}$
        with $k > 0$ or $\indexK = \indexOmega$: By \ih on
        $\cbvSCtxt_{\indexK-1}$, there is $\aprxU'$ such that
        $\cbvTApprox{\cbvSCtxt_{\indexK-1}<t'>}
        \cbvAprxArr^=_{S_{\indexK-1}} \aprxU' \cbvAprxGeq
        \cbvTApprox{\cbvSCtxt_{\indexK-1}<u'>}$. We set $\aprxU
        \coloneqq \abs{x}{\aprxU'}$, therefore
        $\cbvTApprox{\cbvSCtxt_\indexK<t'>} =
        \abs{x}{\cbvTApprox{\cbvSCtxt_{\indexK-1}<t'>}}
        \cbvAprxArr^=_{S_\indexK} \aprxU \cbvAprxGeq
        \abs{x}{\cbvTApprox{\cbvSCtxt_{\indexK-1}<u'>}} =
        \cbvTApprox{\cbvSCtxt_\indexK<u'>}$.

    \item[\bltII] $\cbvSCtxt_\indexK = \app{\cbvSCtxt'_\indexK}{s}$:
        Thus $\cbvTApprox{t} =
        \app{\cbvTApprox{\cbvSCtxt_\indexK<t'>}}{\cbvTApprox{s}}$ and
        $\cbvTApprox{u} =
        \app{\cbvTApprox{\cbvSCtxt_\indexK<u'>}}{\cbvTApprox{s}}$. By
        \ih on $\cbvSCtxt'_\indexK$, there is $\aprxU' \in
        \setCbvAprxTerms$ such that
        $\cbvTApprox{\cbvSCtxt'_\indexK<t'>} \cbvAprxArr^=_{S_\indexK}
        \aprxU' \cbvAprxGeq \cbvTApprox{\cbvSCtxt'_\indexK<u'>}$. We
        set $\aprxU \coloneqq \app{\aprxU'}{\cbvTApprox{s}}$,
        therefore $\cbvTApprox{t} \cbvAprxArr^=_{S_\indexK} \aprxU
        \cbvAprxGeq \cbvTApprox{u}$.

    \item[\bltII] $\cbvSCtxt_\indexK =
        \app[\,]{s}{\cbvSCtxt'_\indexK}$: Thus $\cbvTApprox{t} =
        \app{\cbvTApprox{s}}{\cbvTApprox{\cbvSCtxt_\indexK<t'>}}$ and
        $\cbvTApprox{u} =
        \app{\cbvTApprox{s}}{\cbvTApprox{\cbvSCtxt_\indexK<u'>}}$. By
        \ih on $\cbvSCtxt'_\indexK$, there is $\aprxU' \in
        \setCbvAprxTerms$ such that
        $\cbvTApprox{\cbvSCtxt'_\indexK<t'>} \cbvAprxArr^=_{S_\indexK}
        \aprxU' \cbvAprxGeq \cbvTApprox{\cbvSCtxt'_\indexK<u'>}$. We
        set $\aprxU \coloneqq \app{\cbvTApprox{s}}{\aprxU'}$,
        therefore $\cbvTApprox{t} \cbvAprxArr^=_{S_\indexK} \aprxU
        \cbvAprxGeq \cbvTApprox{u}$.

    \item[\bltII] $\cbvSCtxt_\indexK = \cbvSCtxt'_\indexK\esub{x}{s}$:
        Thus $\cbvTApprox{t} =
        \cbvTApprox{\cbvSCtxt_\indexK<t'>}\esub{x}{\cbvTApprox{s}}$
        and $\cbvTApprox{u} =
        \cbvTApprox{\cbvSCtxt_\indexK<u'>}\esub{x}{\cbvTApprox{s}}$.
        By \ih on $\cbvSCtxt'_\indexK$, there is $\aprxU' \in
        \setCbvAprxTerms$ such that
        $\cbvTApprox{\cbvSCtxt'_\indexK<t'>} \cbvAprxArr^=_{S_\indexK}
        \aprxU' \cbvAprxGeq \cbvTApprox{\cbvSCtxt'_\indexK<u'>}$. We
        set $\aprxU \coloneqq \aprxU'\esub{x}{\cbvTApprox{s}}$, hence
        $\cbvTApprox{t} \cbvAprxArr^=_{S_\indexK} \aprxU \cbvAprxGeq
        \cbvTApprox{u}$.

    \item[\bltII] $\cbvSCtxt_\indexK = s\esub{x}{\cbvSCtxt'_\indexK}$:
        Thus $\cbvTApprox{t} =
        \cbvTApprox{s}\esub{x}{\cbvTApprox{\cbvSCtxt_\indexK<t'>}}$
        and $\cbvTApprox{u} =
        \cbvTApprox{s}\esub{x}{\cbvTApprox{\cbvSCtxt_\indexK<u'>}}$.
        By \ih on $\cbvSCtxt'_\indexK$, there is $\aprxU' \in
        \setCbvAprxTerms$ such that
        $\cbvTApprox{\cbvSCtxt'_\indexK<t'>} \cbvAprxArr^=_{S_\indexK}
        \aprxU' \cbvAprxGeq \cbvTApprox{\cbvSCtxt'_\indexK<u'>}$. We
        set $\aprxU \coloneqq \cbvTApprox{s}\esub{x}{\aprxU'}$, hence
        $\cbvTApprox{t} \cbvAprxArr^=_{S_\indexK} \aprxU \cbvAprxGeq
        \cbvTApprox{u}$.
        \qedhere
    \end{itemize}
\end{itemize}
\end{proof}

}

\CbvDynamicLift* \label{prf:cbvAGK_Aprx_Simulation_MultiStep}%
\stableProof{
    \begin{proof}
Let $\aprxT, \aprxU, \aprxT' \in \setCbvAprxTerms$ such that $\aprxT
\cbvAprxLeq \aprxT'$ and $\aprxT \cbvAprxArr_{S_\indexK} \aprxU$. Then
$\aprxT = \cbvAprxSCtxt_\indexK<\aprxT_r>$ and $\aprxU =
\cbvAprxSCtxt_\indexK<\aprxU_r>$ with $\aprxT_r \mapstoR \aprxU_r$ for
some $\aprxT_r, \aprxU_r \in \setCbvAprxTerms$ and $\rel \in
\{\cbvSymbBeta, \cbvSymbSubs\}$. Let us proceed by induction on
$\cbvAprxSCtxt_\indexK$:
\begin{itemize}
\item[\bltI] $\cbvAprxSCtxt_\indexK = \Hole$: Then $\aprxT = \aprxT_r$
    and $\aprxU = \aprxU_r$. We distinguish two cases:
    \begin{itemize}
    \item[\bltII] $\rel = \cbvSymbBeta$: Then $\aprxT_r =
        \app{\cbvAprxLCtxt<\abs{x}{\aprxS_1}>}{\aprxS_2}$ and
        $\aprxU_r = \cbvAprxLCtxt<\aprxS_1\esub{x}{\aprxS_2}>$ for
        some list context $\cbvAprxLCtxt$ and $\aprxS_1, \aprxS_2 \in
        \setCbvAprxTerms$. Since $\aprxT \cbvAprxLeq \aprxT'$, by
        induction on $\cbvAprxLCtxt$, one can prove that $\aprxT' =
        \app{\cbvAprxLCtxt'<\abs{x}{\aprxS'_1}>}{\aprxS'_2}$ with
        $\cbvAprxLCtxt \cbvAprxLeq \cbvAprxLCtxt'$, $\aprxS_1
        \cbvAprxLeq \aprxS'_1$ and $\aprxS_2 \cbvAprxLeq \aprxS'_2$
        for some list context $\cbvAprxLCtxt'$ and $\aprxS'_1,
        \aprxS'_2 \in \setCbvAprxTerms$. Taking $\aprxU' =
        \cbvAprxLCtxt'<\aprxS'_1\esub{x}{\aprxS'_2}>$ concludes this
        case since $\aprxT' =
        \app{\cbvAprxLCtxt'<\abs{x}{\aprxS'_1}>}{\aprxS'_2}
        \cbvAprxArr_{S_\indexK}
        \cbvAprxLCtxt'<\aprxS'_1\esub{x}{\aprxS'_2}> = \aprxU'$ and by
        transitivity $\aprxS_1\esub{x}{\aprxS_2} \cbvAprxLeq
        \aprxS'_1\esub{x}{\aprxS'_2}$ thus
        $\cbvAprxLCtxt<\aprxS_1\esub{x}{\aprxS_2}> \cbvAprxLeq
        \cbvAprxLCtxt'<\aprxS'_1\esub{x}{\aprxS'_2}> = \aprxU'$ using
        \Cref{lem:cbv_L1_<=_L2_and_t1_<=_t2_==>_L1<t1>_<=_L2<t2>}.

    \item[\bltII] $\rel = \cbvSymbSubs$: Then $\aprxT_r =
        \aprxS\esub{x}{\cbvAprxLCtxt<\aprxV>}$ and $\aprxU_r =
        \cbvAprxLCtxt<\aprxS\isub{x}{\aprxV}>$ for some list context
        $\cbvAprxLCtxt$, $\aprxS \in \setCbvAprxTerms$ and $\aprxV \in
        \setCbvAprxValues$. Since $\aprxT \cbvAprxLeq \aprxT'$, by
        induction on $\cbvAprxLCtxt$, one can prove that $\aprxT' =
        \aprxS'\esub{x}{\cbvAprxLCtxt'<\aprxV'>}$ with $\cbvAprxLCtxt
        \cbvAprxLeq \cbvAprxLCtxt'$, $\aprxS \cbvAprxLeq \aprxS'$ and
        $\aprxV \cbvAprxLeq \aprxV'$  for some list context
        $\cbvAprxLCtxt'$,  $\aprxS' \in \setCbvAprxTerms$ and $\aprxV'
        \in \setCbvAprxValues$. Taking $\aprxU' =
        \cbvAprxLCtxt'<\aprxS'\isub{x}{\aprxV'}>$ concludes this case
        since $\aprxT' = \aprxS'\esub{x}{\cbvAprxLCtxt'<\aprxV'>}
        \cbvAprxArr_{S_\indexK}
        \cbvAprxLCtxt'<\aprxS'\isub{x}{\aprxV'}>$ and by
        \Cref{lem:cbv_t1_<=_t2_and_v1_<=_v2_==>_t1{x:=v1}_<=_t2{x:=v2}}
        $\aprxS\isub{x}{\aprxV} \cbvAprxLeq \aprxS'\isub{x}{\aprxV'}$,
        thus $\cbvAprxLCtxt<\aprxS\isub{x}{\aprxV}> \cbvAprxLeq
        \cbvAprxLCtxt'<\aprxS'\isub{x}{\aprxV'}> = \aprxU'$ using
        \Cref{lem:cbv_L1_<=_L2_and_t1_<=_t2_==>_L1<t1>_<=_L2<t2>}.
    \end{itemize}

\item[\bltI] $\cbvAprxSCtxt_\indexK = \abs{x}{\cbvAprxSCtxt'_m}$ with
    either $k > 0$ and $m = k - 1$, or $k = m = \omega$: Then $\aprxT
    = \abs{x}{\cbvAprxSCtxt'_m<\aprxT_r>}$ and $\aprxU =
    \abs{x}{\cbvAprxSCtxt'_m<\aprxU_r>}$ thus $\aprxT' =
    \abs{x}{\aprxT'_1}$ for some $\aprxT'_1  \in \setCbvAprxTerms$
    such that $\cbvAprxSCtxt'_m<\aprxT_r> \cbvAprxLeq \aprxT'_1$.
    Since $\cbvAprxSCtxt'_m<\aprxT_r> \cbvAprxArr_{S_m}
    \cbvAprxSCtxt'_m<\aprxU_r>$, by \ih on $\cbvAprxSCtxt'_m$, one
    obtains $\aprxU'_1 \in \setCbvAprxTerms$ such that $\aprxT'_1
    \cbvAprxArr_{S_m} \aprxU'_1$ and $\cbvAprxSCtxt'_m<\aprxU_r>
    \cbvAprxLeq \aprxU'_1$. Taking $\aprxU' = \abs{x}{\aprxU'_1}$
    concludes this case since $\aprxT' = \abs{x}{\aprxT'_1}
    \cbvAprxArr_{S_\indexK} \abs{x}{\aprxU'_1} = \aprxU'$ and $\aprxU
    = \abs{x}{\cbvAprxSCtxt'_\indexK<\aprxU_r>} \cbvAprxLeq
    \abs{x}{\aprxU'_1} = \aprxU'$.

\item[\bltI] $\cbvAprxSCtxt_\indexK =
    \app[\,]{\cbvAprxSCtxt'_\indexK}{\aprxS}$: Then $\aprxT =
    \app{\cbvAprxSCtxt'_\indexK<\aprxT_r>}{\aprxS}$ and $\aprxU =
    \app{\cbvAprxSCtxt'_\indexK<\aprxU_r>}{\aprxS}$ thus $\aprxT' =
    \app[\,]{\aprxT'_1}{\aprxS'}$ for some $\aprxT'_1, \aprxS \in
    \setCbvAprxTerms$ such that $\cbvAprxSCtxt'_\indexK<\aprxT_r>
    \cbvAprxLeq \aprxT'_1$ and $\aprxS \cbvAprxLeq \aprxS'$. Since
    $\cbvAprxSCtxt'_\indexK<\aprxT_r> \cbvAprxArr_{S_\indexK}
    \cbvAprxSCtxt'_\indexK<\aprxU_r>$, by \ih on
    $\cbvAprxSCtxt'_\indexK$, one obtains $\aprxU'_1 \in
    \setCbvAprxTerms$ such that $\aprxT'_1 \cbvAprxArr_{S_\indexK}
    \aprxU'_1$ and $\cbvAprxSCtxt'_\indexK<\aprxU_r> \cbvAprxLeq
    \aprxU'_1$. Taking $\aprxU' = \app{\aprxU'_1}{\aprxS'}$ concludes
    this case since $\aprxT' = \app{\aprxT'_1}{\aprxS'}
    \cbvAprxArr_{S_\indexK} \app{\aprxU'_1}{\aprxS'} = \aprxU'$ and
    $\aprxU = \app{\cbvAprxSCtxt'_\indexK<\aprxU_r>}{\aprxS}
    \cbvAprxLeq \app{\aprxU'_1}{\aprxS'} = \aprxU'$ by transitivity.

\item[\bltI] $\cbvAprxSCtxt_\indexK =
    \app[\,]{\aprxS}{\cbvAprxSCtxt'_\indexK}$: Then $\aprxT =
    \app[\,]{\aprxS}{\cbvAprxSCtxt'_\indexK<\aprxT_r>}$ and $\aprxU =
    \app[\,]{\aprxS}{\cbvAprxSCtxt'_\indexK<\aprxU_r>}$ thus $\aprxT'
    = \app[\,]{\aprxS'}{\aprxT'_1}$ for some $\aprxS, \aprxT'_1 \in
    \setCbvAprxTerms$ such that $\aprxS \cbvAprxLeq \aprxS'$ and
    $\cbvAprxSCtxt'_\indexK<\aprxT_r> \cbvAprxLeq \aprxT'_1$. Since
    $\cbvAprxSCtxt'_\indexK<\aprxT_r> \cbvAprxArr_{S_\indexK}
    \cbvAprxSCtxt'_\indexK<\aprxU_r>$, by \ih on
    $\cbvAprxSCtxt'_\indexK$, one obtains $\aprxU'_1 \in
    \setCbvAprxTerms$ such that $\aprxT'_1 \cbvAprxArr_{S_\indexK}
    \aprxU'_1$ and $\cbvAprxSCtxt'_\indexK<\aprxU_r> \cbvAprxLeq
    \aprxU'_1$. Taking $\aprxU' = \app{\aprxS'}{\aprxU'_1}$ concludes
    this case since $\aprxT' = \app{\aprxS'}{\aprxT'_1}
    \cbvAprxArr_{S_\indexK} \app{\aprxS'}{\aprxU'_1} = \aprxU'$ and
    $\aprxU = \app[\,]{\aprxS}{\cbvAprxSCtxt'_\indexK<\aprxU_r>}
    \cbvAprxLeq \app{\aprxS'}{\aprxU'_1} = \aprxU'$ by transitivity.

\item[\bltI] $\cbvAprxSCtxt_\indexK =
    \cbvAprxSCtxt'_\indexK\esub{x}{\aprxS}$: Then $\aprxT =
    \cbvAprxSCtxt'_\indexK<\aprxT_r>\esub{x}{\aprxS}$ and $\aprxU =
    \cbvAprxSCtxt'_\indexK<\aprxU_r>\esub{x}{\aprxS}$ thus $\aprxT' =
    \aprxT'_1\esub{x}{\aprxS'}$ for some $\aprxT'_1, \aprxS \in
    \setCbvAprxTerms$ such that $\cbvAprxSCtxt'_\indexK<\aprxT_r>
    \cbvAprxLeq \aprxT'_1$ and $\aprxS \cbvAprxLeq \aprxS'$. Since
    $\cbvAprxSCtxt'_\indexK<\aprxT_r> \cbvAprxArr_{S_\indexK}
    \cbvAprxSCtxt'_\indexK<\aprxU_r>$, by \ih on
    $\cbvAprxSCtxt'_\indexK$, one obtains $\aprxU'_1 \in
    \setCbvAprxTerms$ such that $\aprxT'_1 \cbvAprxArr_{S_\indexK}
    \aprxU'_1$ and $\cbvAprxSCtxt'_\indexK<\aprxU_r> \cbvAprxLeq
    \aprxU'_1$. Taking $\aprxU' = \aprxU'_1\esub{x}{\aprxS'}$
    concludes this case since $\aprxT' = \aprxT'_1\esub{x}{\aprxS'}
    \cbvAprxArr_{S_\indexK} \aprxU'_1\esub{x}{\aprxS'} = \aprxU'$ and
    $\aprxU = \cbvAprxSCtxt'_\indexK<\aprxU_r>\esub{x}{\aprxS}
    \cbvAprxLeq \aprxU'_1\esub{x}{\aprxS'} = \aprxU'$ by transitivity.

\item[\bltI] $\cbvAprxSCtxt_\indexK =
    \aprxS\esub{x}{\cbvAprxSCtxt'_\indexK}$: Then $\aprxT =
    \aprxS\esub{x}{\cbvAprxSCtxt'_\indexK<\aprxT_r>}$ and $\aprxU =
    \aprxS\esub{x}{\cbvAprxSCtxt'_\indexK<\aprxU_r>}$ thus $\aprxT' =
    \aprxS'\esub{x}{\aprxT'_1}$ for some $\aprxS, \aprxT'_1 \in
    \setCbvAprxTerms$ such that $\aprxS \cbvAprxLeq \aprxS'$ and
    $\cbvAprxSCtxt'_\indexK<\aprxT_r> \cbvAprxLeq \aprxT'_1$. Since
    $\cbvAprxSCtxt'_\indexK<\aprxT_r> \cbvAprxArr_{S_\indexK}
    \cbvAprxSCtxt'_\indexK<\aprxU_r>$, by \ih on
    $\cbvAprxSCtxt'_\indexK$, one obtains $\aprxU'_1 \in
    \setCbvAprxTerms$ such that $\aprxT'_1 \cbvAprxArr_{S_\indexK}
    \aprxU'_1$ and $\cbvAprxSCtxt'_\indexK<\aprxU_r> \cbvAprxLeq
    \aprxU'_1$. Taking $\aprxU' = \aprxS'\esub{x}{\aprxU'_1}$
    concludes this case since $\aprxT' = \aprxS'\esub{x}{\aprxT'_1}
    \cbvAprxArr_{S_\indexK} \aprxS'\esub{x}{\aprxU'_1} = \aprxU'$ and
    $\aprxU = \aprxS\esub{x}{\cbvAprxSCtxt'_\indexK<\aprxU_r>}
    \cbvAprxLeq \aprxS'\esub{x}{\aprxU'_1} = \aprxU'$ by transitivity.
    \qedhere
\end{itemize}
\end{proof}

}

Let us recall mutual inductive definitions of $\cbvAprxBvrS_\indexK$,
$\cbvAprxBneS_\indexK$ and $\cbvAprxBnoS_\indexK$ for $\indexK \in
\setOrdinals$:
\begin{equation*}
    \begin{array}{rcl}
        \cbvAprxBvrS_\indexK &\coloneqq& x
            \vsep \cbvAprxBvrS_\indexK\esub{x}{\cbvAprxBneS_\indexK}
    \\[0.2cm]
        \cbvAprxBneS_\indexK &\coloneqq& \app[\,]{\cbvAprxBvrS_\indexK}{\cbvAprxBnoS_\indexK}
            \vsep \app[\,]{\cbvAprxBneS_\indexK}{\cbvAprxBnoS_\indexK}
            \vsep \cbvAprxBneS_\indexK\esub{x}{\cbvAprxBneS_\indexK}
            \qquad
    \\[0.2cm]
        \cbvAprxBnoS_0 &\coloneqq& \abs{x}{\aprxT}
            \vsep \cbvAprxBvrS_0
            \vsep \cbvAprxBneS_0
            \vsep \cbvAprxBnoS_0\esub{x}{\cbvAprxBneS_0}
    \\
        \cbvAprxBnoS_{\indexI+1} &\coloneqq& \abs{x}{\cbvAprxBnoS_\indexI}
            \vsep \cbvAprxBvrS_{\indexI+1}
            \vsep \cbvAprxBneS_{\indexI+1}
            \vsep \cbvAprxBnoS_{\indexI+1}\esub{x}{\cbvAprxBneS_{\indexI+1}}
            \qquad (\indexI \in \Nat)
     \\
        \cbvAprxBnoS_\indexOmega &\coloneqq& \abs{x}{\cbvAprxBnoS_\indexOmega}
            \vsep \cbvAprxBvrS_\indexOmega
            \vsep \cbvAprxBneS_\indexOmega
            \vsep \cbvAprxBnoS_\indexOmega\esub{x}{\cbvAprxBneS_\indexOmega}
   \end{array}
\end{equation*}

\CbvObservabilityNormalFormApproximant*
\label{prf:Cbv_t_in_BnoSn_==>_OA(t)_in_noSn}%
\stableProof{
    \begin{proof}

We strengthen the induction hypothesis:
\begin{itemize}
\item[\bltI] If $t \in \cbvVrS_\indexK$ then $\cbvTApprox{t} \in
    \cbvAprxBvrS_\indexK$.
\item[\bltI] If $t \in \cbvNeS_\indexK$ then $\cbvTApprox{t} \in
    \cbvAprxBneS_\indexK$.
\item[\bltI] If $t \in \cbvNoS_\indexK$ then $\cbvTApprox{t} \in
    \cbvAprxBnoS_\indexK$.
\end{itemize}
Since $t \in \cbnNoS_\indexK$ then in particular $t \in \cbnNoS_0$
thus $t$ is \CBVSymb-meaningful. By mutual induction on $t \in
\cbvVrS_\indexK$, $t \in \cbvNeS_\indexK$ and $t \in \cbvNoS_\indexK$.
Cases:
\begin{itemize}
\item[\bltI] $t \in \cbvVrS_\indexK$: We distinguish two cases:
    \begin{itemize}
    \item[\bltII] $t = x$: Then $\cbvTApprox{t} = x \in
        \cbvAprxBvrS_\indexK$.

    \item[\bltII] $t = t_1\esub{x}{t_2}$ with $t_1 \in
        \cbvVrS_\indexK$ and $t_2 \in \cbvNeS_\indexK$: By \ih on
        $t_1$ and $t_2$, one has that $\cbvTApprox{t_1} \in
        \cbvAprxBvrS_\indexK$ and $\cbvTApprox{t_2} \in
        \cbvAprxBneS_\indexK$ thus $\cbvTApprox{t} =
        \cbvTApprox{t_1}\esub{x}{\cbvTApprox{t_2}} \subseteq
        \cbvAprxBvrS_\indexK\esub{x}{\cbvAprxBneS_\indexK} \subseteq
        \cbvAprxBvrS_\indexK$.
    \end{itemize}

\item[\bltI] $t \in \cbvNeS_\indexK$: We distinguish three cases: 
    \begin{itemize}
    \item[\bltII] $t = \app{t_1}{t_2}$ with $t_1 \in\cbvVrS_\indexK$
        and $t_2 \in \cbvNoS_\indexK$: Then by \ih on $t_1$ and $t_2$,
        one has that $\cbvTApprox{t_1} \in \cbvAprxBvrS_\indexK$ and
        $\cbvTApprox{t_2} \in \cbvAprxBnoS_\indexK$ thus
        $\cbvTApprox{t} = \app{\cbvTApprox{t_1}}{\cbvTApprox{t_2}} \in
        \app{\cbvAprxBvrS_\indexK}{\cbvAprxBnoS_\indexK} \subseteq
        \cbvAprxBneS_\indexK$.

    \item[\bltII] $t = \app{t_1}{t_2}$ with $t_1 \in \cbvNeS_\indexK$
        and $t_2 \in \cbvNoS_\indexK$: Then by \ih on $t_1$ and $t_2$,
        one has that $\cbvTApprox{t_1} \in \cbvAprxBneS_\indexK$ and
        $\cbvTApprox{t_2} \in \cbvAprxBnoS_\indexK$ thus
        $\cbvTApprox{t} = \app{\cbvTApprox{t_1}}{\cbvTApprox{t_2}} \in
        \app{\cbvAprxBneS_\indexK}{\cbvAprxBnoS_\indexK} \subseteq
        \cbvAprxBneS_\indexK$.

    \item[\bltII] $t = t_1\esub{x}{t_2}$ with $t_1, t_2 \in
        \cbvNeS_\indexK$: By \ih on $t_1$ and $t_2$, one has that
        $\cbvTApprox{t_1} \in \cbvAprxBneS_\indexK$ and
        $\cbvTApprox{t_2} \in \cbvAprxBneS_\indexK$ thus
        $\cbvTApprox{t} = \cbvTApprox{t_1}\esub{x}{\cbvTApprox{t_2}}
        \subseteq \cbvAprxBneS_\indexK\esub{x}{\cbvAprxBneS_\indexK}
        \subseteq \cbvAprxBneS_\indexK$.
    \end{itemize}

\item[\bltI] $t \in \cbvNoS_\indexK$: We distinguish four cases:
    \begin{itemize}
    \item[\bltII] $t = \abs{x}{t'}$: We distinguish three cases:
        \begin{itemize}
        \item[\bltIII] $\indexK = 0$: Then $t = \abs{x}{t'}$ and
            $\cbvTApprox{t} = \abs{x}{\cbvTApprox{t'}} \in
            \cbvAprxBnoS_0$.

        \item[\bltIII] $\indexK = \indexI + 1$ and $t' \in
            \cbvNoS_\indexI$ for some $\indexI \in \mathbb{N}$: By \ih
            on $t'$, one has that $\cbvTApprox{t'} \in
            \cbvAprxBnoS_\indexI$ thus $\cbvTApprox{t} =
            \abs{x}{\cbvTApprox{t'}} \in \abs{x}{\cbvAprxBnoS_\indexI}
            \subseteq \cbvAprxBnoS_\indexK$.

        \item[\bltIII] $\indexK = \indexOmega$ and $t' \in
            \cbvNoS_\indexOmega$: By \ih on $t'$, one has that
            $\cbvTApprox{t'} \in \cbvAprxBnoS_\indexOmega$ thus
            $\cbvTApprox{t} = \abs{x}{\cbvTApprox{t'}} \in
            \abs{x}{\cbvAprxBnoS_\indexOmega} \subseteq
            \cbvAprxBnoS_\indexOmega$.
        \end{itemize}

    \item[\bltII] $t \in \cbvVrS_\indexK$: Then by \ih on $t$, one has
        that $\cbvTApprox{t} \in \cbvAprxBvrS_\indexK \subseteq
        \cbvAprxBnoS_\indexK$.

    \item[\bltII] $t \in \cbvNeS_\indexK$: Then by \ih on $t$, one has
        that $\cbvTApprox{t} \in \cbvAprxBneS_\indexK \subseteq
        \cbvAprxBnoS_\indexK$.

    \item[\bltII] $t = t_1\esub{x}{t_2}$ with $t_1 \in
        \cbvNoS_\indexK$ and $t_2 \in \cbvNeS_\indexK$: By \ih on
        $t_1$ and $t_2$, one has that $\cbvTApprox{t_1} \in
        \cbvAprxBnoS_\indexK$ and $\cbvTApprox{t_2} \in
        \cbvAprxBneS_\indexK$ thus $\cbvTApprox{t} =
        \cbvTApprox{t_1}\esub{x}{\cbvTApprox{t_2}} \subseteq
        \cbvAprxBnoS_\indexK\esub{x}{\cbvAprxBneS_\indexK} \subseteq
        \cbvAprxBnoS_\indexK$.
    \end{itemize}
\end{itemize}
\end{proof}

}

Let us recall inductive definitions of stratified equality, where
$\indexK \in \setOrdinals$ and $\indexI \in \Nat$:
\begin{equation*}
    \begin{array}{c}
    \hspace{-0.2cm}
        \begin{prooftree}
            \hypo{\phantom{\cbvEq[\indexI]}}
            \inferCbvEqVar[1]{x \cbvEq[\indexK] x}
        \end{prooftree}
    \hspace{0.4cm}
        \begin{prooftree}
            \hypo{t_1 \cbvEq[\indexK] u_1}
            \hypo{t_2 \cbvEq[\indexK] u_2}
            \inferCbvEqApp{\app{t_1}{t_2} \cbvEq[\indexK] \app{u_1}{u_2}}
        \end{prooftree}
    \hspace{0.4cm}
        \begin{prooftree}
            \hypo{t_1 \cbvEq[\indexK] u_1}
            \hypo{t_2 \cbvEq[\indexK] u_2}
            \inferCbvEqEs{t_1\esub{x}{t_2} \cbvEq[\indexK] u_1\esub{x}{u_2}}
        \end{prooftree}
    \\[0.4cm]
       \begin{prooftree}
            \hypo{\phantom{t \cbvEq[\indexK] u}}
            \inferCbvEqAbs{\abs{x}{t} \cbvEq[0] \abs{y}{u}}
        \end{prooftree}
    \hspace{0.8cm}
    	\begin{prooftree}
    	\hypo{t \cbvEq[\indexI] u }
    	\inferCbvEqAbs{\abs{x}{t} \cbvEq[\indexI+1] \abs{x}{u}}
    	\end{prooftree}
    \hspace{0.8cm}
         \begin{prooftree}
            \hypo{t \cbvEq[\indexOmega] u }
            \inferCbvEqAbs{\abs{x}{t} \cbvEq[\indexOmega] \abs{x}{u}}
        \end{prooftree}
    \end{array}
\end{equation*}

\CbvStabilityMeaningfulObservables*
\label{prf:Cbv_|t_in_bnoSn_and_t_<=_u_==>_u_in_BnoSn}%
\stableProof{
    \begin{proof}

Let $\aprxT, \aprxU \in \setCbvAprxTerms$ such that $\aprxT
\cbvAprxLeq \aprxU$ We strengthen the induction hypothesis:
\begin{itemize}
\item[\bltI] If $\aprxT \in \cbvAprxBvrS_\indexK$ then $\aprxU \in
    \cbvAprxBvrS_\indexK$.

\item[\bltI] If $\aprxT \in \cbvAprxBneS_\indexK$ then $\aprxU \in
    \cbvAprxBneS_\indexK$.

\item[\bltI] If $\aprxT \in \cbvAprxBnoS_\indexK$ then $\aprxU \in
    \cbvAprxBnoS_\indexK$.
\end{itemize}
Moreover, $\aprxT \cbvEq[\indexK] \aprxU$.

We prove it by mutual induction on $\aprxT \in \cbvAprxBvrS_\indexK$,
$\aprxT \in \cbvAprxBneS_\indexK$ and $\aprxT \in
\cbvAprxBnoS_\indexK$:
\begin{itemize}
\item[\bltI] $\aprxT \in \cbvAprxBvrS_\indexK$:
    \begin{itemize}
    \item[\bltII] $\aprxT = x$: Then necessarily $\aprxU = x$ thus
        $\aprxU \in \cbvAprxBvrS_\indexK$ and $\aprxT \cbvEq[\indexK]
        \aprxU$ using $(\cbvEqVarRuleName)$.

    \item[\bltII] $\aprxT = \aprxT_1\esub{x}{\aprxT_2}$ with $\aprxT_1
        \in \cbvAprxBvrS_\indexK$ and $\aprxT_2 \in
        \cbvAprxBneS_\indexK$. Then necessarilyn $\aprxU =
        \aprxU_1\esub{x}{\aprxU_2}$ with $\aprxT_1 \cbvAprxLeq
        \aprxU_1$ and $\aprxT_2 \cbvAprxLeq \aprxU_2$ for some
        $\aprxU_1, \aprxU_2 \in \setCbvAprxTerms$. By \ih on
        $\aprxT_1$ and $\aprxT_2$, one has that $\aprxU_1 \in
        \cbvAprxBvrS_\indexK$, $\aprxU_2 \in \cbvAprxBneS_\indexK$,
        $\aprxT_1 \cbvEq[\indexK] \aprxU_1$ and $\aprxT_2
        \cbvEq[\indexK] \aprxU_2$. Thus $\aprxU =
        \aprxU_1\esub{x}{\aprxU_2} \in
        \cbvAprxBvrS_\indexK\esub{x}{\cbvAprxBneS_\indexK} \subseteq
        \cbvAprxBvrS_\indexK$ and $\aprxT \cbvEq[\indexK] \aprxU$
        using $(\cbvEqEsRuleName)$.
    \end{itemize}

\item[\bltI] $\aprxT \in \cbvAprxBneS_\indexK$:
    \begin{itemize}
    \item[\bltII] $\aprxT = \app{\aprxT_1}{\aprxT_2}$ with $\aprxT_1
        \in \cbvAprxBvrS_\indexK$ and $\aprxT_2 \in
        \cbvAprxBnoS_\indexK$: Necessarily $\aprxU =
        \app{\aprxU_1}{\aprxU_2}$ with $\aprxT_1 \cbvAprxLeq \aprxU_1$
        and $\aprxT_2 \cbvAprxLeq \aprxU_2$. By \ih on $\aprxT_1$ and
        $\aprxT_2$, one has $\aprxU_1 \in \cbvAprxBvrS_\indexK$,
        $\aprxU_2 \in \cbvAprxBnoS_\indexK$, $\aprxT_1 \cbvEq[\indexK]
        \aprxU_1$ and $\aprxT_2 \cbvEq[\indexK] \aprxU_2$. Thus
        $\aprxU = \app{\aprxU_1}{\aprxU_2} \in
        \app[\,]{\cbvAprxBvrS_\indexK}{\cbvAprxBnoS_\indexK} \subseteq
        \cbvAprxBneS_\indexK$ and $\aprxT \cbvEq[\indexK] \aprxU$
        using $(\cbvEqAppRuleName)$.

    \item[\bltII] $\aprxT = \app{\aprxT_1}{\aprxT_2}$ with $\aprxT_1
        \in \cbvAprxBneS_\indexK$ and $\aprxT_2 \in
        \cbvAprxBnoS_\indexK$: Then necessarily $\aprxU =
        \app{\aprxU_1}{\aprxU_2}$ with $\aprxT_1 \cbvAprxLeq \aprxU_1$
        and $\aprxT_2 \cbvAprxLeq \aprxU_2$. By \ih on $\aprxT_1$ and
        $\aprxT_2$, one has that $\aprxU_1 \in \cbvAprxBneS_\indexK$,
        $\aprxU_2 \in \cbvAprxBnoS_\indexK$, $\aprxT_1 \cbvEq[\indexK]
        \aprxU_1$ and $\aprxT_2 \cbvEq[\indexK] \aprxU_2$. Thus
        $\aprxU = \app{\aprxU_1}{\aprxU_2} \in
        \app{\cbvAprxBneS_\indexK}{\cbvAprxBnoS_\indexK} \subseteq
        \cbvAprxBneS_\indexK$ and $\aprxT \cbvEq[\indexK] \aprxU$
        using $(\cbvEqAppRuleName)$.

    \item[\bltII] $\aprxT = \aprxT_1\esub{x}{\aprxT_2}$ with
        $\aprxT_1, \aprxT_2 \in \cbvAprxBneS_\indexK$: Then
        necessarily $\aprxU = \aprxU_1\esub{x}{\aprxU_2}$ with
        $\aprxT_1 \cbvAprxLeq \aprxU_1$ and $\aprxT_2 \cbvAprxLeq
        \aprxU_2$. By \ih on $\aprxT_1$ and $\aprxT_2$, one has that
        $\aprxU_1 \in \cbvAprxBneS_\indexK$, $\aprxU_2 \in
        \cbvAprxBneS_\indexK$, $\aprxT_1 \cbvEq[\indexK] \aprxU_1$ and
        $\aprxT_2 \cbvEq[\indexK] \aprxU_2$. Thus $\aprxU =
        \aprxU_1\esub{x}{\aprxU_2} \in
        \cbvAprxBneS_\indexK\esub{x}{\cbvAprxBneS_\indexK} \subseteq
        \cbvAprxBneS_\indexK$ and $\aprxT \cbvEq[\indexK] \aprxU$
        using $(\cbvEqEsRuleName)$.
    \end{itemize}

\item[\bltI] $\aprxT \in \cbvAprxBnoS_\indexK$:
    \begin{itemize}
    \item[\bltII] $\aprxT = \abs{x}{\aprxT'}$: Then necessarily
        $\aprxU = \abs{x}{\aprxU'}$ with $\aprxT' \cbvAprxLeq
        \aprxU'$. We distinguish three cases:
        \begin{itemize}
        \item[\bltIII] $\indexK = 0$: Then $\aprxU = \abs{x}{\aprxU'}
            \in \cbvAprxBnoS_0$.

        \item[\bltIII] $\indexK = \indexI + 1$ for some $\indexI \in
            \mathbb{N}$: Then $\aprxT' \in \cbvAprxBnoS_\indexI$ and
            by \ih on $\aprxT'$ one has $\aprxU'\! \in
            \cbvAprxBnoS_\indexI$ and $\aprxT' \cbvEq[\indexI]
            \aprxU'$. Thus $\aprxU = \abs{x}{\aprxU'\!} \in
            \abs{x}{\cbvAprxBnoS_\indexI} \subseteq
            \cbvAprxBnoS_\indexK$ and $\aprxT \cbvEq[\indexK] \aprxU$
            using $(\cbvEqAbsRuleName)$.

        \item[\bltIII] $\indexK = \indexOmega$: Then $\aprxT' \in
            \cbvAprxBnoS_\indexOmega$ and by \ih on $\aprxT'$ one has
            $\aprxU'\! \in \cbvAprxBnoS_\indexOmega$, so $\aprxU =
            \abs{x}{\aprxU'\!} \in \abs{x}{\cbvAprxBnoS_\indexOmega}
            \subseteq \cbvAprxBnoS_\indexK$ and $\aprxT
            \cbvEq[\indexOmega] \aprxU$ using $(\cbvEqAbsRuleName)$.
        \end{itemize}

    \item[\bltII] $\aprxT \in \cbvAprxBvrS_\indexK$: By \ih on
        $\aprxT$, one has that $\aprxU \in \cbvAprxBvrS_\indexK
        \subseteq \cbvAprxBnoS_\indexK$ and $\aprxT \cbvEq[\indexK]
        \aprxU$.

    \item[\bltII] $\aprxT \in \cbvAprxBneS_\indexK$: By \ih on
        $\aprxT$, one has that $\aprxU \in \cbvAprxBneS_\indexK
        \subseteq \cbvAprxBnoS_\indexK$ and $\aprxT \cbvEq[\indexK]
        \aprxU$.

    \item[\bltII] $\aprxT = \aprxT_1\esub{x}{\aprxT_2}$ with $\aprxT_1
        \in \cbvAprxBnoS_\indexK$ and $\aprxT_2 \in
        \cbvAprxBneS_\indexK$: Then necessarily $\aprxU =
        \aprxU_1\esub{x}{\aprxU_2}$ with $\aprxT_1 \cbvAprxLeq
        \aprxU_1$ and $\aprxT_2 \cbvAprxLeq \aprxU_2$. By \ih on
        $\aprxT_1$ and $\aprxT_2$, one has that $\aprxU_1 \in
        \cbvAprxBnoS_\indexK$, $\aprxU_2 \in \cbvAprxBneS_\indexK$,
        $\aprxT_1 \cbvEq[\indexK] \aprxU_1$ and $\aprxT_2
        \cbvEq[\indexK] \aprxU_2$. Thus $\aprxU =
        \aprxU_1\esub{x}{\aprxU_2} \in
        \cbvAprxBnoS_\indexK\esub{x}{\cbvAprxBneS_\indexK} \subseteq
        \cbvAprxBnoS_\indexK$ and $\aprxT \cbvEq[\indexK] \aprxU$
        using $(\cbvEqEsRuleName)$.
        \qedhere
    \end{itemize}
\end{itemize}
\end{proof}

}

\section{Proof of Section \ref{sec:cbn}}

\subsection{Proof of Subsection \ref{sec:syntax-cbn}}

Let us recall the mutual inductive definitions of $\cbnNeS_\indexK$
and $\cbnNoS_\indexK$:
\begin{equation*}
    \begin{array}{c}
        \begin{array}{rcl}
        \cbnNeS_0 &\coloneqq& x
            \vsep \app{\cbnNeS_0}{t}
        \\
        \cbnNeS_{\indexI+1} &\coloneqq& x
            \vsep \app{\cbnNeS_{\indexI+1}}{\cbnNoS_\indexI}
                    \qquad (\indexI \in \setIntegers)
        \\
        \cbnNeS_\indexOmega &\coloneqq& x
            \vsep \app{\cbnNeS_\indexOmega}{\cbnNoS_\indexOmega}
    \\[0.2cm]
            \cbnNoS_\indexK &\coloneqq& \abs{x}{\cbnNoS_\indexK}
                \vsep \cbnNeS_\indexK
                \qquad (\indexK \in \setOrdinals)
        \end{array}
    \end{array}
\end{equation*}

\CbnNoSnCharacterization* \label{prf:cbn_NnoNn_Characterization}%
\stableProof{
    \begin{proof}
Let $\indexK \in \setOrdinals$ and $t \in \setCbnTerms$. We strengthen
the property as follows:
\begin{itemize}
\item[\bltI] $t \in \cbnNeS_\indexK$ if and only if $t$ is a
    $\cbnSymbSurface_\indexK$-normal form and $\neg\cbvAbsPred{t}$.

\item[\bltI] $t \in \cbnNoS_\indexK$ if and only if $t$ is a
    $\cbnSymbSurface_\indexK$-normal form.
\end{itemize}
By double implication:
\begin{itemize}
\item[$(\Rightarrow)$] By mutual induction on $t \in \cbnNeS_\indexK$
    and $t \in \cbnNoS_\indexK$:
    \begin{itemize}
    \item[\bltI] $t \in \cbnNeS_\indexK$: We distinguish two cases: 
        \begin{itemize}
        \item[\bltII] $t = x$: Then $t$ is a
            $\cbnSymbSurface_\indexK$-normal form and
            $\neg\cbnAbsPred{t}$.

        \item[\bltII] $t = \app{t_1}{t_2}$ with $t_1 \in
            \cbnNeS_\indexK$: Then, by \ih on $t_1$, one has that
            $t_1$ is a $\cbnSymbSurface_\indexK$-normal form and
            $\neg\cbnAbsPred{t_1}$ thus $t = \app{t_1}{t_2}$ is not a
            redex. Moreover, $\neg\cbnAbsPred{t}$. We distinguish
            three cases:
            \begin{itemize}
            \item[\bltIII] $\indexK = 0$: Then $t = \app{t_1}{t_2}$ is
                $\cbnSymbSurface_0$-normal form.

            \item[\bltIII] $\indexK = \indexI + 1$ for some $\indexI
                \in \mathbb{N}$: Then $t_2 \in \cbnNoS_\indexI$ thus
                by \ih, one has that $t_2$ is a
                $\cbnSymbSurface_\indexI$-normal form. Since $t$ is
                not a redex then it is a
                $\cbnSymbSurface_\indexK$-normal form.

            \item[\bltIII] $\indexK = \indexOmega$: Then $t_2 \in
                \cbnNoS_\indexOmega$ thus by \ih, one has that $t_2$
                is a $\cbnSymbSurface_\indexOmega$-normal form. Since
                $t$ is not a redex then it is a
                $\cbnSymbSurface_\indexOmega$-normal form.
            \end{itemize}
        \end{itemize}

    \item[\bltI] $t \in \cbnNoS_\indexK$: We distinguish two cases:
        \begin{itemize}
        \item[\bltII] $t = \abs{x}{t'}$ with $t' \in \cbnNoS_\indexK$:
            By \ih on $t'$, one has that $t'$ is a
            $\cbnSymbSurface_\indexK$-normal form thus $\abs{x}{t'}$
            is also a $\cbnSymbSurface_\indexK$-normal form.

        \item[\bltII] $t \in \cbnNeS_\indexK$: Then by \ih on $t$, one
            deduces that $t$ is a $\cbnSymbSurface_\indexK$-normal
            form.
        \end{itemize}
    \end{itemize}

\item[$(\Leftarrow)$] By induction on $t$:
    \begin{itemize}
    \item[\bltI] $t = x$: Then $\neg\cbnAbsPred{t}$ and $t \in
        \cbnNeS_\indexK$.

    \item[\bltI] $t = \abs{x}{t'}$: Then $\cbnAbsPred{t}$ and $t'$ is
        necessarily a $\cbnSymbSurface_\indexK$-normal form. By \ih on
        $t'$, one has that $t' \in \cbnNoS_\indexK$ thus $t =
        \abs{x}{t'} \in \abs{x}{\cbnNoS_\indexK} \subseteq
        \cbnNoS_\indexK$.

    \item[\bltI] $t = \app{t_1}{t_2}$: Then $\neg\cbnAbsPred{t}$ and
        necessarily, $\neg\cbvAbsPred{t}$ and $t_1$ is a
        $\cbnSymbSurface_\indexK$-normal form. By \ih on $t_1$, one
        deduces that $t_1 \in \cbnNeS_\indexK$. We distinguish three
        cases:
        \begin{itemize}
        \item[\bltII] $\indexK = 0$: Then $t = \app{t_1}{t_2} \in
        \app{\cbnNeS_0}{t_2} \subseteq \cbnNeS_0$.

        \item[\bltII] $\indexK = \indexI + 1$ for some $\indexI \in
            \mathbb{N}$: Then $t_2 \in \cbnNoS_\indexI$. By \ih on
            $t_2$, one has that $t_2 \in \cbnNoS_\indexI$ thus $t =
            \app{t_1}{t_2} \in \app{\cbnNeS_\indexK}{\cbnNoS_\indexI}
            \subseteq \cbnNeS_\indexK$.

        \item[\bltII] $\indexK = \indexOmega$: Then $t_2 \in
            \cbnNoS_\indexOmega$. By \ih on $t_2$, one has that $t_2
            \in \cbnNoS_\indexOmega$ thus $t = \app{t_1}{t_2} \in
            \app{\cbnNeS_\indexOmega}{\cbnNoS_\indexOmega} \subseteq
            \cbnNeS_\indexOmega$.
        \end{itemize}

    \item[\bltI] $t = t_1\esub{x}{t_2}$: Impossible since it
        contradict the fact that $t$ is a $\cbnSymbSurface_\indexK$.
    \end{itemize}
\end{itemize}
\end{proof}
}

\CbnTypedSurfaceGenericity*
\label{prf:Cbn_t_meaningless_and_Pi_|>_F<t>_==>_Pi'_|>_F<u>}%
\stableProof{
    \begin{proof}
By induction on $\cbnCCtxt$:
\begin{itemize}
\item[\bltI] $\cbnCCtxt = \Hole$: Then $\cbnCCtxt<t> = t$ is
    meaningless, which is impossible since it contradicts
    \Cref{lem:cbnBKRV_characterizes_meaningfulness}.

\item[\bltI] $\cbnCCtxt = \abs{x}{\cbnCCtxt'}$: Then $\Pi$ is
    necessarily of the following form:
    \begin{equation*}
        \begin{prooftree}
            \hypo{\Phi \cbnTrBKRV \Gamma, x : \M \vdash \cbnCCtxt'<t> : \tau}
            \inferCbnBKRVAbs{\Gamma \vdash \abs{x}{\cbnCCtxt'<t>} : \M \typeArrow \tau\quad}
        \end{prooftree}
    \end{equation*}
    with $\sigma = \M \typeArrow \tau$. By \ih on $\Phi$, one has
    $\Phi' \cbnTrBKRV \Gamma, x : \M \cbnTrBKRV \cbnCCtxt'<u> : \tau$
    and we set $\Pi'$ as the following derivation:
    \begin{equation*}
        \begin{prooftree}
            \hypo{\Phi' \cbnTrBKRV \Gamma, x : \M \vdash \cbnCCtxt'<u> : \tau}
            \inferCbnBKRVAbs{\Gamma \vdash \abs{x}{\cbnCCtxt'<u>} : \M \typeArrow \tau\quad}
        \end{prooftree}
    \end{equation*}

\item[\bltI] $\cbnCCtxt = \app{\cbnCCtxt'}{s}$: Then $\Pi$ is
    necessarily of the following form:
    \begin{equation*}
        \begin{prooftree}
            \hypo{\Pi_1 \cbnTrBKRV \Gamma_1 \vdash \cbnCCtxt'<t> : \mset{\tau_i}_{i \in I} \typeArrow \sigma}
            \hypo{\Pi_2^i \cbnTrBKRV \Gamma_2^i \vdash s : \tau_i}
            \delims{\left(}{\right)_{i \in I}}
            \inferCbnBKRVApp{\Gamma_1 +_{i \in I} \Gamma_2^i &\vdash \app{\cbnCCtxt'<t>}{s} : \sigma\hspace{2.5cm}}
        \end{prooftree}
    \end{equation*}
    with $I$ finite and $\Gamma = \Gamma_1 +_{i \in I} \Gamma_2^i$. By
    \ih on $\Pi_1$, one has $\Pi'_1 \cbnTrBKRV \Gamma_1 \vdash
    \cbnCCtxt'<u> : \mset{\tau_i}_{i \in I} \typeArrow \sigma$ and we
    set $\Pi'$ as the following derivation:
    \begin{equation*}
        \begin{prooftree}
            \hypo{\Pi'_1 \cbnTrBKRV \Gamma_1 \vdash \cbnCCtxt'<u> : \mset{\tau_i}_{i \in I} \typeArrow \sigma}
            \hypo{\Pi_2^i \cbnTrBKRV \Gamma_2^i \vdash s : \tau_i}
            \delims{\left(}{\right)_{i \in I}}
            \inferCbnBKRVApp{\Gamma_1 +_{i \in I} \Gamma_2^i &\vdash \app{\cbnCCtxt'<t>}{s} : \sigma\hspace{2.5cm}}
        \end{prooftree}
    \end{equation*}

\item[\bltI] $\cbnCCtxt = \app[\,]{s}{\cbnCCtxt'}$: Then $\Pi$ is
    necessarily of the following form:
    \begin{equation*}
        \begin{prooftree}
            \hypo{\Pi_1 \cbnTrBKRV \Gamma_1 \vdash s : \mset{\tau_i}_{i \in I} \typeArrow \sigma}
            \hypo{\Pi_2^i \cbnTrBKRV \Gamma_2^i \vdash \cbnCCtxt'<t> : \tau_i}
            \delims{\left(}{\right)_{i \in I}}
            \inferCbnBKRVApp{\Gamma_1 +_{i \in I} \Gamma_2^i &\vdash \app[\,]{s}{\cbnCCtxt'<t>} : \sigma\hspace{2.5cm}}
        \end{prooftree}
    \end{equation*}
    with $I$ finite and $\Gamma = \Gamma_1 +_{i \in I} \Gamma_2^i$.
    For each $i \in I$, by \ih on $\Pi_2^i$, one has ${\Pi_2^i}'
    \cbnTrBKRV \Gamma_2^i \vdash \cbnCCtxt'<u> : \tau_i$ and we set
    $\Pi'$ as the following derivation:
    \begin{equation*}
        \begin{prooftree}
            \hypo{\Pi_1 \cbnTrBKRV \Gamma_1 \vdash s : \mset{\tau_i}_{i \in I} \typeArrow \sigma}
            \hypo{{\Pi_2^i}' \cbnTrBKRV \Gamma_2^i \vdash \cbnCCtxt'<u> : \tau_i}
            \delims{\left(}{\right)_{i \in I}}
            \inferCbnBKRVApp{\Gamma_1 +_{i \in I} \Gamma_2^i &\vdash \app[\,]{s}{\cbnCCtxt'<u>} : \sigma\hspace{2.5cm}}
        \end{prooftree}
    \end{equation*}

\item[\bltI] $\cbnCCtxt = \cbnCCtxt'\esub{x}{s}$: Then $\Pi$ is
    necessarily of the following form:
    \begin{equation*}
        \begin{prooftree}
            \hypo{\Pi_1 \cbnTrBKRV \Gamma_1, x : \mset{\tau_i}_{i \in I} \vdash \cbnCCtxt'<t> : \sigma}
            \hypo{\Pi_2^i \cbnTrBKRV \Gamma_2^i \vdash s : \tau_i}
            \delims{\left(}{\right)_{i \in I}}
            \inferCbnBKRVEs{\Gamma_1 +_{i \in I} \Gamma_2^i &\vdash \cbnCCtxt'<t>\esub{x}{s} : \sigma\hspace{1.9cm}}
        \end{prooftree}
    \end{equation*}
    with $I$ finite and $\Gamma = \Gamma_1 +_{i \in I} \Gamma_2^i$. By
    \ih on $\Pi_1$, one has $\Pi'_1 \cbnTrBKRV \Gamma_1, x :
    \mset{\tau_i}_{i \in I} \vdash \cbnCCtxt'<u> : \sigma$ and we set
    $\Pi'$ as the following derivation:
    \begin{equation*}
        \begin{prooftree}
            \hypo{\Pi'_1 \cbnTrBKRV \Gamma_1, x : \mset{\tau_i}_{i \in I} \vdash \cbnCCtxt'<u> : \sigma}
            \hypo{\Pi_2^i \cbnTrBKRV \Gamma_2^i \vdash s : \tau_i}
            \delims{\left(}{\right)_{i \in I}}
            \inferCbnBKRVEs{\Gamma_1 +_{i \in I} \Gamma_2^i &\vdash \cbnCCtxt'<u>\esub{x}{s} : \sigma\hspace{1.9cm}}
        \end{prooftree}
    \end{equation*}

\item[\bltI] $\cbnCCtxt = s\esub{x}{\cbnCCtxt'}$: Then $\Pi$ is
    necessarily of the following form:
    \begin{equation*}
        \begin{prooftree}
            \hypo{\Pi_1 \cbnTrBKRV \Gamma_1, x : \mset{\tau_i}_{i \in I} \vdash s : \sigma}
            \hypo{\Pi_2^i \cbnTrBKRV \Gamma_2^i \vdash \cbnCCtxt'<t> : \tau_i}
            \delims{\left(}{\right)_{i \in I}}
            \inferCbnBKRVEs{\Gamma_1 +_{i \in I} \Gamma_2^i &\vdash s\esub{x}{\cbnCCtxt'<t>} : \sigma\hspace{1.9cm}}
        \end{prooftree}
    \end{equation*}
    with $I$ finite and $\Gamma = \Gamma_1 +_{i \in I} \Gamma_2^i$.
    For each $i \in I$, by \ih on $\Pi_2^i$, one has $\Pi_2^i
    \cbnTrBKRV \Gamma_2^i \vdash \cbnCCtxt'<u> : \tau_i$ and we set
    $\Pi'$ as the following derivation:
    \begin{equation*}
        \begin{prooftree}
            \hypo{\Pi_1 \cbnTrBKRV \Gamma_1, x : \mset{\tau_i}_{i \in I} \vdash s : \sigma}
            \hypo{{\Pi_2^i}' \cbnTrBKRV \Gamma_2^i \vdash \cbnCCtxt<u> : \tau_i}
            \delims{\left(}{\right)_{i \in I}}
            \inferCbnBKRVEs{\Gamma_1 +_{i \in I} \Gamma_2^i &\vdash s\esub{x}{\cbnCCtxt<u>} : \sigma\hspace{1.9cm}}
        \end{prooftree}
    \end{equation*}
\end{itemize}
\end{proof}

}

\CbnQualitativeSurfaceGenericity*%
\label{prf:Cbn_Qualitative_Surface_Genericity}%
\stableProof{
    \begin{proof}
Let $t \in \setCbnTerms$ such that $t$ is \CBNSymb-meaningless and
$\cbnCCtxt<t>$ is \CBNSymb-meaningful. Using
\Cref{lem:cbnBKRV_characterizes_meaningfulness}, one has $\Pi
\cbnTrBKRV \Gamma \vdash \cbnCCtxt<t> : \sigma$ for some context
$\Gamma$ and some type $\sigma$. By
\Cref{lem:Cbn_t_meaningless_and_Pi_|>_F<t>_==>_Pi'_|>_F<u>}, there is
$\Pi' \cbnTrBKRV \Gamma \vdash \cbnCCtxt<u> : \sigma$, thus
using \Cref{lem:cbnBKRV_characterizes_meaningfulness} again, one
concludes that $\cbnCCtxt<u>$ is \CBNSymb-meaningful.
\end{proof}

}

\subsection{Technical Lemmas}

\begin{lemma}
    \label{lem:Cbn_t1_<=_t2_and_u1_<=_u2_==>_t1{x:=u1}_<=_t2{x:=u2}}%
    Let $\aprxT_1, \aprxT_2, \aprxU_1, \aprxU_2 \in \setCbnAprxTerms$
    such that $\aprxT_1 \cbnAprxLeq \aprxT_2$ and $\aprxU_1
    \cbnAprxLeq \aprxU_2$, then $\aprxT_1\isub{x}{\aprxU_1}
    \cbnAprxLeq \aprxT_2\isub{x}{\aprxU_2}$.
\end{lemma}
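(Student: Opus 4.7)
The plan is to proceed by straightforward induction on the structure of $\aprxT_1$, exploiting the fact that $\aprxT_1 \cbnAprxLeq \aprxT_2$ forces $\aprxT_2$ to have the same top-level shape as $\aprxT_1$ whenever $\aprxT_1 \neq \cbnAprxBot$. The argument is essentially the CbN analogue of \Cref{lem:cbv_t1_<=_t2_and_v1_<=_v2_==>_t1{x:=v1}_<=_t2{x:=v2}} already proved for \CBVSymb, the only difference being that in CbN there is no value restriction on $\aprxU_1, \aprxU_2$ (substitution is unrestricted), so the statement is actually slightly simpler to state.

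First I would dispatch the base cases. If $\aprxT_1 = \cbnAprxBot$, then $\aprxT_1\isub{x}{\aprxU_1} = \cbnAprxBot \cbnAprxLeq \aprxT_2\isub{x}{\aprxU_2}$ by the very definition of $\cbnAprxLeq$. If $\aprxT_1 = y$, then from $\aprxT_1 \cbnAprxLeq \aprxT_2$ one must have $\aprxT_2 = y$, and a trivial case split on $y = x$ (yielding $\aprxU_1 \cbnAprxLeq \aprxU_2$ by hypothesis) versus $y \neq x$ (yielding $y \cbnAprxLeq y$ by reflexivity) closes this case.

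For the inductive cases $\aprxT_1 = \abs{y}{\aprxT'_1}$, $\aprxT_1 = \app{\aprxT_1^1}{\aprxT_1^2}$, and $\aprxT_1 = \aprxT_1^1\esub{y}{\aprxT_1^2}$, the definition of $\cbnAprxLeq$ (as the \tPartial full context closure of $\cbnAprxBot \cbnAprxLeq \aprxT$) forces $\aprxT_2$ to match the same outer constructor, with components componentwise $\cbnAprxLeq$-related. Assuming $y \notin \freeVar{\aprxU_1} \cup \freeVar{\aprxU_2}$ via $\alpha$-conversion in the abstraction and ES cases, the meta-substitution distributes over the constructor, I apply the induction hypothesis to each subterm, and then close under the full-context closure of $\cbnAprxLeq$.

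I do not expect any real obstacle: the only delicate point is handling $\alpha$-conversion cleanly for binders ($\abs{y}{\cdot}$ and $\cdot\esub{y}{\cdot}$), which is standard and was already handled in the \CBVSymb counterpart. In particular, the CbV version needed to carry the constraint that $\aprxV_1, \aprxV_2$ are values only because the CbV substitution rule requires it; here no such restriction is needed, so the proof is if anything a simplification of the existing one.
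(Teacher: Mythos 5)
Your proof is correct and follows exactly the same route as the paper's: a structural induction on $\aprxT_1$, with the shape of $\aprxT_2$ forced by $\aprxT_1 \cbnAprxLeq \aprxT_2$ in each non-$\cbnAprxBot$ case, the induction hypothesis applied componentwise, and $\alpha$-conversion used to keep the bound variable out of $\freeVar{\aprxU_1} \cup \freeVar{\aprxU_2}$. Your observation that the value restriction in the \CBVSymb{} counterpart is the only difference is also accurate; nothing further is needed.
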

\stableProof{
    \begin{proof}
By induction on $\aprxT_1$:
\begin{itemize}
\item[\bltI] $\aprxT_1 = \cbnAprxBot$: Then
    $\aprxT_1\isub{x}{\aprxU_1} = \cbnAprxBot \cbnAprxLeq
    \aprxT_2\isub{x}{\aprxU_2}$.

\item[\bltI] $\aprxT_1 = y$: Then $\aprxT_2 = y$ and we distinguish
    two cases:
    \begin{itemize}
    \item[\bltII] $x = y$: Thus $\aprxT_1\isub{x}{\aprxU_1} = \aprxU_1
        \cbnAprxLeq \aprxU_2 = \aprxT_2\isub{x}{\aprxU_2}$.

    \item[\bltII] $x \neq y$: Thus $\aprxT_1\isub{x}{\aprxU_1} = y =
        \aprxT_2\isub{x}{\aprxU_2}$, thus $\aprxT_1\isub{x}{\aprxU_1}
        \cbnAprxLeq \aprxT_2\isub{x}{\aprxU_2}$ by reflexivity of
        $\cbnAprxLeq$.
    \end{itemize}

\item[\bltI] $\aprxT_1 = \abs{y}{\aprxT'_1}$: Then
    $\aprxT_1\isub{x}{\aprxU_1} =
    \abs{y}{\aprxT'_1\isub{x}{\aprxU_1}}$ (by supposing $y \notin
    \freeVar{\aprxU_1} \cup \freeVar{\aprxU_2}$ without loss of
    generality) and necessarily $\aprxT_2 = \abs{y}{\aprxT'_2}$ with
    $\aprxT'_1 \cbnAprxLeq \aprxT'_2$ thus $\aprxT_2\isub{x}{\aprxU_2}
    = \abs{y}{\aprxT'_2\isub{y}{\aprxU_2}}$. By \ih on $\aprxT'_1$,
    one has that $\aprxT'_1\isub{x}{\aprxU_1} \cbnAprxLeq
    \aprxT'_2\isub{x}{\aprxU_2}$ hence $\aprxT_1\isub{x}{\aprxU_1} =
    \abs{y}{\aprxT'_1\isub{x}{\aprxU_1}} \cbnAprxLeq
    \abs{y}{\aprxT'_2\isub{x}{\aprxU_2}} = \aprxT_2\isub{x}{\aprxU_2}$
    by contextual closure.

\item[\bltI] $\aprxT_1 = \app{\aprxT_1^1}{\aprxT_1^2}$: Then
    $\aprxT_1\isub{x}{\aprxU_1} =
    \app{(\aprxT_1^1\isub{x}{\aprxU_1})}{(\aprxT_1^2\isub{x}{\aprxU_1})}$
    and necessarily $\aprxT_2 = \app{\aprxT_2^1}{\aprxT_2^2}$ with
    $\aprxT_1^1 \cbnAprxLeq \aprxT_2^1$ and $\aprxT_1^2 \cbnAprxLeq
    \aprxT_2^2$, thus $\aprxT_2\isub{y}{\aprxU_2} =
    \app{(\aprxT_2^1\isub{x}{\aprxU_2})}{(\aprxT_2^2\isub{x}{\aprxU_2})}$.
    By \ih on $\aprxT_1^1$ and $\aprxT_1^2$, one obtains that
    $\aprxT_1^1\isub{x}{\aprxU_1} \cbnAprxLeq
    \aprxT_2^1\isub{x}{\aprxU_2}$ and $\aprxT_1^2\isub{x}{\aprxU_1}
    \cbnAprxLeq \aprxT_2^2\isub{x}{\aprxU_2}$, hence
    $\aprxT_1\isub{x}{\aprxU_1} =
    \app{(\aprxT_1^1\isub{x}{\aprxU_1})}{(\aprxT_1^2\isub{x}{\aprxU_1})}
    \cbnAprxLeq
    \app{(\aprxT_2^1\isub{x}{\aprxU_2})}{(\aprxT_2^2\isub{x}{\aprxU_2})}
    = \aprxT_2\isub{x}{\aprxU_2}$ by contextual closure.

\item[\bltI] $\aprxT_1 = \aprxT_1^1\esub{y}{\aprxT_1^2}$: Then
    $\aprxT_1\isub{x}{\aprxU_1} =
    \aprxT_1^1\isub{x}{\aprxU_1}\esub{y}{\aprxT_1^2\isub{x}{\aprxU_1}}$
    and necessarily $\aprxT_2 = \aprxT_2^1\esub{y}{\aprxT_2^2}$ with
    $\aprxT_1^1 \cbnAprxLeq \aprxT_2^1$ and $\aprxT_1^2 \cbnAprxLeq
    \aprxT_2^2$, thus $\aprxT_2\isub{y}{\aprxU_2} =
    \aprxT_2^1\isub{x}{\aprxU_2}\esub{y}{\aprxT_2^2\isub{x}{\aprxU_2}}$.
    By \ih on $\aprxT_1^1$ and $\aprxT_1^2$, one has
    $\aprxT_1^1\isub{x}{\aprxU_1} \cbnAprxLeq
    \aprxT_2^1\isub{x}{\aprxU_2}$ and $\aprxT_1^2\isub{x}{\aprxU_1}
    \cbnAprxLeq \aprxT_2^2\isub{x}{\aprxU_2}$, so by contextual
    closure $\aprxT_1\isub{x}{\aprxU_1} =
    \aprxT_1^1\isub{x}{\aprxU_1}\esub{y}{\aprxT_1^2\isub{x}{\aprxU_1}}
    \cbnAprxLeq
    \aprxT_2^1\isub{x}{\aprxU_2}\esub{y}{\aprxT_2^2\isub{x}{\aprxU_2}}
    = \aprxT_2\isub{x}{\aprxU_2}$.
    \qedhere
\end{itemize}
\end{proof}

}

\begin{definition}
    The partial order $\cbnAprxLeq$ is extended to \CBNSymb contexts
    by setting $\Hole \cbnAprxLeq \Hole$.
\end{definition}

For example,
$\app{\Hole\esub{x}{\abs{y}{\cbnAprxBot}}}{(\app{z}{\cbnAprxBot})}
\cbnAprxLeq
\app{\Hole\esub{x}{\abs{y}{(\app{z}{z})}}}{(\app{z}{\cbnAprxBot})}$.

\begin{lemma}
    \label{lem:cbn_L1_<=_L2_and_t1_<=_t2_==>_L1<t1>_<=_L2<t2>}%
    Let $\cbnAprxLCtxt_1 \cbnAprxLeq \cbnAprxLCtxt_2$ and $\aprxT_1
    \cbnAprxLeq \aprxT_2$ for some list contexts $\cbnAprxLCtxt_1,
    \cbnAprxLCtxt_2$ and some $\aprxT_1, \aprxT_2 \in
    \setCbnAprxTerms$, then $\cbnAprxLCtxt_1<\aprxT_1> \cbnAprxLeq
    \cbnAprxLCtxt_2<\aprxT_2>$.
\end{lemma}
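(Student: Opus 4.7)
The plan is to proceed by structural induction on $\cbnAprxLCtxt_1$, mirroring the proof of the analogous statement \Cref{lem:cbv_L1_<=_L2_and_t1_<=_t2_==>_L1<t1>_<=_L2<t2>} for the \CBVSymb\ case. The grammar of list contexts $\cbnAprxLCtxt \Coloneqq \Hole \vsep \cbnAprxLCtxt\esub{x}{t}$ admits only two cases, and the partial order on contexts is obtained by extending the ground relation $\Hole \cbnAprxLeq \Hole$ via \tPartial\ full context closure, so the shape of $\cbnAprxLCtxt_1$ forces the shape of any $\cbnAprxLCtxt_2$ above it.

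For the base case $\cbnAprxLCtxt_1 = \Hole$, I would first observe that $\Hole \cbnAprxLeq \cbnAprxLCtxt_2$ implies $\cbnAprxLCtxt_2 = \Hole$ (since the only base relation on contexts is $\Hole \cbnAprxLeq \Hole$, and the full-context closure cannot produce a hole-free context on the right). Plugging then gives $\cbnAprxLCtxt_1<\aprxT_1> = \aprxT_1 \cbnAprxLeq \aprxT_2 = \cbnAprxLCtxt_2<\aprxT_2>$ directly from the hypothesis.

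For the inductive case $\cbnAprxLCtxt_1 = \cbnAprxLCtxt'_1\esub{x}{\aprxU_1}$, the shape constraint from $\cbnAprxLeq$ forces $\cbnAprxLCtxt_2 = \cbnAprxLCtxt'_2\esub{x}{\aprxU_2}$ with $\cbnAprxLCtxt'_1 \cbnAprxLeq \cbnAprxLCtxt'_2$ and $\aprxU_1 \cbnAprxLeq \aprxU_2$. Applying the induction hypothesis to $\cbnAprxLCtxt'_1$ yields $\cbnAprxLCtxt'_1<\aprxT_1> \cbnAprxLeq \cbnAprxLCtxt'_2<\aprxT_2>$. Contextual closure of $\cbnAprxLeq$ under the ES constructor then concludes $\cbnAprxLCtxt_1<\aprxT_1> = \cbnAprxLCtxt'_1<\aprxT_1>\esub{x}{\aprxU_1} \cbnAprxLeq \cbnAprxLCtxt'_2<\aprxT_2>\esub{x}{\aprxU_2} = \cbnAprxLCtxt_2<\aprxT_2>$.

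There is no substantial obstacle here: the proof is a routine two-case structural induction and is essentially a transcription of the \CBVSymb\ version, the only nuance being the preliminary remark that the \tPartial\ order on list contexts constrains $\cbnAprxLCtxt_2$ to share the top-level constructor of $\cbnAprxLCtxt_1$, which is immediate from how $\cbnAprxLeq$ was defined on contexts.
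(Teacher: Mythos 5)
Your proof is correct and follows essentially the same route as the paper's: a two-case structural induction on $\cbnAprxLCtxt_1$, with the base case immediate and the inductive case closed by the induction hypothesis plus contextual closure of $\cbnAprxLeq$ under the ES constructor. The paper's version simply asserts that $\cbnAprxLCtxt_2$ "necessarily" shares the top-level shape of $\cbnAprxLCtxt_1$, whereas you spell out why; this is a harmless (indeed welcome) elaboration, not a divergence.
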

\stableProof{
    \begin{proof}
By induction on $\cbnAprxLCtxt_1$:
\begin{itemize}
\item[\bltI] $\cbnAprxLCtxt_1 = \Hole$: Then necessarily
    $\cbnAprxLCtxt_2 = \Hole$ thus $\cbnAprxLCtxt_1<\aprxT_1> =
    \aprxT_1 \cbnAprxLeq \aprxT_2 = \cbnAprxLCtxt_2<\aprxT_2>$.

\item[\bltI] $\cbnAprxLCtxt_1 = \cbnAprxLCtxt'_1\esub{x}{\aprxU_1}$:
    Then necessarily $\cbnAprxLCtxt_2 =
    \cbnAprxLCtxt'_2\esub{x}{\aprxU_2}$ with $\cbnAprxLCtxt'_1
    \cbnAprxLeq \cbnAprxLCtxt'_2$ and $\aprxU_1 \cbnAprxLeq \aprxU_2$.
    By \ih on $\cbnAprxLCtxt'_1$, one has that
    $\cbnAprxLCtxt'_1<\aprxT_1> \cbnAprxLeq
    \cbnAprxLCtxt'_2<\aprxT_2>$ thus $\cbnAprxLCtxt_1<\aprxT_1> =
    \cbnAprxLCtxt'_1<\aprxT_1>\esub{x}{\aprxU_1} \cbnAprxLeq
    \cbnAprxLCtxt'_2<\aprxT_2>\esub{x}{\aprxU_2} =
    \cbnAprxLCtxt_2<\aprxT_2>$ by contextual closure.
    \qedhere
\end{itemize}
\end{proof}

}%

\begin{lemma}
    \label{lem:cbn_t_meaningful_and_t_->Sn_u_=>_u_meaningful}%
    Let $t, u \in \setCbnTerms$ such that $t \cbnArr_{S_\indexK} u$.
    Then, $t$ is \CBNSymb-meaningful if and only if $u$ is
    \CBNSymb-meaningful.
\end{lemma}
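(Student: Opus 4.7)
The plan is to mirror the structure of the analogous \CBVSymb\ result (Lemma~\ref{lem:cbv_t_meaningful_and_t_->Sn_u_=>_u_meaningful}), relying on the logical characterization of \CBNSymb-meaningfulness by $\cbnTypeSysBKRV$-typability (Lemma~\ref{lem:cbnBKRV_characterizes_meaningfulness}) and on (full) subject reduction and subject expansion for $\cbnTypeSysBKRV$, which hold in the literature (\cite{BKRV20}, as cited in the theorem stating that \CBNSymb\ satisfies \Cref{ax:Meaningfulness_Stable_Reduction}). The key observation is that one does not need a dedicated stratified typing result: since every stratified step at level $\indexK$ is in particular a full step (\ie $\cbnArr_{S_\indexK} \subseteq \cbnArr_{S_\indexOmega}$ for all $\indexK \in \setOrdinals$), subject reduction/expansion at the full level suffice.

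Concretely, I would proceed as follows. Assume $t \cbnArr_{S_\indexK} u$. By the inclusion above, $t \cbnArr_{S_\indexOmega} u$. Applying qualitative subject reduction (\resp expansion) for $\cbnTypeSysBKRV$ under full reduction, we obtain: there is a derivation $\Pi \cbnTrBKRV \Gamma \vdash t : \sigma$ if and only if there is a derivation $\Pi' \cbnTrBKRV \Gamma \vdash u : \sigma$. Consequently, $t$ is $\cbnTypeSysBKRV$-typable iff $u$ is $\cbnTypeSysBKRV$-typable. By the logical characterization of \CBNSymb-meaningfulness (Lemma~\ref{lem:cbnBKRV_characterizes_meaningfulness}), this is equivalent to $t$ being \CBNSymb-meaningful iff $u$ is \CBNSymb-meaningful, which is the desired conclusion.

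The main obstacle, if any, is merely to ensure that the cited subject reduction and expansion results for $\cbnTypeSysBKRV$ are stated for \emph{full} reduction rather than only for surface/head reduction; if only the surface version is available in \cite{BKRV20}, one would need a short inductive argument, lifting the surface version through stratified contexts (essentially an induction on the stratified context $\cbnSCtxt_\indexK$) using the inductive typing rules of $\cbnTypeSysBKRV$. Since the preservation and expansion properties for $\cbnTypeSysBKRV$ are known to hold uniformly and are invoked elsewhere in the paper (notably to verify \Cref{ax:Meaningfulness_Stable_Reduction}), I expect no substantial complication, and the proof reduces to one line of citation followed by the equivalence with typability.
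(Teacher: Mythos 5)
Your proof is correct and follows essentially the same route as the paper's: the inclusion $\cbnArr_{S_\indexK} \subseteq \cbnArr_{S_\indexOmega}$, full subject reduction and expansion for $\cbnTypeSysBKRV$ (the paper cites \cite{deCarvalho18} for this), and the logical characterization of \CBNSymb-meaningfulness via typability. The only difference is the bibliographic source for subject reduction/expansion, which does not affect the argument.
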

\stableProof{
\begin{proof}
According to qualitative subject reduction and expansion
\cite{deCarvalho18}, given $t \cbnArr_{S_\indexOmega} u$, there is a
derivation $\Pi \cbnTrBKRV \Gamma \vdash t : \sigma$ if and only if
there is a derivation $\Pi' \cbnTrBKRV \Gamma \vdash u : \sigma$. As
$\cbnArr_{S_\indexK} \subseteq \cbnArr_{S_\indexOmega}$ for all
$\indexK \in \setIntegers$, the same equivalence also holds if we
replace the hypothesis $t \cbnArr_{S_\indexOmega} u$ with $t
\cbnArr_{S_\indexK} u$, for all $\indexK \in \setIntegers$. We
conclude thanks to the type-theoretical characterization of
\CBNSymb-meaningfulness
(\Cref{lem:cbnBKRV_characterizes_meaningfulness}).
\end{proof}
}

\begin{lemma}
    \label{lem:cbn_solvability_on_app_and_es}%
    Let $t, u \in \setCbnTerms$. Then
    \begin{enumerate}
    \item If $\app{t}{u}$ is \CBNSymb-meaningful then $t$ is
        \CBNSymb-meaningful.%
        \label{lem:cbn_meaningful_on_app}%

    \item If $t\esub{x}{u}$ is \CBNSymb-meaningful then $t$ is
        \CBNSymb-meaningful.%
        \label{lem:cbn_meaningful_on_esub}%

    \item If $t\isub{x}{u}$ is \CBNSymb-meaningful then $t$ is
        \CBNSymb-meaningful.%
        \label{lem:cbn_meaningful_on_isub}%
    \end{enumerate}
\end{lemma}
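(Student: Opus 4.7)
The plan is to rely on the logical characterization of \CBNSymb-meaningfulness (\Cref{lem:cbnBKRV_characterizes_meaningfulness}), which equates being \CBNSymb-meaningful with being $\cbnTypeSysBKRV$-typable. This reduces each item to a question about type derivations, bypassing the rewriting-level diamond/normalization argument used in the analogous \CBVSymb lemma (\Cref{lem:Cbv_solvability_on_app_and_es}).

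For items (1) and (2), the argument is immediate by inspection of the typing rules. The application rule $(\cbnBKRVAppRuleName)$ has a premise $\Pi_1 \cbnTrBKRV \Gamma_1 \vdash t : \mset{\tau_i}_{i \in I} \typeArrow \sigma$, and the explicit substitution rule $(\cbnBKRVEsRuleName)$ has a premise $\Pi_1 \cbnTrBKRV \Gamma_1, x : \mset{\tau_i}_{i \in I} \vdash t : \sigma$. Thus, given any derivation $\Pi \cbnTrBKRV \Gamma \vdash \app{t}{u} : \sigma$ (resp.\ $\Pi \cbnTrBKRV \Gamma \vdash t\esub{x}{u} : \sigma$), we simply extract the subderivation $\Pi_1$, obtaining a witness of typability of $t$, whence $t$ is \CBNSymb-meaningful by \Cref{lem:cbnBKRV_characterizes_meaningfulness}.

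For item (3), I would observe that $t\esub{x}{u} \mapstoR[\cbnSymbSubs] t\isub{x}{u}$ is always a valid single $\cbnArr_{S_0}$-step (the CBN substitution rule is unrestricted, unlike its CBV counterpart). Hence, by stability of meaningfulness under reduction (\Cref{lem:cbn_t_meaningful_and_t_->Sn_u_=>_u_meaningful}), $t\esub{x}{u}$ is \CBNSymb-meaningful iff $t\isub{x}{u}$ is. Assuming $t\isub{x}{u}$ meaningful, we thus obtain $t\esub{x}{u}$ meaningful, and item (2) then yields $t$ meaningful.

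No real obstacle is expected. The key simplification compared to the CBV case is precisely that $\cbnSymbSubs$ always fires: this lets item (3) be reduced to item (2) via a single subject-reduction/expansion step, rather than requiring an explicit induction on $t$ to show that $t \cbnArr_{S_0} t'$ implies $t\isub{x}{u} \cbnArr_{S_0} t'\isub{x}{u}$, as was done in the CBV proof.
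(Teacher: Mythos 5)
Your proof is correct, but it takes a genuinely different route from the paper's. The paper argues \emph{operationally}: it assumes $t$ meaningless, builds an infinite surface reduction from $\app{t}{u}$ (resp.\ $t\esub{x}{u}$) by always contracting inside $t$, and derives a contradiction with surface normalization via the diamond property of $\cbnArr_{S_0}$; for item (3) it additionally proves, by induction on $t$, that $t \cbnArr_{S_0} t'$ implies $t\isub{x}{u} \cbnArr_{S_0} t'\isub{x}{u}$. Your argument instead goes through the \emph{logical} characterization (\Cref{lem:cbnBKRV_characterizes_meaningfulness}): since system $\cbnTypeSysBKRV$ is syntax-directed, any derivation typing $\app{t}{u}$ or $t\esub{x}{u}$ contains a subderivation typing $t$ (even when the index set $I$ is empty, the left premise is still present), which settles (1) and (2) immediately; and your reduction of (3) to (2) via the unconditional step $t\esub{x}{u} \mapstoR[\cbnSymbSubs] t\isub{x}{u}$ together with \Cref{lem:cbn_t_meaningful_and_t_->Sn_u_=>_u_meaningful} is valid and introduces no circularity, since that stability lemma is proved independently by subject reduction/expansion. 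What your approach buys: it avoids appealing to the diamond property (which the paper uses but never states explicitly, only confluence) and avoids the auxiliary induction on substitution; what it costs is nothing here, though note it would not extend verbatim to the \CBVSymb analogue, where the stronger conclusion that the \emph{argument} is also meaningful genuinely needs the operational argument (the typing premise for the argument may be an empty multiset).
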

\stableProof{
\begin{proof} ~
\begin{enumerate}
\item Let $\app{t}{u}$ be \CBNSymb-meaningful and suppose by absurd
    that $t$ is \CBNSymb-meaningless. Using
    \Cref{lem:cbnBKRV_characterizes_meaningfulness}, we know that
    $\app{t}{u}$ \tSurface normalizes. Since $t$ is
    \CBNSymb-meaningless, there exists a $(t_i)_{i \in \mathbb{N}}$
    with $t_0 = t$ and $t_i \cbvArr_{S_0} t_{i+1}$ yielding the
    infinite path $\app{t}{u} \cbvArr_{S_0} \app{t_1}{u} \cbvArr_{S_0}
    \app{t_2}{u} \cbvArr_{S_0} \dots$ obtained by always reducing $t$.
    Since surface reduction is diamond, this contradicts the fact that
    $\app{t}{u}$ is surface normalizing.

\item Same principle as the previous item.

\item Let $t\isub{x}{v}$ be \CBNSymb-meaningful. By induction on $t$,
    one can show that if $t \cbvArr_{S_0} u$ then $t\isub{x}{v}
    \cbvArr_{S_0} u\isub{x}{v}$. Using this, one applied the same
    principles as the two previous points.
\end{enumerate}
\end{proof}
}

\begin{definition}
    Meaningful approximation is extended to list contexts $\cbnLCtxt$
    inductively as follows:
    \begin{equation*}
        \cbnTApprox{\Hole}
            \coloneqq \Hole
    \hspace{2cm}
        \cbvTApprox{\cbnLCtxt\esub{x}{s}}
            \coloneqq
                \cbnTApprox{\cbnLCtxt}\esub{x}{\cbnTApprox{s}}
    \end{equation*}
\end{definition}

\begin{corollary} 
    \label{lem:cbn_MA(L<t>)_=_MA(L)<MA(t)>}%
    Let $\cbnLCtxt<t>$ be \CBNSymb-meaningful, then
    $\cbnTApprox{\cbnLCtxt<t>} =
    \cbnTApprox{\cbnLCtxt}\cbnCtxtPlug{\cbnTApprox{t}}$.
\end{corollary}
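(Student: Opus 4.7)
The proof plan is to proceed by induction on the structure of the list context $\cbnLCtxt$, mirroring the argument already given for the call-by-value analogue (Corollary~\ref{lem:cbv_MA(L<t>)_=_MA(L)<MA(t)>}). Note that by the grammar of list contexts, $\cbnLCtxt$ is either $\Hole$ or $\cbnLCtxt'\esub{x}{u}$, so the induction has only two cases.

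The base case $\cbnLCtxt = \Hole$ is immediate: both sides equal $\cbnTApprox{t}$ by definition of the extension of $\cbnTApprox{\cdot}$ to list contexts. For the inductive case $\cbnLCtxt = \cbnLCtxt'\esub{x}{u}$, we have $\cbnLCtxt<t> = \cbnLCtxt'<t>\esub{x}{u}$. The crucial preliminary observation is that the hypothesis ``$\cbnLCtxt<t>$ is \CBNSymb-meaningful'' ensures the outer term is not replaced wholesale by $\cbnAprxBot$ in the approximation: by Lemma~\ref{lem:cbn_solvability_on_app_and_es}.(\ref{lem:cbn_meaningful_on_esub}), $\cbnLCtxt'<t>$ is itself \CBNSymb-meaningful, and therefore the inductive hypothesis applies to $\cbnLCtxt'$. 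Unfolding the definition of $\cbnTApprox{\cdot}$ on the meaningful closure, and then applying the IH, we get
\[
\cbnTApprox{\cbnLCtxt<t>} \;=\; \cbnTApprox{\cbnLCtxt'<t>}\esub{x}{\cbnTApprox{u}} \;=\; \cbnTApprox{\cbnLCtxt'}\cbnCtxtPlug{\cbnTApprox{t}}\esub{x}{\cbnTApprox{u}} \;=\; \bigl(\cbnTApprox{\cbnLCtxt'}\esub{x}{\cbnTApprox{u}}\bigr)\cbnCtxtPlug{\cbnTApprox{t}} \;=\; \cbnTApprox{\cbnLCtxt}\cbnCtxtPlug{\cbnTApprox{t}},
\]
which closes the case.

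There is no real obstacle here: the proof is entirely routine once the meaningfulness is propagated to the inner list context via Lemma~\ref{lem:cbn_solvability_on_app_and_es}. The only subtlety to flag is that without the meaningfulness hypothesis the statement would fail, since if $\cbnLCtxt<t>$ were meaningless then $\cbnTApprox{\cbnLCtxt<t>} = \cbnAprxBot$, whereas $\cbnTApprox{\cbnLCtxt}\cbnCtxtPlug{\cbnTApprox{t}}$ could still be a non-trivial partial term (e.g.\ if $t$ is meaningful and only the ES body causes divergence). Hence the very first move of the proof must be to invoke meaningfulness to unblock the definition of $\cbnTApprox{\cdot}$ on closures before appealing to the IH.
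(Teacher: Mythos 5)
Your proof is correct and follows essentially the same route as the paper's: induction on the list context, with the key step in the ES case being the propagation of meaningfulness to $\cbnLCtxt'<t>$ via Lemma~\ref{lem:cbn_solvability_on_app_and_es}.(\ref{lem:cbn_meaningful_on_esub}) so that the definition of $\cbnTApprox{\cdot}$ unfolds compositionally rather than collapsing to $\cbnAprxBot$. Your closing remark on why the meaningfulness hypothesis is indispensable is accurate, though the paper does not spell it out.
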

\stableProof{
    \begin{proof}
By induction on $\cbnLCtxt$. Cases:
\begin{itemize}
\item[\bltI] $\cbnLCtxt = \Hole$: Then $\cbnTApprox{\cbnLCtxt<t>} =
    \cbnTApprox{t} = \Hole\cbnCtxtPlug{\cbnTApprox{t}} =
    \cbnTApprox{\Hole}\cbnCtxtPlug{\cbnTApprox{t}}$.

\item[\bltI] $\cbnLCtxt = \cbnLCtxt'\esub{x}{u}$: Then $\cbnLCtxt<t> =
    \cbnLCtxt'<t>\esub{x}{u}$ is \CBNSymb-meaningful, hence by
    \Cref{lem:cbn_solvability_on_app_and_es}.(\ref{lem:cbn_meaningful_on_esub})
    $\cbnLCtxt'<t>$ is \CBNSymb-meaningful. By \ih on $\cbnLCtxt'$,
    one obtains that $\cbnTApprox{\cbnLCtxt'<t>} =
    \cbnTApprox{\cbnLCtxt'}\cbnCtxtPlug{\cbnTApprox{t}}$, thus:
    \begin{equation*}
        \begin{array}{r cl cl}
            \cbnTApprox{\cbnLCtxt<t>}
            &=& \cbnTApprox{\cbnLCtxt'<t>\esub{x}{u}}
        \\
            &=& \cbnTApprox{\cbnLCtxt'<t>}\esub{x}{\cbnTApprox{u}}
        \\
            &\eqih& \cbnTApprox{\cbnLCtxt'}\cbnCtxtPlug{\cbnTApprox{t}}\esub{x}{\cbnTApprox{u}}
        \\
            &=& \big(\cbnTApprox{\cbnLCtxt'}\esub{x}{\cbnTApprox{u}}\big)\cbnCtxtPlug{\cbnTApprox{t}}
        \\
            &=& \cbnTApprox{\cbnLCtxt'\esub{x}{u}}\cbnCtxtPlug{\cbnTApprox{t}}
            &=& \cbnTApprox{\cbnLCtxt}\cbnCtxtPlug{\cbnTApprox{t}}
        \end{array}
    \end{equation*}
\end{itemize}
\end{proof}

}

\begin{lemma} 
    \label{lem:cbn_MA(t{x:=u})_<=_MA(t){x:=MA(u)}}%
    Let $t, u \in \setCbvTerms$, then $\cbnTApprox{t\isub{x}{u}}
    \cbnAprxLeq \cbnTApprox{t}\isub{x}{\cbnTApprox{u}}$.
\end{lemma}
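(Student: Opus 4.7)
The plan is to mirror the proof of the CbV analogue (Lemma \ref{lem:cbv_MA(t{x:=v})_<=_MA(t){x:=MA(v)}}) with two adaptations: here $u$ is an arbitrary term rather than a value, and we use the CbN auxiliary lemmas (in particular Lemma \ref{lem:cbn_solvability_on_app_and_es}) in place of their CbV counterparts. First I would case-split on whether $t\isub{x}{u}$ is \CBNSymb-meaningful. If it is \CBNSymb-meaningless, then $\cbnTApprox{t\isub{x}{u}} = \cbnAprxBot$ and the conclusion $\cbnTApprox{t\isub{x}{u}} \cbnAprxLeq \cbnTApprox{t}\isub{x}{\cbnTApprox{u}}$ is immediate by definition of $\cbnAprxLeq$.

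Otherwise $t\isub{x}{u}$ is \CBNSymb-meaningful, and by Lemma \ref{lem:cbn_solvability_on_app_and_es}.(\ref{lem:cbn_meaningful_on_isub}) so is $t$; hence $\cbnTApprox{t}$ unfolds structurally at its head. I would then proceed by induction on $t$. For $t = y$ with $y = x$, both sides equal $\cbnTApprox{u}$; for $y \neq x$, both sides equal $y$. For $t = \abs{y}{t'}$ (taking $y$ fresh w.r.t.\ $\freeVar{u} \cup \freeVar{\cbnTApprox{u}}$), meaningfulness of $t$ and of $t\isub{x}{u}$ gives
\[
  \cbnTApprox{t\isub{x}{u}} = \abs{y}{\cbnTApprox{t'\isub{x}{u}}}
  \quad \text{and} \quad
  \cbnTApprox{t}\isub{x}{\cbnTApprox{u}} = \abs{y}{(\cbnTApprox{t'}\isub{x}{\cbnTApprox{u}})},
\]
so the IH on $t'$ combined with contextual closure of $\cbnAprxLeq$ concludes. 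The applicative case $t = \app{t_1}{t_2}$ and the explicit substitution case $t = t_1\esub{y}{t_2}$ are handled the same way: meaningfulness of $t$ and $t\isub{x}{u}$ makes both meaningful approximants unfold at the outermost constructor, two applications of the IH (one per immediate subterm) together with contextual closure of $\cbnAprxLeq$ give the conclusion.

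The only conceptual point to watch is the absence of the ``value'' restriction present in the CbV version: $u$ may itself be meaningless, in which case $\cbnTApprox{u} = \cbnAprxBot$, and for some shapes of $t$ (e.g.\ $t$ with $x$ in head position) the substituted term $t\isub{x}{u}$ will also turn meaningless. But this is exactly absorbed by the initial case split, which short-circuits the induction via the base case. No genuine obstacle is expected; the proof is a routine structural induction once the meaningfulness dichotomy is set up.
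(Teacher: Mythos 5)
Your proposal is correct and matches the paper's proof essentially verbatim: the same initial dichotomy on whether $t\isub{x}{u}$ is \CBNSymb-meaningful, the same appeal to \Cref{lem:cbn_solvability_on_app_and_es}.(\ref{lem:cbn_meaningful_on_isub}) to transfer meaningfulness to $t$ so that $\cbnTApprox{t}$ unfolds structurally, and the same structural induction on $t$ with the variable, abstraction, application and explicit-substitution cases closed by the induction hypothesis and contextual closure of $\cbnAprxLeq$. Nothing to add.
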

\stableProof{
    \begin{proof}
We distinguish two cases:
\begin{itemize}
\item[\bltI] $t\isub{x}{u}$ is \CBNSymb-meaningless. Then
    $\cbnTApprox{t\isub{x}{u}} = \cbnAprxBot \cbnAprxLeq
    \cbnTApprox{t}\isub{x}{\cbnTApprox{u}}$.

\item[\bltI] $t\isub{x}{u}$ is \CBNSymb-meaningful.
    By~\Cref{lem:cbn_solvability_on_app_and_es}.(\ref{lem:cbn_meaningful_on_isub}),
    $t$ is \CBNSymb-meaningful. We reason by induction on $t$. Cases:
    \begin{itemize}
        \item[\bltII] $t = y$: Then $\cbnTApprox{t} = y$ and we
        distinguish two subcases:
        \begin{itemize}
        \item[\bltIII] $y = x$: Then $t\isub{x}{u} = u$ and so
            $\cbnTApprox{t\isub{x}{u}} = \cbnTApprox{u} =
            x\isub{x}{\cbnTApprox{u}} =
            \cbnTApprox{t}\isub{x}{\cbnTApprox{u}}$.

        \item[\bltIII] $y \neq x$: Then $t\isub{x}{u} = y$ and so
            $\cbnTApprox{t\isub{x}{u}} = \cbnTApprox{y} = y =
            y\isub{x}{\cbnTApprox{u}} =
            \cbnTApprox{t}\isub{x}{\cbnTApprox{u}}$.
        \end{itemize}

    \item[\bltII] $t = \abs{y}{t'}$: Then $\cbnTApprox{t} =
        \abs{y}{\cbnTApprox{t'}}$ and $t\isub{x}{u} =
        \abs{y}(t'\isub{x}{u})$ (we suppose without loss of generality
        that $y \notin \freeVar{u} \cup \{x\}$). By \ih on $t'$, one
        has $\cbnTApprox{t'\isub{x}{u}} \cbnAprxLeq
        \cbnTApprox{t'}\isub{x}{\cbnTApprox{u}}$. Thus:
        \begin{equation*}
            \begin{array}{r cl cl}
                \cbnTApprox{t\isub{x}{u}}
                &=& \cbnTApprox{\abs{y}{(t'\isub{x}{u})}}
                \\
                &=& \abs{y}{\cbnTApprox{t'\isub{x}{u}}}
                \\
                &\cbnAprxLeqIh& \abs{y}{\cbnTApprox{t'}\isub{x}{\cbnTApprox{u}}}
                \\
                &=& (\abs{y}{\cbnTApprox{t'}})\isub{x}{\cbnTApprox{u}}
                \\
                &=& \cbnTApprox{\abs{y}{t'}}\isub{x}{\cbnTApprox{u}}
                &=& \cbnTApprox{t}\isub{x}{\cbnTApprox{u}}
            \end{array}
        \end{equation*}

    \item[\bltII] $t = \app{t_1}{t_2}$: Then  $\cbnTApprox{t} =
        \app{\cbnTApprox{t_1}}{\cbnTApprox{t_2}}$ and $t\isub{x}{u} =
        (t_1\isub{x}{u})\esub{y}{t_2\isub{x}{u}}$ (we suppose without
        loss of generality that $y \notin \freeVar{u} \cup \{x\}$). By
        \ih on $t_1$ and $t_2$, one has $\cbnTApprox{t_1\isub{x}{u}}
        \cbnAprxLeq \cbnTApprox{t_1}\isub{x}{\cbnTApprox{u}}$ and
        $\cbnTApprox{t_2\isub{x}{u}} \cbnAprxLeq
        \cbnTApprox{t_2}\isub{x}{\cbnTApprox{u}}$. Thus:
        \begin{equation*}
            \begin{array}{r cl cl}
                \cbnTApprox{t\isub{x}{u}}
                &=& \cbnTApprox{\app{t_1\isub{x}{u}}{t_2\isub{x}{u}}}
                \\
                &=& \app{\cbnTApprox{t_1\isub{x}{u}}}{\cbnTApprox{t_2\isub{x}{u}}}
                \\
                &\cbnAprxLeqIh& \app{(\cbnTApprox{t_1}\isub{x}{\cbnTApprox{u}})}{(\cbnTApprox{t_2}\isub{x}{\cbnTApprox{u}})}
                \\
                &=& (\app{\cbnTApprox{t_1}}{\cbnTApprox{t_2}})\isub{x}{\cbnTApprox{u}}
                \\
                &=& \cbnTApprox{\app{t_1}{t_2}}\isub{x}{\cbnTApprox{u}}
                &=& \cbnTApprox{t}\isub{x}{\cbnTApprox{u}}
            \end{array}
        \end{equation*}

    \item[\bltII] $t = t_1\esub{y}{t_2}$: Thus $\cbnTApprox{t} =
        \cbnTApprox{t_1}\esub{y}{\cbnTApprox{t_2}}$ and $t\isub{x}{u}
        = (t_1\isub{x}{u})\esub{y}{t_2\isub{x}{u}}$ (we suppose
        without loss of generality that $y \notin \freeVar{u} \cup
        \{x\}$). By \ih on $t_1$ and $t_2$, one has
        $\cbnTApprox{t_1\isub{x}{u}} \cbnAprxLeq
        \cbnTApprox{t_1}\isub{x}{\cbnTApprox{u}}$ and
        $\cbnTApprox{t_2\isub{x}{u}} \cbnAprxLeq
        \cbnTApprox{t_2}\isub{x}{\cbnTApprox{u}}$. Thus:
        \begin{equation*}
            \begin{array}{r cl cl}
                \cbnTApprox{t\isub{x}{u}}
                &=& \cbnTApprox{t_1\isub{x}{u}\esub{y}{t_2\isub{x}{u}}}
                \\
                &=& \cbnTApprox{t_1\isub{x}{u}}\esub{y}{\cbnTApprox{t_2\isub{x}{u}}}
                \\
                &\cbnAprxLeqIh& (\cbnTApprox{t_1}\isub{x}{\cbnTApprox{u}})\esub{y}{\cbnTApprox{t_2}\isub{x}{\cbnTApprox{u}}}
                \\
                &=& (\cbnTApprox{t_1}\esub{y}{\cbnTApprox{t_2}})\isub{x}{\cbnTApprox{u}}
                \\
                &=& \cbnTApprox{t_1\esub{y}{t_2}}\isub{x}{\cbnTApprox{u}}
                &=& \cbnTApprox{t}\isub{x}{\cbnTApprox{u}}
            \end{array}
        \end{equation*}
        \qedhere
    \end{itemize}
\end{itemize}
\end{proof}
}

\subsection{Proofs of Section \ref{sec:genericity-cbn}}

\CbnMeaningfulApproximationLeqTerm*%
\label{prf:Cbn_MA(t)_<=_t}%
\stableProof{
    \begin{proof}
We distinguish two cases:
\begin{itemize}
\item[\bltI] $t$ is meaningless. Then $\cbnTApprox{t} = \cbnAprxBot
    \cbnAprxLeq t$.
\item[\bltI] $t$ is meaningful. We then reason by cases on $t$:
    \begin{itemize}
    \item[\bltII] $t = x$: Then $\cbnTApprox{t} = x \cbnAprxLeq x =
        t$.

    \item[\bltII] $t = \abs{x}{t'}$: By \ih on $t'$, one has that
        $\cbnTApprox{t'} \cbnAprxLeq t'$ thus $\cbnTApprox{t} =
        \abs{x}{\cbnTApprox{t'}} \cbnAprxLeq \abs{x}{t'} = t$.

    \item[\bltII] $t = \app{t_1}{t_2}$: By \ih on $t_1$ and $t_2$,
        one has that $\cbnTApprox{t_1} \cbnAprxLeq t_1$ and
        $\cbnTApprox{t_2} \cbnAprxLeq t_2$ so that $\cbnTApprox{t} =
        \app{\cbnTApprox{t_1}}{\cbnTApprox{t_2}} \cbnAprxLeq
        \app{t_1}{t_2} = t$.

    \item[\bltII] $t = t_1\esub{x}{t_2}$: By \ih on $t_1$ and
        $t_2$, $\cbnTApprox{t_1} \cbnAprxLeq t_1$ and
        $\cbnTApprox{t_2} \cbnAprxLeq t_2$. So, $\cbnTApprox{t} =
        \cbnTApprox{t_1}\esub{x}{\cbnTApprox{t_2}} \cbnAprxLeq
        t_1\esub{x}{t_2} = t$.
    \qedhere
    \end{itemize}
\end{itemize} 
\end{proof}

}

\CbnApproximantGenericity*
\label{prf:cbn_Approximant_Genericity}%
\stableProof{
    \begin{proof}%
Let $\cbnCCtxt$ be a full context and $t, u \in \setCbvTerms$ be such
that $t$ is meaningless. We distinguish two cases:
\begin{itemize}
\item[\bltI] $\cbnCCtxt<t>$ is meaningless.  Then
    $\cbnTApprox{\cbnCCtxt<t>} = \cbnAprxBot \cbnAprxLeq
    \cbnTApprox{\cbnCCtxt<u>}$.

\item[\bltI] $\cbnCCtxt<t>$ is meaningful: Then by surface genericity
    (\Cref{lem:Cbn_Qualitative_Surface_Genericity}), one has that
    $\cbnCCtxt<u>$ is also meaningful. We reason  by induction on
    $\cbnCCtxt$:
    \begin{itemize}
    \item[\bltII] $\cbnCCtxt = \Hole$: This case is impossible because
        it would mean $\cbnCCtxt<t> = t$ is meaningless (by
        hypothesis) and meaningful at the same time.

    \item[\bltII] $\cbnCCtxt = \abs{x}{\cbnCCtxt'}$: By \ih on
        $\cbnCCtxt'$, one has $\cbnTApprox{\cbnCCtxt'<t>} \cbnAprxLeq
        \cbnTApprox{\cbnCCtxt'<u>}$, thus $\cbnTApprox{\cbnCCtxt<t>} =
        \abs{x}{\cbnTApprox{\cbnCCtxt'<t>}} \cbnAprxLeqIh
        \abs{x}{\cbnTApprox{\cbnCCtxt'<u>}} =
        \cbnTApprox{\cbnCCtxt<u'>}$ by contextual closure.

    \item[\bltII] $\cbnCCtxt = \app{\cbnCCtxt'}{s}$: By \ih on
        $\cbnCCtxt'$, one has $\cbnTApprox{\cbnCCtxt'<t>} \cbnAprxLeq
        \cbnTApprox{\cbnCCtxt'<u>}$, thus $\cbnTApprox{\cbnCCtxt<t>} =
        \app{\cbnTApprox{\cbnCCtxt'<t>}}{\cbnTApprox{s}} \cbnAprxLeqIh
        \app{\cbnTApprox{\cbnCCtxt'<u>}}{\cbnTApprox{s}} \allowbreak=
        \cbnTApprox{\cbnCCtxt<u'>}$ by contextual closure.

    \item[\bltII] $\cbnCCtxt = \app[\,]{s}{\cbnCCtxt'}$: By \ih on
        $\cbnCCtxt'$, one has $\cbnTApprox{\cbnCCtxt'<t>} \cbnAprxLeq
        \cbnTApprox{\cbnCCtxt'<u>}$, thus $\cbnTApprox{\cbnCCtxt<t>} =
        \app{\cbnTApprox{s}}{\cbnTApprox{\cbnCCtxt'<t>}} \cbnAprxLeqIh
        \app{\cbnTApprox{s}}{\cbnTApprox{\cbnCCtxt'<u>}} \allowbreak=
        \cbnTApprox{\cbnCCtxt<u'>}$ by contextual closure.

    \item[\bltII] $\cbnCCtxt = \cbnCCtxt'\esub{x}{s}$: By \ih on
        $\cbnCCtxt'$, one has $\cbnTApprox{\cbnCCtxt'<t>} \cbnAprxLeq
        \cbnTApprox{\cbnCCtxt'<u>}$, and hence
        $\cbnTApprox{\cbnCCtxt<t>} =
        \cbnTApprox{\cbnCCtxt'<t>}\esub{x}{}{\cbnTApprox{s}}
        \cbnAprxLeqIh
        \cbnTApprox{\cbnCCtxt'<u>}\esub{x}{}{\cbnTApprox{s}} =
        \cbnTApprox{\cbnCCtxt<u'>}$ by contextual closure.

    \item[\bltII] $\cbnCCtxt = s\esub{x}{\cbnCCtxt'}$: By \ih on
        $\cbnCCtxt'$, one has $\cbnTApprox{\cbnCCtxt'<t>} \cbnAprxLeq
        \cbnTApprox{\cbnCCtxt'<u>}$, and hence
        $\cbnTApprox{\cbnCCtxt<t>} =
        \cbnTApprox{s}\esub{x}{\cbnTApprox{\cbnCCtxt'<t>}}
        \cbnAprxLeqIh
        \cbnTApprox{s}\esub{x}{\cbnTApprox{\cbnCCtxt'<u>}} =
        \cbnTApprox{\cbnCCtxt<u'>}$ by contextual closure.
        \qedhere
    \end{itemize}
\end{itemize}
\end{proof}

}

\paragraph*{\textbf{Dynamic Approximation and Lifting.}}

We now prove the instance of
\Cref{ax:Gen_t_->Sn_u_=>_OA(t)_->*Sn_|u_OA(u)} for \CBNSymb.

\begin{lemma}[\CBNSymb Dynamic Approximation]
    \label{lem:cbn_t_->Sn_u_=>_MA(t)_->*Sn_MA(u)}%
    Let $\indexK \in \setOrdinals$ and $t, u \in \setCbnTerms$ such
    that $t \cbnArr_{S_\indexK} u$, then there exists $\aprxU \in
    \setCbnAprxTerms$ such that $\cbnTApprox{t}
    \cbnAprxArr^=_{S_\indexK} \aprxU \cbnAprxGeq \cbnTApprox{u}$.
\end{lemma}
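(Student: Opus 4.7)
The plan is to mirror exactly the proof of the call-by-value analogue (\Cref{lem:Cbv_t_->Sn_u_=>_MA(t)_->*Sn_MA(u)}), adapting the arguments to the different rewriting rules of \CBNSymb and the fact that its substitution rule has no value restriction and no surrounding list context.

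First I would split on whether $t$ is \CBNSymb-meaningful. If $t$ is \CBNSymb-meaningless, then by \Cref{lem:cbn_t_meaningful_and_t_->Sn_u_=>_u_meaningful} so is $u$, hence $\cbnTApprox{t} = \cbnAprxBot = \cbnTApprox{u}$ and we pick $\aprxU \coloneqq \cbnAprxBot$, concluding by reflexivity of both $\cbnAprxArr^=_{S_\indexK}$ and $\cbnAprxGeq$. Otherwise $t$ is \CBNSymb-meaningful, and so is $u$ by the same lemma. Write $t = \cbnSCtxt_\indexK<t_r>$, $u = \cbnSCtxt_\indexK<u_r>$ with $t_r \mapstoR u_r$ for some $\rel \in \{\cbnSymbBeta, \cbnSymbSubs\}$, and proceed by induction on $\cbnSCtxt_\indexK$.

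For the base case $\cbnSCtxt_\indexK = \Hole$, I would split on the rule. If $\rel = \cbnSymbBeta$, then $t_r = \app{\cbnLCtxt<\abs{x}{s_1}>}{s_2}$ and $u_r = \cbnLCtxt<s_1\esub{x}{s_2}>$; using \Cref{lem:cbn_solvability_on_app_and_es} to ensure every relevant subterm is \CBNSymb-meaningful, I would invoke \Cref{lem:cbn_MA(L<t>)_=_MA(L)<MA(t)>} to push the approximation through the list context, obtaining that $\cbnTApprox{t_r} \cbnAprxArr_{S_\indexK} \cbnTApprox{u_r}$ with one genuine $\cbnAprxSymbBeta$ step. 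For $\rel = \cbnSymbSubs$, we have $t_r = s\esub{x}{u_1}$ and $u_r = s\isub{x}{u_1}$ (no list context, no value restriction in \CBNSymb). Then $\cbnTApprox{t_r} = \cbnTApprox{s}\esub{x}{\cbnTApprox{u_1}} \cbnAprxArr_{S_\indexK} \cbnTApprox{s}\isub{x}{\cbnTApprox{u_1}}$, and by \Cref{lem:cbn_MA(t{x:=u})_<=_MA(t){x:=MA(u)}} this last term $\cbnAprxGeq \cbnTApprox{s\isub{x}{u_1}} = \cbnTApprox{u_r}$, so we set $\aprxU \coloneqq \cbnTApprox{s}\isub{x}{\cbnTApprox{u_1}}$.

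The inductive cases (abstractions, applications, explicit substitutions) are routine: in each one the approximant commutes structurally with the context constructor (as can be checked directly from the definition of $\cbnTApprox{\cdot}$ on meaningful terms), so the inductive hypothesis applied to the inner stratified context delivers an $\aprxU'$ that is then wrapped with the corresponding \tPartial context former to obtain $\aprxU$. The main subtlety, just as in the \CBVSymb case, is the $\cbnSymbSubs$ base case: this is the only point where we obtain an inequality rather than an equality between $\aprxU$ and $\cbnTApprox{u}$, because the meta-substitution may duplicate a \CBNSymb-meaningful subterm inside a context that makes the overall term meaningless, so $\cbnTApprox{\cdot}$ of the reduct can be strictly smaller than the naive substitution of approximants. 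Fortunately \Cref{lem:cbn_MA(t{x:=u})_<=_MA(t){x:=MA(u)}} already accommodates this, and the inequality is precisely the $\cbnAprxGeq$ in the statement; no further work is required to close the argument.
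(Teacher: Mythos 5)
Your proof is correct and follows essentially the same route as the paper's: the same split on whether $t$ is \CBNSymb-meaningful, the same induction on the stratified context with the two base subcases handled via \Cref{lem:cbn_MA(L<t>)_=_MA(L)<MA(t)>} for $\cbnSymbBeta$ and \Cref{lem:cbn_MA(t{x:=u})_<=_MA(t){x:=MA(u)}} for $\cbnSymbSubs$, and the same observation that the $\cbnAprxGeq$ slack originates only in the substitution step. Your remark on why the inequality can be strict (new meaningless subterms created by the meta-substitution) matches the phenomenon the paper illustrates right after the \CBVSymb analogue.
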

\stableProof{
    \begin{proof}
Let $t, u \in \setCbnTerms$ such that $t \cbnArr_{S_\indexK} u$. We
distinguish two cases:
\begin{itemize}
\item[\bltI] $t$ \CBNSymb-meaningless: Then $u$ is \CBNSymb-meaningful
    using~\Cref{lem:cbn_t_meaningful_and_t_->Sn_u_=>_u_meaningful}
    thus $\cbnTApprox{t} = \cbnAprxBot$ and $\cbnTApprox{u} =
    \cbnAprxBot$. We set $\aprxU \coloneqq \cbnTApprox{u}$ therefore
    by reflexivity of both relations, $\cbnTApprox{t}
    \cbnAprxArr^=_{S_\indexK} \aprxU \cbnAprxGeq \cbnTApprox{u}$.

\item[\bltI] $t$ \CBNSymb-meaningful: Then $u$ is \CBNSymb-meaningful
    using~\Cref{lem:cbn_t_meaningful_and_t_->Sn_u_=>_u_meaningful}. By
    definition,  there exist a context $\cbnSCtxt_\indexK$ and  $t',
    u' \in \setCbnTerms$ such that $t = \cbnSCtxt_\indexK<t'>$, $u =
    \cbnSCtxt_\indexK<u'>$ and $t' \mapstoR u'$ for some $\rel \in
    \{\cbnSymbBeta, \cbnSymbSubs\}$. We reason by induction on
    $\cbnSCtxt_\indexK$:
    \begin{itemize}
    \item[\bltII] $\cbnSCtxt_\indexK = \Hole$: Then $t = t'$ and $u =
        u'$. We distinguish two cases:
        \begin{itemize}
        \item[\bltIII] $\rel = \cbnSymbBeta$: Then $t' =
            \app{\cbnLCtxt<\abs{x}{s_1}>}{s_2}$ and $u' =
            \cbnLCtxt<s_1\esub{x}{s_2}>$ for some list context
            $\cbnLCtxt$ and some $s_1, s_2 \in \setCbnTerms$. Since
            $t'$ is \CBNSymb-meaningful, then using
            \Cref{lem:cbn_solvability_on_app_and_es}.(\ref{lem:cbn_meaningful_on_app}),
            one deduces in particular that $\cbnLCtxt<\abs{x}{s_1}>$
            is \CBNSymb-meaningful. Since $u'$ is \CBNSymb-meaningful,
            then using
            \Cref{lem:cbn_solvability_on_app_and_es}.(\ref{lem:cbn_meaningful_on_esub}),
            one deduces in particular that $s_1\esub{x}{s_2}$ is
            \CBNSymb-meaningful. Thus, by
            \Cref{lem:cbn_MA(L<t>)_=_MA(L)<MA(t)>}, one has
            $\cbnTApprox{t'} =
            \app{\cbnTApprox{\cbnLCtxt<\abs{x}{s_1}>}}{\cbnTApprox{s_2}}
            =
            \app{\cbnTApprox{\cbnLCtxt}\cbnCtxtPlug{\abs{x}{\cbnTApprox{s_1}}}}{\cbnTApprox{s_2}}$
            and $\cbnTApprox{u'} =
            \cbnTApprox{\cbnLCtxt<s_1\esub{x}{s_2}>} =
            \cbnTApprox{\cbnLCtxt}\cbnCtxtPlug{
            \cbnTApprox{s_1\esub{x}{s_2}}} =
            \cbnTApprox{\cbnLCtxt}\cbnCtxtPlug{\cbnTApprox{s_1}\esub{x}{\cbnTApprox{s_2}}}$.
            Hence, $\cbnTApprox{t} = \cbnTApprox{t'}
            \mapstoR[\cbnSymbBeta] \cbnTApprox{u'} = \cbnTApprox{u}$.
            We set $\aprxU \coloneqq \cbnTApprox{u}$, therefore by
            reflexivity $\cbnTApprox{t} \cbnAprxArr_{S_\indexK} \aprxU
            \cbnAprxGeq \cbnTApprox{u}$.

        \item[\bltIII] $\rel = \cbnSymbSubs$: Then $t' =
            s_1\esub{x}{s_2}$ and $u' = s_1\isub{x}{s_2}$ for some
            $s_1, s_2 \in \setCbnTerms$. Using
            \Cref{lem:cbn_MA(t{x:=u})_<=_MA(t){x:=MA(u)}}, one has
            that $\cbnTApprox{t'} =
            \cbnTApprox{s_1}\esub{x}{\cbnTApprox{s_2}}$ and
            $\cbnTApprox{u'} \cbnAprxLeq
            \cbnTApprox{s_1}\isub{x}{\cbnTApprox{s_2}}$. Thus
            $\cbnTApprox{t} = \cbnTApprox{t'} \mapstoR[\cbnSymbSubs]
            \cbnTApprox{s_1}\isub{x}{\cbnTApprox{s_2}} \cbnAprxGeq
            \cbnTApprox{u'} = \cbnTApprox{u}$. We set $\aprxU
            \coloneqq \cbnTApprox{s_1}\isub{x}{\cbnTApprox{s_2}}$,
            therefore $\cbnTApprox{t} \cbnAprxArr_{S_\indexK} \aprxU
            \cbnAprxGeq \cbnTApprox{u}$.
        \end{itemize}

    \item[\bltII] $\cbnSCtxt_\indexK = \abs{x}{\cbnSCtxt'_\indexK}$:
        By \ih on $\cbnSCtxt'_\indexK$, one has $\aprxU'  \in
        \setCbnAprxTerms$ such that
        $\cbnTApprox{\cbnSCtxt'_\indexK<t'>} \cbnAprxArr^=_{S_\indexK}
        \aprxU' \cbnAprxGeq \cbnTApprox{\cbnSCtxt'_\indexK<u'>}$. We
        set $\aprxU \coloneqq \abs{x}{\aprxU'}$, therefore
        $\cbnTApprox{\cbnSCtxt_\indexK<t'>} =
        \abs{x}{\cbnTApprox{\cbnSCtxt'_\indexK<t'>}}
        \cbnAprxArr^=_{S_\indexK} \aprxU \cbnAprxGeq
        \abs{x}{\cbnTApprox{\cbnSCtxt'_\indexK<u'>}} =
        \cbnTApprox{\cbnSCtxt_\indexK<u'>}$.

    \item[\bltII] $\cbnSCtxt_\indexK =
        \app[\,]{\cbnSCtxt'_\indexK}{s}$: By \ih on
        $\cbnSCtxt'_\indexK$, one has $\aprxU'  \in \setCbnAprxTerms$
        such that $\cbnTApprox{\cbnSCtxt'_\indexK<t'>}
        \cbnAprxArr^=_{S_\indexK} \aprxU' \cbnAprxGeq
        \cbnTApprox{\cbnSCtxt'_\indexK<u'>}$. We set $\aprxU \coloneqq
        \app{\aprxU'}{\cbnTApprox{s}}$, therefore
        $\cbnTApprox{\cbnSCtxt_\indexK<t'>} =
        \app{\cbnTApprox{\cbnSCtxt'_\indexK<t'>}}{\cbnTApprox{s}}
        \cbnAprxArr^=_{S_\indexK}  \aprxU \cbnAprxGeq
        \app{\cbnTApprox{\cbnSCtxt'_\indexK<u'>}}{\cbnTApprox{s}} =
        \cbnTApprox{\cbnSCtxt_\indexK<u'>}$.

    \item[\bltII] $\cbnSCtxt_\indexK = \app[\,]{s}{\cbnSCtxt_m}$ with
        either $\indexK > 0$ and $m = \indexK - 1$, or $\indexK = m =
        \omega$: By \ih on $\cbnSCtxt'_m$, one has $\aprxU'  \in
        \setCbnAprxTerms$ such that $\cbnTApprox{\cbnSCtxt_m<t'>}
        \cbnAprxArr^=_{S_m} \aprxU' \cbnAprxGeq
        \cbnTApprox{\cbnSCtxt_m<u'>}$. We set $\aprxU \coloneqq
        \app{\cbnTApprox{s}}{\aprxU'}$, therefore
        $\cbnTApprox{\cbnSCtxt_\indexK<t'>} =
        \app{\cbnTApprox{s}}{\cbnTApprox{\cbnSCtxt_m<t'>}}
        \cbnAprxArr^=_{S_\indexK}  \aprxU \cbnAprxGeq
        \app{\cbnTApprox{s}}{\cbnTApprox{\cbnSCtxt_m<u'>}} =
        \cbnTApprox{\cbnSCtxt_\indexK<u'>}$.

    \item[\bltII] $\cbnSCtxt_\indexK = \cbnSCtxt'_\indexK\esub{x}{s}$:
        By \ih on $\cbnSCtxt'_\indexK$, one has $\aprxU'  \in
        \setCbnAprxTerms$ such that
        $\cbnTApprox{\cbnSCtxt'_\indexK<t'>} \cbnAprxArr^=_{S_\indexK}
        \aprxU' \cbnAprxGeq \cbnTApprox{\cbnSCtxt'_\indexK<u'>}$. We
        set $\aprxU \coloneqq \aprxU'\esub{x}{\cbnTApprox{s}}$,
        therefore $\cbnTApprox{\cbnSCtxt_\indexK<t'>} =
        \aprxU'\esub{x}{\cbnTApprox{s}} \cbnAprxArr^=_{S_\indexK}
        \aprxU \cbnAprxGeq
        \cbnTApprox{\cbnSCtxt'_\indexK<u'>}\esub{x}{\cbnTApprox{s}} =
        \cbnTApprox{\cbnSCtxt_\indexK<u'>}$.

    \item[\bltII] $\cbnSCtxt_\indexK = s\esub{x}{\cbnSCtxt_m}$ with
        either $\indexK > 0$ and $m = \indexK - 1$, or $\indexK = m =
        \omega$: By \ih on $\cbnSCtxt_m$, one has $\aprxU' \in
        \setCbnAprxTerms$ such that $\cbnTApprox{\cbnSCtxt_m<t'>}
        \cbnAprxArr^=_{S_m} \aprxU' \cbnAprxGeq
        \cbnTApprox{\cbnSCtxt_m<u'>}$. We set $\aprxU \coloneqq
        \cbnTApprox{s}\esub{x}{\aprxU'}$, therefore
        $\cbnTApprox{\cbnSCtxt_\indexK<t'>} =
        \cbnTApprox{s}\esub{x}{\cbnTApprox{\cbnSCtxt_m<t'>}}
        \cbnAprxArr^=_{S_\indexK}  \aprxU \cbnAprxGeq
        \cbnTApprox{s}\esub{x}{\cbnTApprox{\cbnSCtxt_m<u'>}} =
        \cbnTApprox{\cbnSCtxt_\indexK<u'>}$.
        \qedhere
    \end{itemize}
\end{itemize}
\end{proof}

}

We now prove the instance of
\Cref{ax:Gen_t_->Sn_u_and_t_<=_t'_==>_t'_->Sn_u'_and_u_<=_u'} for
\CBNSymb.

\begin{lemma}[\CBNSymb Dynamic \tPartial^ Lift]{lemma}
    \label{lem:cbn_t_->Sn_u_and_t_<=_t'_==>_t'_->Sn_u'_and_u_<=_u'}%
    Let $\indexK \in \setOrdinals$ and $\aprxT, \aprxU, \aprxT' \in
    \setCbnAprxTerms$. If $\aprxT \cbnAprxArr_{S_\indexK} \aprxU$ and
    $\aprxT \cbnAprxLeq \aprxT'$, then there is $\aprxU' \in
    \setCbnAprxTerms$ such that $\aprxT' \cbnAprxArr_{S_\indexK}
    \aprxU'$ and $\aprxU \cbnAprxLeq \aprxU'$.%
\end{lemma}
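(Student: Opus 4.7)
The plan is to mirror the proof strategy of the \CBVSymb analogue (\Cref{lem:cbvAGK_Aprx_Simulation_MultiStep}), proceeding by induction on the stratified context witnessing the reduction. Specifically, since $\aprxT \cbnAprxArr_{S_\indexK} \aprxU$, there exist $\cbnAprxSCtxt_\indexK$, $\aprxT_r$, $\aprxU_r$ and a rule $\rel \in \{\cbnAprxSymbBeta, \cbnAprxSymbSubs\}$ such that $\aprxT = \cbnAprxSCtxt_\indexK<\aprxT_r>$, $\aprxU = \cbnAprxSCtxt_\indexK<\aprxU_r>$, and $\aprxT_r \mapstoR \aprxU_r$. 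The induction is on $\cbnAprxSCtxt_\indexK$.

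The base case $\cbnAprxSCtxt_\indexK = \Hole$ splits into two subcases depending on $\rel$. For $\cbnAprxSymbSubs$ we have $\aprxT_r = \aprxS_1 \esub{x}{\aprxS_2}$ with $\aprxU_r = \aprxS_1 \isub{x}{\aprxS_2}$; inverting $\aprxT \cbnAprxLeq \aprxT'$ yields $\aprxT' = \aprxS'_1 \esub{x}{\aprxS'_2}$ with $\aprxS_i \cbnAprxLeq \aprxS'_i$, and I take $\aprxU' \coloneqq \aprxS'_1 \isub{x}{\aprxS'_2}$. The crucial monotonicity $\aprxS_1 \isub{x}{\aprxS_2} \cbnAprxLeq \aprxS'_1 \isub{x}{\aprxS'_2}$ follows directly from \Cref{lem:Cbn_t1_<=_t2_and_u1_<=_u2_==>_t1{x:=u1}_<=_t2{x:=u2}}. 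For $\cbnAprxSymbBeta$ we have $\aprxT_r = \app{\cbnAprxLCtxt<\abs{x}{\aprxS_1}>}{\aprxS_2}$; I perform an inner induction on $\cbnAprxLCtxt$ to show that $\aprxT' = \app{\cbnAprxLCtxt'<\abs{x}{\aprxS'_1}>}{\aprxS'_2}$ with $\cbnAprxLCtxt \cbnAprxLeq \cbnAprxLCtxt'$, $\aprxS_i \cbnAprxLeq \aprxS'_i$, and then set $\aprxU' \coloneqq \cbnAprxLCtxt'<\aprxS'_1 \esub{x}{\aprxS'_2}>$; the required $\aprxU \cbnAprxLeq \aprxU'$ reduces to $\cbnAprxLCtxt<\aprxS_1\esub{x}{\aprxS_2}> \cbnAprxLeq \cbnAprxLCtxt'<\aprxS'_1\esub{x}{\aprxS'_2}>$, which is handled by \Cref{lem:cbn_L1_<=_L2_and_t1_<=_t2_==>_L1<t1>_<=_L2<t2>} together with contextual closure.

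The inductive cases cover the six context shapes in the grammar of $\cbnAprxSCtxt_\indexK$: $\abs{x}{\cbnAprxSCtxt'}$, $\app{\cbnAprxSCtxt'}{t}$, $\app{t}{\cbnAprxSCtxt'}$, $\cbnAprxSCtxt'\esub{x}{t}$ and $t\esub{x}{\cbnAprxSCtxt'}$, where the sub-context $\cbnAprxSCtxt'$ has level $\indexK$, $\indexK-1$, or $\indexOmega$ depending on the construct and the \CBNSymb\ stratified grammar (notice that in \CBNSymb, arguments of applications and ES bodies are at level $\indexK-1$ when $\indexK$ is a successor, while the function side and ES head stay at level $\indexK$). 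In each case inversion of $\aprxT \cbnAprxLeq \aprxT'$ yields $\aprxT' = \mathcal{D}[\aprxT'_1]$ of matching shape with $\cbnAprxSCtxt'<\aprxT_r> \cbnAprxLeq \aprxT'_1$; the \ih applied to $\cbnAprxSCtxt'$ provides $\aprxU'_1$ with $\aprxT'_1 \cbnAprxArr_{S_m} \aprxU'_1$ and $\cbnAprxSCtxt'<\aprxU_r> \cbnAprxLeq \aprxU'_1$, and reassembling the context gives the desired $\aprxU'$.

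The main obstacle, as in the \CBVSymb\ case, is the $\cbnAprxSymbBeta$ base case: the presence of a list context $\cbnAprxLCtxt$ around the abstraction forces the inner induction on $\cbnAprxLCtxt$ and a careful coordination of the monotonicity lemmas \Cref{lem:Cbn_t1_<=_t2_and_u1_<=_u2_==>_t1{x:=u1}_<=_t2{x:=u2},lem:cbn_L1_<=_L2_and_t1_<=_t2_==>_L1<t1>_<=_L2<t2>}. Everything else is routine book-keeping, made easier in \CBNSymb\ than in \CBVSymb\ by the fact that the $\cbnAprxSymbSubs$ rule no longer requires the substituted term to be a value wrapped in a list context, so no analogue of the \CBVSymb\ argument tracking $\aprxV$ inside $\cbnAprxLCtxt<\aprxV>$ is needed.
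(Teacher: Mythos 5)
Your proof is correct and follows essentially the same route as the paper's: induction on the stratified context witnessing the step, with the base case split on the rewrite rule, an inner induction on the list context for the $\mathsf{dB}$-case combined with the two monotonicity lemmas, and routine inversion-and-reassembly for the remaining context shapes. Your observation that the call-by-name substitution rule makes the second base case simpler than in call-by-value is also accurate.
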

\stableProof{
    \begin{proof}
Let $\aprxT, \aprxU, \aprxT' \in \setCbnAprxTerms$ such that $\aprxT
\cbnAprxLeq \aprxT'$ and $\aprxT \cbnAprxArr_{S_\indexK} \aprxU$. Then
$\aprxT = \cbnAprxSCtxt_\indexK\cbnCtxtPlug{\aprxT_r}$ and $\aprxU =
\cbnAprxSCtxt_\indexK\cbnCtxtPlug{\aprxU_r}$ with $\aprxT_r \mapstoR
\aprxU_r$ for some $\aprxT_r, \aprxU_r \in \setCbnAprxTerms$ and $\rel
\in \{\cbnSymbBeta, \cbnSymbSubs\}$. Let us proceed by induction on
$\cbnAprxSCtxt_\indexK$:
\begin{itemize}
\item[\bltI] $\cbnAprxSCtxt_\indexK = \Hole$: Then $\aprxT = \aprxT_r$
    and $\aprxU = \aprxU_r$. We distinguish two cases:
    \begin{itemize}
    \item[\bltII] $\rel = \cbnSymbBeta$: Then $\aprxT_r =
        \app{\cbnAprxLCtxt<\abs{x}{\aprxS_1}>}{\aprxS_2}$ and
        $\aprxU_r = \cbnAprxLCtxt<\aprxS_1\esub{x}{\aprxS_2}>$ for
        some list context $\cbnAprxLCtxt$ and some  $\aprxS_1,
        \aprxS_2 \in \setCbnAprxTerms$. Since $\aprxT \cbnAprxLeq
        \aprxT'$, by induction on $\cbnAprxLCtxt$, one has that
        $\aprxT = \app{\cbnAprxLCtxt'<\abs{x}{\aprxS'_1}>}{\aprxS'_2}$
        with $\cbnAprxLCtxt \cbnAprxLeq \cbnAprxLCtxt'$, $\aprxS_1
        \cbnAprxLeq \aprxS'_1$ and $\aprxS_2 \cbnAprxLeq \aprxS'_2$
        for some list context $\cbnAprxLCtxt'$ and some $\aprxS'_1,
        \aprxS'_2 \in \setCbnAprxTerms$. Taking $\aprxU' =
        \cbnAprxLCtxt'<\aprxS'_1\esub{x}{\aprxS'_2}>$ concludes this
        case since $\aprxT' =
        \app{\cbnAprxLCtxt'<\abs{x}{\aprxS'_1}>}{\aprxS'_2}
        \cbnAprxArr_{S_\indexK}
        \cbnAprxLCtxt'<\aprxS'_1\esub{x}{\aprxS'_2}> = \aprxU'$ and by
        transitivity $\aprxS_1\esub{x}{\aprxS_2} \cbnAprxLeq
        \aprxS'_1\esub{x}{\aprxS'_2}$ thus
        $\cbnAprxLCtxt<\aprxS_1\esub{x}{\aprxS_2}> \cbnAprxLeq
        \cbnAprxLCtxt'<\aprxS'_1\esub{x}{\aprxS'_2}> = \aprxU'$ using
        \cref{lem:cbn_L1_<=_L2_and_t1_<=_t2_==>_L1<t1>_<=_L2<t2>}.

    \item[\bltII] $\rel = \cbnSymbSubs$: Then $\aprxT_r =
        \aprxS_1\esub{x}{\aprxS_2}$ and $\aprxU_r =
        \aprxS_1\isub{x}{\aprxS_2}$ for some partial terms $\aprxS_1,
        \aprxS_2 \in \setCbnAprxTerms$. Since $\aprxT \cbnAprxLeq
        \aprxT'$, one has that $\aprxT' =
        \aprxS'_1\esub{x}{\aprxS'_2}$ with $\aprxS_1 \cbnAprxLeq
        \aprxS'_1$ and $\aprxS_2 \cbnAprxLeq \aprxS'_2$ for some some
        partial terms $\aprxS'_1, \aprxS'_2 \in \setCbnAprxTerms$.
        Taking $\aprxU' = \aprxS'_1\isub{x}{\aprxS'_2}$ concludes this
        case since $\aprxT' = \aprxS'_1\esub{x}{\aprxS'_2}
        \cbnAprxArr_{S_\indexK} \aprxS'_1\isub{x}{\aprxS'_2}$ and
        using
        \cref{lem:Cbn_t1_<=_t2_and_u1_<=_u2_==>_t1{x:=u1}_<=_t2{x:=u2}},
        one has that $\aprxS_1\isub{x}{\aprxS_2} \cbnAprxLeq
        \aprxS'_1\isub{x}{\aprxS'_2}$.
    \end{itemize}

\item[\bltI] $\cbnAprxSCtxt_\indexK =
    \abs{x}{\cbnAprxSCtxt'_\indexK}$: Then $\aprxT =
    \abs{x}{\cbnAprxSCtxt'_\indexK<\aprxT_r>}$ and $\aprxU =
    \abs{x}{\cbnAprxSCtxt'_\indexK<\aprxU_r>}$ thus $\aprxT' =
    \abs{x}{\aprxT'_1}$ for some $\aprxT'_1 \in \setCbnAprxTerms$ such
    that $\cbnAprxSCtxt'_\indexK<\aprxT_r> \cbnAprxLeq \aprxT'_1$.
    Since $\cbnAprxSCtxt'_\indexK<\aprxT_r> \cbnAprxArr_{S_\indexK}
    \cbnAprxSCtxt'_\indexK<\aprxU_r>$, by \ih on
    $\cbnAprxSCtxt'_\indexK$, one obtains $\aprxU'_1 \in
    \setCbnAprxTerms$ such that $\aprxT'_1 \cbnAprxArr_{S_\indexK}
    \aprxU'_1$ and $\cbnAprxSCtxt'_\indexK<\aprxU_r> \cbnAprxLeq
    \aprxU'_1$. Taking $\aprxU' = \abs{x}{\aprxU'_1}$ concludes this
    case since $\aprxT' = \abs{x}{\aprxT'_1} \cbnAprxArr_{S_\indexK}
    \abs{x}{\aprxU'_1} = \aprxU'$ and $\aprxU =
    \abs{x}{\cbnAprxSCtxt'_\indexK<\aprxU_r>} \cbnAprxLeq
    \abs{x}{\aprxU'_1} = \aprxU'$.

\item[\bltI] $\cbnAprxSCtxt_\indexK =
    \app[\,]{\cbnAprxSCtxt'_\indexK}{\aprxS}$: Then $\aprxT =
    \app{\cbnAprxSCtxt'_\indexK<\aprxT_r>}{\aprxS}$ and $\aprxU =
    \app{\cbnAprxSCtxt'_\indexK<\aprxU_r>}{\aprxS}$ thus $\aprxT' =
    \app[\,]{\aprxT'_1}{\aprxS'}$ for some partial term $\aprxT'_1,
    \aprxS \in \setCbnAprxTerms$ such that
    $\cbnAprxSCtxt'_\indexK<\aprxT_r> \cbnAprxLeq \aprxT'_1$ and
    $\aprxS \cbnAprxLeq \aprxS'$. Since
    $\cbnAprxSCtxt'_\indexK<\aprxT_r> \cbnAprxArr_{S_\indexK}
    \cbnAprxSCtxt'_\indexK<\aprxU_r>$, by \ih on
    $\cbnAprxSCtxt'_\indexK$, one obtains $\aprxU'_1 \in
    \setCbnAprxTerms$ such that $\aprxT'_1 \cbnAprxArr_{S_\indexK}
    \aprxU'_1$ and $\cbnAprxSCtxt'_\indexK<\aprxU_r> \cbnAprxLeq
    \aprxU'_1$. Taking $\aprxU' = \app{\aprxU'_1}{\aprxS'}$ concludes
    this case since $\aprxT' = \app{\aprxT'_1}{\aprxS'}
    \cbnAprxArr_{S_\indexK} \app{\aprxU'_1}{\aprxS'} = \aprxU'$ and
    $\aprxU = \app{\cbnAprxSCtxt'_\indexK<\aprxU_r>}{\aprxS}
    \cbnAprxLeq \app{\aprxU'_1}{\aprxS'} = \aprxU'$ by transitivity.

\item[\bltI] $\cbnAprxSCtxt_\indexK =
    \app[\,]{\aprxS}{\cbnAprxSCtxt'_m}$ with either $\indexK > 0$ and
    $m = \indexK - 1$, or $\indexK = m = \indexOmega$: Then $\aprxT =
    \app[\,]{\aprxS}{\cbnAprxSCtxt'_m<\aprxT_r>}$ and $\aprxU =
    \app[\,]{\aprxS}{\cbnAprxSCtxt'_m<\aprxU_r>}$ thus $\aprxT' =
    \app[\,]{\aprxS'}{\aprxT'_1}$ for some $\aprxS, \aprxT'_1 \in
    \setCbnAprxTerms$ such that $\aprxS \cbnAprxLeq \aprxS'$ and
    $\cbnAprxSCtxt'_m<\aprxT_r> \cbnAprxLeq \aprxT'_1$. Since
    $\cbnAprxSCtxt'_m<\aprxT_r> \cbnAprxArr_{S_{m}}
    \cbnAprxSCtxt'_m<\aprxU_r>$, by \ih on $\cbnAprxSCtxt'_m$, one
    obtains $\aprxU'_1 \in \setCbnAprxTerms$ such that $\aprxT'_1
    \cbnAprxArr_{S_m} \aprxU'_1$ and $\cbnAprxSCtxt'_m<\aprxU_r>
    \cbnAprxLeq \aprxU'_1$. Taking $\aprxU' =
    \app{\aprxS'}{\aprxU'_1}$ concludes this case since $\aprxT' =
    \app{\aprxS'}{\aprxT'_1} \cbnAprxArr_{S_\indexK}
    \app{\aprxS'}{\aprxU'_1} = \aprxU'$ and $\aprxU =
    \app[\,]{\aprxS}{\cbnAprxSCtxt'_\indexK<\aprxU_r>} \cbnAprxLeq
    \app{\aprxS'}{\aprxU'_1} = \aprxU'$ by transitivity.

\item[\bltI] $\cbnAprxSCtxt_\indexK =
    \cbnAprxSCtxt'_\indexK\esub{x}{\aprxS}$: Then $\aprxT =
    \cbnAprxSCtxt'_\indexK<\aprxT_r>\esub{x}{\aprxS}$ and $\aprxU =
    \cbnAprxSCtxt'_\indexK<\aprxU_r>\esub{x}{\aprxS}$ thus $\aprxT' =
    \aprxT'_1\esub{x}{\aprxS'}$ for some $\aprxT'_1, \aprxS \in
    \setCbnAprxTerms$ such that $\cbnAprxSCtxt'_\indexK<\aprxT_r>
    \cbnAprxLeq \aprxT'_1$ and $\aprxS \cbnAprxLeq \aprxS'$. Since
    $\cbnAprxSCtxt'_\indexK<\aprxT_r> \cbnAprxArr_{S_\indexK}
    \cbnAprxSCtxt'_\indexK<\aprxU_r>$, by \ih on
    $\cbnAprxSCtxt'_\indexK$, one obtains $\aprxU'_1 \in
    \setCbnAprxTerms$ such that $\aprxT'_1 \cbnAprxArr_{S_\indexK}
    \aprxU'_1$ and $\cbnAprxSCtxt'_\indexK<\aprxU_r> \cbnAprxLeq
    \aprxU'_1$. Taking $\aprxU' = \aprxU'_1\esub{x}{\aprxS'}$
    concludes this case since $\aprxT' = \aprxT'_1\esub{x}{\aprxS'}
    \cbnAprxArr_{S_\indexK} \aprxU'_1\esub{x}{\aprxS'} = \aprxU'$ and
    $\aprxU = \cbnAprxSCtxt'_\indexK<\aprxU_r>\esub{x}{\aprxS}
    \cbnAprxLeq \aprxU'_1\esub{x}{\aprxS'} = \aprxU'$ by transitivity.

\item[\bltI] $\cbnAprxSCtxt_\indexK =
    \aprxS\esub{x}{\cbnAprxSCtxt'_m}$ with either $\indexK > 0$ and $m
    = \indexK - 1$, or $\indexK = m = \indexOmega$: Then $\aprxT =
    \aprxS\esub{x}{\cbnAprxSCtxt'_m<\aprxT_r>}$ and $\aprxU =
    \aprxS\esub{x}{\cbnAprxSCtxt'_m<\aprxU_r>}$ thus $\aprxT' =
    \aprxS'\esub{x}{\aprxT'_1}$ for some $\aprxS, \aprxT'_1 \in
    \setCbnAprxTerms$ such that $\aprxS \cbnAprxLeq \aprxS'$ and
    $\cbnAprxSCtxt'_m<\aprxT_r> \cbnAprxLeq \aprxT'_1$. Since
    $\cbnAprxSCtxt'_m<\aprxT_r> \cbnAprxArr_{S_{m}}
    \cbnAprxSCtxt'_m<\aprxU_r>$, by \ih on $\cbnAprxSCtxt'_m$, one
    obtains $\aprxU'_1 \in \setCbnAprxTerms$ such that $\aprxT'_1
    \cbnAprxArr_{S_m} \aprxU'_1$ and $\cbnAprxSCtxt'_m<\aprxU_r>
    \cbnAprxLeq \aprxU'_1$. Taking $\aprxU' =
    \aprxS'\esub{x}{\aprxU'_1}$ concludes this case since $\aprxT' =
    \aprxS'\esub{x}{\aprxT'_1} \cbnAprxArr_{S_\indexK}
    \aprxS'\esub{x}{\aprxU'_1} = \aprxU'$ and $\aprxU =
    \aprxS\esub{x}{\cbnAprxSCtxt'_\indexK<\aprxU_r>} \cbnAprxLeq
    \aprxS'\esub{x}{\aprxU'_1} = \aprxU'$ by transitivity.
    \qedhere
\end{itemize}
\end{proof}

}

\paragraph*{\textbf{Observability.}} Let us recall the mutual
inductive definitions of $\cbnAprxBneS_\indexK$ and
$\cbnAprxBnoS_\indexK$:
\begin{equation*}
    \begin{array}{rcl}
        \cbnAprxBneS_0 &\coloneqq& x
            \vsep \app{\cbnAprxBneS_0}{\aprxT}
    \\
        \cbnAprxBneS_{\indexI+1} &\coloneqq& x
            \vsep \app{\cbnAprxBneS_{\indexI+1}}{\cbnAprxBnoS_\indexI}
                    \qquad (\indexI \in \setIntegers)
    \\
        \cbnAprxBneS_\indexOmega &\coloneqq& x
            \vsep \app{\cbnAprxBneS_\indexOmega}{\cbnAprxBnoS_\indexOmega}
    \\[0.2cm]
        \cbnAprxBnoS_\indexK &\coloneqq& \abs{x}{\cbnAprxBnoS_\indexK}
            \vsep \cbnAprxBneS_\indexK
            \qquad (\indexK \in \setOrdinals)
    \end{array}
\end{equation*}

We now prove the instance of
\Cref{ax:Gen_t_in_BnoSn_==>_OA(t)_in_noSn} for \CBNSymb.

\begin{lemma}[\CBNSymb Observability of Normal Form Approximants]
    \label{lem:cbn_t_in_BnoSn_==>_MA(t)_in_noSn}%
    Let $\indexK \in \setOrdinals$ and $\aprxT \in \cbnNoS_\indexK$,
    then $\cbnTApprox{t} \in \cbnAprxBnoS_\indexK$.%
\end{lemma}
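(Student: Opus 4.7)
The plan is to prove, as in the \CBVSymb\ case, a strengthened statement by mutual induction: if $t \in \cbnNeS_\indexK$ then $\cbnTApprox{t} \in \cbnAprxBneS_\indexK$, and if $t \in \cbnNoS_\indexK$ then $\cbnTApprox{t} \in \cbnAprxBnoS_\indexK$. The key preliminary observation is that any $t \in \cbnNoS_\indexK$ (or $\cbnNeS_\indexK$) is in particular $\cbnSymbSurface_0$-normal, hence \CBNSymb-meaningful by \Cref{lem:cbnBKRV_characterizes_meaningfulness}, and therefore $\cbnTApprox{t}$ is built purely by the inductive (non-$\cbnAprxBot$) clauses of the definition of the meaningful approximant.

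The proof proceeds by mutual induction on the derivations $t \in \cbnNeS_\indexK$ and $t \in \cbnNoS_\indexK$ (recall that \CBNSymb-normal forms contain no ES). For the neutral case: if $t = x$, then $\cbnTApprox{t} = x \in \cbnAprxBneS_\indexK$ directly; if $t = \app{t_1}{t_2}$ with $t_1 \in \cbnNeS_\indexK$, then by the induction hypothesis $\cbnTApprox{t_1} \in \cbnAprxBneS_\indexK$, and a case split on $\indexK$ handles the argument: for $\indexK=0$ any $\cbnTApprox{t_2}$ is allowed, and for $\indexK = \indexI+1$ or $\indexK = \indexOmega$ the side condition on $t_2$ (namely $t_2 \in \cbnNoS_\indexI$ or $t_2 \in \cbnNoS_\indexOmega$) together with the induction hypothesis gives $\cbnTApprox{t_2} \in \cbnAprxBnoS_\indexI$ or $\cbnAprxBnoS_\indexOmega$ respectively, so that $\cbnTApprox{t} = \app{\cbnTApprox{t_1}}{\cbnTApprox{t_2}} \in \cbnAprxBneS_\indexK$. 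For the normal case: if $t \in \cbnNeS_\indexK$ we use the previous item and the inclusion $\cbnAprxBneS_\indexK \subseteq \cbnAprxBnoS_\indexK$; if $t = \abs{x}{t'}$ with $t' \in \cbnNoS_\indexK$, the induction hypothesis gives $\cbnTApprox{t'} \in \cbnAprxBnoS_\indexK$ and hence $\cbnTApprox{t} = \abs{x}{\cbnTApprox{t'}} \in \cbnAprxBnoS_\indexK$.

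No real obstacle is expected: the structure of the grammars for $\cbnNoS_\indexK,\cbnNeS_\indexK$ and $\cbnAprxBnoS_\indexK,\cbnAprxBneS_\indexK$ is parallel, and the meaningful approximation on \CBNSymb-meaningful terms commutes with every constructor. The only subtle point is ensuring that the level bookkeeping under application matches exactly: going under the argument of $\app{\cbnNeS_{\indexI+1}}{\cbnNoS_\indexI}$ lowers the level by one, which is mirrored on the approximant side by $\app{\cbnAprxBneS_{\indexI+1}}{\cbnAprxBnoS_\indexI}$, and this is precisely what the induction hypothesis delivers when combined with the meaningfulness of $t$. The transfinite case $\indexK = \indexOmega$ is handled uniformly, and the abstraction case is immediate since abstractions keep the level.
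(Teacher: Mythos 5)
Your proof is correct and follows essentially the same route as the paper's: both strengthen the statement to a mutual claim about neutral and normal forms, use the fact that a $\cbnSymbSurface_\indexK$-normal form is in particular surface-normal and hence \CBNSymb-meaningful (so the approximant commutes with all constructors), and split on $\indexK \in \{0\} \cup (\Nat\setminus\{0\}) \cup \{\indexOmega\}$ in the application case. The only cosmetic difference is that the paper inducts on the term with the normal-form predicate (dismissing the ES case as impossible) while you induct on the grammar derivation; these are interchangeable via the syntactic characterization of normal forms.
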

\stableProof{
    \begin{proof}
We strengthen the induction hypothesis:
\begin{center}
    Let $t \in \setCbnTerms$ such that $\cbnPredSNF_\indexK{t}$, then
    $\cbnTApprox{t} \in \cbnAprxBnoS_\indexK$. And if, moreover,
    $\neg\cbnAbsPred{t}$ then $\cbnTApprox{t} \in
    \cbnAprxBneS_\indexK$.
\end{center}

Since $\cbnPredSNF_\indexK{t}$, then in particular $\cbnPredSNF{t}$,
so that $t$ is meaningful. We reason by induction on $t \in
\setCbnTerms$:
\begin{itemize}
\item[\bltI] $t = x$: Then $\neg\cbnAbsPred{t}$ and $\cbnTApprox{t} =
    x \in \cbnAprxBneS_\indexK$.

\item[\bltI] $t = \abs{x}{t'}$: Then $\cbnPredSNF_\indexK{t'}$ and
    $\cbnAbsPred{t}$, thus $\cbnTApprox{t'} \in \cbnAprxBnoS_\indexK$
    by \ih on $t'$. Hence, $\cbnTApprox{t} = \abs{y}{\cbnTApprox{t'}}
    \in \abs{y}{\cbnAprxBnoS_\indexK} \subseteq \cbnAprxBnoS_\indexK$.

\item[\bltI] $t = \app{t_1}{t_2}$: Then $\neg\cbnAbsPred{t}$,
    $\cbnPredSNF_\indexK{t_1}$ and $\neg\cbnAbsPred{t_1}$. By \ih on
    $t_1$, one has that $\cbnTApprox{t_1} \in \cbnAprxBneS_\indexK$.
    We distinguish two cases:
    \begin{itemize}
    \item $\indexK = 0$: Then $\cbnTApprox{t} =
        \app{\cbnTApprox{t_1}}{\cbnTApprox{t_2}} \in
        \app{\cbnAprxBneS_0}{\aprxT} \subseteq \cbnAprxBnoS_0$.

    \item $\indexK \in \Nat \setminus \{0\}$ (\resp $\indexK =
        \indexOmega$): Then $\cbnPredSNF_m{t_2}$ with $m = \indexK -
        1$ (\resp $m = \indexOmega$). By \ih on $t_2$, one has that
        $\cbnTApprox{t_2} \in \cbnAprxBnoS_m$ thus $\cbnTApprox{t} =
        \app{\cbnTApprox{t_1}}{\cbnTApprox{t_2}} \in
        \app{\cbnAprxBneS_\indexK}{\cbnAprxBnoS_m} \subseteq
        \cbnAprxBneS_\indexK$.
    \end{itemize}

\item[\bltI] $t = t_1\esub{x}{t_2}$: Impossible since
    $\cbnPredSNF_\indexK{t}$.
    \qedhere
\end{itemize}
\end{proof}

}

Let us recall the inductive definition of \CBNSymb stratified
equality:
\begin{equation*}
    \begin{array}{c}
        \begin{prooftree}
            \hypo{\phantom{\cbvEq[\indexI]}}
            \inferCbnEqVar[1]{x \cbvEq[\indexI] x}
        \end{prooftree}
    \hspace{1.5cm}
        \begin{prooftree}
            \hypo{t \cbvEq[\indexI] u }
            \inferCbnEqAbs{\abs{x}{t} \cbvEq[\indexI] \abs{x}{u}}
        \end{prooftree}
    \\[0.5cm]
        \begin{prooftree}
            \hypo{t_1 \cbvEq[0] u_1}
            \inferCbnEqApp[1]{\app{t_1}{t_2} \cbvEq[0] \app{u_1}{u_2}}
        \end{prooftree}
    \hspace{0.5cm}
        \begin{prooftree}
            \hypo{t_1 \cbvEq[\indexI+1] u_1}
            \hypo{t_2 \cbvEq[\indexI] u_2}
            \inferCbnEqApp{\app{t_1}{t_2} \cbvEq[\indexI+1] \app{u_1}{u_2}}
        \end{prooftree}
    \hspace{0.5cm}
        \begin{prooftree}
            \hypo{t_1 \cbvEq[\indexOmega] u_1}
            \hypo{t_2 \cbvEq[\indexOmega] u_2}
            \inferCbnEqApp{\app{t_1}{t_2} \cbvEq[\indexOmega] \app{u_1}{u_2}}
        \end{prooftree}
    \\[0.5cm]
    \hspace{-0.4cm}
        \begin{prooftree}
            \hypo{t_1 \cbvEq[0] u_1}
            \inferCbnEqEs[1]{t_1\esub{x}{t_2} \cbvEq[0] u_1\esub{x}{u_2}}
        \end{prooftree}
    \hspace{0.1cm}
        \begin{prooftree}
            \hypo{t_1 \cbvEq[\indexI+1] u_1}
            \hypo{t_2 \cbvEq[\indexI] u_2}
            \inferCbnEqEs{t_1\esub{x}{t_2} \cbvEq[\indexI+1] u_1\esub{x}{u_2}}
        \end{prooftree}
    \hspace{0.1cm}
        \begin{prooftree}
            \hypo{t_1 \cbvEq[\indexOmega] u_1}
            \hypo{t_2 \cbvEq[\indexOmega] u_2}
            \inferCbnEqEs{t_1\esub{x}{t_2} \cbvEq[\indexOmega] u_1\esub{x}{u_2}}
        \end{prooftree}
    \end{array}
\end{equation*}

\noindent We now prove the instance of
\Cref{ax:Gen_|t_in_bnoSn_and_t_<=_u_==>_u_in_BnoSn} for \CBNSymb.

\begin{lemma}[\CBNSymb Stability of Meaningful Observables]
    \label{lem:cbn_t_in_bnoSn_and_t_<=_u_==>_u_in_BnoSn}%
    Let $\indexK \in \setOrdinals$ and $\aprxT \in
    \cbnAprxBnoS_\indexK$. Then for any $\aprxU \in \setCbnAprxTerms$
    such that $\aprxT \cbnAprxLeq \aprxU$, one has that $\aprxU \in
    \cbnAprxBnoS_\indexK$ and $\aprxT \cbnEq[\indexK] \aprxU$.%
\end{lemma}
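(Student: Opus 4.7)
The plan is to follow the exact same template as the proof of \Cref{lem:Cbv_|t_in_bnoSn_and_t_<=_u_==>_u_in_BnoSn}, adapting it to the simpler CBN grammar of observable normal form approximants (which has only two mutually defined families, $\cbnAprxBneS_\indexK$ and $\cbnAprxBnoS_\indexK$, rather than three as in CBV). I would first strengthen the statement into a conjunction of two claims, proved by mutual induction on $\aprxT$:
\begin{itemize}
\item if $\aprxT \in \cbnAprxBneS_\indexK$ and $\aprxT \cbnAprxLeq \aprxU$ then $\aprxU \in \cbnAprxBneS_\indexK$ and $\aprxT \cbnEq[\indexK] \aprxU$;
\item if $\aprxT \in \cbnAprxBnoS_\indexK$ and $\aprxT \cbnAprxLeq \aprxU$ then $\aprxU \in \cbnAprxBnoS_\indexK$ and $\aprxT \cbnEq[\indexK] \aprxU$.
\end{itemize}

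The crucial observation enabling the induction is that every $\aprxT$ in $\cbnAprxBneS_\indexK$ or $\cbnAprxBnoS_\indexK$ has a concrete head constructor (variable, application, or abstraction), so it is never $\cbnAprxBot$. Consequently, by definition of $\cbnAprxLeq$ on non-$\cbnAprxBot$ terms, $\aprxU$ must have the same top constructor as $\aprxT$, with the immediate subterms of $\aprxU$ pointwise $\cbnAprxGeq$ the corresponding subterms of $\aprxT$. This lets every case go through mechanically.

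The case analysis then splits as follows. For $\aprxT \in \cbnAprxBneS_\indexK$: if $\aprxT = x$ then $\aprxU = x$ and we conclude with $(\cbnEqVarRuleName)$; if $\aprxT = \app{\aprxT_1}{\aprxT_2}$ with $\aprxT_1 \in \cbnAprxBneS_\indexK$, then $\aprxU = \app{\aprxU_1}{\aprxU_2}$ with $\aprxT_j \cbnAprxLeq \aprxU_j$, and we apply the \ih to $\aprxT_1$ (in $\cbnAprxBneS_\indexK$) and to $\aprxT_2$ (in $\cbnAprxBnoS_{\indexK-1}$, or $\cbnAprxBnoS_\indexOmega$, or anything if $\indexK = 0$), then close with the appropriate $(\cbnEqAppRuleName)$ rule according to the value of $\indexK$. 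For $\aprxT \in \cbnAprxBnoS_\indexK$: either $\aprxT \in \cbnAprxBneS_\indexK$ (direct from the neutral-case \ih) or $\aprxT = \abs{x}{\aprxT'}$ with $\aprxT' \in \cbnAprxBnoS_\indexK$, in which case $\aprxU = \abs{x}{\aprxU'}$ with $\aprxT' \cbnAprxLeq \aprxU'$; the \ih on $\aprxT'$ plus $(\cbnEqAbsRuleName)$ concludes.

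No step is really hard here: the only minor care needed is matching the right stratified-equality rule for applications to the correct level $\indexK$ (one rule for $0$, one for successor, one for $\indexOmega$), mirroring the three branches used in the CBV proof. I expect the whole argument to be a routine, essentially mechanical, mutual induction, shorter than the CBV analogue because no variable-context family $\cbnAprxBvrS_\indexK$ appears and explicit substitutions are absent from the normal-form grammar.
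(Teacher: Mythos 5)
Your proposal is correct and follows essentially the same route as the paper's proof: a mutual induction on the two families $\cbnAprxBneS_\indexK$ and $\cbnAprxBnoS_\indexK$, using the fact that these terms are never $\cbnAprxBot$ to force $\aprxU$ to share the top constructor, and then closing each case with the level-appropriate stratified-equality rule (splitting $\indexK$ into $0$, successor, and $\indexOmega$ for applications). No gaps.
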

\stableProof{
    \begin{proof}
Let $\aprxT, \aprxU \in \setCbnAprxTerms$ such that $\aprxT
\cbnAprxLeq \aprxU$. We strengthen the induction hypothesis:
\begin{itemize}
\item[\bltI] If $\aprxT \in \cbnAprxBneS_\indexK$ then $\aprxU \in
    \cbnAprxBneS_\indexK$.

\item[\bltI] If $\aprxT \in \cbnAprxBnoS_\indexK$ then $\aprxU \in
    \cbnAprxBnoS_\indexK$.
\end{itemize}
Moreover, $\aprxT \cbnEq[\indexK] \aprxU$.

By mutual induction on $\aprxT \in \cbnAprxBnoS_\indexK$ and $\aprxT
\in \cbnAprxBneS_\indexK$:
\begin{itemize}
\item[\bltI] $\aprxT \in \cbnAprxBneS_\indexK$:
    \begin{itemize}
    \item[\bltII] $\aprxT = x$: Then $\aprxU = x \in \cbnAprxBneS$ and
        $\aprxT \cbnEq[\indexK] \aprxU$ using $(\cbnEqVarRuleName)$.
 
    \item[\bltII] $\aprxT = \app{\aprxT_1}{\aprxT_2}$: Then
        necessarily $\aprxU = \app[\,]{\aprxU_1}{\aprxU_2}$ with
        $\aprxT_1 \cbnAprxLeq \aprxU_1$ and $\aprxT_2 \cbnAprxLeq
        \aprxU_2$. Moreover $\aprxT_1 \in \cbnAprxBneS_\indexK$ thus
        by \ih on $\aprxT_1$, one has that $\aprxU_1 \in
        \cbnAprxBneS_\indexK$ and $\aprxT_1 \cbnEq[\indexK] \aprxU_1$.
        We distinguish two cases:
        \begin{itemize}
            \item[\bltIII] $\indexK = 0$: Then $\aprxU =
            \app{\aprxU_1}{\aprxU_2} \in \app{\cbnAprxBneS_0}{\aprxT}
            \subseteq \cbnAprxBneS_0$ and $\aprxT \cbnEq[\indexK]
            \aprxU$ using $(\cbnEqAppRuleName)$.

            \item[\bltIII] $\indexK \in \Nat \setminus \{0\}$ (\resp
            $\indexK = \indexOmega$): Then $m = \indexK - 1$ (\resp $m
            = \indexOmega$) and $\aprxT_2 \in \cbnAprxBnoS_m$. By \ih
            on $t_2$, one has that $\aprxU_2 \in \cbnAprxBnoS_m$ and
            $\aprxT_2 \cbnEq[m] \aprxU_2$. Thus $\aprxU =
            \app{\aprxU_1}{\aprxU_2} \in
            \app{\cbnAprxBneS_\indexK}{\cbnAprxBnoS_m} \subseteq
            \cbnAprxBneS_\indexK$ and $\aprxT \cbnEq[\indexK] \aprxU$
            using $(\cbnEqAppRuleName)$.
        \end{itemize}
    \end{itemize}

\item[\bltI] $\aprxT \in \cbnAprxBnoS_\indexK$:
    \begin{itemize}
    \item[\bltII] $\aprxT = \abs{x}{\aprxT'}$: Then $\aprxT' \in
        \cbnAprxBnoS_\indexK$ and $\aprxU = \abs{x}{\aprxU'}$ with
        $\aprxT' \cbnAprxLeq \aprxU'$. By \ih on $\aprxT'$, one has
        $\aprxU' \in \cbnAprxBnoS_\indexK$ and $\aprxT'
        \cbnEq[\indexK] \aprxU'$. Thus $\aprxU = \abs{x}{\aprxU'} \in
        \abs{x}{\cbnAprxBnoS_\indexK} \subseteq \cbnAprxBnoS_\indexK$
        and $\aprxT \cbnEq[\indexK] \aprxU$ using
        $(\cbnEqAbsRuleName)$.

    \item[\bltII] $\aprxT \in \cbnAprxBneS_\indexK$: By \ih on
        $\aprxT$, one has that $\aprxU \in \cbnAprxBneS_\indexK
        \subseteq \cbnAprxBnoS_\indexK$ and $\aprxT \cbnEq[\indexK]
        \aprxU$.
        \qedhere
    \end{itemize}
\end{itemize}
\end{proof}
}

\CbnQuantitativeStratifiedGenericity*
\label{prf:cbn_Quantitative_Stratified_Genericity}%
\stableProof{
    \begin{proof}
By definition, there exists $s \in \cbnNoS_\indexK$ such that
$\cbnCCtxt<t> \cbnArr*_{S_\indexK} s$. By iterating
\Cref{lem:cbn_t_->Sn_u_=>_MA(t)_->*Sn_MA(u),lem:cbn_t_->Sn_u_and_t_<=_t'_==>_t'_->Sn_u'_and_u_<=_u'},
there exist $\aprxS \in \setCbnAprxTerms$ and $n \in \mathbb{N}$ such
that $\cbnTApprox{\cbnCCtxt<t>} \cbnAprxArr^n_{S_\indexK} \aprxS
\cbnAprxGeq \cbnTApprox{s}$. Since $s \in \cbnNoS_\indexK$, then
$\cbnTApprox{s} \in \cbnAprxBnoS_\indexK$ using
\Cref{lem:cbn_t_in_BnoSn_==>_MA(t)_in_noSn}, thus $\aprxS \in
\cbnAprxBnoS_\indexK$ using
\Cref{lem:cbn_t_in_bnoSn_and_t_<=_u_==>_u_in_BnoSn}. On one hand, by
\Cref{lem:Cbn_MA(t)_<=_t}, one has that $\cbnTApprox{\cbnCCtxt<t>}
\cbnAprxLeq \cbnCCtxt<t>$ thus, by iterating
\Cref{lem:cbn_t_->Sn_u_and_t_<=_t'_==>_t'_->Sn_u'_and_u_<=_u'}, one
has that $\cbnCCtxt<t> \cbnArr^n_{S_\indexK} t'$ for some $t' \in
\setCbnTerms$ such that $\aprxS \cbnAprxLeq t'$. By
\Cref{lem:cbn_t_in_bnoSn_and_t_<=_u_==>_u_in_BnoSn}, one deduces that
$t' \in \cbnNoS_\indexK$ and $t' \cbnEq[\indexK] \aprxS$. One the
other hand, using \Cref{lem:cbn_Approximant_Genericity}, one has that
$\cbnTApprox{\cbnCCtxt<t>} \cbnAprxLeq \cbnTApprox{\cbnCCtxt<u>}$ thus
$\cbnTApprox{\cbnCCtxt<t>} \cbnAprxLeq \cbnCCtxt<u>$ using
\Cref{lem:Cbn_MA(t)_<=_t} and transitivity. By iterating
\Cref{lem:cbn_t_->Sn_u_and_t_<=_t'_==>_t'_->Sn_u'_and_u_<=_u'} again,
one has that $\cbnCCtxt<u> \cbnArr^n_{S_\indexK} u'$ for some $u' \in
\setCbnTerms$ such that $\aprxS \cbnAprxLeq u'$. By
\Cref{lem:cbn_t_in_bnoSn_and_t_<=_u_==>_u_in_BnoSn}, one deduces that
$u' \in \cbnNoS_\indexK$ and $u' \cbnEq[\indexK] \aprxS$. Finally, by
transitivity of $\cbnEq[\indexK]$, one deduces that $t'
\cbnEq[\indexK] u'$.

Graphically, these are the main steps of the proof:
\begin{center}
    \begin{tikzpicture}
        \node               (A)         at (0, 0.25)    {$\cbnTApprox{\cbnCCtxt<t>}$};
        \node               (A')        at (0, -2)      {$\aprxS\!$};

        \node               (Ct)        at (-1, -0.5)   {$\cbnCCtxt<t>$};
        \node               (t')        at (-1, -2.5)   {$t'\!$};
        \node[rotate=-90]   (t'_in)     at (-1, -2.85)  {$\in$};
        \node               (t'_set)    at (-1, -3.2)   {$\cbnNoS_\indexK$};

        \node               (Cu)        at (1, -0.5)    {$\cbnCCtxt<u>$};
        \node               (u')        at (1, -2.5)    {$u'\!$};
        \node[rotate=-90]   (u'_in)     at (1, -2.85)   {$\in$};
        \node               (u'_set)    at (1, -3.2)    {$\cbnNoS_\indexK$};

        \draw[->>, line width=0.2mm] (A) to (A'.north);
        \node at ([yshift=-2.05cm, xshift=1.8mm]A.north)
            {{$\scriptscriptstyle n$}};
        \node at ([yshift=-2.05cm, xshift=-1.8mm]A.north)
            {{$\scriptscriptstyle \cbnAprxSymbSurface_\indexK$}};

        \draw[->>, line width=0.2mm] (Ct) to (t'.north);
        \node at ([yshift=-1.8cm, xshift=1.8mm]Cu.north)
            {{$\scriptscriptstyle n$}};
        \node at ([yshift=-1.8cm, xshift=-1.8mm]Cu.north)
            {{$\scriptscriptstyle \cbnSymbSurface_\indexK$}};

        \draw[->>, line width=0.2mm] (Cu) to (u'.north);
        \node at ([yshift=-1.8cm, xshift=-1.8mm]Ct.north)
            {{$\scriptscriptstyle n$}};
        \node at ([yshift=-1.8cm, xshift=1.8mm]Ct.north)
            {{$\scriptscriptstyle \cbnSymbSurface_\indexK$}};

        \node[rotate=25]    (deduceArrowLeft)   at (-0.55, -1.5) {$\leftsquigarrow$};
        \node at ([xshift=1mm, yshift=0.8mm]deduceArrowLeft.north)
        {{$\scriptscriptstyle {\tt Lem}.\ref{lem:cbn_t_->Sn_u_and_t_<=_t'_==>_t'_->Sn_u'_and_u_<=_u'}$}};

        \node[rotate=25]    (deduceArrowLeft)   at (-0.55, -1) {$\rightsquigarrow$};
        \node at ([xshift=1mm, yshift=0.8mm]deduceArrowLeft.north)
        {{$\scriptscriptstyle {\tt Lem}.\ref{lem:cbn_t_->Sn_u_=>_MA(t)_->*Sn_MA(u)}$}};

        \node[rotate=-25]   (deduceArrowRight)  at (0.5, -1.25)  {$\rightsquigarrow$};
        \node at ([xshift=-1mm, yshift=0.8mm]deduceArrowRight.north)
        {{$\scriptscriptstyle {\tt Lem}.\ref{lem:cbn_t_->Sn_u_and_t_<=_t'_==>_t'_->Sn_u'_and_u_<=_u'}$}};

        \node[rotate=35]    (LeqCt)     at (-0.6, -0.15)    {$\cbnAprxGeq$};
        \node   at ([yshift=1mm, xshift=-5mm]LeqCt)
            {$\scriptscriptstyle {\tt Lem.} \ref{lem:Cbn_MA(t)_<=_t}$};

        \node[rotate=35]    (LeqT')     at (-0.5, -2.13)    {$\cbnAprxGeq$};

        \node[rotate=-40]   (LeqCu)  at (0.6, -0.15)     {$\cbnAprxLeq$}; 
        \node   at ([yshift=1mm, xshift=8mm]LeqCu)
            {$\scriptscriptstyle {\tt Lem.} \ref{lem:cbn_Approximant_Genericity} \;\&\; {\tt Lem.} \ref{lem:Cbn_MA(t)_<=_t}$};

        \node[rotate=-40]   (LeqU')  at (0.5, -2.13)     {$\cbnAprxLeq$};


        \node               (eq)         at (0, -2.5)     {$\cbnLongEq[\indexK]$};
        \node   at ([yshift=-2.75mm]eq)
            {$\scriptscriptstyle {\tt Lem.} \ref{lem:cbn_t_in_bnoSn_and_t_<=_u_==>_u_in_BnoSn}$};
    \end{tikzpicture}
\end{center}
\end{proof}
}%

\section{Proofs of Section \ref{sec:theories}}
\label{sec:ProofTheories}%

\subsection{Proofs of Subsection \ref{sec:generic-theories}}


Let us recall the rules of the congruence $\genThCongr$.
\begin{equation*}
    \begin{array}{c}
        \begin{prooftree}
            \inferGenThCongrRefl{t}{t}
        \end{prooftree}
    \hspace{0.6cm}
        \begin{prooftree}
            \hypoGenThCongr{t}{u}
            \inferGenThCongrSym{u}{t}
        \end{prooftree}
    \hspace{0.6cm}
        \begin{prooftree}
            \hypoGenThCongr{t}{s}
            \hypoGenThCongr{s}{u}
            \inferGenThCongrTrans{t}{u}
        \end{prooftree}
    \hspace{0.6cm}
        \begin{prooftree}
            \hypoGenThCongr{t}{u}
            \inferGenThCongrCtxt{\genCCtxt<t>}{\genCCtxt<u>}
        \end{prooftree}
    \hspace{0.6cm}
        \begin{prooftree}
            \inferGenThCongrStep{t}{u}
        \end{prooftree}
    \end{array}
\end{equation*}
We define the theory $\genThHax$ to be the set of equations generated
by the following rules.
\begin{equation*}
    \begin{array}{c}
        \begin{prooftree}
            \inferGenThHaxRefl{t}{t}
        \end{prooftree}
    \hspace{0.5cm}
        \begin{prooftree}
            \inferGenThHaxBase{t}{u}
        \end{prooftree}
    \hspace{0.5cm}
        \begin{prooftree}
            \hypoGenThHaxDSim{t}{u}
            \inferGenThHaxSym{u}{t}
        \end{prooftree}
    \hspace{0.5cm}
        \begin{prooftree}
            \hypoGenThHaxDSim{t}{u}
            \inferGenThHaxConvSSim{t}{u}
        \end{prooftree}
    \hspace{0.5cm}
        \begin{prooftree}
            \hypoGenThHaxSSim{t}{u}
            \inferGenThHaxCtxt{\genCCtxt<t>}{\genCCtxt<u>}
        \end{prooftree}
    \hspace{0.5cm}
        \begin{prooftree}
            \hypoGenThHaxSSim{t}{u}
            \inferGenThHaxConvEq{t}{u}
        \end{prooftree}
    \hspace{0.5cm}
        \begin{prooftree}
            \hypoGenThHaxEq{t}{s}
            \hypoGenThHaxEq{s}{u}
            \inferGenThHaxTrans{t}{u}
        \end{prooftree}
    \end{array}
\end{equation*}

We write $\genThHaxTail{t}{u}$ (\resp
$\genThHaxTail[\genThHaxDSim]{t}{u}$,
$\genThHaxTail[\genThHaxSSim]{t}{u}$) if $t \genThHaxEq u \in
\genThHax$ (\resp $t \genThHaxDSim u \in \genThHax$, $t \genThHaxSSim
u \in \genThHax$).

\begin{restatable}{lemma}{GenTheoryTaxEquivalentTheoryH}
    \label{lem:Theories_Tax_|-_t_=_u_iff_H_|-_t_=_u}%
    Let $t, u \in \setGenTerms$. Then $\genThHaxTail{t}{u}$ iff
     $\genThHTail{t}{u}$.
\end{restatable}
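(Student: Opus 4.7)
The plan is to prove the two inclusions $\genThHax \subseteq \genThH$ and $\genThH \subseteq \genThHax$ separately. Both directions proceed by induction, exploiting the layered structure of $\genThHax$ (the base relation $\genThHaxDSim$, its symmetric-context closure $\genThHaxSSim$, and the top-level transitive closure $\genThHaxEq$) as against the monolithic closure definition of $\genThH$.

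For the inclusion $\genThHax \subseteq \genThH$, I would proceed by a simultaneous induction on derivations of $\genThHaxTail[\genThHaxDSim]{t}{u}$, $\genThHaxTail[\genThHaxSSim]{t}{u}$, and $\genThHaxTail{t}{u}$, establishing that each entails $\genThHTail{t}{u}$. The base rules (reflexivity and the base equations for reduction steps and for pairs of meaningless terms) are immediately in $\genThH$, since $\genThH$ contains $\genThHBaseEquations = \genThLambdaBaseEquations \cup \{(t,u) \mid t, u \text{ meaningless}\}$ by definition. The inductive rules for symmetry, context closure, and transitivity correspond exactly to the closure properties defining the congruence $\genThCongr_{\genThHBaseEquations}$, while the promotion rules from $\genThHaxDSim$ to $\genThHaxSSim$ and from $\genThHaxSSim$ to $\genThHaxEq$ merely relabel the equation, so each case closes trivially.

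For the converse inclusion $\genThH \subseteq \genThHax$, I would show that $\genThHax$ is itself a congruence in the sense of $\genThCongr$ containing $\genThHBaseEquations$; minimality of $\genThH$ then yields the inclusion. Containment of the base equations follows from the base rule of $\genThHax$, and reflexivity is immediate; transitivity of $\genThHaxEq$ is a rule on the nose. For symmetry at the level of $\genThHaxEq$, given $t \genThHaxEq u$, I would unfold the transitive chain witnessing it and apply the symmetry rule at the $\genThHaxDSim$ level, then re-promote each symmetrised step via the two conversion rules, and finally compose in reverse order using transitivity to produce $u \genThHaxEq t$. Context closure at the level of $\genThHaxEq$ is handled analogously: each $\genThHaxSSim$-step in the transitive chain lifts under a context by the context rule, and the resulting $\genThHaxSSim$-steps are re-promoted and reassembled.

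The main obstacle I expect is precisely these two lifts---symmetry and context closure---from the intermediate relations up to the top-level $\genThHaxEq$, which are not direct rules of the system and require auxiliary inductions on the transitive structure of $\genThHaxEq$. Once both auxiliary closure properties are established, the congruence $\genThCongr_{\genThHBaseEquations} = \genThH$ is contained in $\genThHax$ by minimality, and combined with the first direction this yields the equivalence $\genThHaxTail{t}{u}$ iff $\genThHTail{t}{u}$.
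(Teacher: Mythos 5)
Your proposal is correct and follows essentially the same route as the paper: the forward inclusion by a simultaneous induction mapping each rule of the axiomatized system to a closure rule of the congruence, and the converse by first establishing symmetry and context closure as derived properties of $\genThHaxEq$ (the paper's auxiliary properties (P1) and (P2), proved by mutual induction on the $\genThHaxEq$/$\genThHaxSSim$ derivations) and then concluding by minimality of $\genThH$, which is the same as the paper's induction on the derivation of $\genThHTail{t}{u}$. The only cosmetic difference is that you phrase the symmetry lift as unfolding the transitive chain rather than as a structural induction, but handling the context-closure rule inside each $\genThHaxSSim$-step forces the same induction anyway.
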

\label{prf:Theories_Tax_|-_t_=_u_iff_H_|-_t_=_u}%
\stableProof{
    \begin{proof}
By double implication:
\begin{itemize}
\item[$(\Rightarrow)$] By mutual induction on the length of the proof
    of $\genThHaxTail{t}{u}$, $\genThHaxTail[\genThHaxDSim]{t}{u}$ and
    $\genThHaxTail[\genThHaxSSim]{t}{u}$, one shows that
    $\genThHTail{t}{u}$. Cases on the last rule applied:
    \begin{itemize}
    \item[\bltI] $(\genThHaxReflSymbDSim)$: Then $t = u$ and using
        $(\genThCongrReflSymb)$, one concludes that
        $\genThHTail{t}{u}$.

    \item[\bltI] $(\genThHaxBaseSymbDSim)$: Then $(t, u) \in
        \genThHBaseEquations$ and using $(\genThCongrStepSymb)$, one
        concludes that $\genThHTail{t}{u}$.

    \item[\bltI] $(\genThHaxSymSymbDSim)$: Then
        $\genThHaxTail[\genThHaxDSim]{u}{t}$ and by \ih, one obtains
        $\genThHTail{u}{t}$ thus using $(\genThCongrSymSymb)$, one
        concludes that $\genThHTail{t}{u}$.

    \item[\bltI] $(\genThHaxConvSymbSSim)$: Then
        $\genThHaxTail[\genThHaxDSim]{t}{u}$ thus by \ih, one
        concludes that $\genThHTail{t}{u}$.

    \item[\bltI] $(\genThHaxCtxtSymbSSim)$: Then $t = \genCCtxt<t'>$
        and $u = \genCCtxt<u'>$ for some $t', u' \in \setGenTerms$
        such that $\genThHaxTail[\genThHaxSSim]{t'}{u'}$. By \ih, one
        obtains that $\genThHTail{t'}{u'}$ thus using
        $(\genThCongrCtxtSymb)$, one concludes that
        $\genThHTail{t}{u}$.

    \item[\bltI] $(\genThHaxConvSymbEq)$: Then
        $\genThHaxTail[\genThHaxSSim]{t}{u}$ and by \ih, one concludes
        that $\genThHTail{t}{u}$.

    \item[\bltI] $(\genThHaxTransSymbEq)$: Then $\genThHaxTail{t}{s}$
        and $\genThHaxTail{s}{u}$ for some $s \in \setGenTerms$. By
        \ih on both, one has that $\genThHTail{t}{s}$ and
        $\genThHTail{s}{u}$ thus using $(\genThCongrTransSymb)$, one
        concludes that $\genThHTail{t}{u}$.
    \end{itemize}

\item[$(\Leftarrow)$] Let us first show the following property
    (\textbf{P1}): provability of $\genThHaxTail[\genThHaxEq]{t}{u}$
    and $\genThHaxTail[\genThHaxSSim]{t}{u}$ is symmetric. By mutual
    induction on $\genThHaxTail[\genThHaxEq]{t}{u}$ and
    $\genThHaxTail[\genThHaxSSim]{t}{u}$. Cases on the last rule
    applied:
    \begin{itemize}
    \item[\bltI] $(\genThHaxConvSymbSSim)$: Then
        $\genThHaxTail[\genThHaxDSim]{t}{u}$ and using
        $(\genThHaxSymSymbDSim)$, one has that
        $\genThHaxTail[\genThHaxDSim]{u}{t}$ thus using
        $(\genThHaxConvSymbSSim)$, one concludes that
        $\genThHaxTail[\genThHaxSSim]{u}{t}$.

    \item[\bltI] $(\genThHaxCtxtSymbSSim)$: Then $t = \genCCtxt<t'>$,
        $u = \genCCtxt<u'>$ for some $t', u' \in \setGenTerms$ such
        that $\genThHaxTail[\genThHaxSSim]{t'}{u'}$. By \ih, one has
        that $\genThHaxTail[\genThHaxSSim]{u'}{t'}$ thus using
        $(\genThHaxCtxtSymbSSim)$, one concludes that
        $\genThHaxTail[\genThHaxSSim]{u}{t}$.

    \item[\bltI] $(\genThHaxConvSymbEq)$: Then
        $\genThHaxTail[\genThHaxSSim]{t}{u}$ and by \ih, one has that
        $\genThHaxTail[\genThHaxSSim]{u}{t}$ thus using
        $(\genThHaxConvSymbEq)$, one concludes that
        $\genThHaxTail{u}{t}$.

    \item[\bltI] $(\genThHaxTransSymbEq)$: Then $\genThHaxTail{t}{s}$
        and $\genThHaxTail{s}{u}$ for some $s \in \setGenTerms$. By
        \ih on both, one has that $\genThHaxTail{s}{t}$ and
        $\genThHaxTail{u}{s}$ thus, using $(\genThHaxTransSymbEq)$,
        one concludes that $\genThHaxTail{u}{t}$.

    \end{itemize}
    We then show the following property (\textbf{P2}): provability of
    $\genThHaxTail{t}{u}$ is a congruence. Let $\genThHaxTail{t}{u}$,
    let us show that $\genThHaxTail{\genCCtxt<t>}{\genCCtxt<u>}$. By
    induction on the length of the proof of $\genThHaxTail{t}{u}$.
    Cases on the last rule applied:
    \begin{itemize}
    \item[\bltI] $(\genThHaxConvSymbEq)$: Then
        $\genThHaxTail[\genThHaxSSim]{t}{u}$ thus using
        $(\genThHaxCtxtSymbSSim)$, one obtains
        $\genThHaxTail[\genThHaxSSim]{\genCCtxt<t>}{\genCCtxt<u>}$ and
        using $(\genThHaxConvSymbEq)$, one concludes that
        $\genThHaxTail{\genCCtxt<t>}{\genCCtxt<u>}$.

    \item[\bltI] $(\genThHaxTransSymbEq)$: Then $\genThHaxTail{t}{s}$
        and $\genThHaxTail{s}{u}$ for some $s \in \setGenTerms$. By
        \ih on both, one has that
        $\genThHaxTail{\genCCtxt<t>}{\genCCtxt<s>}$ and
        $\genThHaxTail{\genCCtxt<s>}{\genCCtxt<u>}$ thus using
        $(\genThHaxTransSymbEq)$, one concludes that
        $\genThHaxTail{\genCCtxt<t>}{\genCCtxt<u>}$.
    \end{itemize}

    Let us now show the main property. By induction on the length of
    the proof of $\genThHTail{t}{u}$. Cases on the last rule applied:
    \begin{itemize}
    \item[\bltI] $(\genThCongrReflSymb)$: Then $t = u$ and using
        $(\genThHaxReflSymbDSim)$, $(\genThHaxConvSymbSSim)$ and
        $(\genThHaxConvSymbEq)$, one concludes that
        $\genThHaxTail{t}{u}$.

    \item[\bltI] $(\genThCongrStepSymb)$: Then $(t, u) \in
        \genThHBaseEquations$ and using $(\genThHaxBaseSymbDSim)$,
        $(\genThHaxConvSymbSSim)$ and $(\genThHaxConvSymbEq)$, one
        concludes that $\genThHaxTail{t}{u}$.

    \item[\bltI] $(\genThCongrSymSymb)$: Then $\genThHTail{u}{t}$ thus
        by \ih $\genThHaxTail{u}{t}$ and by property (\textbf{P1}),
        one concludes that $\genThHaxTail{t}{u}$.

    \item[\bltI] $(\genThCongrTransSymb)$: Then $\genThHTail{t}{s}$
        and $\genThHTail{s}{u}$ and by \ih on both, one has that
        $\genThHaxTail{t}{s}$ and $\genThHaxTail{s}{u}$ thus using
        $(\genThHaxTransSymbEq)$, one concludes that
        $\genThHaxTail{t}{u}$.

    \item[\bltI] $(\genThCongrCtxtSymb)$: Then $t = \genCCtxt<t'>$ and
        $u = \genCCtxt<u'>$ for some $t', u' \in \setGenTerms$ such
        that and $\genThHTail{t'}{u'}$. By \ih, one has
        $\genThHaxTail{t'}{u'}$ and by property (\textbf{P2}), one
        concludes that $\genThHaxTail{t}{u}$.
        \qedhere
    \end{itemize}
\end{itemize}
\end{proof}

}

\begin{lemma}
    \label{lem:gen_Tax_Decompose}%
    Let $t, u \in \setGenTerms$, then
    \begin{enumerate}
    \item If $\genThHaxTail{t}{u}$, then there exist $n \in
        \mathbb{N}$ and $s_0, \dots, s_n \in \setGenTerms$ such that
        $t = s_0$, $u = s_n$ and
        $\genThHaxTail[\genThHaxSSim]{s_i}{s_{i+1}}$ for $0 \leq i <
        n$. \label{lem:gen_TaxEq_Decompose}%

    \item $\genThHaxTail[\genThHaxDSim]{t}{u}$ \;if and only if\;
        $\genThLambdaTail{t}{u}$ or $\genThLambdaTail{u}{t}$ from an
        axiom, or $t$ and $u$ are meaningless.
        \label{lem:gen_TaxDSim_Decompose}%
    \end{enumerate}
\end{lemma}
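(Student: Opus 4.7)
The plan is to prove both parts by straightforward inductions on derivations, exploiting the stratified structure of $\genThHax$ in which $\genThHaxDSim$, $\genThHaxSSim$, and $\genThHaxEq$ each add exactly one closure operation, so that each layer can be analyzed in isolation.

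For part (\ref{lem:gen_TaxEq_Decompose}), I would induct on the derivation of $\genThHaxTail{t}{u}$. Only two rules can conclude a judgment of the form $\cdot \genThHaxEq \cdot$. If the last rule is $(\genThHaxConvSymbEq)$, then $\genThHaxTail[\genThHaxSSim]{t}{u}$ directly, and the sequence $s_0 = t$, $s_1 = u$ with $n = 1$ works. If the last rule is $(\genThHaxTransSymbEq)$, then $\genThHaxTail{t}{s}$ and $\genThHaxTail{s}{u}$ for some intermediate $s$, and concatenating the two $\genThHaxSSim$-sequences provided by the IH (fusing at the shared endpoint $s$) yields the desired decomposition.

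For part (\ref{lem:gen_TaxDSim_Decompose}), the backward direction is immediate: the case $t = u$ is handled by $(\genThHaxReflSymbDSim)$; if $(t,u) \in \genThLambdaBaseEquations$ then $(t,u) \in \genThHBaseEquations$ since $\genThLambdaBaseEquations \subseteq \genThHBaseEquations$, so $(\genThHaxBaseSymbDSim)$ applies; the case $(u,t) \in \genThLambdaBaseEquations$ is obtained by additionally applying $(\genThHaxSymSymbDSim)$; and if both $t$ and $u$ are meaningless then $(t,u) \in \genThHBaseEquations$ by definition, so $(\genThHaxBaseSymbDSim)$ again suffices. For the forward direction, I would induct on the derivation of $\genThHaxTail[\genThHaxDSim]{t}{u}$ using the three rules of $\genThHaxDSim$: rule $(\genThHaxReflSymbDSim)$ gives $t = u$; rule $(\genThHaxBaseSymbDSim)$ gives $(t,u) \in \genThHBaseEquations = \genThLambdaBaseEquations \cup \{(t', u') \mid t', u' \text{ meaningless}\}$, yielding the disjunction directly; rule $(\genThHaxSymSymbDSim)$ is handled by the IH, noting that each disjunct in the statement is itself symmetric in $t$ and $u$. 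There is no real obstacle here: the three-layered design of $\genThHax$ is precisely engineered so that each decomposition lemma reduces to a routine case analysis on the last inference rule.
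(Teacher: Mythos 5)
Your proof is correct and follows essentially the same route as the paper's: induction on the derivation with a case analysis on the last rule for each layer, concatenating sequences in the transitivity case and using symmetry of the disjunction in the $(\genThHaxSymSymbDSim)$ case. Your instantiation $n=1$, $s_0=t$, $s_1=u$ in the $(\genThHaxConvSymbEq)$ case is in fact the right one (the paper's text writes $n:=0$ with $s_0 := t = u$ there, which only makes sense when $t=u$), so no changes are needed.
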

\stableProof{
    \begin{proof} ~
\begin{enumerate}
\item By induction on the length of the proof of
    $\genThHaxTail{t}{u}$. Cases on the last rule applied:
    \begin{itemize}
    \item[\bltI] $(\genThHaxConvSymbEq)$: Then
        $\genThHaxTail[\genThHaxSSim]{t}{u}$ and we set $n := 0$ with
        $s_0 := t = u$ which concludes this case since using
        $(\genThHaxReflSymbDSim)$ and $(\genThHaxConvSymbSSim)$, one
        has $\genThHaxTail[\genThHaxSSim]{s_0}{s_0}$.

    \item[\bltI] $(\genThHaxTransSymbEq)$: Then $\genThHaxTail{t}{s'}$
        and $\genThHaxTail{s'}{u}$ for some $s' \in \setCbnTerms$. By
        \ih on both, one has $n_1, n_2 \in \mathbb{N}$ and $s_1^0,
        \dots, s_1^{n_1}, s_2^0, \dots, s_2^{n_2} \in \setCbnTerms$
        such that $t = s_1^0$, $s' = s_1^{n_1}$, $s' = s_2^0$, $u =
        s_2^{n_2}$, $\genThHaxTail[\genThHaxSSim]{s_1^i}{s_1^{i+1}}$
        for all $0 \leq i < n_1$ and
        $\genThHaxTail[\genThHaxSSim]{s_2^i}{s_2^{i+1}}$ for all $0
        \leq i < n_2$. We set $n := n_1 + n_2$, $s_i := s_1^i$ for $0
        \leq i \leq n_1$ and $s_{n_1 + i} := s_2^i$ for $0 < i \leq
        n_2$ which concludes this case since $s_1^{n_1} = s' = s_2^0$.
    \end{itemize}

\item By double implication:
    \begin{itemize}
    \item[$(\Rightarrow)$] By induction on the length of the proof of
        $\genThHaxTail[\genThHaxDSim]{t}{u}$. Cases on the last rule
        applied:
        \begin{itemize}
        \item[\bltI] $(\genThHaxReflSymbDSim)$: Then
            $\genThLambdaTail{t}{u}$ using $(\genThLambdaReflSymb)$.

        \item[\bltI] $(\genThHaxBaseSymbDSim)$: Then $(t, u) \in
            \genThHBaseEquations$. We distinguish two cases:
            \begin{itemize}
            \item[\bltII] $(t, u \in \genThLambdaBaseEquations)$: Then
                $\genThLambdaTail{t}{u}$ using
                $(\genThLambdaStepSymb)$.
            \item[\bltII] $t, u$ meaningless.
            \end{itemize}

        \item[\bltI] $(\genThHaxSymSymbDSim)$: Then
            $\genThHaxTail[\genThHaxDSim]{u}{t}$ and by \ih, one has
            that either $\genThLambdaTail{u}{t}$ or
            $\genThLambdaTail{t}{u}$ from an axiom, or $t$ and $u$ are
            meaningless, thus concluding this case.
        \end{itemize}

    \item[$(\Leftarrow)$] We distinguish three cases:
        \begin{itemize}
        \item[\bltI] $\genThLambdaTail{t}{u}$ from an axiom: We
            distinguish two cases:
            \begin{itemize}
            \item[\bltII] $(\genThLambdaReflSymb)$: Then $t = u$ and
                using $(\genThHaxReflSymbDSim)$, one concludes that
                $\genThHaxTail[\genThHaxDSim]{t}{u}$.

            \item[\bltII] $(\genThLambdaStepSymb)$: Then $(t, u) \in
                \genThLambdaBaseEquations$ thus
                $\genThHaxTail[\genThHaxDSim]{t}{u}$ using
                $(\genThHaxBaseSymbDSim)$.
            \end{itemize}

        \item[\bltI] $\genThLambdaTail{u}{t}$ from an axiom: We
            distinguish two cases:
            \begin{itemize}
            \item[\bltII] $(\genThLambdaReflSymb)$: Then $u = t$ and
                using $(\genThHaxReflSymbDSim)$, one concludes that
                $\genThHaxTail[\genThHaxDSim]{t}{u}$.

            \item[\bltII] $(\genThLambdaStepSymb)$: Then $(t, u) \in
            \genThLambdaBaseEquations$ thus
            $\genThHaxTail[\genThHaxDSim]{t}{u}$ using
            $(\genThHaxBaseSymbDSim)$ and $(\genThHaxSymSymbDSim)$.
            \end{itemize}

        \item[\bltI] $t$ and $u$ are meaningless: Then using
            $(\genThHaxBaseSymbDSim)$, one concludes that
            $\genThHaxTail{t}{u}$.
            \qedhere
        \end{itemize}
    \end{itemize}
\end{enumerate}
\end{proof}
}

\begin{lemma}
    \label{lem:gen_TaxSSim_Decompose}%
    Let $t, u \in \setGenTerms$ such that
    $\genThHaxTail[\genThHaxSSim]{t}{u}$, then there exist a context
    $\genCCtxt$ and terms $t', u' \in \setGenTerms$ such that $t =
    \genCCtxt<t'>$, $u = \genCCtxt<u'>$ and
    $\genThHaxTail[\genThHaxDSim]{t'}{u'}$.
\end{lemma}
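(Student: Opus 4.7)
The plan is to proceed by induction on the derivation of $\genThHaxTail[\genThHaxSSim]{t}{u}$. Inspecting the inference rules of $\genThHax$, there are exactly two rules whose conclusion is of the form $\genThHaxTail[\genThHaxSSim]{\,\cdot\,}{\,\cdot\,}$, namely $(\genThHaxConvSymbSSim)$ (lifting $\genThHaxDSim$ to $\genThHaxSSim$) and $(\genThHaxCtxtSymbSSim)$ (contextual closure of $\genThHaxSSim$). The two cases of the induction thus exhaust all possibilities.

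In the base case, the last rule applied is $(\genThHaxConvSymbSSim)$, so that $\genThHaxTail[\genThHaxDSim]{t}{u}$ has already been derived. I would simply take $\genCCtxt = \Hole$, $t' = t$ and $u' = u$: then $\genCCtxt<t'> = t$, $\genCCtxt<u'> = u$, and $\genThHaxTail[\genThHaxDSim]{t'}{u'}$ holds by hypothesis. In the inductive case, the last rule applied is $(\genThHaxCtxtSymbSSim)$, so $t = \genCCtxt_1<s_1>$ and $u = \genCCtxt_1<s_2>$ for some full context $\genCCtxt_1$ and some $s_1, s_2 \in \setGenTerms$ with $\genThHaxTail[\genThHaxSSim]{s_1}{s_2}$ derived by a shorter derivation. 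Applying the induction hypothesis to this sub-derivation yields a context $\genCCtxt_2$ and terms $t', u' \in \setGenTerms$ such that $s_1 = \genCCtxt_2<t'>$, $s_2 = \genCCtxt_2<u'>$ and $\genThHaxTail[\genThHaxDSim]{t'}{u'}$. I would then take $\genCCtxt$ to be the composition $\genCCtxt_1\langle \genCCtxt_2 \rangle$ (the one-hole context obtained by plugging $\genCCtxt_2$ into the hole of $\genCCtxt_1$), which is again a full context. By construction, $\genCCtxt<t'> = \genCCtxt_1<\genCCtxt_2<t'>> = \genCCtxt_1<s_1> = t$ and $\genCCtxt<u'> = \genCCtxt_1<\genCCtxt_2<u'>> = \genCCtxt_1<s_2> = u$, as required.

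There is no real obstacle here: the statement is essentially a syntactic decomposition that tracks how every $\genThHaxSSim$-proof is built from a single $\genThHaxDSim$-step placed under some number of context closures. The only minor subtlety is making sure the context composition is well-defined (a full context plugged into the hole of a full context is again a full context), which is immediate from the grammar of $\genCCtxt$, since full contexts are closed under plugging into their unique hole.
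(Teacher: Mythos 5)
Your proof is correct and follows essentially the same route as the paper's: induction on the derivation of $\genThHaxTail[\genThHaxSSim]{t}{u}$, taking $\genCCtxt = \Hole$ in the $(\genThHaxConvSymbSSim)$ case and composing the outer context with the one provided by the induction hypothesis in the $(\genThHaxCtxtSymbSSim)$ case. No issues.
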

\stableProof{
    \begin{proof}
By induction on the length of the proof of
$\genThHaxTail[\genThHaxSSim]{t}{u}$. Cases on the last rule applied:
\begin{itemize}
\item[\bltI] $(\genThHaxConvSymbSSim)$: Then
    $\genThHaxTail[\genThHaxDSim]{t}{u}$ and we set $\genCCtxt :=
    \Hole$, $t' := t$ and $u' := u$ which concludes this case since $t
    = \Hole\CtxtPlug{t} = \genCCtxt<t'>$, $u = \Hole\CtxtPlug{u} =
    \genCCtxt<u'>$ and $\genThHaxTail[\genThHaxDSim]{t'}{u'}$ by
    hypothesis.

\item[\bltI] $(\genThHaxCtxtSymbSSim)$: Then $t = \genCCtxt''<t''>$
    and $u = \genCCtxt''<u''>$ for some terms $t', u' \in
    \setGenTerms$ such that $\genThHaxTail[\genThHaxSSim]{t''}{u''}$.
    By \ih, there exists a context $\genCCtxt'$ and terms $t', u' \in
    \setCbnTerms$ such that $t'' = \genCCtxt'<t'>$, $u'' =
    \genCCtxt'<u'>$ and $\genThHaxTail[\genThHaxDSim]{t'}{u'}$. We set
    $\genCCtxt := \genCCtxt''<\genCCtxt'>$ which concludes this case
    since $t = \genCCtxt''<t''> = \genCCtxt''<\genCCtxt'<t'>> =
    (\genCCtxt''<\genCCtxt'>)\CtxtPlug{t'} = \genCCtxt<t'>$ and $u =
    \genCCtxt''<u''> = \genCCtxt''<\genCCtxt'<u'>> =
    (\genCCtxt''<\genCCtxt'>)\CtxtPlug{t'} = \genCCtxt<u'>$.
    \qedhere
\end{itemize}
\end{proof}
}

Let us recall the first axiom:

\AssumptionTheoryFullConfluence*

\begin{lemma}
    \label{lem:gen_lambda_|-_t_=_u_==>_t_has_FNF_iff_u_has_FNF}%
    Let $t, u \in \setGenTerms$ such that $\genThLambdaTail{t}{u}$.
    Then $t$ has a $\indexOmega$-normal form if and only if $u$ has a
    $\indexOmega$-normal form.

    Moreover, they share the same $\genSCtxt_\indexOmega$-normal form.
\end{lemma}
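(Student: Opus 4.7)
The plan is to reduce the claim to a pure rewriting fact about $\genArr_\indexOmega$ and then exploit confluence.

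First, I would use \Cref{lem:Theories_lambda_|-_t_=_u_iff_t_=F_u} to replace the hypothesis $\genThLambdaTail{t}{u}$ by the equivalent rewriting statement $t \genSymArr_{\setrules} u$, i.e. $t$ and $u$ are connected by a finite zigzag of $\genArr_\indexOmega$ steps. The goal then becomes: whenever $t \genSymArr_{\setrules} u$, they have a common $\genArr_\indexOmega$-reduct.

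The key intermediate step is a Church-Rosser-style lemma: if $t \genSymArr_{\setrules} u$ then there exists $r \in \setGenTerms$ with $t \cbvArr^*_\indexOmega r$ and $u \cbvArr^*_\indexOmega r$. I would prove this by induction on the number of alternations in the zigzag connecting $t$ and $u$. The base case is trivial (take $r := t = u$), and the inductive step is resolved by applying confluence of $\genArr_\indexOmega$ (\Cref{ax:Confluence}) at each peak of the zigzag to bring the two ends back to a common reduct. This is standard and should go through without subtleties.

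With the common reduct $r$ in hand, suppose $t$ admits a $\indexOmega$-normal form $s$, so $t \cbvArr^*_\indexOmega s$. Applying confluence of $\genArr_\indexOmega$ to the two reducts $s$ and $r$ of $t$ yields a term $s'$ with $s \cbvArr^*_\indexOmega s'$ and $r \cbvArr^*_\indexOmega s'$. Since $s$ is $\indexOmega$-normal, necessarily $s = s'$, whence $r \cbvArr^*_\indexOmega s$. Combining with $u \cbvArr^*_\indexOmega r$, we obtain $u \cbvArr^*_\indexOmega s$, showing that $u$ has $\indexOmega$-normal form $s$ as well. The converse direction is symmetric, and the \emph{same} $s$ serves as the $\indexOmega$-normal form for both $t$ and $u$, giving the ``moreover'' part.

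There is no real obstacle here: the argument is a routine application of confluence (the only substantive hypothesis besides \Cref{lem:Theories_lambda_|-_t_=_u_iff_t_=F_u}) combined with the fact that normal forms are stable under further reduction. The uniqueness of the shared normal form is automatic once we observe that confluence forces any two reducts of a term having a normal form to reduce to that normal form.
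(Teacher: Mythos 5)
Your proof is correct. It differs from the paper's only in how confluence is deployed: after reducing (via \Cref{lem:Theories_lambda_|-_t_=_u_iff_t_=F_u}) to the statement $t \genSymArr_{\setrules} u$, you first factor out an explicit Church--Rosser step (every zigzag collapses to a common $\genArr_\indexOmega$-reduct $r$) and then pull the normal form of $t$ down through $r$ to $u$, using the fact that a normal form cannot reduce further. The paper instead performs a direct induction on the derivation of $t \genSymArr_{\setrules} u$ (cases refl, one-step, sym, trans), invoking \Cref{ax:Confluence} in the one-step case and, in the transitivity case, splitting on which endpoint is assumed to have a normal form so that the shared normal form can be threaded through the intermediate term. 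Your route buys a cleaner final argument at the cost of an extra (entirely standard) intermediate lemma; the paper's route avoids stating Church--Rosser explicitly but pays for it with the case analysis in the transitivity step. Both hinge on exactly the same two ingredients --- \Cref{lem:Theories_lambda_|-_t_=_u_iff_t_=F_u} and \Cref{ax:Confluence} --- and both correctly deliver the ``moreover'' clause, so there is no gap.
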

\stableProof{
    \begin{proof}
Let $t, u \in \setGenTerms$ such that $\genThLambdaTail{t}{u}$. Using
\Cref{lem:Theories_lambda_|-_t_=_u_iff_t_=F_u}, one deduces that $t
\genSymArr_{\setrules} u$. By induction on the length of the sequence
$t \genSymArr_{\setrules} u$.
\begin{itemize}
\item[\bltI] $({\tt refl})$: Then $t$ is equal to $u$ and trivially
    $t$ has a $\indexOmega$-normal form if and only if $u$ has a
    $\indexOmega$-normal form. Moreover, they also share the same
    $\indexOmega$-normal form.

\item[\bltI] $({\tt one-step})$: Then $t \genArr_\indexOmega u$ and
    one concludes by Confluence (\Cref{ax:Confluence}).

\item[\bltI] $({\tt sym})$: Then $t \genSymArr_{\setrules} u$ and one
    concludes by \ih.

  \item[\bltI] $({\tt trans})$: Then $t \genSymArr_{\setrules} s$ and
    $s \genSymArr_{\setrules} u$ for some $s \in \setGenTerms$. We
    distinguish two cases:
    \begin{itemize}
    \item[\bltII] $t$ has a $\indexOmega$-normal form: Then by \ih on
        $t \genSymArr_{\setrules} s$, one deduces that $s$ has the
        same $\indexOmega$-normal form as $t$ thus by \ih on $s
        \genSymArr_{\setrules} u$, one concludes that $u$ has the same
        $\indexOmega$-normal form as $t$.

    \item[\bltII] $u$ has a $\indexOmega$-normal form: Symmetric of
        the previous case.
    \end{itemize}
\end{itemize}
\end{proof}

}

\begin{restatable}{lemma}{GenTheoryHAndTaxAgreeFNF}
    \label{lem:Theories_Tax_|-_t_=_u_and_u_FNF_==>_lambda_|-_t_=_u}%
    If $u \in \setGenTerms$ has a $\indexOmega$-normal form, then for
    any $t \in \setGenTerms$, if $\genThHaxTail{t}{u}$ then
    $\genThLambdaTail{t}{u}$.
\end{restatable}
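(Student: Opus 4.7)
The plan is to reduce the statement to a sequence of elementary equalities via the decomposition lemmas \Cref{lem:Theories_Tax_|-_t_=_u_iff_H_|-_t_=_u}, \Cref{lem:gen_Tax_Decompose}, and \Cref{lem:gen_TaxSSim_Decompose}, and then to dispatch each elementary step by exploiting \Cref{ax:Full_Genericity}. Let $w$ denote the (unique, by \Cref{ax:Confluence}) $\indexOmega$-normal form of $u$. From the hypothesis $\genThHaxTail{t}{u}$ and \Cref{lem:gen_Tax_Decompose}.(\ref{lem:gen_TaxEq_Decompose}) (applicable via \Cref{lem:Theories_Tax_|-_t_=_u_iff_H_|-_t_=_u}), I obtain a chain $t = s_0, s_1, \dots, s_n = u$ with $\genThHaxTail[\genThHaxSSim]{s_i}{s_{i+1}}$ for every $0 \le i < n$. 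I would then prove by downward induction on $i$ that $s_i \genArr^*_\indexOmega w$. The base case $i = n$ is immediate. Once this is established, $t = s_0 \genArr^*_\indexOmega w \revArr^*_\indexOmega u$, and the conclusion $\genThLambdaTail{t}{u}$ follows from \Cref{lem:Theories_lambda_|-_t_=_u_iff_t_=F_u}.

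For the inductive step, assume $s_{i+1} \genArr^*_\indexOmega w$. By \Cref{lem:gen_TaxSSim_Decompose}, write $s_i = \genCCtxt<t'>$ and $s_{i+1} = \genCCtxt<u'>$ for some context $\genCCtxt$ and some $t', u' \in \setGenTerms$ with $\genThHaxTail[\genThHaxDSim]{t'}{u'}$. By \Cref{lem:gen_Tax_Decompose}.(\ref{lem:gen_TaxDSim_Decompose}), three subcases arise. Either $\genThLambdaTail{t'}{u'}$ or $\genThLambdaTail{u'}{t'}$ holds from an axiom, in which case congruence closure gives $\genThLambdaTail{s_i}{s_{i+1}}$, and \Cref{lem:gen_lambda_|-_t_=_u_==>_t_has_FNF_iff_u_has_FNF} ensures that $s_i$ and $s_{i+1}$ share the same $\indexOmega$-normal form, namely $w$, so $s_i \genArr^*_\indexOmega w$. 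Otherwise, both $t'$ and $u'$ are meaningless; in this subcase, I apply \Cref{ax:Full_Genericity} to the context $\genCCtxt$ with the meaningless subterm $u'$: since $\genCCtxt<u'> = s_{i+1} \genArr^*_\indexOmega w$ with $w$ being $\indexOmega$-normal, genericity yields $\genCCtxt<v> \genArr^*_\indexOmega w$ for every $v \in \setGenTerms$, and specializing $v := t'$ gives $s_i = \genCCtxt<t'> \genArr^*_\indexOmega w$, as required.

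The main obstacle lies in choosing the right invariant to propagate through the chain: merely carrying $\genThLambda$-equivalence along the chain is too weak, since in the meaningless subcase there is no direct $\genArr_\indexOmega$-reduction, in either direction, between the endpoints $t'$ and $u'$ of a single elementary step, let alone between $s_i$ and $s_{i+1}$. The stronger invariant---that each intermediate $s_i$ reduces to the particular $\indexOmega$-normal form $w$ of $u$---is precisely what \Cref{ax:Full_Genericity} is designed to preserve, turning the meaningless subcase from a potential obstruction into a clean application of genericity. This is exactly the role that genericity plays in the classical consistency proof of $\mathcal{H}$ for CbN in \cite{barendregt84nh}, and our axiomatic presentation makes it uniformly available in any generic calculus satisfying the required axioms.
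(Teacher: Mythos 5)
Your proof is correct and follows essentially the same route as the paper's: decompose the $\genThHax$-derivation into a chain of elementary $\genThHaxSSim$-steps, split each step via \Cref{lem:gen_TaxSSim_Decompose} and \Cref{lem:gen_Tax_Decompose}, handle the axiom case by normal-form preservation (\Cref{lem:gen_lambda_|-_t_=_u_==>_t_has_FNF_iff_u_has_FNF}) and the meaningless case by Full Genericity (\Cref{ax:Full_Genericity}). The only difference is cosmetic: you propagate the invariant ``$s_i$ reduces to the fixed normal form $w$'' backwards along the chain, whereas the paper inducts on the chain length carrying ``$\genThLambdaTail{s_0}{s_m}$ and $s_m$ has an $\indexOmega$-normal form''; the two formulations are interchangeable.
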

\label{prf:Theories_Tax_|-_t_=_u_and_u_FNF_==>_lambda_|-_t_=_u}%
\stableProof{
    \begin{proof}
Let $t, u \in \setGenTerms$ such that $u$ has a $\indexOmega$-normal
form and $\genThHaxTail{t}{u}$. Using
\Cref{lem:gen_Tax_Decompose}:\ref{lem:gen_TaxEq_Decompose}, one has $n
\in \mathbb{N}$ and $s_0, \dots, s_n \in \setGenTerms$ such that $t =
s_0$, $u = s_n$ and $\genThHaxTail[\genThHaxSSim]{s_i}{s_{i+1}}$ for
$0 \leq i < n$. Let us show the following:
\begin{center}
    If $s_n$ has a $\indexOmega$-normal form, and
    $\genThHaxTail[\genThHaxSSim]{s_i}{s_{i+1}}$ for $0 \leq i < n$,
    then $\genThLambdaTail{s_0}{s_n}$. 
\end{center}

We reason by induction on $n \in \mathbb{N}$. Cases:
\begin{itemize}
\item[\bltI] $n = 0$: Then $s_0 = s_n$ and thus
    $\genThLambdaTail{s_0}{s_n}$ by $(\genThLambdaReflSymb)$.

\item[\bltI] $n = m + 1$: Then using \Cref{lem:gen_TaxSSim_Decompose},
    there exist a context $\genCCtxt$ and terms $s'_m, s'_{m+1} \in
    \setGenTerms$ such that $s_m = \genCCtxt<s'_m>$, $s_{m+1} =
    \genCCtxt<s'_{m+1}>$ and
    $\genThHaxTail[\genThHaxDSim]{s'_m}{s'_{m+1}}$. By
    \Cref{lem:gen_Tax_Decompose}:\ref{lem:gen_TaxDSim_Decompose}, we
    distinguish two cases:
    \begin{itemize}
    \item[\bltII] $\genThLambdaTail{s'_m}{s'_{m+1}}$ (\resp
        $\genThLambdaTail{s'_{m+1}}{s'_m}$) from an axiom: Then
        $\genThLambdaBaseEquations$ $\vdash \genCCtxt<s'_m> \doteq
        \genCCtxt<s'_{m+1}>$ by contextuality (\resp and symmetry)
        thus using
        \Cref{lem:gen_lambda_|-_t_=_u_==>_t_has_FNF_iff_u_has_FNF},
        one deduces that $s_m = \genCCtxt<s'_m>$ has a
        $\indexOmega$-norm form. By \ih, one concludes that
        $\genThLambdaTail{s_0}{s_m}$, thus
        $\genThLambdaTail{s_0}{s_{m+1}}$ holds by
        $(\genThLambdaTransSymb)$.

    \item[\bltII] $s'_m$ and $s'_{m+1}$ are meaningless: Since
        $s_{m+1} = \genCCtxt<s'_{m+1}>$ has a $\indexOmega$-normal
        form, let say $s \in \setGenTerms$, then $s_{m+1}
        \genSymArr_{\setrules} s$, which implies
        $\genThLambdaTail{s_{m+1}}{s}$ by
        \Cref{lem:Theories_lambda_|-_t_=_u_iff_t_=F_u}. Using Full
        Genericity (\Cref{ax:Full_Genericity}), one deduces that $s_m
        = \genCCtxt<s'_m>$ has the same $\indexOmega$-normal form $s$
        thus $\genThLambdaTail{s_m}{s}$ by
        \Cref{lem:Theories_lambda_|-_t_=_u_iff_t_=F_u}. Therefore
        $\genThLambdaTail{s_m}{s_{m+1}}$ using $(\genThLambdaSymSymb)$
        and $(\genThLambdaTransSymb)$. By \ih, one has that
        $\genThLambdaTail{s_0}{s_m}$ and we conclude that
        $\genThLambdaTail{s_0}{s_{m+1}}$ using
        $(\genThLambdaTransSymb)$.
        \qedhere
    \end{itemize}
\end{itemize}
\end{proof}

}

\GenTheoryHAndLambdaAgreeFNF*
\label{prf:Theories_T_|-_t_=_u_and_u_FNF_==>_lambda_|-_t_=_u}
\begin{proof}
    Using
    \Cref{lem:Theories_Tax_|-_t_=_u_and_u_FNF_==>_lambda_|-_t_=_u} and
    \Cref{lem:Theories_Tax_|-_t_=_u_iff_H_|-_t_=_u}.
\end{proof}

\begin{lemma} ~
    \label{lem:Theories_Subseteq_H*}%
    \begin{enumerate}
    \item $\genThLambda \subseteq \genThH*$.
        \label{lem:Theories_Subseteq_H*_Lambda}%
    \item $\genThH \subseteq \genThH*$.
        \label{lem:Theories_Subseteq_H*_H}%
    \end{enumerate}
\end{lemma}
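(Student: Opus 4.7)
My plan is to observe that both inclusions are direct consequences of \Cref{thm:Theory_H*}, which already established that $\genThH*$ is a sensible consistent \genLambdaTheoryTxt, combined with the definitions of $\genThLambda$ and $\genThH$ given in \Cref{sec:generic-theories}.

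For part~(\ref{lem:Theories_Subseteq_H*_Lambda}), the plan is to recall that, by definition, a \genLambdaTheoryTxt is any congruence containing $\genThLambda$. Since \Cref{thm:Theory_H*} establishes that $\genThH*$ is a \genLambdaTheoryTxt, the inclusion $\genThLambda \subseteq \genThH*$ is immediate from this definition.

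For part~(\ref{lem:Theories_Subseteq_H*_H}), the plan is to use the fact that $\genThH$ was defined as the \emph{smallest} sensible \genLambdaTheoryTxt. By \Cref{thm:Theory_H*}, $\genThH*$ is itself a sensible \genLambdaTheoryTxt, hence $\genThH \subseteq \genThH*$ by minimality of $\genThH$. Equivalently, one can argue directly that $\genThHBaseEquations \subseteq \genThH*$: the inclusion $\genThLambdaBaseEquations \subseteq \genThH*$ follows from part~(\ref{lem:Theories_Subseteq_H*_Lambda}), and the inclusion $\{(t,u) \vsep t,u \text{ meaningless}\} \subseteq \genThH*$ follows from sensibility of $\genThH*$; one then concludes by noting that $\genThH = \genThCongr_{\genThHBaseEquations}$ is the smallest congruence containing $\genThHBaseEquations$, while $\genThH* = \genThCongr_{\genThH*}$ is also a congruence.

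There is essentially no obstacle here: this lemma is a packaging statement that records the two easy inclusions flowing from \Cref{thm:Theory_H*}, so the proof should consist of two short paragraphs, one per item, each invoking the appropriate clause of that theorem. No quantitative or genericity-based argument is required.
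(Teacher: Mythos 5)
Your proposal is correct, but it takes a different route from the paper. The paper proves this lemma from scratch: part~(\ref{lem:Theories_Subseteq_H*_Lambda}) is an induction on the derivation of $\genThLambdaTail{t}{u}$, where the only non-trivial case is the base rewrite-rule case, handled by Meaningfulness Stability (\Cref{ax:Meaningfulness_Stable_Reduction}); part~(\ref{lem:Theories_Subseteq_H*_H}) then reduces, via part~(\ref{lem:Theories_Subseteq_H*_Lambda}), to showing that $\genThH*$ equates all meaningless terms, which is exactly Surface Genericity (\Cref{ax:Surface_Genericity}). You instead derive both inclusions as corollaries of \Cref{thm:Theory_H*}. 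This is logically sound here, because the paper's proof of \Cref{thm:Theory_H*} is self-contained (it re-establishes the \genLambdaTheoryTxt and sensibility clauses directly from \Cref{ax:Meaningfulness_Stable_Reduction,ax:Surface_Genericity} without citing this lemma), so no circularity arises --- but you should be aware that the lemma and the first two clauses of that theorem are really two packagings of the same facts, and in a presentation where the lemma is meant to feed into the theorem your argument would be circular. Your fallback ``direct'' argument for part~(\ref{lem:Theories_Subseteq_H*_H}) --- showing $\genThHBaseEquations \subseteq \genThH*$ and invoking minimality of the generated congruence --- matches the paper's reasoning exactly, except that you import sensibility of $\genThH*$ from the theorem rather than re-deriving it from \Cref{ax:Surface_Genericity}. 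What your route buys is brevity; what the paper's route buys is an explicit identification of which axioms (\Cref{ax:Meaningfulness_Stable_Reduction} for the computational theory, \Cref{ax:Surface_Genericity} for sensibility) each inclusion actually rests on.
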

\stableProof{
    \begin{proof} ~
\begin{enumerate}
\item Let $\genThLambdaTail{t}{u}$. Let us show that $(t, u) \in
    \genThH*$ by induction on the length of the proof
    $\genThLambdaTail{t}{u}$. Cases on the last rule applied:
    \begin{itemize}
    \item[\bltI] $(\genThCongrReflSymb)$: Then $t = u$ thus $(t, u)
        \in \genThH*$ trivially holds.

    \item[\bltI] $(\genThCongrStepSymb)$: Then $(t, u) \in
        \genThLambdaBaseEquations$ and using
        \Cref{ax:Meaningfulness_Stable_Reduction}, one deduces that
        $t$ is meaningful if and only if $u$ is meaningful so that
        $(t, u) \in \genThH*$.

    \item[\bltI] $(\genThCongrSymSymb)$: Then $\genThLambdaTail{u}{t}$
        and, by \ih,  $(u,t) \in \genThH*$; one
        concludes by symmetry of the logical equivalence in the definition of $\genThH*$.

    \item[\bltI] $(\genThCongrTransSymb)$: Then
        $\genThLambdaTail{t}{s}$ and $\genThLambdaTail{s}{u}$ for some $s \in \setGenTerms$, and by
        \ih on both, one deduces that $(t, s) \in \genThH*$ and $(s,
        u) \in \genThH*$ respectively. Therefore one concludes by
        transitivity of the logical equivalence in the definition of $\genThH*$.
    \end{itemize}

\item By (\ref{lem:Theories_Subseteq_H*_Lambda}) one has that
    $\genThLambda \subseteq \genThH*$, then it is sufficient to show
    that $\genThH*$ equates all meaningless terms. Let $t, u \in
    \setCbnTerms$ be both meaningless. By Surface Genericity
    (\Cref{ax:Surface_Genericity}), one has that $\genCCtxt<t>$ is
    meaningful if and only if $\genCCtxt<u>$ is meaningful, thus $(t,
    u) \in \genThH*$.
    \qedhere
\end{enumerate}
\end{proof}

}

%
%
%

\subsection{Proofs of Subsection \ref{sec:consistent-sensible-cbv}}

\begin{lemma}
    \label{lem:Cbv_App_Meaningful_==>_Arg_Meaningful}%
    Let $t, u \in \setCbvTerms$ and $x \in \setCbvVariables$ such that
    $t$ is meaningful, $x$ is fresh in $t$ and $u$ and
    $\app{\abs{x}{t}}{u}$ is meaningful, then $u$ is meaningful.
\end{lemma}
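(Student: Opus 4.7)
The approach is to reduce the statement to previously established closure properties of CBV-meaningfulness under application and under explicit substitution. The key ingredient is Lemma~\ref{lem:Cbv_solvability_on_app_and_es}, which records that if an application $\app{t_1}{t_2}$ (resp.\ a closure $t_1\esub{y}{t_2}$) is meaningful then both $t_1$ and $t_2$ are meaningful.

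Concretely, I would argue as follows. Since $\app{\abs{x}{t}}{u}$ is CBV-meaningful by assumption, taking $t_1 \coloneqq \abs{x}{t}$ and $t_2 \coloneqq u$ in Lemma~\ref{lem:Cbv_solvability_on_app_and_es}.(\ref{lem:cbv_meaningful_on_app}) immediately yields that both $\abs{x}{t}$ and $u$ are CBV-meaningful. In particular, $u$ is CBV-meaningful, which is the desired conclusion.

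An equivalent route goes through a single $\cbvSymbBeta$-step followed by Lemma~\ref{lem:Cbv_solvability_on_app_and_es}.(\ref{lem:cbv_meaningful_on_esub}): instantiating the $\cbvSymbBeta$-rule with the empty list context gives $\app{\abs{x}{t}}{u} \cbvArr_{S_0} t\esub{x}{u}$; then Lemma~\ref{lem:cbv_t_meaningful_and_t_->Sn_u_=>_u_meaningful} (stability of meaningfulness under reduction) implies that $t\esub{x}{u}$ is meaningful, from which Lemma~\ref{lem:Cbv_solvability_on_app_and_es}.(\ref{lem:cbv_meaningful_on_esub}) yields that $u$ (and $t$) are meaningful.

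There is no real difficulty here: the conclusion falls out directly from the structural closure properties of CBV-meaningfulness, and no delicate analysis of reductions is required. It is worth noting that the extra hypotheses that $t$ is meaningful and that $x$ is fresh in $t$ and $u$ are not actually used by either argument; they are presumably recorded in the statement because this lemma is meant to be invoked in a larger setting (e.g.\ while analysing testing contexts of the form $\app{(\abs{x}{\cbvTCtxt})}{u}$) where those facts are naturally available.
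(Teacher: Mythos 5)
Your proof is correct, but it takes a genuinely different route from the paper's. The paper establishes this lemma via the \emph{logical} characterization of \CBVSymb-meaningfulness: it extracts a type derivation for $\app{(\abs{x}{t})}{u}$ in system $\cbvTypeSysBKRV$, observes that it must end with the application rule, whose right premise types $u$ with some multiset type $\M$ (possibly $\emptymset$, which still counts as typability), and concludes by \Cref{lem:cbvBKRV_characterizes_meaningfulness}. You instead reduce the statement to the operational closure property \Cref{lem:Cbv_solvability_on_app_and_es}.(\ref{lem:cbv_meaningful_on_app}), of which the lemma is an immediate special case; that appendix lemma is itself proved operationally, by lifting an infinite surface reduction of a meaningless argument into one of the whole application and invoking the diamond property of surface reduction. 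Your second route --- a $\cbvSymbBeta$-step with the empty list context, stability of meaningfulness under reduction (\Cref{lem:cbv_t_meaningful_and_t_->Sn_u_=>_u_meaningful}), then \Cref{lem:Cbv_solvability_on_app_and_es}.(\ref{lem:cbv_meaningful_on_esub}) --- is equally valid. All of these arguments ultimately rest on the same CbV-specific fact, namely that the argument of an application cannot be discarded (surface contexts enter argument position, and the typing rule for applications always types its argument), which is precisely what fails in CbN. Your observation that the hypotheses that $t$ is meaningful and that $x$ is fresh are unused is accurate: the paper's own proof does not use them either.
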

\begin{proof}
    Let $t, u \in \setCbvTerms$ and $x \in \setCbvVariables$ such that
    $t$ is meaningful, $x$ is fresh in $t$ and $u$ and
    $\app{(\abs{x}{t})}{u}$ is meaningful. Then by
    \Cref{lem:cbvBKRV_characterizes_meaningfulness}, there exists $\Pi
    \cbvTrBKRV \Gamma \vdash \app{(\abs{x}{t})}{u} : \sigma$. Then
    $\Pi$ is necessarily of the following form:
    \begin{equation*}
        \begin{prooftree}
            \hypo{\Pi_1 \cbvTrBKRV \Gamma_1 \vdash \abs{x}{t} : \M \typeArrow \sigma}
            \hypo{\Pi_2 \cbvTrBKRV \Gamma_2 \vdash u : \M}
            \inferCbvAGKApp{\Gamma_1 + \Gamma_2 \vdash \app{(\abs{x}{t})}{u} : \sigma}
        \end{prooftree}
    \end{equation*}
    with $\Gamma = \Gamma_1 + \Gamma_2$. Since $\Pi_2 \cbvTrBKRV
    \Gamma_2 \vdash u : \M$, one then deduces that $u$ is meaningful
    using \Cref{lem:cbvBKRV_characterizes_meaningfulness_logical}.
\end{proof}

\CbvTheoryHExtended*%
\label{prf:Cbv_Theory_H_Extended_Single}%
\begin{proof}
    Let $t, u \in \setCbvTerms$ such that
    $\genThCongr_{\mathsf{H}_\mathsf{v} \cup \{(t, u)\}}$ is
    consistent. Suppose by absurd that there exists a context
    $\cbvCCtxt$ such that $\cbvCCtxt<t>$ is meaningful and
    $\cbvCCtxt<u>$ is meaningless. Let us show that
    $\genThCongr_{\mathsf{H}_\mathsf{v} \cup \{(t, u)\}}$ is
    inconsistent.

    Let $s \in \setCbvTerms$ be a meaningful term and $x \in
    \setCbvVariables$ fresh. On one hand, by definition of
    meaningfulness, there exists a testing context $\cbvTCtxt$ such
    that $\cbvTCtxt<\cbvCCtxt<t>> \cbvArr*_S v$ for some value $v \in
    \setCbvValues$. In particular,
    $\app{(\abs{x}{s})}{\cbvTCtxt<\cbvCCtxt<t>>} \cbvArr*_S
    \app{(\abs{x}{s})}{v} \cbvArr*_S s$ thus using
    \Cref{lem:Theories_lambda_|-_t_=_u_iff_t_=F_u}, one deduces that
    $\cbvThLambdaTail{\app{(\abs{x}{s})}{\cbvTCtxt<\cbvCCtxt<t>>}}{s}$.
    Since $\cbvThH$ is a \cbvLambdaTheoryTxt, then
    $\cbvThHTail{\app{(\abs{x}{s})}{\cbvTCtxt<\cbvCCtxt<t>>}}{s}$ and
    thus $\genThCongr_{\mathsf{H}_\mathsf{v} \cup \{(t, u)\}}$ $\vdash
    \app{(\abs{x}{s})}{\cbvTCtxt<\cbvCCtxt<t>>} \doteq s$. On the
    other hand, by definition of meaninglesness,
    $\cbvTCtxt<\cbvCCtxt<u>>$ necessarily meaningless. Suppose by
    absurd that $\app{(\abs{x}{s})}{\cbvTCtxt<\cbvCCtxt<u>>}$ is
    meaningful, then using
    \Cref{lem:Cbv_App_Meaningful_==>_Arg_Meaningful}, one deduces that
    $\cbvTCtxt<\cbvCCtxt<t>>$ is meaningful which is impossible thus
    $\app{(\abs{x}{s})}{\cbvTCtxt<\cbvCCtxt<u>>}$ is therefore
    meaningless. By contextuality and transitivity
    $\genThCongr_{\mathsf{H}_\mathsf{v} \cup \{(t, u)\}}$ $\vdash
    \app{(\abs{x}{s})}{\cbvTCtxt<\cbvCCtxt<u>>} \cbvThHEq s$ and by
    sensibility $\genThCongr_{\mathsf{H}_\mathsf{v} \cup \{(t, u)\}}
    \vdash \Omega \cbvThHEq s$. Since $s$ is an arbitrary meaningful
    term, one deduces that $\genThCongr_{\mathsf{H}_\mathsf{v} \cup
    \{(t, u)\}}$ is inconsistent contradicting the hypothesis and
    concluding this proof.
\end{proof}

\CbvTheoryHStarHPComplete*%
\label{prf:cbvTheories_HStar_HP_Complete}%
\begin{proof}
    We first prove that $\cbvThH*$ is a maximal consistent and
    sensible \cbvLambdaTheoryTxt and then show that it is unique.
    \begin{enumerate}
    \item Maximality: By \Cref{cor:consistency}, $\cbvThH*$ is a
         consistent and sensible \cbvLambdaTheoryTxt. Let $t, u \in
         \setCbvTerms$ such that $\cbvThH* \cup \{(t, u)\}$ is
         consistent. By inclusion, one deduces that
         $\genThCongr_{\mathsf{H}_\mathsf{v} \cup \{(t, u)\}}$ is
         consistent, thus that $\cbvThH* \vdash t \doteq u$ using
         \Cref{lem:Cbv_Theory_H_Extended_Single}. The theory
         $\cbvThH*$ is therefore maximal.

    \item Unicity: Let $\mathcal{E}$ be a maximal consistent and
        sensible \cbvLambdaTheoryTxt. Let $t, u \in \setCbvTerms$ such
        that $\mathcal{E} \vdash t \doteq u$. By inclusion, one
        deduces that $\genThCongr_{\mathsf{H}_\mathsf{v} \cup \{(t,
        u)\}}$ is consistent thus using
        \Cref{lem:Cbv_Theory_H_Extended_Single}, one has that
        $\cbvThH* \vdash t \doteq u$ and therefore $\mathcal{E}
        \subseteq \cbvThH*$. Suppose by absurd that there exists $t, u
        \in \setCbvTerms$ such that $\cbvThH* \vdash t \doteq u$ but
        not $\mathcal{E} \vdash t \doteq u$. Since $\cbvThH*$ is
        consistent, then by inclusion, one deduces that $\mathcal{E}
        \cup \{(t, u)\}$ is consistent which contradict the fact that
        $\mathcal{E}$ is a maximal consistent and sensible
        \cbvLambdaTheoryTxt. One concludes that $\mathcal{E} =
        \cbvThH*$.
        \qedhere
    \end{enumerate}
\end{proof}

\end{document}